\theoremstyle{plain}
\newtheorem{theorem}{Theorem}
\numberwithin{theorem}{section}
\newtheorem{lemma}[theorem]{Lemma}
\newtheorem{corollary}[theorem]{Corollary}
\newtheorem{proposition}[theorem]{Proposition}
\newtheorem{conjecture}[theorem]{Conjecture}
\newtheoremstyle{claimstyle}{\topsep}{\topsep}{}{0pt}{\sffamily}{. }{5pt plus 1pt minus 1pt}%
  {$\vartriangleright$ \thmname{#1}\thmnumber{ #2}\thmnote{ (#3)}}
\theoremstyle{claimstyle}
\newtheorem{claim}[theorem]{Claim}
\newtheorem{case0}[theorem]{Case}
\theoremstyle{definition}
\theoremstyle{remark}
\newtheorem{remark}[theorem]{Remark}
\newenvironment{claimproof}{\begin{proof}}{\end{proof}}
\newlength{\bibitemsep}\setlength{\bibitemsep}{.2\baselineskip plus .05\baselineskip minus .05\baselineskip}
\newlength{\bibparskip}\setlength{\bibparskip}{1pt}
\let\oldthebibliography\thebibliography
\renewcommand\thebibliography[1]{%
  \oldthebibliography{#1}%
  \setlength{\parskip}{\bibitemsep}%
  \setlength{\itemsep}{\bibparskip}%
}
\DeclareMathOperator{\lat}{lat}
\newcommand{\textscup}[1]{\textsc{#1}}
\newcommand{\textsfup}[1]{\textsf{\textup{#1}}}
\renewcommand{\P}{\textsfup{P}\xspace}
\newcommand{\NP}{\textsfup{NP}\xspace}
\newcommand{\NPhard}{\textsfup{NP-hard}\xspace}
\newcommand{\N}{\mathbb{N}}
\newcommand{\Z}{\mathbb{Z}}
\newcommand{\Zp}{\Z_{\ge 0}}
\newcommand{\Q}{\mathbb{Q}}
\newcommand{\R}{\mathbb{R}}
\newcommand{\cB}{\mathcal{B}}
\newcommand{\cH}{\mathcal{H}}
\newcommand{\ones}{\mathbf{1}}
\newcommand{\GF}{\mathrm{GF}}
\DeclareMathOperator*{\argmin}{arg\,min}
\newcommand{\symdif}{\mathbin{\triangle}}
\DeclarePairedDelimiter{\set}{\{}{\}}
\DeclarePairedDelimiterX{\Set}[2]{\{}{\}}{\,#1\mathclose{}\nonscript\;\delimsize|\nonscript\;\mathopen{}#2\,}
\def\final{0}  
\def\iflong{\iffalse}
\newcommand{\anote}[1]{{\color{orange}[{\tiny \textbf{András:} \bf #1}]\marginpar{\color{orange}*}}}
\newcommand{\fnote}[1]{{\color{red}[{\tiny \textbf{Florian:} \bf #1}]\marginpar{\color{red}*}}}
\newcommand{\rnote}[1]{{\color{teal}[{\tiny \textbf{Ryuhei:} \bf #1}]\marginpar{\color{teal}*}}}
\newcommand{\onote}[1]{{\color{purple}[{\tiny \textbf{Taihei:} \bf #1}]\marginpar{\color{purple}*}}}
\newcommand{\snote}[1]{{\color{cyan}[{\tiny \textbf{Tamás:} \bf #1}]\marginpar{\color{cyan}*}}}
\newcommand{\anote}[1]{}
\newcommand{\fnote}[1]{}
\newcommand{\rnote}[1]{}
\newcommand{\onote}[1]{}
\newcommand{\snote}[1]{}
\title{Problems on Group-labeled Matroid Bases}
\author{
Florian Hörsch\thanks{CISPA, Saarbrücken, Germany. E-mail: \texttt{florian.hoersch@cispa.de}.}
\and
András Imolay\thanks{MTA-ELTE Momentum Matroid Optimization Research Group, Department of Operations Research, ELTE Eötvös Loránd University, Budapest, Hungary. E-mail: \texttt{\{andras.imolay, tamas.schwarcz\}@ttk.elte.hu}.}
\and
Ryuhei Mizutani\thanks{Department of Mathematical Informatics, Graduate School of Information Science and Technology, The University of Tokyo, Tokyo, Japan. E-mail: \texttt{\{ryuhei\_mizutani, oki\}@mist.i.u-tokyo.ac.jp}.}
\and
Taihei Oki\footnotemark[3]
\and 
Tamás Schwarcz\footnotemark[2]
}
\begin{document}
\maketitle

\begin{abstract}

Consider a matroid equipped with a labeling of its ground set to an abelian group. We define the label of a subset of the ground set as the sum of the labels of its elements. We study a collection of problems on finding bases and common bases of matroids with restrictions on their labels. For zero bases and zero common bases, the results are mostly negative. While finding a non-zero basis of a matroid is not difficult, it turns out that the complexity of finding a non-zero common basis depends on the group. Namely, we show that the problem is hard for a fixed group if it contains an element of order two, otherwise it is polynomially solvable. 

As a generalization of both zero and non-zero constraints, we further study $F$-avoiding constraints where we seek a basis or common basis whose label is not in a given set $F$ of forbidden labels. Using algebraic techniques, we give a randomized algorithm for finding an $F$-avoiding common basis of two matroids represented over the same field for finite groups given as operation tables. The study of $F$-avoiding bases with groups given as oracles leads to a conjecture stating that whenever an $F$-avoiding basis exists, an $F$-avoiding basis can be obtained from an arbitrary basis by exchanging at most $|F|$ elements. We prove the conjecture for the special cases when $|F|\le 2$ or the group is ordered. By relying on structural observations on matroids representable over fixed, finite fields, we verify a relaxed version of the conjecture for these matroids. As a consequence, we obtain a polynomial-time algorithm in these special cases for finding an $F$-avoiding basis when $|F|$ is fixed.

\medskip

\noindent \textbf{Keywords:} matroids, matroid intersection, congruency constraint, exact-weight constraint, additive combinatorics, algebraic algorithm, strongly base orderability

\end{abstract}

\section{Introduction}

Several combinatorial optimization problems involve additional constraints, such as parity, congruency, and exact-weight constraints~\cite{liu2023congruency,nagele2023ccc,nagele2019submodular,nagele2020new,papadimitriou1982exact}.
These constraints are subsumed by \emph{group-label constraints} defined as follows: the ground set $E$ is equipped with a labeling $\psi\colon E \to \Gamma$ to an abelian group $\Gamma$ and a solution $X \subseteq E$ must ensure that the sum of the labels of its entries is not in a prescribed forbidden set $F \subseteq \Gamma$, i.e., $\psi(X) \coloneqq \sum_{e \in X} \psi(e) \notin F$.
We call such a solution \emph{$F$-avoiding}.

Particularly important special cases of group-label constraints include the \emph{non-zero} ($F = \set{0}$) and \emph{zero} ($F = \Gamma \setminus \set{0}$) constraints, where $F$-avoiding sets are referred to as \emph{non-zero} and \emph{zero}, respectively.
The non-zero constraint has been extensively studied for path problems on graphs as it generalizes constraints on parity and topology.
This line of research includes packing non-zero $A$-paths~\cite{chudnovsky2006apath} as well as finding a shortest odd $s$--$t$ path~\cite[Section~29.11e]{schrijver2003combinatorial}, a shortest non-zero $s$--$t$ path~\cite{iwata2022finding}, and an $F$-avoiding $s$--$t$ path with $|F| \le 2$~\cite{kawase2020twoforbidpath}.
For these problems, some literature allows $\Gamma$ to be non-abelian since the order of operations can be naturally defined for paths.
Problems related to non-zero perfect bipartite matchings in $\Z_2$ have also been dealt with, see~\cite{Artmann2017,elmaalouly2023exact,jia2023exact}. 
The zero constraint, or, slightly more generally, the group-label constraint with $|\Gamma \setminus F| = 1$, can encode the congruency and exact-weight constraints by setting $\Gamma$ to be a cyclic group $\Z_m$ and the integers $\Z$, respectively.
Examples of congruency-constrained problems include submodular function minimization~\cite{nagele2019submodular}, minimum cut~\cite{nagele2020new}, and integer linear programming with totally unimodular coefficients~\cite{nagele2023ccc}.
For the last problem, N\"{a}gele, Santiago, and Zenklusen~\cite{nagele2023ccc} gave a randomized strongly polynomial-time algorithm to test the existence of an $F$-avoiding feasible solution, where the group is $\Z_m$ with prime $m$ and $|F| \le 2$, implying the congruency constraint if $m = 3$.
%
The exact-weight constraint was first considered for the matching problem by Papadimitriou and Yannakakis~\cite{papadimitriou1982exact}.
Mulmuley, Vazirani, and Vazirani~\cite{mulmuley1987matching} gave a randomized polynomial-time algorithm for solving the problem using an algebraic technique. Derandomizing this algorithm is a major open problem and there is a collection of partial results for it, see e.g.~\cite{bhatnagar,elmaalouly2023exact,Galluccio1998,svensson2017,yuster}. Other exact problems include arborescences, matchings, cycles~\cite{barahona1987exact}, and independent sets or bases in a matroid~\cite{camerini1992exact,doronarad2023tight,rieder1991lattices}.

In this work, we explore group-label constraints for matroid bases and matroid intersection. Throughout the paper, we assume that any group $\Gamma$ is abelian without mentioning it. 
In the problems \textsc{Non-Zero Basis} and \textsc{Zero Basis}, we are given a matroid $M$ on a ground set $E$ and a group labeling $\psi\colon E \to \Gamma$, and we are to find a non-zero or zero basis, respectively, or to correctly report that no such basis exists. In \textsc{$F$-avoiding Basis} along with the matroid $M$ and labeling $\psi$, we are also given a forbidden label set $F \subseteq \Gamma$, and we need to find an $F$-avoiding basis, that is a basis $B$ with $\psi(B) \not \in F$. In \textsc{Non-Zero Common Basis}, \textsc{Zero Common Basis}, and \textsc{$F$-avoiding Common Basis}, instead of a single matroid, we are given two matroids $M_1$ and $M_2$ on a common ground set $E$ and seek a non-zero, zero, and $F$-avoiding common basis, respectively. We also tackle the weighted variants of these problems, referred to as \textsc{Weighted Non-Zero Basis} for example, where we are to find a feasible solution minimizing a given weight function $w\colon E \to \R$.

We note that the target label $0$ in the non-zero and zero problems can be changed to an arbitrary group element $g \in \Gamma$ by appending a coloop to the ground set with label $-g$. Regarding the input of the group, we consider the following three types: (i) operation and zero-test oracle, (ii) operation table of a finite group, and (iii) a fixed finite group.
Unless otherwise stated, we assume that a group is given as the oracles and the matroids are given as independence oracles. In this case, by a polynomial-time algorithm, we mean an algorithm making polynomially many elementary steps, group operations, and independence oracle calls. If the group is finite and is given by its operation table, then the running time of a polynomial-time algorithm can depend polynomially on the group size.

Our research follows the recent initiative by Liu and Xu~\cite{liu2023congruency}, who addressed \textsc{Zero Basis}\footnote{Called \emph{Group-Constrained Matroid Base} (GCMB) in~\cite{liu2023congruency}.}.
They conjectured that, given the existence of a zero basis, for any non-zero basis $B$, there is a zero basis $B^*$ such that $|B^* \setminus B| \le D(\Gamma) - 1$, where $D(\Gamma)$ denotes the Davenport constant of $\Gamma$ (see \cref{sec:counter} for definition) and is upper-bounded by $|\Gamma|$. 
Liu and Xu proved the conjecture for cyclic groups $\Gamma = \Z_m$ with the order $m$ being prime power or the product of two primes, with the aid of an additive combinatorics result by Schrijver and Seymour~\cite{schrijver1990spanning}, deriving an FPT algorithm parameterized by $|\Gamma| = m$ for \textsc{Zero Basis}. In \cref{thmcounter}, we give a counterexample to this conjecture.

The non-zero constraint is closely related to \emph{lattices} studied by Lovász~\cite{lovasz1985some}.
The \emph{lattice} generated by vectors $\set{v_1, \dotsc, v_n} \subseteq \R^n$ is the set $\Set{\sum_{i=1}^n \lambda_i v_i}{\lambda_1, \dotsc, \lambda_n \in \Z}$.
For a set family $\mathcal{F} \subseteq 2^E$, let $\lat(\mathcal{F})$ denote the lattice generated by the characteristic vectors of $\mathcal{F}$.
Every lattice has a \emph{lattice basis} $A = \set{a_1, \dotsc, a_n} \subseteq \Z^E$, which is a set of linearly independent vectors generating it.
Since $\mathcal{F}$ and its lattice basis $A$ generate the same lattice, $\mathcal{X}$ has a non-zero member if and only if $A$ has a non-zero member, i.e., $\psi(a_i) \coloneqq \sum_{e \in E} a_i(e) \psi(e) \ne 0$ for some $i$.
This implies that if a basis of $\lat(\mathcal{F})$ can be computed in polynomial time, then the existence of a non-zero member of $\mathcal{F}$ can be decided in polynomial time.
Such set families $\mathcal{F}$ include matroid bases~\cite{rieder1991lattices}, common bases of a matroid and a partition matroid having two classes~\cite{rieder1991lattices}, and perfect matchings~\cite{lovasz1985some,lovasz1987matching}. 

We below summarize existing and our results for each problem.

\vspace{-0.5em}

\begin{description}
    \item[Non-Zero Basis]
        The tractability of \textsc{Non-Zero Basis} can be derived from the above lattice argument together with a characterization of base lattices~\cite{rieder1991lattices}.
        We observe that for any zero basis $B$, there exists a non-zero basis $B^*$ such that $|B^* \setminus B| \le 1$, provided that at least one non-zero basis exists.
        A weighted variant of this statement is shown in the same way.
        This result generalizes an algorithm for \textsc{Weighted Zero Basis} with $\Gamma = \Z_2$ by Liu and Xu~\cite{liu2023congruency}.
        
    \item[Non-Zero Common Basis]
        We show in \cref{thm:non-zero-common-basis-hard,thm:non-zero-common-basis-poly} that \textsc{Non-Zero Common Basis} is polynomially solvable if and only if $\Gamma$ does not contain $\Z_2$ as a subgroup.
        Our hardness proof for $\Gamma = \Z_2$ is based on an information-theoretic argument using sparse paving matroids, which is independent of the $\P \ne \NP$ conjecture.
        The polynomial-time algorithm for $\Gamma \not\ge \Z_2$ is a modification of the negatively directed cycle elimination algorithm for weighted matroid intersection~\cite{brezovec1986two}.
        In \cref{thm:2approx}, we also give a 2-approximation algorithm for \textsc{Weighted Non-Zero Common Basis} if $\Gamma \not\ge \Z_2$ and the weight function is nonnegative. 
        Finally, in \cref{thm:non-zero-matching,thm:non-zero-exact-base}, we solve \textsc{Weighted Non-Zero Common Basis} for an arbitrary group when both matroids are partition matroids or one of the matroids is a partition matroid defined by a partition having two classes.
        
    \item[$F$-avoiding Basis and Common Basis] 
    If the group is fixed and finite, \textsc{(Weighted) $F$-avoiding Basis} reduces to polynomially many instances of (weighted) matroid intersection~\cite{liu2023congruency}.
    On the other hand, it follows from the results of \cite{doronarad2023tight} that \textsc{$F$-avoiding Basis} requires exponentially many independence oracle queries if $F$ is part of the input and the group is finite and is given as an operation table, while the same hardness of \textsc{$F$-avoiding Common Basis} follows from our \cref{thm:non-zero-common-basis-hard} even if the group is fixed and finite. 
    Regarding positive results for $F$-avoiding problems, our contribution is twofold. First, using similar ideas as in \cite{camerini1992exact, webb2004paffian}, in \cref{thm:representable-zero-common-basis} we give a randomized algebraic algorithm for \textsc{$F$-avoiding Common Basis} in case where the matroids are  represented over the same field and the group is finite and is given as an operation table. We observe in \cref{thm:pfaffian} that the algorithm can be derandomized in certain cases, including \textsc{$F$-avoiding Basis} for graphic matroids. Second, we turn to the study of \textsc{$F$-avoiding Basis} for cases where $|F|$ is fixed and the group is given by an oracle. Motivated by the work of Liu and Xu~\cite{liu2023congruency}, we propose \cref{conj:k} stating that if at least one $F$-avoiding basis exists, then each basis can be turned into an $F$-avoiding basis by exchanging at most $|F|$ elements. The validity of the conjecture follows from the results of \cite{liu2023congruency} for groups of prime order.
    We show that the conjecture also holds if $\Gamma$ is an ordered group (\cref{thm:ordered}) 
    or if $|F|\le 2$ (\cref{thm:two-forbidden}). By introducing a novel relaxation of strong base orderability, in \cref{thm:reprweak} we show that a relaxation of the conjecture holds for $\GF(q)$-representable matroids for a fixed prime power $q$. In \cref{thm:graphicwbo}, we prove a somewhat stronger version of this result for graphic matroids. In each of these special cases, we obtain the polynomial solvability of \textsc{$F$-avoiding Basis} for fixed $F$.

    \item[Zero Basis and Zero Common Basis] 
        The zero constraint for $\Gamma = \Z$ corresponds to the exact-weight constraint, implying that many problems are \NPhard, in particular, \textsc{Zero Basis} is \NPhard even for uniform matroids (\cref{thm:zerobasishard}). 
        It follows from the results of \cite{doronarad2023tight} that \textsc{Zero Basis} requires exponentially many independence oracle queries for a finite group given by operation table. We show the same hardness of \textsc{Zero Basis} for any fixed infinite group (\cref{thm:zero-basis-infinite}), and of \textsc{Zero Common Basis} for any fixed nontrivial group (\cref{thm:zero-common-base-fixed}).
        On the other hand, we obtain positive results from the aforementioned results on $F$-avoiding problems. In particular, \textsc{Zero Basis} is polynomially solvable if the group is fixed and finite~\cite{liu2023congruency}, \cref{thm:representable-zero-common-basis} implies a randomized polynomial-time algorithm for \textsc{Zero Common Basis} for matroids represented over the same field if $\Gamma$ is finite and is given as an operation table, and using the results of \cite{liu2023congruency}, \cref{thm:reprweak} implies an FPT algorithm for \textsc{Zero Basis} when parameterized by $|\Gamma|$ if the matroids are representable over a fixed, finite field (\cref{cor:zero-basis-fpt}).  
    \end{description}

\paragraph*{Other work related to group-labeled matroids}
Bérczi and Schwarcz~\cite{berczi2021complexity} showed hardness of partitioning into common bases, see also \cite{berczi2023complexity, horsch2022rainbow} for later results. A natural relaxation of that problem gives rise to problems related to \textsc{Non-Zero Common Basis} for the group $\R/\Z$. This relation is explained in \cref{hard1}.

It is straightforward to verify that the family of non-zero subsets of a set satisfies the axiom of \emph{delta-matroids}, which are a generalization of matroids introduced by Bouchet~\cite{bouchet1987greedy}.
From this viewpoint, \textsc{Non-Zero Basis} offers a tractable special case of the intersection of a matroid and a delta-matroid.
We note that the intersection of a matroid and a delta-matroid is intractable in general, as it encompasses matroid parity~\cite{jensen1982complexity,lovasz1980matroid}.
Kim, Lee, and Oum~\cite{kim2023gamma} defined a delta-matroid, called a \emph{$\Gamma$-graphic delta-matroid}, from a graph equipped with a labeling of vertices to an abelian group $\Gamma$.
In the definition, they employ a constraint similar to but different from non-zero.
Exploring the relationship between $\Gamma$-graphic delta-matroids and our findings is left for future work.

\paragraph*{Organization}
The rest of this paper is organized as follows.
\Cref{sec:preliminaries} provides preliminaries on groups and matroids. \Cref{sec:non-zero} deals with non-zero problems. \Cref{sec:algebraic} provides an algebraic algorithm for \textsc{$F$-avoiding Common Basis} for represented matroids. \Cref{sec:fixedF} studies \textsc{$F$-avoiding Basis} if $|F|$ is fixed. \Cref{sec:hard} includes our hardness results for each of the problems.
Finally, \cref{sec:conclusion} concludes this paper enumerating open problems.

\section{Preliminaries}\label{sec:preliminaries}
Let $\N$, $\Zp$, $\Z$, $\Q$, $\R_{\ge 0}$, and $\R$ denote the set of positive integers, nonnegative integers, integers, rationals, nonnegative reals, and reals, respectively.
We let $[n] \coloneqq \set{1, \dotsc, n}$ for $n \in \Zp$.
For a set $S$, we simply write $S \setminus \set{x}$ as $S - x$ for $x \in S$ and $S \cup \set{y}$ as $S + y$ for $y \notin S$.
For a set $E$ and $r \in \Z_{\ge 0}$, we let $\binom{E}{r} \coloneqq \Set{S \subseteq E}{|S| = r}$.

In this paper, all groups are implicitly assumed to be abelian.
We use the additive notation for the operations of groups except in \cref{sec:algebraic}.
For $m \in \N$, let $\Z_m = \set{0, \dotsc, m-1}$ denote the cyclic group of order $m$.
For groups $\Gamma_1$ and $\Gamma_2$, we mean by $\Gamma_1 \le \Gamma_2$ that $\Gamma_1$ is a subgroup of $\Gamma_2$.
A group $\Gamma$ is said to be \emph{ordered} if $\Gamma$ is equipped with a total order $\le$ compatible with the operation of $\Gamma$ in the sense that $a \le b$ implies $a+c \le b+c$ for all $a,b,c \in \Gamma$.
A \emph{labeling} is a function $\psi\colon E \to \Gamma$ from a set $E$ to a group $\Gamma$, and we let $\psi(S) \coloneqq \sum_{x \in S} \psi(x)$ for $S \subseteq E$.
Let $\GF(q)$ denote the finite field of $q$ elements for a prime power $q$.

We follow~\cite{diestel2017graph} for basic terminologies on graphs such as paths and cycles.
The vertex and edge sets of a graph $G$ are denoted by $V(G)$ and $E(G)$, respectively.  Similarly, $V(D)$ denotes the vertex set of a directed graph $D$ and $A(D)$ denotes its arc set.
Given a weight function $w\colon A(D)\to \R$ and a subgraph $C$ of $D$, the \emph{weight} of $C$ is defined as $w(C)\coloneqq w(A(C))$.  A weight function $w$ is said to be \emph{conservative} if $D$ does not contain a directed cycle of negative weight. 

We refer the reader to \cite{frank2011connections,schrijver2003combinatorial} for basics on matroid optimization. A \emph{matroid} $M$ consists of a finite ground set $E(M)$ and a nonempty set family $\cB(M) \subseteq 2^{E(M)}$ such that for any $B_1, B_2 \in \cB(M)$ and $x \in B_1 \setminus B_2$, there exists $y \in B_2 \setminus B_1$ such that $B_1 - x + y \in \cB(M)$.
Elements in $\cB(M)$ are called \emph{bases}.
The next basis exchange property was proved by Brualdi~\cite{brualdi1969comments}, see also \cite[Theorem~39.12]{schrijver2003combinatorial}.

\begin{lemma}[Brualdi~\cite{brualdi1969comments}] \label{lem:bijection}
    If $B$ and $B'$ are bases of a matroid $M$, then there exists a bijection $\phi\colon B \setminus B' \to B' \setminus B$ such that $B-e+\phi(e)$ is a basis for each $e \in B \setminus B'$.
\end{lemma}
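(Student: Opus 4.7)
The plan is to reduce the statement to a Hall-type condition on a bipartite exchange graph. I would define a bipartite graph $G$ with parts $B \setminus B'$ and $B' \setminus B$, placing an edge between $e \in B \setminus B'$ and $f \in B' \setminus B$ precisely when $B - e + f \in \cB(M)$. Since $|B| = |B'|$ we have $|B \setminus B'| = |B' \setminus B|$, so the desired bijection $\phi$ is exactly a perfect matching of $G$. By Hall's marriage theorem it then suffices to prove $|N_G(X)| \ge |X|$ for every $X \subseteq B \setminus B'$.

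To establish Hall's condition I would rely on the standard fundamental-circuit description of single-element exchanges: for each $f \in B' \setminus B$ the set $B + f$ contains a unique circuit $C(f,B)$, and $B - e + f \in \cB(M)$ holds exactly when $e \in C(f,B) - f$. Fix $X \subseteq B \setminus B'$ and set $Y \coloneqq N_G(X)$. Then for every $f \in (B' \setminus B) \setminus Y$ the circuit $C(f,B)$ is disjoint from $X$, so $C(f,B) - f \subseteq B \setminus X$; in particular $f$ lies in the closure of $B \setminus X$ in $M$.

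I would finish via a rank count. Because $X \subseteq B \setminus B'$, we have $B \cap B' \subseteq B \setminus X$, and combining this with the previous paragraph shows that all of $B' \setminus Y$ is contained in the closure of the independent set $B \setminus X$, whose rank equals $|B|-|X|$. Since $B' \setminus Y \subseteq B'$ is itself independent of cardinality $|B'|-|Y|$, this forces $|B'|-|Y| \le |B|-|X|$, and $|B|=|B'|$ yields $|Y| \ge |X|$, completing the verification of Hall's condition.

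The main obstacle is just invoking the fundamental-circuit equivalence cleanly; once one has $B - e + f \in \cB(M) \iff e \in C(f,B)$, which is immediate from the unique circuit property of $B + f$ combined with the basis exchange axiom, the Hall--rank argument above is routine. No step appears genuinely delicate, which is consistent with this being a well-known classical result.
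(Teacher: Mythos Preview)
Your argument is correct and is essentially the classical proof of Brualdi's exchange lemma via Hall's theorem and fundamental circuits; each step (the equivalence $B-e+f\in\cB(M)\iff e\in C(f,B)$, the closure argument, and the rank comparison) is sound. Note, however, that the paper does not actually supply its own proof of this lemma: it is stated as a cited result from Brualdi~\cite{brualdi1969comments} (with a pointer to~\cite[Theorem~39.12]{schrijver2003combinatorial}), so there is nothing in the paper to compare your approach against.
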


Following \cite{frank2011connections}, we define a \emph{partition matroid} as a direct sum of uniform matroids, and a \emph{unitary partition matroid} as a direct sum of rank-1 uniform matroids. We note that several authors refer to the latter object as partition matroids. Given a matrix $A$ over some field, we denote by $M(A)$ the matroid defined on the column indices of $A$ where a set is a basis of $M(A)$ if the corresponding columns form a basis of the vector space spanned by the columns of $A$. Given a connected graph $G$, its \emph{cycle matroid} $M(G)$ is the matroid whose ground set is $E(G)$ and whose bases are the spanning trees of $G$. 
If $M=M(A)$ for a matrix $A$ over a field $\mathbb{F}$ or a graph $A$, we say that $M$ is \emph{$\mathbb{F}$-representable} or \emph{graphic}, respectively.

\section{Non-zero Basis and Common Basis}\label{sec:non-zero}
This section deals with the problem of finding non-zero bases and non-zero common bases. In \cref{subsec:non-zero-basis}, we give some rather simple algorithmic results on \textsc{Non-Zero Basis}. In \cref{subsec:non-zero-common-basis}, we give a collection of results on the more complex algorithmic landscape of \textsc{Non-Zero Common Basis}.

\subsection{Non-zero Basis}\label{subsec:non-zero-basis}

In this section, we consider (\textsc{Weighted}) \textsc{Non-Zero Basis}. The following theorem can be derived from the description of the lattices of matroid bases by Rieder~\cite{rieder1991lattices}. In what follows we give a direct proof of the result.

\begin{theorem} \label{thm:components}
    Let $M$ be a matroid and $\psi\colon E(M)\to \Gamma$ a group labeling. The following are equivalent:
    \begin{enumerate}[label={\textup{(\roman*)}}]\itemsep0em
        \item all bases of $M$ have the same label, \label{it:same}
        \item $M$ has a basis $B$ such that $\psi(B')=\psi(B)$ holds for each basis $B'$ with $|B\setminus B'|\le 1$, and\label{it:same1}
        \item $\psi$ is constant on each component of $M$. \label{it:constant}
    \end{enumerate}
\end{theorem}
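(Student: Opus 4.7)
The plan is to prove the cycle (i) $\Rightarrow$ (ii) $\Rightarrow$ (iii) $\Rightarrow$ (i). Two of the three implications are essentially bookkeeping. The implication (i) $\Rightarrow$ (ii) is immediate, since (ii) only requires the existence of a single basis $B$ whose one-exchange neighbours share its label, which (i) guarantees for any choice of $B$. For (iii) $\Rightarrow$ (i), write $C_1, \dotsc, C_t$ for the components of $M$ and let $\gamma_j$ be the constant value of $\psi$ on $C_j$. Every basis $B$ of $M$ decomposes as a disjoint union of bases of the $C_j$, so $|B \cap C_j|$ equals the rank of $M$ restricted to $C_j$, and hence $\psi(B) = \sum_{j=1}^{t} |B \cap C_j|\, \gamma_j$ is independent of~$B$.

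The substantive step is (ii) $\Rightarrow$ (iii). Let $B$ be the basis supplied by (ii). The hypothesis says exactly that $\psi(x) = \psi(y)$ whenever $y \in E(M) \setminus B$, $x \in B$, and $B - x + y$ is a basis; equivalently, $\psi$ is constant on each fundamental circuit $C(B, y)$ for $y \notin B$. I would then consider the bipartite exchange graph $G_B$ on vertex set $E(M)$ with bipartition $(B,\, E(M) \setminus B)$, in which $x \in B$ and $y \notin B$ are adjacent iff $x \in C(B, y)$. By the preceding observation, $\psi$ agrees on the two endpoints of every edge of $G_B$, and so $\psi$ is constant on each connected component of $G_B$.

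It remains to identify the connected components of $G_B$ with the components of $M$. The inclusion ``every $G_B$-component lies inside a single matroid component'' is immediate, because adjacent vertices of $G_B$ share the circuit $C(B, y)$ of $M$. For the reverse inclusion, I would invoke the classical fact that, for any basis $B$, the matroid $M$ is connected if and only if $G_B$ is connected, applied componentwise. A direct derivation proceeds by strong circuit elimination: given a circuit $C$ of $M$ containing two prescribed elements with $|C \setminus B| \ge 2$, pick $a \in C \setminus B$ and eliminate $a$ between $C$ and the fundamental circuit $C(B, a)$ to obtain a circuit with strictly fewer elements outside $B$ that still contains a designated element of $C - a$. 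Iterating this reduction expresses $C$ as a chain of fundamental circuits, each of which is contained in a single $G_B$-component, and the overlaps of consecutive circuits glue the elements of $C$ together along a path in $G_B$.

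I expect the main obstacle to be this last step. Although the coincidence of $G_B$-components and matroid components is folklore, a self-contained verification requires a careful case analysis on the location of the two prescribed elements relative to $B$ (both inside, both outside, or split across the bipartition) together with a judicious choice of the witness element at each application of strong circuit elimination.
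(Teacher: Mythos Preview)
Your proposal is correct and follows essentially the same route as the paper: both argue (ii) $\Rightarrow$ (iii) via the bipartite exchange graph $G_B$, observe that the hypothesis forces $\psi$ to be constant on each connected component of $G_B$, and then invoke the fact that the $G_B$-components coincide with the matroid components. The only difference is that the paper dispatches this last fact by a citation to Krogdahl~\cite{Krogdahl1977dependence}, whereas you sketch a direct derivation via strong circuit elimination; your sketch is sound, and your caveat about the case analysis is accurate but not a gap.
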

\begin{proof}
    It is clear that \ref{it:same} implies \ref{it:same1} and \ref{it:constant} implies \ref{it:same}. In what follows we show that \ref{it:same1} implies \ref{it:constant}.
    Let $B$ be a basis such that $\psi(B')=\psi(B)$ holds for each basis $B'$ with $|B\setminus B'| \le 1$. 
    Let $G_B$ denote the bipartite graph with bipartition $(B, E(M) \setminus B)$ and edge set $\Set{xy}{x \in B, y \in E(M) \setminus B, B-x+y\in \mathcal{B}(M)}$. By the assumption on $B$, it follows that $\psi(x)=\psi(y)$ for each edge $xy$ of $G_B$. Then, $\psi$ is constant on each connected component of the graph $G_B$, and thus \ref{it:constant} follows by using that the connected components of the graph $G_B$ coincide with the components of the matroid $M$~\cite{Krogdahl1977dependence}.
\end{proof}

Note that \cref{thm:components}\ref{it:constant} provides a characterization for ``NO'' instances of \textsc{Non-Zero Basis}, while \cref{thm:components}\ref{it:same1} provides a simple algorithm for this problem. Liu and Xu~\cite{liu2023congruency} gave the following extension of this algorithm for \textsc{Weighted Zero Basis} with $\Gamma = \Z_2$, for which the zero and non-zero constraints are equivalent. First, find a minimum weight basis $B \in \cB(M)$, and if $\psi(B) \neq 0$, then we are done. 
Otherwise, consider all the bases of the form $B-x+y$ with $x \in B$ and $y \in E(M) \setminus B$. 
Among these bases, if there is none with a non-zero label, then there does not exist a non-zero basis, otherwise, choose a non-zero basis of minimum weight among the considered ones.
Actually, this algorithm works correctly for \textsc{Weighted Non-Zero Basis} for any group and the proof of~\cite[Proposition~1]{liu2023congruency} can be modified to show its correctness. In what follows, we give a different proof of this fact.

\begin{lemma}\label{lem:weighted-non-zero-base}
    Let $M$ be a matroid, $\psi\colon E(M) \to \Gamma$ a group labeling, and $w\colon E(M) \to \R$ a weight function.
    Then, for any minimum-weight basis $B$, there exists a minimum-weight non-zero basis $B^*$ such that $|B \setminus B^*| \leq 1$, provided that at least one non-zero basis exists.
\end{lemma}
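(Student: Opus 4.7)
The plan is to handle the case $\psi(B)\ne 0$ trivially---then $B$ is already a minimum-weight non-zero basis and we set $B^*=B$---and otherwise, starting from an arbitrary minimum-weight non-zero basis $B'$, to shrink $|B\setminus B'|$ down to at most $1$ by a single basis exchange performed on $B$. The main tool is the Brualdi bijection from \cref{lem:bijection}: a map $\phi\colon B\setminus B'\to B'\setminus B$ such that $B-e+\phi(e)\in\cB(M)$ for every $e\in B\setminus B'$.

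Next I would extract two immediate consequences of $B$ being minimum weight. Each exchange is non-improving, so $w(\phi(e))\ge w(e)$ for every $e\in B\setminus B'$, and summing telescopes to $\sum_{e\in B\setminus B'}\bigl(w(\phi(e))-w(e)\bigr)=w(B')-w(B)\ge 0$, a sum with non-negative terms. On the label side, the analogous identity $\sum_{e\in B\setminus B'}\bigl(\psi(\phi(e))-\psi(e)\bigr)=\psi(B')-\psi(B)=\psi(B')\ne 0$ forces the set $S \coloneqq \Set{e\in B\setminus B'}{\psi(\phi(e))\ne\psi(e)}$ to be non-empty.

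The crux is to locate some $e\in S$ with $w(\phi(e))-w(e)\le w(B')-w(B)$; for any such $e$ the basis $B-e+\phi(e)$ is non-zero and has weight at most $w(B')$, hence equal to $w(B')$ by the choice of $B'$, and we take $B^*=B-e+\phi(e)$. If no such $e$ existed, every $e\in S$ would contribute strictly more than $w(B')-w(B)\ge 0$ to the telescoping sum, yielding
\[
w(B')-w(B) \;\ge\; \sum_{e\in S}\bigl(w(\phi(e))-w(e)\bigr) \;>\; |S|\bigl(w(B')-w(B)\bigr) \;\ge\; w(B')-w(B),
\]
a contradiction. The first inequality drops the non-negative terms indexed outside $S$, the strict one sums the hypothesized strict inequalities, and the last uses $|S|\ge 1$ together with $w(B')-w(B)\ge 0$.

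The main obstacle I anticipate is precisely the coupling between the weight and the label: cheap exchanges could conceivably be exactly those preserving the label, while every label-changing exchange could be too expensive. The averaging argument on $S$ is designed to rule out this pathology, and it relies essentially on the assumption that $B$ is of minimum weight, which is what makes every summand in the telescoping sum non-negative and thus makes the pigeonhole bite.
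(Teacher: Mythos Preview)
Your proof is correct and takes a genuinely different route from the paper's. The paper chooses $B'$ to be a minimum-weight non-zero basis with $|B\setminus B'|$ minimal, applies the \emph{symmetric} exchange axiom to obtain $x\in B\setminus B'$, $y\in B'\setminus B$ with both $B-x+y$ and $B'+x-y$ bases, and then uses the identities $\psi(B-x+y)+\psi(B'+x-y)=\psi(B)+\psi(B')$ and $w(B-x+y)+w(B'+x-y)=w(B)+w(B')$ to argue that either $B-x+y$ works or $B'+x-y$ is a strictly closer minimum-weight non-zero basis, contradicting minimality. Your argument instead uses only the one-sided Brualdi bijection (\cref{lem:bijection}), places no minimality assumption on $|B\setminus B'|$, and replaces the symmetric-exchange trick by an averaging step over the label-changing indices $S$. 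What your approach buys is that it is slightly more elementary (no symmetric exchange needed) and fully constructive from \emph{any} minimum-weight non-zero basis; what the paper's approach buys is brevity and the pleasing two-line coupling of weight and label through a single symmetric exchange.
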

\begin{proof}
    Let $B'$ be a minimum-weight non-zero basis with $|B \setminus B'|$ being minimal.
    If $B=B'$ then we are done, otherwise $\psi(B)=0$.
    According to the symmetric exchange axiom, we can choose $x \in B \setminus B'$ and $y \in B' \setminus B$ such that $B-x+y$ and $B'+x-y$ are both bases.
    As $0 \neq \psi(B)+\psi(B')=\psi(B-x+y)+\psi(B'+x-y)$, one of $B-x+y$ and $B'+x-y$ must be non-zero.
    Suppose $\psi(B-x+y) \ne 0$.
    Since $w(B)+w(B')=w(B-x+y)+w(B'+x-y)$ and $w(B)$ has the minimum weight, we have $w(B-x+y) \le w(B')$, which implies $w(B-x+y) = w(B')$ by $\psi(B-x+y) \ne 0$.
    Thus, we can take $B^* = B-x+y$.
    If $\psi(B'+x-y) \ne 0$, then it can be shown in the same way that $B'+x-y$ is a minimum-weight non-zero basis, contradicting the assumption that $B'$ is a minimum-weight non-zero basis closest to $B$.
\end{proof}

We obtain the following from \cref{lem:weighted-non-zero-base}.

\begin{theorem}\label{thm:weighted-non-zero-basis}
   \textscup{Weighted Non-Zero Basis} can be solved in polynomial time.
\end{theorem}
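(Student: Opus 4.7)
The plan is to turn \cref{lem:weighted-non-zero-base} into an explicit algorithm, essentially exactly as described in the paragraph preceding the lemma. First I would compute, by the standard greedy algorithm, a minimum-weight basis $B$ of $M$; this requires only polynomially many independence oracle calls and weight comparisons. If $\psi(B) \neq 0$, then $B$ is itself a minimum-weight non-zero basis and we return it.

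If instead $\psi(B) = 0$, I would enumerate every pair $(x,y)$ with $x \in B$ and $y \in E(M) \setminus B$, test via the independence oracle whether $B - x + y$ is a basis, and if so compute $\psi(B-x+y)$. Among all such bases with non-zero label, output one of minimum weight; if no such pair produces a non-zero basis, report that no non-zero basis exists. There are only $|B| \cdot |E(M) \setminus B| = O(|E(M)|^2)$ pairs to check, each in polynomial time, so the whole procedure runs in polynomial time.

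Correctness follows directly from \cref{lem:weighted-non-zero-base}: if any non-zero basis exists at all, then the lemma guarantees a minimum-weight non-zero basis $B^*$ with $|B \setminus B^*| \le 1$, and such a $B^*$ is precisely one of the candidates $B - x + y$ enumerated above (the case $B^* = B$ being excluded because $\psi(B) = 0$). Conversely, if no candidate $B - x + y$ has non-zero label, then no minimum-weight non-zero basis lies within distance one of $B$, so by the contrapositive of the lemma, no non-zero basis exists at all. There is no real obstacle here since the lemma already does the heavy lifting; the only thing to observe is that the $O(|E(M)|^2)$ enumeration suffices to certify non-existence as well as to find an optimal solution.
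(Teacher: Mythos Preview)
Your proposal is correct and follows exactly the approach the paper intends: the paper simply states that the theorem follows from \cref{lem:weighted-non-zero-base}, and you have spelled out the implied algorithm (greedy min-weight basis, then enumerate all single exchanges) and its correctness argument in full.
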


\subsection{Non-zero Common Basis} \label{subsec:non-zero-common-basis}
This section is dedicated to \textsc{Non-Zero Common Basis}. For the algorithmic tractability, it turns out that it is decisive whether $\Gamma$ contains $\Z_2$ or not. In \cref{sec:algz2}, for the case that $\Gamma$ does not contain $\Z_2$, we give a polynomial-time algorithm for \textsc{Non-Zero Common Basis} and a polynomial-time approximation algorithm for the weighted version of the problem. While the hardness proof for the case that $\Gamma$ contains $\Z_2$ is postponed to \cref{sec:hard}, we give a polynomial-time algorithm solving the weighted version for arbitrary groups in case that both matroids are partition matroids or one of the two matroids is a partition matroid with two partition classes in \cref{sec:partmat}. We further give a characterization of the no-instances if $\Gamma$ does not contain $\Z_2$ in \cref{sec:cert} and we deal with a connection of \textsc{Non-Zero Common Basis} and reconfiguration problems in \cref{sec:reconfig}.
\subsubsection{Polynomial-time Algorithm with \texorpdfstring{$\Z_2 \not\le \Gamma$}{Z2 </= Γ}}\label{sec:algz2}
In this section, we show the polynomial solvability of \textsc{Non-Zero Common Basis} when $\Z_2 \not \le \Gamma$, that is, $\Gamma$ does not contain any element of order two. Later, we will show in \cref{thm:non-zero-common-basis-hard} that the problem is hard if $\Z_2 \le \Gamma$. Our algorithm is a modification of the weighted matroid intersection algorithm given by Krogdahl~\cite{krogdahl1974combinatorial, krogdahl1976combinatorial, Krogdahl1977dependence} and independently by Brezovec, Cornu\'{e}jols, and Glover~\cite{brezovec1986two}.

We will use the following result on directed graphs. While several works concentrated on finding non-zero paths and cycles in group-labeled graphs~\cite{iwata2022finding}, their setting does not seem to include group-labeled digraphs. Therefore, we give a proof of the next result in \cref{sec:appendix}.  While this result may be of independent interest, it will later be applied as a subroutine.

\begin{restatable}{theorem}{nzdircycle} \label{thm:graphalg}
    Let $D$ be a digraph, $\psi\colon A(D)\rightarrow \Gamma$ a group labeling, and $w\colon A(D)\rightarrow \R$ a conservative  weight function. Then, there is a polynomial-time algorithm that returns a non-zero directed cycle in $D$ which is shortest with respect to $w$ or correctly reports that $D$ contains no non-zero directed cycle. 
\end{restatable}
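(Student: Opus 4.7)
My plan is to reduce the problem to a polynomial number of shortest-path computations after standard preprocessing. First, since $w$ is conservative, I would apply Bellman--Ford to obtain potentials $p\colon V(D)\to\R$ so that the reweighted arcs $\tilde w(u,v) := w(u,v)+p(u)-p(v)$ are nonnegative; this preserves the weight of every cycle. Every directed cycle lies entirely in one strongly connected component, so I would decompose $D$ into SCCs and process each independently, returning the overall minimum.

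Second, within a fixed SCC, I would run Dijkstra from each vertex $s$ in the reweighted graph to compute distances $d(s,\cdot)$, and during the computation maintain, at each vertex $t$, the set $L_s(t) \subseteq \Gamma$ of labels of shortest $s$-$t$ walks in the shortest-path DAG, truncated to at most two distinct elements. The truncation is justified because we only need to know whether some forbidden label can be avoided, and it uses only the ``$+$'' and ``$=0$?'' oracles on $\Gamma$. Then for each arc $a=(u,v)$, the candidate cycle consists of $a$ together with a shortest $v$-$u$ walk; if either label $\ell \in L_v(u)$ satisfies $\psi(a)+\ell\ne 0$, I output the resulting cycle of weight $w(a)+d(v,u)$. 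Moreover, whenever some $L_s(t)$ attains two distinct labels, the formal difference of the corresponding shortest walks is a zero-weight closed cycle sum of non-zero label, and flow decomposition of this sum into simple directed cycles (each of nonnegative weight, by conservativity) yields a zero-weight simple directed cycle of non-zero label, which is necessarily optimal.

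The remaining case is when every pair in the SCC has a unique shortest-walk label and every arc yields a zero-label shortest cycle. The shortest non-zero cycle is then strictly heavier than the shortest cycle, and to locate it I would perform, for each arc $a=(u,v)$, a second-shortest-walk computation for $v$-$u$ walks with label different from $-\psi(a)$, implemented by augmenting Dijkstra with a ``Pareto-second'' label slot per vertex. The principal obstacle I anticipate is proving that keeping only two label slots per vertex suffices to certify the true optimum: this relies on an exchange argument, namely that the formal difference of an optimal non-zero cycle with the unique shortest walks decomposes into a bounded number of augmenting sub-paths that can each be captured by a single alternative label. The argument is an adaptation of standard second-shortest-path techniques, combined with the observation that the non-zero-label condition is invariant under symmetric manipulations of $s$-$t$ walks within the shortest-path DAG, and the entire procedure only accesses $\Gamma$ through its oracles, so the runtime is polynomial for arbitrary abelian $\Gamma$.
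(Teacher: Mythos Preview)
Your proposal has a genuine gap in the second step. You claim that when $L_s(t)$ contains two distinct labels, the formal difference $P_1 - P_2$ of the two shortest $s$--$t$ walks decomposes into simple directed cycles of nonnegative weight, yielding a zero-weight non-zero cycle. This is false: $P_1 - P_2$ is a \emph{signed} circulation, and its decomposition into simple directed cycles generally requires negative coefficients. Concretely, take $V = \{s,t\}$, parallel arcs $a_1, a_2$ from $s$ to $t$ each of weight $1$ with labels $g \ne h$, and an arc $b$ from $t$ to $s$ of weight $1$. Then $L_s(t) = \{g,h\}$, but every directed cycle has weight $2$; there is no zero-weight cycle at all. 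The formal difference $a_1 - a_2$ equals $(a_1 b) - (a_2 b)$ in the cycle space, and neither summand has weight zero. This breaks your case distinction: once Case~A no longer produces an optimal solution, your ``for each arc $a$, output weight $w(a)+d(v,u)$'' step can also be non-optimal (the shortest non-zero cycle need not be tight at any of its arcs), and you are forced into the ``remaining case'' throughout, which you yourself flag as the principal obstacle and only sketch.

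What the paper does is essentially your ``remaining case'' carried out precisely, without the preliminary case split. It fixes a shortest-paths tree $T$ from each source $s$ and runs a Dijkstra-like procedure that computes, for every $v$, a shortest $s$--$v$ walk whose label differs from that of the tree path $T_v$; this is exactly your ``two label slots'' idea, but with a correctness proof. Taking $v = s$ gives a shortest non-zero closed walk from $s$, and one flow decomposition of that closed walk (a genuine walk, hence a nonnegative arc multiset) extracts a non-zero simple cycle of no greater weight. Minimising over $s$ then yields the optimum. Your instinct that two slots suffice is right; what is missing is the proof that the modified Dijkstra maintaining them is correct, and the recognition that the decomposition step must be applied to a closed walk, not to a difference of two open walks.
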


Let $M_1$ and $M_2$ be matroids on a common ground set $E$ and $\psi\colon E \to \Gamma$ a group labeling.
Given a common basis $B$, we define the digraph $D_{M_1, M_2}(B)$ on vertex set $E$ and the labeling $\psi'$ on its arcs as follows. 
For each $x \in B$ and $y \in E\setminus B$ such that $B-x+y \in \cB(M_1)$, we add an arc $xy$ to $D_{M_1, M_2}(B)$ with label $\psi'(xy)\coloneqq \psi(y)$. 
Similarly, for each $x \in B$ and $y \in E\setminus B$ such that $B-x+y\in \cB(M_2)$, we add an arc $yx$ and with label $\psi'(yx) \coloneqq -\psi(x)$. 

\begin{lemma} \label{lem:non-zero-cycle}
    Let $M_1$ and $M_2$ be matroids on a common ground set $E$ and $\psi\colon E \to \Gamma$ a group labeling. 
    Let $B$ and $B'$ be common bases of  $M_1$ and $M_2$ such that $\psi(B)=0$ and $\psi(B') \ne 0$.
    Then, $D_{M_1, M_2}(B)$ contains a non-zero directed cycle $C$ with $V(C) \subseteq B \symdif B'$.
\end{lemma}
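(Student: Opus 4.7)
The plan is to use Brualdi's basis exchange property (\cref{lem:bijection}) once for each matroid to produce a $2$-factor of the exchange digraph whose total label is $\psi(B')$, and then observe that at least one of its constituent directed cycles must inherit a non-zero label.

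First, I would apply \cref{lem:bijection} to $M_1$ with the bases $B$ and $B'$ to obtain a bijection $\phi_1\colon B \setminus B' \to B' \setminus B$ such that $B - x + \phi_1(x) \in \cB(M_1)$ for every $x \in B \setminus B'$. By definition of $D_{M_1,M_2}(B)$, the arc $x\phi_1(x)$ is present with label $\psi'(x\phi_1(x)) = \psi(\phi_1(x))$. Applying \cref{lem:bijection} to $M_2$ yields a bijection $\phi_2\colon B \setminus B' \to B' \setminus B$ such that for each $x \in B \setminus B'$ the arc $\phi_2(x)x$ lies in $D_{M_1,M_2}(B)$ with label $\psi'(\phi_2(x)x) = -\psi(x)$.

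Next, I would consider the spanning subdigraph $H$ of $D_{M_1,M_2}(B)[B \symdif B']$ consisting of these $2|B \setminus B'|$ arcs. Every vertex $x \in B \setminus B'$ has exactly one out-arc (to $\phi_1(x)$) and one in-arc (from $\phi_2(x)$), and every vertex $y \in B' \setminus B$ has exactly one in-arc (from $\phi_1^{-1}(y)$) and one out-arc (to $\phi_2^{-1}(y)$). Hence $H$ decomposes into vertex-disjoint directed cycles covering $B \symdif B'$.

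Finally, I would compute the total $\psi'$-label of $H$ by summing the two contributions:
\[
\sum_{x \in B \setminus B'} \psi(\phi_1(x)) \;-\; \sum_{x \in B \setminus B'} \psi(x) \;=\; \psi(B' \setminus B) - \psi(B \setminus B') \;=\; \psi(B') - \psi(B) \;\ne\; 0,
\]
where the last step uses $\psi(B) = 0$ and $\psi(B') \ne 0$. Since the cycles of $H$ partition its arc set, at least one of them, say $C$, satisfies $\psi'(C) \ne 0$, and by construction $V(C) \subseteq B \symdif B'$, as required. I do not anticipate a real obstacle: the whole argument amounts to packaging Brualdi's exchange bijections as a $2$-regular subgraph of the exchange digraph and reading off its label telescopically.
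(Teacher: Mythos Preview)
Your proposal is correct and follows essentially the same approach as the paper's own proof: apply \cref{lem:bijection} once per matroid to obtain two perfect matchings between $B\setminus B'$ and $B'\setminus B$ in the exchange digraph, note that their union is a disjoint union of directed cycles with total label $\psi(B')-\psi(B)\ne 0$, and conclude that some cycle is non-zero. The only difference is that you spell out the bijections $\phi_1,\phi_2$ and the degree count explicitly, whereas the paper states it more tersely.
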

\begin{proof}
    By \cref{lem:bijection}, $D_{M_1, M_2}(B)$ contains a collection $P_1$ of $|B\setminus B'|$ pairwise vertex-disjoint arcs from $B\setminus B'$ to $B'\setminus B$ and a collection $P_2$ of $|B\setminus B'|$ pairwise vertex-disjoint arcs from $B'\setminus B$ to $B\setminus B'$. The union of $P_1$ and $P_2$ has label $\psi(B'\setminus B) - \psi(B\setminus B') = \psi(B')-\psi(B) = \psi(B) \ne 0$ and consists of pairwise vertex-disjoint directed cycles in $D_{M_1, M_2}(B)$, hence it contains a non-zero directed cycle. 
\end{proof}

We will rely on the following two lemmas by Krogdahl~\cite{krogdahl1974combinatorial, krogdahl1976combinatorial, Krogdahl1977dependence} and Brezovec, Cornu\'{e}jols, and Glover~\cite{brezovec1986two}. For the proof of \cref{lem:unique}, see also \cite[Theorem~39.13]{schrijver2003combinatorial}.

\begin{lemma}[Krogdahl~\cite{krogdahl1974combinatorial, krogdahl1976combinatorial, Krogdahl1977dependence}] \label{lem:unique}
    Let $B$ be a basis of a matroid $M$ and consider $X \subseteq B$ and $X' \subseteq E(M) \setminus B$ with $|X| = |X'|$. If there exists a unique bijection $\phi\colon X \to X'$ such that $B-e+\phi(e)$ is a basis for each $e \in X$, then $(B \setminus X) \cup X'$ is a basis.
\end{lemma}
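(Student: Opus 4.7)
My plan is to prove the lemma by induction on $k = |X|$, with the cases $k \le 1$ being immediate. Consider the exchange bipartite graph $H$ on vertex parts $X$ and $X'$ with an edge $\set{e,y}$ whenever $B-e+y \in \cB(M)$; the hypothesis says that $\phi$ is the unique perfect matching of $H$. A standard folklore lemma---build an alternating walk from any non-matching edge, extract an alternating cycle, and swap it to produce a second perfect matching---shows that any bipartite graph with a unique perfect matching must contain a vertex of degree $1$. Moreover, matroid duality $M \leftrightarrow M^*$ swaps $B$ with $E \setminus B$, interchanges the roles of $X$ and $X'$, and preserves both the exchange bipartite graph and the target conclusion, so I may assume without loss of generality that the degree-$1$ vertex lies in $X$. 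Call it $e$ and set $y := \phi(e)$; the degree-$1$ condition then reads $B - e + y' \notin \cB(M)$ for every $y' \in X' - y$.

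Let $B_1 := B - e + y \in \cB(M)$, $X_1 := X - e \subseteq B_1$, $X_1' := X' - y \subseteq E \setminus B_1$, and $\phi_1 := \phi|_{X_1}$. Since $(B_1 \setminus X_1) \cup X_1' = (B \setminus X) \cup X'$, it suffices to verify the two hypotheses of the lemma for $(M, B_1, X_1, X_1', \phi_1)$ and then invoke the inductive hypothesis. Both verifications reduce to applying Brualdi's symmetric exchange axiom to $B_1$ and a well-chosen auxiliary basis, together with the observation that the degree-$1$ condition at $e$ rules out one of the two candidate exchange partners.

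To see that $\phi_1$ is an exchange bijection for $B_1$, for each $e' \in X_1$ I apply symmetric exchange to $B_1$ and $B - e' + \phi(e')$ with the element $y \in B_1 \setminus (B - e' + \phi(e'))$; the two candidate partners in $(B - e' + \phi(e')) \setminus B_1$ are $e$ and $\phi(e')$, and choosing $\phi(e')$ would force $B - e + \phi(e') \in \cB(M)$, contradicting the degree-$1$ property of $e$. Hence the partner is $e$, yielding $B_1 - e' + \phi(e') \in \cB(M)$. For uniqueness of $\phi_1$, suppose $\psi \colon X_1 \to X_1'$ is a second exchange bijection for $B_1$ and extend it by $e \mapsto y$ to a bijection $\phi^* \colon X \to X'$ with $\phi^* \ne \phi$. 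For each $e' \in X_1$, apply symmetric exchange to $B$ and $B_1 - e' + \psi(e') = B - e - e' + y + \psi(e')$ with the element $y$; the two candidate partners are $e$ and $e'$, and choosing $e'$ would force $B - e + \psi(e') \in \cB(M)$, again contradicting the degree-$1$ property of $e$. Thus the partner is $e$, giving $B - e' + \psi(e') \in \cB(M)$, and together with $B - e + y \in \cB(M)$ this makes $\phi^*$ a valid exchange bijection for $B$, contradicting the uniqueness of $\phi$.

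The main subtlety is to orchestrate these two applications of symmetric exchange so that the bad candidate partner is exactly the element excluded by the degree-$1$ hypothesis at $e$; this is the only place where the full strength of the uniqueness assumption on $\phi$ is used. Once both verifications are in place, the inductive hypothesis applied to $(M, B_1, X_1, X_1', \phi_1)$ yields that $(B_1 \setminus X_1) \cup X_1' = (B \setminus X) \cup X'$ is a basis of $M$, completing the proof.
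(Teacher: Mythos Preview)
Your proof is correct. The paper does not give its own proof of this lemma; it is quoted from Krogdahl with a pointer to Schrijver's book, so there is no in-paper argument to compare against.

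For context, the textbook proof (Schrijver, Theorem~39.13) also exploits the unique-perfect-matching structure but typically orders $X=\{x_1,\dots,x_k\}$ and $X'=\{y_1,\dots,y_k\}$ so that $\phi(x_i)=y_i$ and the exchange bipartite graph is ``triangular'' (no edge $x_iy_j$ with $j>i$), then shows directly by a fundamental-circuit argument that $B_i \coloneqq (B\setminus\{x_1,\dots,x_i\})\cup\{y_1,\dots,y_i\}$ is a basis for every $i$. Your route is a variant of the same induction: you peel off a single degree-$1$ vertex, use duality to place it on the $X$ side, and replace the circuit calculation by two clean applications of symmetric exchange. This is a perfectly valid alternative; the duality reduction is a nice touch that avoids having to argue separately for a degree-$1$ vertex in $X'$, and the symmetric-exchange mechanics are handled carefully (in both applications the ``bad'' candidate partner is ruled out precisely by the degree-$1$ condition at $e$, as you note).
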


\begin{lemma}[Brezovec--Cornu\'{e}jols--Glover~\cite{brezovec1986two}] \label{lem:2cycles} 
    Let $C$ and $C'$ be two different directed cycles in a digraph $D$ with $V(C)=V(C')$. Then, there exist directed cycles $C_1,\dots, C_t$ for some $t \ge 3$ such that $V(C_i) \subsetneq V(C)$ for $i \in [t]$ and each arc of $D$ appears in $C_1, \dots, C_t$ in total as many times as it appears in $C$ and $C'$ in total.
\end{lemma}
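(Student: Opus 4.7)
The plan is to treat the arc-multiset union $H \coloneqq A(C) \cup A(C')$ (keeping multiplicities) as a multi-digraph on the common vertex set $V(C) = V(C')$, in which every vertex has in-degree and out-degree exactly $2$, and to construct a decomposition of $H$ into arc-disjoint directed cycles $C_1, \dots, C_t$ with $t \ge 3$ and $V(C_i) \subsetneq V(C)$ for every $i$. Set $n \coloneqq |V(C)|$.

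First I would peel off one ``short'' cycle. Since $C \ne C'$, there is a vertex $v$ at which $C$ and $C'$ have different out-arcs; writing $C$ as the cyclic sequence $v = v_0, v_1, \dots, v_{n-1}$ and letting $v_i$ denote the $C'$-successor of $v_0$, the inequality $v_i \ne v_1$ forces $i \ge 2$. Then the closed walk $C_1 \coloneqq (v_0, v_i, v_{i+1}, \dots, v_{n-1}, v_0)$, formed from the single arc $(v_0, v_i) \in A(C')$ and the $n - i$ trailing arcs of $C$, is a directed cycle of length $n - i + 1$ in $H$, and it avoids $v_1$, so $V(C_1) \subsetneq V(C)$.

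Next I would examine $H \setminus C_1$, which remains balanced at every vertex and has $2n - (n - i + 1) = n + i - 1 \ge n + 1$ arcs. Any balanced multi-digraph decomposes into arc-disjoint directed cycles (iteratively extract a cycle through any vertex of positive out-degree), and since every such cycle uses at most $n$ arcs, any decomposition of $H \setminus C_1$ consists of at least two cycles. Together with $C_1$ this yields a decomposition of $H$ into $t \ge 3$ directed cycles whose arc-multiplicities match those of $A(C) \cup A(C')$.

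The main obstacle will be to guarantee that none of the cycles in the chosen decomposition of $H \setminus C_1$ is Hamiltonian on $V(C)$, since otherwise the strict-inclusion requirement $V(C_i) \subsetneq V(C)$ could fail. I would exploit the forced structure of $H \setminus C_1$: at every vertex of $\{v_0, v_i, v_{i+1}, \dots, v_{n-1}\}$ the in-arc and out-arc are uniquely determined, and in particular at $v_0$ the forced out-arc $(v_0, v_1)$ comes from $C$ while the forced in-arc comes from $C'$. A structural trace should show that any hypothetical Hamiltonian directed cycle in $H \setminus C_1$ would have to accommodate both all forced $C'$-arcs at $v_{i+1}, \dots, v_{n-1}$ and the forced transitions at $v_0$ and $v_i$, and that this is incompatible with visiting $v_1, \dots, v_{i-1}$ exactly once. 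As a fallback, one can minimise over all decompositions of $H \setminus C_1$ the number of Hamiltonian cycles present and, if this minimum is positive, take such a Hamiltonian $D$ together with $C$ (which are two distinct Hamiltonian directed cycles on $V(C)$ with $A(C) \cup A(D) \subseteq H$) and re-apply the Step~1 construction to the pair $(C, D)$ to obtain a short cycle $C_1' \subseteq A(C) \cup A(D)$; substituting $C_1'$ for parts of $D$ while redistributing the affected arcs of $C$ should strictly reduce the count of Hamiltonian cycles and contradict minimality. Making this rewriting precise --- tracking how the arcs of $C$ used by $C_1$ and by $C_1'$ interact --- is the technical crux.
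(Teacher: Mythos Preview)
The paper does not prove this lemma; it is quoted from Brezovec, Cornu\'{e}jols, and Glover \cite{brezovec1986two} without argument, so there is no in-paper proof to compare against. Your Step~1 and the arc count for $H\setminus C_1$ are correct, but the proposal has a real gap at exactly the point you yourself flag as the ``technical crux,'' and neither of your two suggested fixes closes it.

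The ``structural trace'' is only a hope: once $i\ge 3$ the vertices $v_1,\dots,v_{i-1}$ all have in/out-degree~$2$ in $H\setminus C_1$, and the forced arcs at the degree-$1$ vertices $v_0,v_i,\dots,v_{n-1}$ do not by themselves preclude a Hamiltonian cycle --- one has to analyse how the $C'$-arcs thread through $\{v_1,\dots,v_{i-1}\}$, and you have not done this. More seriously, the fallback is circular as written. You want to apply Step~1 to the pair $(C,D)$, but $C$ is \emph{not} contained in $H\setminus C_1$: the arcs $(v_i,v_{i+1}),\dots,(v_{n-1},v_0)$ of $C$ were already spent on $C_1$. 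Hence the short cycle $C_1'\subseteq A(C)\cup A(D)$ you extract may well use arcs of $C$ that are no longer available in $H\setminus C_1$, and ``substituting $C_1'$ for parts of $D$'' is not an operation on the decomposition of $H\setminus C_1$ at all --- to make room for those arcs you would have to dismantle $C_1$, which puts you back to decomposing all of $H$ and erases the progress. The claimed strict decrease in the Hamiltonian-cycle count is therefore not established.

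If you want a minimality argument, it is cleaner to minimise over cycle decompositions of $H$ (not of $H\setminus C_1$): Step~1 already shows the minimum number of Hamiltonian cycles is at most~$1$, and if it equals~$1$ you can swap between the Hamiltonian $D$ and an adjacent short cycle $E$ in the \emph{same} decomposition; the advantage is that now every arc you touch is genuinely available. Carrying this out still requires checking that the swap does not create a new Hamiltonian cycle, which is the real content of the Brezovec--Cornu\'{e}jols--Glover argument.
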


The following result and proof are analogous to that of \cite[Theorem~2]{brezovec1986two}. In that result, a weight function is given instead of a labeling, and the constraint ``non-zero'' is replaced by ``negative''.

\begin{restatable}{lemma}{goodcycle} \label{lem:goodcycle} 
    Let $\Gamma$ be a group such that $\Z_2 \not \le \Gamma$.
    Let $M_1$ and $M_2$ be matroids on a common ground set $E$, $\psi\colon E \to \Gamma$ a group labeling, and $B$ a common basis.
    If $C$ is a non-zero directed cycle of $D_{M_1, M_2}(B)$ whose vertex set is inclusion-wise minimal among non-zero directed cycles, then $B\symdif V(C)$ is a common basis.
\end{restatable}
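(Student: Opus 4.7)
The plan is to apply \cref{lem:unique} separately to each of $M_1$ and $M_2$. Write $V(C) = X \cup Y$ with $X = V(C) \cap B$ and $Y = V(C) \setminus B$, and note $|X| = |Y|$ since directed cycles in $D_{M_1,M_2}(B)$ alternate between $B$ and $E \setminus B$. The $M_1$-arcs of $C$ yield a bijection $\phi_1 \colon X \to Y$ with $B - x + \phi_1(x) \in \cB(M_1)$ for each $x \in X$; similarly, the $M_2$-arcs give $\phi_2 \colon Y \to X$ with $B - \phi_2(y) + y \in \cB(M_2)$. By \cref{lem:unique}, if we show $\phi_1$ is the \emph{unique} bijection $X \to Y$ with the $M_1$-property, then $B \symdif V(C) = (B \setminus X) \cup Y$ is a basis of $M_1$, and analogously for $M_2$.

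The key observation is that the label of \emph{any} directed cycle $C''$ in $D_{M_1,M_2}(B)$ depends only on its vertex set: by the definition of arc labels $\psi'$, one has $\psi'(C'') = \psi(V(C'') \setminus B) - \psi(V(C'') \cap B)$. In particular, any directed cycle on vertex set $V(C)$ has label equal to $\psi'(C)$, and for any collection of vertex-disjoint directed cycles covering $V(C)$, the sum of their labels equals $\psi'(C)$.

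Now suppose toward contradiction that some other bijection $\phi'_1 \ne \phi_1$ witnesses the $M_1$-property. Combining the $M_1$-arcs of $\phi'_1$ with the $M_2$-arcs of $C$ (i.e., $\phi_2$) produces a subgraph in which every vertex of $V(C)$ has in- and out-degree exactly one, hence a disjoint union $\mathcal{C}'$ of directed cycles covering $V(C)$, with $\mathcal{C}' \ne C$. If $\mathcal{C}'$ consists of more than one cycle, then each component has vertex set properly contained in $V(C)$ and the labels of these components sum to $\psi'(C) \ne 0$, so one of them is non-zero, contradicting the minimality of $V(C)$. Otherwise $\mathcal{C}'$ is a single directed cycle $C'$ on $V(C)$ with $C' \ne C$, and we invoke \cref{lem:2cycles} to obtain directed cycles $C_1,\dots,C_t$ with $V(C_i) \subsetneq V(C)$ such that $\sum_i \psi'(C_i) = \psi'(C) + \psi'(C') = 2\psi'(C)$. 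Because $\Z_2 \not\le \Gamma$ and $\psi'(C) \ne 0$, we have $2\psi'(C) \ne 0$, so some $C_i$ is a non-zero directed cycle with strictly smaller vertex set, again contradicting minimality.

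An identical argument, obtained by combining an alternative $M_2$-bijection $\phi'_2 \ne \phi_2$ with the original $M_1$-arcs of $C$, shows uniqueness of $\phi_2$ and yields that $B \symdif V(C)$ is a basis of $M_2$. The only place where the hypothesis $\Z_2 \not\le \Gamma$ is used is to rule out $2\psi'(C) = 0$ in the single-cycle case, so the main subtlety is handling precisely that case; the multi-cycle subcase and the general bookkeeping are straightforward once the vertex-set-dependence of cycle labels is noted.
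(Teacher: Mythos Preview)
Your proof is correct and follows essentially the same approach as the paper's. The only cosmetic difference is the logical framing: the paper assumes $B \symdif V(C)$ fails to be a basis of (say) $M_1$, invokes the contrapositive of \cref{lem:unique} to obtain a second matching $P'_1 \ne P_1$, and then derives a contradiction exactly as you do; you instead assume a second bijection $\phi'_1$ exists, derive the same contradiction, and then apply \cref{lem:unique} directly. Your explicit observation that $\psi'(C'')$ depends only on $V(C'')$ is a helpful clarification of what the paper leaves implicit when writing $\psi'(C') = \psi'(C)$.
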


\begin{proof}
    Let $C$ be a non-zero directed cycle such that there exists no non-zero directed cycle $C'$ with $V(C') \subsetneq V(C)$.
    Let $X \coloneqq V(C) \cap B$ and $X' \coloneqq V(C) \setminus B$.
    Assume for contradiction that $(B \setminus X) \cup X'$ is not a common basis. We may assume by symmetry that $(B \setminus X) \cup X' \not \in \cB(M_1)$. 
    Let $P_1$ and $P_2$ denote the sets of arcs of $C$ directed from $B$ to $E\setminus B$ and from $E\setminus B$ to $B$, respectively.
    Note that $P_1$ and $P_2$ are matchings between $X$ and $X'$ directed from $X$ to $X'$ and from $X'$ to $X$, respectively.    
    As $(B \setminus X) \cup X' \not \in \cB(M_1)$, by \cref{lem:unique} there exists a matching $P'_1$ between $X$ and $X'$ directed from $X$ to $X'$ in $D_{M_1, M_2}(B)$ which is different from $P_1$.
    Let $C'$ be defined by $A(C')=A(C)-P_1+P'_1$.
    Observe that $\psi'(C')=\psi'(C)$ and $C'$ is either the union of at least two pairwise vertex-disjoint directed cycles or a single directed cycle.
    In the first case, $C'$ is the union of vertex-disjoint directed cycles $C_1,\dots, C_t$ such that $V(C_i) \subsetneq V(C)$ for $i \in [t]$,  thus $0 \ne \psi(C') = \psi'(C_1)+\dots + \psi'(C_t)$ implies that $\psi'(C_i) \ne 0$ for some $i \in [t]$.
    In the second case, let $C_1,\dots, C_t$ be the directed cycles provided by \cref{lem:2cycles}, then $0 \ne 2 \psi'(C) = \psi'(C_1)+\dots + \psi'(C_t)$ implies that $\psi'(C_i) \ne 0$ for some $i \in [t]$, using that $\Z_2 \not \le \Gamma$. In both cases, $\psi'(C_i) \ne 0$ holds for some directed cycle $C_i$ with $V(C_i) \subsetneq V(C)$, contradicting the choice of $C$.
\end{proof}

Combining \cref{lem:non-zero-cycle,lem:goodcycle}, we obtain the main result of the section.

\begin{theorem}\label{thm:non-zero-common-basis-poly}
    Let $\Gamma$ be a group such that $\Z_2 \not \le \Gamma$. Let $M_1$ and $M_2$ be matroids on a common ground set $E$, $\psi\colon E \to \Gamma$ a group labeling, and $B_0$ a zero common basis. Then, $M_1$ and $M_2$ have a non-zero common basis if and only if $D_{M_1, M_2}(B_0)$ contains a non-zero directed cycle.
    Moreover, \textscup{Non-Zero Common Basis} is polynomially solvable.
\end{theorem}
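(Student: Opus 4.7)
The plan is to prove the equivalence by two short arguments and then derive the algorithm from the equivalence together with \cref{thm:graphalg}.

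For the only-if direction, given any non-zero common basis $B'$, \cref{lem:non-zero-cycle} applied to the pair $(B_0, B')$ immediately yields a non-zero directed cycle of $D_{M_1, M_2}(B_0)$. For the if direction, I would take a non-zero directed cycle $C$ whose vertex set is inclusion-minimal among those of non-zero directed cycles, so that \cref{lem:goodcycle} applies and $B^* \coloneqq B_0 \symdif V(C)$ is a common basis. To verify $\psi(B^*) \ne 0$, note that $D_{M_1, M_2}(B_0)$ is bipartite between $B_0$ and $E \setminus B_0$, so $C$ alternates between arcs from $B_0$ to $E \setminus B_0$ (carrying labels $\psi(y)$) and arcs in the opposite direction (carrying labels $-\psi(x)$). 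Hence
\begin{equation*}
\psi'(C) = \psi(V(C) \setminus B_0) - \psi(V(C) \cap B_0) = \psi(B^*) - \psi(B_0) = \psi(B^*),
\end{equation*}
which is non-zero by the choice of $C$.

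For the algorithmic statement, I would first run matroid intersection to find a common basis $B$ of $M_1$ and $M_2$, or report that none exists. If $\psi(B) \ne 0$, return $B$. Otherwise $B$ is a zero common basis, and I would apply \cref{thm:graphalg} to $D_{M_1, M_2}(B)$ with the unit weight function to either obtain a shortest non-zero directed cycle $C$ or certify that none exists. In the latter case, the equivalence shows that no non-zero common basis exists. In the former case, a shortest non-zero directed cycle is automatically vertex-inclusion-minimal among non-zero directed cycles (since a simple directed cycle has equally many vertices and arcs, any non-zero cycle on a proper subset of $V(C)$ would have strictly fewer arcs), so \cref{lem:goodcycle} together with the label calculation above gives that $B \symdif V(C)$ is a non-zero common basis. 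Every step runs in polynomial time.

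The only real subtlety I anticipate is the verification that $\psi(B^*) = \psi'(C)$ in the if direction; this hinges on the bipartite alternation of $C$ and the chosen signs of arc labels in the construction of $D_{M_1, M_2}(B_0)$. Everything else is either a direct appeal to a previous result or a routine algorithmic composition, with the only algorithmic remark being the passage from vertex-inclusion minimality (which \cref{lem:goodcycle} demands) to minimum arc count (which \cref{thm:graphalg} produces).
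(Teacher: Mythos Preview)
Your proposal is correct and follows essentially the same route as the paper: the paper also uses \cref{lem:non-zero-cycle} for one direction, picks a minimum-length non-zero directed cycle and applies \cref{lem:goodcycle} for the other, and composes matroid intersection with \cref{thm:graphalg} for the algorithm. Your explicit justification that a unit-weight shortest non-zero cycle has inclusion-minimal vertex set, and your unpacking of $\psi'(C)=\psi(B^*)-\psi(B_0)$ via the bipartite alternation, merely make explicit what the paper leaves implicit in the line $\psi(B^*)=\psi(B_0)+\psi'(C)=\psi'(C)\neq 0$.
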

\begin{proof}
    If there exists a non-zero common basis $B^*$, then $D_{M_1, M_2}(B_0)$ contains a non-zero directed cycle by \cref{lem:non-zero-cycle}. Conversely, if $D_{M_1, M_2}(B_0)$ contains a directed cycle, then let $C$ be a minimum length non-zero directed cycle. Then, \cref{lem:goodcycle} implies that $B^* \coloneqq B \symdif V(C)$ is a common basis, and we have $\psi(B^*)=\psi(B_0)+\psi'(C) = \psi'(C) \ne 0$.

    This provides the following algorithm for \textsc{Non-Zero Common Basis}. First, we find a common basis $B_0$. If no common basis exists or $B_0$ is non-zero, we are done. Otherwise, we find a minimum length non-zero directed cycle $C$ in $D_{M_1, M_2}(B_0)$ using \cref{thm:graphalg}.
    If no non-zero directed cycle exists then we report that there is no non-zero common basis, otherwise we output $B_0 \symdif V(C)$.
\end{proof}

We turn to the study of \textsc{Weighted Non-Zero Common Basis}. Given two matroids $M_1$ and $M_2$ on a common ground set $E$, a common basis $B$, and a weight function $w\colon E \to \mathbb{R}$, we define the weight function $w'$ on the arcs of $D_{M_1, M_2}(B)$ as follows. For each arc $xy$ such that  $x \in B$, $y \in E\setminus B$ and $B-x+y \in \cB(M_1)$ we define $w'(xy) \coloneqq w(y)$, and for each arc $yx$ such that $x \in B$, $y \in E \setminus B$ and $B-x+y \in \cB(M_2)$ we define $w'(yx) \coloneqq -w(x)$.
Then, $B$ is a minimum-weight common basis if and only if $w'$ is conservative~\cite{krogdahl1976combinatorial, fujishige1977primal, brezovec1986two}, see also~\cite[Theorem~41.5]{schrijver2003combinatorial}.
We observe the following relationship between the weight of a shortest non-zero directed cycle in $D_{M_1, M_2}(B)$ and the weights of non-zero common bases of $M_1$ and $M_2$. 

\begin{restatable}{lemma}{lembla}
 \label{lem:weighted-non-zero-common-base}
    Let $M_1$ and $M_2$ be matroids on a common ground set $E$, $\psi\colon E \to \Gamma$ a group labeling, and $w\colon E \to \R$ a weight function. Let $B_0$ be a minimum-weight common basis and assume that $\psi(B_0)=0$. Let $C$ be a shortest non-zero directed cycle in $D_{M_1, M_2}(B_0)$ with respect to $w'$. Then, $w(B_0\symdif V(C)) \le w(B^*)$ holds for each non-zero common basis $B^*$. 
\end{restatable}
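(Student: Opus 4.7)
The plan is to compare $w(B_0 \symdif V(C))$ with $w(B^*)$ via the weighted arc structure of $D_{M_1, M_2}(B_0)$. First I would verify the basic identity $w(B_0 \symdif V(C)) = w(B_0) + w'(C)$: writing $X = V(C) \cap B_0$ and $X' = V(C) \setminus B_0$, the arcs from $X$ to $X'$ contribute $w(y)$ and the arcs from $X'$ to $X$ contribute $-w(x)$, summing to $w(X') - w(X)$, which is exactly $w(B_0 \symdif V(C)) - w(B_0)$. Hence the claim reduces to showing $w'(C) \le w(B^*) - w(B_0)$.

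Next, I would invoke the construction from the proof of \cref{lem:non-zero-cycle} with $B = B_0$ and $B' = B^*$. This gives matchings $P_1$ (arcs from $B_0 \setminus B^*$ to $B^* \setminus B_0$) and $P_2$ (arcs from $B^* \setminus B_0$ to $B_0 \setminus B^*$), both of size $|B_0 \setminus B^*|$, whose union decomposes into vertex-disjoint directed cycles $C_1, \dotsc, C_k$ in $D_{M_1, M_2}(B_0)$, among which at least one, say $C_j$, is non-zero (because $\psi'(P_1 \cup P_2) = \psi(B^*) - \psi(B_0) = \psi(B^*) \ne 0$).

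The computation of the total $w'$-weight is then direct from the definition of $w'$: the arcs in $P_1$ contribute $w(B^* \setminus B_0)$ and the arcs in $P_2$ contribute $-w(B_0 \setminus B^*)$, so
\begin{equation*}
\sum_{i=1}^{k} w'(C_i) \;=\; w'(P_1) + w'(P_2) \;=\; w(B^* \setminus B_0) - w(B_0 \setminus B^*) \;=\; w(B^*) - w(B_0).
\end{equation*}
Since $B_0$ is a minimum-weight common basis, the arc weighting $w'$ is conservative on $D_{M_1, M_2}(B_0)$ (by the classical characterization cited just before the lemma), so $w'(C_i) \ge 0$ for every $i$. In particular, $w'(C_j) \le \sum_i w'(C_i) = w(B^*) - w(B_0)$.

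Finally, since $C$ is a shortest non-zero directed cycle with respect to $w'$, we have $w'(C) \le w'(C_j) \le w(B^*) - w(B_0)$, which combined with the opening identity yields $w(B_0 \symdif V(C)) \le w(B^*)$. The only step requiring any real care is ensuring the sign conventions in the definition of $w'$ and the telescoping of $w'(P_1) + w'(P_2)$; once this is correct, the rest of the argument is a direct combination of \cref{lem:non-zero-cycle} with the conservativeness of $w'$, and no additional structural lemma (such as \cref{lem:goodcycle}, which needs $\Z_2 \not\le \Gamma$) is invoked.
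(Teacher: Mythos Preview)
Your proof is correct and follows essentially the same approach as the paper: decompose $B_0 \symdif B^*$ into vertex-disjoint directed cycles via \cref{lem:non-zero-cycle}, use conservativeness of $w'$ to bound the weight of a non-zero cycle among them by $w(B^*)-w(B_0)$, and compare with the shortest non-zero cycle $C$. The paper presents the chain of inequalities more tersely, but the argument is identical.
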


\begin{proof}
    Let $B^*$ a non-zero common basis. As in the proof of \cref{lem:non-zero-cycle}, \cref{lem:bijection} implies that $D_{M_1, M_2}(B_0)$ contains vertex-disjoint directed cycles $C_1,\dots, C_q$ such that $V(C_1)\cup \dots \cup V(C_q) = B_0 \symdif B^*$ and one of the directed cycles is non-zero, say, $\psi(C_1) \ne 0$. Using the definition of $w'$ and the facts that $C$ is a shortest non-zero directed cycle and $w'$ is conservative, we obtain
    \[w(B_0\symdif V(C)) = w(B_0) + w'(C) \le w(B_0) + w'(C_1) \le w(B_0) + \sum_{i=1}^q w'(C_i) = w(B^*). \qedhere\]
\end{proof}

We note that \cref{lem:weighted-non-zero-common-base} generalizes \cref{lem:weighted-non-zero-base}, as in the special case $M_1=M_2$ each arc of $D_{M_1, M_2}(B_0)$ is bidirectional, thus a shortest non-zero directed cycle consists of two vertices.

In \cref{lem:weighted-non-zero-common-base}, the weight of $C$ is measured by $w'$ (which takes negative values on some arcs) so $V(C)$ is not necessarily inclusion-wise minimal among the vertex sets of non-zero directed cycles. Thus, it does not yield an algorithm for \textsc{Weighted Non-Zero Common Basis}. In fact, the complexity of the problem remains open for a group $\Gamma$ with $\Z_2 \not\le \Gamma$. Nevertheless, we use \cref{lem:weighted-non-zero-common-base} to give a 2-approximation if the weight function $w$ is nonnegative. 

\begin{restatable}{theorem}{approx} \label{thm:2approx}
    Let $\Gamma$ be a group with $\Z_2 \not \le \Gamma$.
    Let $M_1$ and $M_2$ be matroids on a common ground set $E$, $\psi\colon E \to \Gamma$ a group labeling, and $w\colon E \to \R_{\geq 0}$ a weight function. Then, there exists a polynomial-time algorithm that computes a non-zero common basis $B$ that satisfies $w(B)\leq 2 w(B^*)$ for every non-zero common basis $B^*$ or correctly outputs that there exists no non-zero common basis.
\end{restatable}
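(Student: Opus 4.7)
The plan is as follows. First, compute a minimum-weight common basis $B_0$ by the weighted matroid intersection algorithm. If $\psi(B_0)\ne 0$, output $B_0$; it is already non-zero and $w(B_0)\le w(B^*)$ for every common basis $B^*$, so in particular for every non-zero one.

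Assume therefore $\psi(B_0)=0$. Since $B_0$ has minimum weight, the induced arc weighting $w'$ on $D_{M_1,M_2}(B_0)$ is conservative, so \cref{thm:graphalg} either returns a shortest non-zero directed cycle $C^{\min}$ with respect to $w'$, or certifies that $D_{M_1,M_2}(B_0)$ contains no non-zero directed cycle. In the latter case, \cref{thm:non-zero-common-basis-poly} allows me to correctly report that no non-zero common basis exists.

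Given $C^{\min}$, I would next find a non-zero directed cycle $C$ inside $V(C^{\min})$ of minimum cardinality by calling \cref{thm:graphalg} once more on the induced subdigraph $D_{M_1,M_2}(B_0)[V(C^{\min})]$ equipped with unit arc weights. Any non-zero directed cycle of $D_{M_1,M_2}(B_0)$ whose vertex set were strictly contained in $V(C)$ would also be contained in $V(C^{\min})$, contradicting the minimality of $|V(C)|$. Hence $V(C)$ is inclusion-wise minimal among the vertex sets of non-zero directed cycles of $D_{M_1,M_2}(B_0)$, so \cref{lem:goodcycle} guarantees that $B\coloneqq B_0\symdif V(C)$ is a common basis, and $\psi(B)=\psi'(C)\ne 0$ makes it non-zero.

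For the approximation guarantee I would chain
\begin{align*}
w(B)&=w(B_0)+w'(C)\\
&\le w(B_0)+w(V(C)\setminus B_0)\\
&\le w(B_0)+w(V(C^{\min})\setminus B_0)\\
&\le w(B_0)+w(B_0\symdif V(C^{\min}))\\
&\le w(B_0)+w(B^*)\\
&\le 2w(B^*),
\end{align*}
valid for every non-zero common basis $B^*$: the first three inequalities use $w\ge 0$ together with $V(C)\subseteq V(C^{\min})$ and $V(C^{\min})\setminus B_0\subseteq B_0\symdif V(C^{\min})$, the fourth is \cref{lem:weighted-non-zero-common-base} applied to $C^{\min}$, and the last uses that $B_0$ is a minimum-weight common basis. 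The main obstacle is that a shortest non-zero cycle $C^{\min}$ with respect to $w'$ need not have an inclusion-wise minimal vertex set, so $B_0\symdif V(C^{\min})$ may itself fail to be a common basis; refining $C^{\min}$ to a minimum-cardinality non-zero cycle inside $V(C^{\min})$ is what reconciles the applicability of \cref{lem:goodcycle} with the weight bound of \cref{lem:weighted-non-zero-common-base}, the slack between $w'(C^{\min})$ and $w'(C)$ being absorbed thanks to nonnegativity of $w$.
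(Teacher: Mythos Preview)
Your proof is correct and follows essentially the same approach as the paper: compute a minimum-weight common basis $B_0$, find a $w'$-shortest non-zero cycle $C^{\min}$, refine it to a minimum-length non-zero cycle $C$ in the induced subdigraph, output $B_0\symdif V(C)$, and bound its weight via \cref{lem:weighted-non-zero-common-base}. Your chain of inequalities is slightly more detailed than the paper's (which argues via $w(B)\le w(B_0\cup V(C^{\min}))\le w(B_0)+w(B_0\symdif V(C^{\min}))$), and you make the verification of the inclusion-wise minimality hypothesis of \cref{lem:goodcycle} explicit, which the paper leaves implicit.
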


\begin{proof}
    First, we compute a minimum-weight common basis $B_0$. This is possible due to the algorithm of Edmonds~\cite{EDMONDS197939}, see also~\cite[Section 13.2.2]{frank2011connections} for details. If no common basis exists or $B_0$ is non-zero, we are done. Otherwise, using \cref{thm:graphalg} we compute a shortest non-zero directed cycle $C$ in $D_{M_1, M_2}(B_0)$ with weight function $w'$. If no non-zero directed cycle exists then we obtain by \cref{thm:non-zero-common-basis-poly} that no non-zero common basis exists. Otherwise, using \cref{thm:graphalg} we compute a shortest non-zero directed cycle $C'$ in $D_{M_1, M_2}(B_0)[V(C)]$ with unit weights. We let the algorithm return $B\coloneqq B_0\symdif V(C')$, which is a non-zero common basis by \cref{lem:goodcycle}.  Using $B \subseteq B_0 \cup V(C)$, $w \ge 0$, \cref{lem:weighted-non-zero-common-base} and the fact that $B_0$ is a minimum-weight common basis, we get that
    \[w(B) \le w(B_0 \cup V(C)) \le w(B_0) + w(B_0 \symdif V(C)) \le 2 w(B^*)\]
    holds for each non-zero common basis $B^*$.
\end{proof}

\subsubsection{Certificate for All Common Bases Being Zero}\label{sec:cert}

In this section we give a characterization for all all directed cycles of $D_{M_1, M_2}(B)$ having label zero, analogous to the weight-splitting theorem of Frank~\cite{frank1981weighted}. By \cref{thm:non-zero-common-basis-poly}, this provides a certificate for the no-instances of \textsc{Non-Zero Common Basis} if $\Z_2 \not \le \Gamma$. 

For the characterization, we will use the following result on strongly connected digraphs.
Its special case for cyclic groups appears in \cite[Theorem~10.5.1]{BangJensen2009digraphs} where the authors note that its origin is unclear. The proof given there shows that the result holds for any abelian group $\Gamma$ as well.

\begin{theorem} \label{thm:period} Let $D$ be a strongly connected digraph and $\psi\colon A(D) \to \Gamma$ a group labeling. Then, all directed cycles of $D$ are zero if and only if there exists a labeling $\varphi\colon V(D) \to \Gamma$ such that $\psi(uv) = \varphi(v)-\varphi(u)$ for each arc $uv \in A(D)$.  
\end{theorem}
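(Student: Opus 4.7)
The ``if'' direction is immediate: if $\varphi$ is such a potential, then for any directed cycle $v_0 v_1 \dotsb v_k = v_0$, the label telescopes to $\sum_{i=0}^{k-1} (\varphi(v_{i+1})-\varphi(v_i)) = 0$. So the plan is to prove the ``only if'' direction by constructing a potential $\varphi$ from shortest-path-style distances, as is standard for conservative weightings on digraphs.

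The construction is as follows. Fix an arbitrary root vertex $r \in V(D)$. Since $D$ is strongly connected, for every $v \in V(D)$ there exists a directed path $P_v$ from $r$ to $v$ (with $P_r$ the empty path). Define $\varphi(v) \coloneqq \psi(P_v)$, so in particular $\varphi(r) = 0$. The first task is to show that $\varphi(v)$ does not depend on the choice of $P_v$. Suppose $P$ and $P'$ are two directed paths from $r$ to $v$, and let $Q$ be a directed path from $v$ to $r$ (which exists by strong connectivity). Then $P \cup Q$ and $P' \cup Q$ are closed directed walks at $r$. The standard fact I will invoke is that every closed directed walk in a digraph can be decomposed into arc-disjoint directed cycles (proved by repeatedly splitting off a cycle at the first repeated vertex of the walk). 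Applying the hypothesis that every directed cycle has label $0$, both closed walks have label $0$, so $\psi(P) + \psi(Q) = 0 = \psi(P') + \psi(Q)$, giving $\psi(P) = \psi(P')$ and hence well-definedness.

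It remains to verify $\psi(uv) = \varphi(v) - \varphi(u)$ for each arc $uv \in A(D)$. Pick a directed path $Q$ from $v$ to $r$. Then $P_u$ followed by the arc $uv$ followed by $Q$ is a closed directed walk at $r$, so by the same decomposition-into-cycles argument its label is $0$:
\[
\varphi(u) + \psi(uv) + \psi(Q) = 0.
\]
Similarly, $P_v$ followed by $Q$ is a closed directed walk at $r$, so $\varphi(v) + \psi(Q) = 0$. Subtracting these two equations yields $\psi(uv) = \varphi(v) - \varphi(u)$, as required.

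The only subtle point is the decomposition of a closed directed walk into directed cycles; everything else is mechanical bookkeeping in $\Gamma$. Since the argument uses only the commutativity of $\Gamma$ and not any special structure, it works uniformly for all abelian groups, confirming the remark that the cyclic-group proof from \cite{BangJensen2009digraphs} extends verbatim. Because the paper states this theorem as known and only needs to record that the proof generalizes, a proof at this level of detail should be sufficient.
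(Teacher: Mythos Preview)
Your proof is correct. The paper does not actually give its own proof of this theorem: it is stated as a known result, attributed (in its cyclic-group form) to \cite[Theorem~10.5.1]{BangJensen2009digraphs}, with the remark that the proof given there extends to arbitrary abelian groups. Your potential-function construction via closed-walk decomposition into cycles is exactly the standard argument one would expect such a reference to contain, and it is carried out cleanly; there is nothing further to compare.
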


To apply \cref{thm:period} to $D_{M_1,M_2}(B)$, we need to ensure its strong connectivity, which is described in the next statement.

\begin{lemma} \label{lem:strongly-connected} 
    Let $M_1$ and $M_2$ be matroids with the common ground set $E$ and the same rank $r$, and let $B$ be a common basis of $M_1$ and $M_2$.
    Then, $D_{M_1, M_2}(B)$ is strongly connected if and only if $r_{M_1}(X) + r_{M_2}(E\setminus X) > r$ holds for $\emptyset \ne X \subsetneq E$.
\end{lemma}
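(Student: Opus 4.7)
The plan is to show that the existence of a proper nonempty subset $S \subseteq E$ having no outgoing arcs in $D_{M_1,M_2}(B)$ is equivalent to the existence of a set $X$ witnessing $r_{M_1}(X) + r_{M_2}(E\setminus X) \le r$. Since non-strong-connectivity is equivalent to the existence of such an $S$ (take $S$ to be the set of vertices reachable from any vertex $v$ that cannot reach every other vertex), this yields the claim.

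The key technical lemma I would establish is the following: for $\emptyset \ne S \subsetneq E$, no arcs of $D_{M_1,M_2}(B)$ leave $S$ if and only if $E\setminus S \subseteq \mathrm{cl}_{M_1}(B\setminus S)$ and $S \subseteq \mathrm{cl}_{M_2}(B\cap S)$. The argument is a fundamental-circuit computation. Recall that for $y\in E\setminus B$, we have $B-x+y\in \mathcal{B}(M_i)$ iff $x\in C_{M_i}(y,B)$. Absence of $M_1$-arcs leaving $S$ means that for every $y\in (E\setminus B)\setminus S$, the fundamental circuit $C_{M_1}(y,B)$ avoids $S\cap B$, i.e.\ $C_{M_1}(y,B)\subseteq B\setminus S$, which is equivalent to $y\in \mathrm{cl}_{M_1}(B\setminus S)$; combined with the trivial inclusion $B\setminus S\subseteq \mathrm{cl}_{M_1}(B\setminus S)$, this gives $E\setminus S\subseteq \mathrm{cl}_{M_1}(B\setminus S)$. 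The $M_2$-arc condition is handled symmetrically. Taking ranks we get $r_{M_1}(E\setminus S)=|B\setminus S|$ and $r_{M_2}(S)=|B\cap S|$, which sum to $|B|=r$.

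From this lemma the theorem follows in two short steps. If $D_{M_1,M_2}(B)$ is not strongly connected, pick $S$ with no outgoing arcs; then $X\coloneqq E\setminus S$ satisfies $\emptyset \ne X \subsetneq E$ and $r_{M_1}(X)+r_{M_2}(E\setminus X)=r$, violating the strict inequality. Conversely, given $\emptyset \ne X\subsetneq E$ with $r_{M_1}(X)+r_{M_2}(E\setminus X)\le r$, note that $B\cap X$ is independent in $M_1$ and $B\setminus X$ independent in $M_2$, so $|B\cap X|\le r_{M_1}(X)$ and $|B\setminus X|\le r_{M_2}(E\setminus X)$; adding and using $|B|=r$ forces equality in both, hence $X\subseteq \mathrm{cl}_{M_1}(B\cap X)$ and $E\setminus X\subseteq \mathrm{cl}_{M_2}(B\setminus X)$. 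Setting $S\coloneqq E\setminus X$ and invoking the lemma in the reverse direction, no arcs of $D_{M_1,M_2}(B)$ leave $S$, so $D_{M_1,M_2}(B)$ is not strongly connected.

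The only step that requires care is the equivalence between the combinatorial "no outgoing arc" condition and the rank/closure identity, and this rests entirely on the fundamental-circuit characterization of single-element exchanges—standard but worth writing out cleanly. The other ingredients (the elementary inequality $|B\cap X|\le r_{M_1}(X)$ and symmetric, plus the observation that a non-strongly-connected digraph admits a proper nonempty set with no outgoing arcs) are immediate.
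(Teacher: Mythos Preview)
Your proof is correct and follows essentially the same approach as the paper: both arguments characterize the sets with no arcs crossing their boundary via the fundamental-circuit description of single exchanges, and translate this into the closure/rank identity $r_{M_1}(X)=|B\cap X|$, $r_{M_2}(E\setminus X)=|B\setminus X|$. The only cosmetic difference is that the paper phrases it as ``$X$ has in-degree zero'' while you phrase it as ``$S=E\setminus X$ has no outgoing arcs,'' which are the same condition.
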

\begin{proof}
    Since $M_1$ and $M_2$ have a common basis, $r_{M_1}(X)+r_{M_2}(E\setminus X) \ge r$ holds for $X \subseteq E$. We show that a set $X\subseteq E$ has in-degree zero in $D_{M_1, M_2}(B)$ if and only if $r_{M_1}(X) + r_{M_2}(E\setminus X) = r$. By definition, there is no arc from $B \setminus X$ to $X \setminus B$ if and only if $B-x+y$ is not a basis of $M_1$ for any $x \in B \setminus X$ and $y \in X \setminus B$. This is equivalent to that $C_{M_1}(B,y) \subseteq (B \cap X) + y$ holds for every $y \in X \setminus B$, which happens if and only if $B \cap X$ spans $X$ in $M_1$, that is, $r_{M_1}(X) = |B \cap X|$. Similarly, there is no arc from $E \setminus (B \cap X)$ to $B \cap X$ if and only if $r_{M_2}(E\setminus X) = |B \setminus X|$. Therefore, no arc enters $X$ if and only if $r_{M_1}(X) = |B \cap X|$ and $r_{M_2}(E\setminus X) = |B \setminus X|$ both hold, which is equivalent to $r_{M_1}(X) + r_{M_2}(E \setminus X) = r$. 
\end{proof}

We note that if $r_{M_1}(X) + r_{M_2}(E \setminus X) = r$ holds for a set $\emptyset \ne X \subsetneq E$, then the common bases of $M_1$ and $M_2$ are exactly the sets that can be obtained as the union of a common basis of $M_1\mathbin{|}X$ and $M_2 \mathbin{/} (E \setminus X)$ and a common basis of $M_1 \mathbin{/} (E \setminus X)$ and $M_2\mathbin{|}X$. Therefore, $M_1$ and $M_2$ have two common bases of different labels if and only if one of the smaller matroid pairs $M_1 \mathbin{|} X$, $M_2 \mathbin{/} (E \setminus X)$ and $M_1 \mathbin{/} (E \setminus X)$, $M_2\mathbin{|}X$ has two common bases of different labels. This implies that when solving \textsc{Non-Zero Common Basis}, we may assume that no such $X$ exists. Note that if such a set exists then we can find it in polynomial time, e.g., by minimizing the the submodular function $f(Z)\coloneqq r_{M_1}(Z) + r_{M_2}(E \setminus Z)$ over the sets $\emptyset \ne Z \ne E$ which can be performed in polynomial time~\cite{cunningham1984testing}.

We are ready to give the main result of this section. Our proof is similar to that of the weight-splitting theorem of Frank~\cite{frank1981weighted} given in~\cite[Section~41.3a]{schrijver2003combinatorial}.

\begin{restatable}{theorem}{splitting} \label{thm:splitting}
    Let $M_1$ and $M_2$ be matroids with the common ground set $E$ and the same rank $r$. Assume that $r_{M_1}(X) + r_{M_2}(E \setminus X) > r$ holds for every $\emptyset \ne X \subsetneq E$. 
    Let $B$ be a common basis of  $M_1$ and $M_2$, and $\psi\colon E \to \Gamma$ a group labeling. Then, $D_{M_1, M_2}(B)$ contains no non-zero directed cycle if and only if there exist labelings $\psi_1, \psi_2 \colon E \to \Gamma$ such that $\psi = \psi_1 + \psi_2$ and $\psi_i$ is constant on each connected component of $M_i$ for $i=1,2$.
\end{restatable}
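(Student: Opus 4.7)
The plan is to combine \cref{lem:strongly-connected} with \cref{thm:period}. The hypothesis that $r_{M_1}(X) + r_{M_2}(E \setminus X) > r$ for every $\emptyset \neq X \subsetneq E$ makes $D_{M_1, M_2}(B)$ strongly connected, so by \cref{thm:period} the absence of non-zero directed cycles is equivalent to the existence of a potential $\varphi \colon E \to \Gamma$ (recalling that the vertex set of $D_{M_1, M_2}(B)$ is $E$) satisfying $\psi'(uv) = \varphi(v) - \varphi(u)$ on every arc. Unpacking the definition of $\psi'$, this identity says $\psi(y) = \varphi(y) - \varphi(x)$ whenever $x \in B$, $y \in E \setminus B$ and $B - x + y \in \cB(M_1)$, and $\psi(x) = \varphi(y) - \varphi(x)$ whenever $B - x + y \in \cB(M_2)$.

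For the forward direction, given such $\varphi$, I would split $\psi = \psi_1 + \psi_2$ by setting $\psi_1(e) \coloneqq -\varphi(e)$ for $e \in B$ and $\psi_1(e) \coloneqq \psi(e) - \varphi(e)$ for $e \in E \setminus B$, and $\psi_2 \coloneqq \psi - \psi_1$. A one-line substitution of the two identities above then gives $\psi_1(x) = \psi_1(y)$ on every $M_1$-exchange pair $\{x,y\}$ and $\psi_2(x) = \psi_2(y)$ on every $M_2$-exchange pair. By the result of Krogdahl~\cite{Krogdahl1977dependence} invoked in the proof of \cref{thm:components}, the connected components of the bipartite exchange graph of $M_i$ relative to $B$ coincide with the components of $M_i$, so $\psi_i$ is constant on each component of $M_i$, as required.

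For the converse, I would take any directed cycle of $D_{M_1, M_2}(B)$. Because $M_1$-arcs only go from $B$ to $E \setminus B$ while $M_2$-arcs only go in the reverse direction, the cycle alternates and can be listed as $x_1, y_1, x_2, y_2, \dots, x_k, y_k, x_1$ with each $(x_i, y_i)$ an $M_1$-exchange pair and each $(y_i, x_{i+1})$ an $M_2$-exchange pair (indices mod $k$). The total label simplifies to $\sum_{i=1}^{k} (\psi(y_i) - \psi(x_i))$; splitting $\psi = \psi_1 + \psi_2$, the $\psi_1$-contribution vanishes termwise from $\psi_1(x_i) = \psi_1(y_i)$, while the $\psi_2$-contribution telescopes cyclically via $\psi_2(y_i) = \psi_2(x_{i+1})$, so the cycle label is $0$.

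The main obstacle is pinning down the correct splitting formula: $\varphi$ itself is not constant on components, and one must split it asymmetrically across $B$ and $E \setminus B$ so that the $M_1$-exchange identities are absorbed into $\psi_1$ and the $M_2$-exchange identities into $\psi_2$. Once this asymmetric split is in place, both directions reduce to direct calculations driven by \cref{thm:period} in one direction and by the bipartite alternation of $D_{M_1, M_2}(B)$ in the other.
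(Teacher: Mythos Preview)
Your proposal is correct and matches the paper's proof almost verbatim: both directions use the same splitting formula for $\psi_1,\psi_2$, the same alternating-cycle telescoping argument for the converse, and the same appeal to \cref{lem:strongly-connected} and \cref{thm:period}. The only cosmetic difference is that the paper packages the step ``$\psi_i(x)=\psi_i(y)$ on exchange pairs $\Rightarrow$ $\psi_i$ constant on components'' as an appeal to \cref{thm:components}, whereas you invoke Krogdahl's underlying result directly.
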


\begin{proof}

By the assumption and \cref{lem:strongly-connected}, we get that $D_{M_1, M_2}(B)$ is strongly connected. 

First suppose that there exist labelings $\psi_1, \psi_2 \colon E \to \Gamma$ such that $\psi = \psi_1 + \psi_2$ and $\psi_i$ is constant on each connected component of $M_i$ for $i=1,2$, and let $C=x_1,y_1,x_2,\ldots, x_t,y_t,x_1$ be a directed cycle in $D_{M_1, M_2}(B)$. Let $X\coloneqq\set{x_1,\ldots,x_t}$ and $Y\coloneqq\set{y_1,\ldots,y_t}$. By symmetry and construction, we may suppose that $X \subseteq B$ and $Y \subseteq E \setminus B$. By construction, for $i \in [t]$, we have that $x_i$ and $y_i$ are in the same component of $M_1$, hence $\psi_1(x_i)=\psi_1(y_i)$. Similarly, we have $\psi_2(y_i)=\psi_2(x_{i+1})$ for $i \in [t-1]$ and $\psi_2(y_t)=\psi_2(x_1)$.
This yields 
\begin{align*}
    \psi'(C)&=\psi(Y)-\psi(X)\\
    &=\psi_1(Y)-\psi_1(X)+\psi_2(Y)-\psi_2(X)\\
    &=\sum_{i=1}^t(\psi_1(y_i)-\psi_1(x_i))+\sum_{i=1}^{t-1}(\psi_2(y_i)-\psi_2(x_{i+1}))+(\psi_2(y_t)-\psi_2(x_1))\\
    &=0.
\end{align*}
It follows that $D_{M_1, M_2}(B)$ does not contain any non-zero directed cycle.

Now suppose that $D_{M_1, M_2}(B)$ does not contain any non-zero directed cycle. Then, by \cref{thm:period}, there exists a labeling $\varphi\colon E \to \Gamma$ such that $\psi'(xy) = \varphi(y)-\varphi(x)$ holds for each arc $xy$ of $D_{M_1, M_2}(B)$.
Let $\psi_1, \psi_2 \colon E \to \Gamma$ be defined by 
\[\psi_1(e) = \begin{cases} -\varphi(e) & \text{if $e \in B$}, \\ -\varphi(e)+\psi(e) & \text{if $e \in E \setminus B$}, \end{cases} \quad \psi_2(e) = \begin{cases} \varphi(e) + \psi(e) & \text{if $e \in B$}, \\ \varphi(e) & \text{if $e \in E \setminus B$}. \end{cases} \]
Then, $\psi = \psi_1 + \psi_2$ holds by construction. Now let $x \in B$ and $y \in E \setminus B$ such that $B-x+y$ is a basis of $M_1$. We obtain 
\begin{align*}
    \psi_1(B-x+y)-\psi_1(B)&=\psi_1(y)-\psi_1(x)\\
    &=\psi(y)-\varphi(y)+\varphi(x)\\&=
    \psi'(xy)-\psi'(xy)\\&=0.
\end{align*}
It follows by \cref{thm:components} that $\psi_1$ is constant on every component of $M_1$. A similar argument shows that $\psi_2$ is constant on every component of $M_2$. 
\end{proof}

\subsubsection{Partition Matroids}\label{sec:partmat}

When both matroids are partition matroids, we can drop the assumption $\Z_2 \not\le \Gamma$ from \cref{thm:non-zero-common-basis-poly} and extend it to the weighted setting. 

\begin{restatable}{theorem}{nonzeromatching} \label{thm:non-zero-matching}
    \textscup{Weighted Non-Zero Common Basis} is polynomially solvable if $M_1$ and $M_2$ are partition matroids. 
\end{restatable}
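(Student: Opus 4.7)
The plan is to mirror the unweighted algorithm of \cref{thm:non-zero-common-basis-poly} while drawing on \cref{lem:weighted-non-zero-common-base}, showing that for partition matroids the structural assumption $\Z_2 \not\le \Gamma$ can be dropped entirely. First, we compute a minimum-weight common basis $B_0$ via Edmonds' weighted matroid intersection algorithm. If no common basis exists, we report infeasibility; if $\psi(B_0) \ne 0$, we output $B_0$. Otherwise, we build the exchange digraph $D_{M_1, M_2}(B_0)$ together with the arc labeling $\psi'$ and weight function $w'$ from \cref{subsec:non-zero-common-basis}. The minimality of $w(B_0)$ guarantees that $w'$ is conservative, so \cref{thm:graphalg} can compute a shortest non-zero directed cycle $C$ of $D_{M_1, M_2}(B_0)$ with respect to $w'$, or report that no such cycle exists.

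The crux of the argument is the following exchange claim, which bypasses the role played by \cref{lem:goodcycle} and hence the restriction $\Z_2 \not\le \Gamma$: for partition matroids, \emph{every} directed cycle $C$ of $D_{M_1, M_2}(B_0)$ satisfies $B_0 \symdif V(C) \in \cB(M_1) \cap \cB(M_2)$. To see this, recall that if $M_1$ has classes $P_1, \dotsc, P_k$, then an arc from $x \in B_0$ to $y \in E \setminus B_0$ appears in $D_{M_1, M_2}(B_0)$ only if $x$ and $y$ lie in the same class of $M_1$ (otherwise $B_0 - x + y$ would violate the capacity of some class), and analogously the arcs in the reverse direction respect the classes of $M_2$. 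Traversing $C$ and counting, for each class $P_i$ of $M_1$, the $M_1$-type out-arcs from $V(C) \cap B_0 \cap P_i$ and the corresponding $M_1$-type in-arcs into $V(C) \cap (E \setminus B_0) \cap P_i$ yields $|V(C) \cap B_0 \cap P_i| = |V(C) \cap (E \setminus B_0) \cap P_i|$. Consequently, the cardinality of $B_0 \symdif V(C)$ in every class of $M_1$ is preserved, and the same argument with $M_2$ establishes the claim.

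Putting the ingredients together: if a shortest non-zero directed cycle $C$ is returned, then $B \coloneqq B_0 \symdif V(C)$ is a common basis with $\psi(B) = \psi(B_0) + \psi'(C) = \psi'(C) \ne 0$, and \cref{lem:weighted-non-zero-common-base} (whose proof does not rely on $\Z_2 \not\le \Gamma$) gives $w(B) \le w(B^*)$ for every non-zero common basis $B^*$. If no non-zero directed cycle exists in $D_{M_1, M_2}(B_0)$, then by \cref{lem:non-zero-cycle} no non-zero common basis exists either, and we report this. All steps run in polynomial time. The only delicate point will be a clean verification of the exchange claim above; once that is in hand, it is immediate that the cycle-decomposition obstruction that forced the $\Z_2 \not\le \Gamma$ assumption in \cref{lem:goodcycle} cannot arise in the partition-matroid setting.
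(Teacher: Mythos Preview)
Your proposal is correct and follows essentially the same route as the paper: the paper also isolates the exchange claim (that $B_0\symdif V(C)$ is a common basis for \emph{every} directed cycle $C$ when $M_1,M_2$ are partition matroids) as the key structural fact, proves it by the same class-counting argument, and then combines it with \cref{lem:weighted-non-zero-common-base} to conclude optimality of $B_0\symdif V(C)$ for a shortest non-zero cycle $C$.
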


\begin{proof}
    The proof relies on the following property of partition matroids.
    
    \begin{claim} \label{cl:partition}
        Let $B$ be a common basis and $C$ a directed cycle of $D_{M_1, M_2}(B)$. Then, $B\symdif V(C)$ is a common basis.
    \end{claim}
    \begin{claimproof}
        For $i=1,2$, let $M_i$ be the direct sum of uniform matroids on $E^i_1\cup \dots \cup E^i_{t_i}$ having rank $g^i_1,\dots, g^i_{t_i}$, respectively.
        Then, $\cB(M_i) = \Set{B \subseteq E}{|B \cap E^i_j| = 1 \text{ for each $j \in [t_i]$}}$, thus $B-x+y$ is a basis of $M_i$ for $x \in B$, $y \in E \setminus B$ if and only if $\set{x,y}\subseteq E^i_j$ for some $j \in [t_i]$. By the definition of $D_{M_1,M_2}(B)$, this implies that $|(V(C)\cap B)\cap E^i_j| = |(V(C)\setminus B)\cap E^i_j|$ for each $j \in [t_i]$ and $i \in [2]$, thus $B\symdif V(C)$ is a common basis. 
    \end{claimproof}

    We turn to the proof of the theorem.
    As in the proof of \cref{thm:2approx}, we may assume to have a minimum weight common basis $B_0$ with $\psi(B_0)=0$, and a shortest non-zero directed cycle $C$ in $D_{M_1, M_2}(B_0)$. Then, $B_0\symdif V(C)$ is a minimum weight non-zero common basis by \cref{cl:partition} and \cref{lem:weighted-non-zero-common-base}.
\end{proof}

Given a graph $G$ and a function $b\colon V(G) \to \Zp$, a \emph{perfect $b$-matching} is an edge set $F \subseteq E(G)$ such that $d_F(v) = b(v)$ for each $v\in V$. If $G$ is bipartite, then its perfect $b$-matchings form the family of common bases of two partition matroids. Therefore, \cref{thm:non-zero-matching} can be formulated as having a polynomial-time algorithm for finding a minimum weight non-zero perfect $b$-matching in a bipartite graph with weights and labels on its edges. For perfect matchings and the group $\Z_2$, the idea of essentially the same algorithm as ours was briefly mentioned in \cite{jia2023exact}, where the authors noted that it can also be derived from results of \cite{Artmann2017}. A formal description of the algorithm and a proof of its 
correctness was given in \cite{elmaalouly2023exact} for a special weight function.

Next, we consider the special case of \textsc{Weighted Non-Zero Common Basis} when we only assume that one of the matroids is a partition matroid. Without further assumptions, this problem is not easier than the general one: a construction similar to that of Harvey, Kir\'aly and Lau~\cite{harvey2011disjoint} shows that the general problem can be reduced to a special case when one of the matroids is a unitary partition matroid and all partition classes have size two. In what follows, we will consider the case when the partition matroid is defined by a partition having two classes. In this special case, the polynomial solvability of \textsc{Non-Zero Common Basis} follows from the corresponding lattice basis characterization of Rieder~\cite{rieder1991lattices}. We extend this result by solving the weighted version of the problem.

\begin{restatable}{theorem}{nonzeroexactbase} \label{thm:non-zero-exact-base}
    \textscup{Weighted Non-Zero Common Basis} is polynomially solvable if $M_2$ is a partition matroid defined by a partition having two classes.
\end{restatable}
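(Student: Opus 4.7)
Our plan follows the pattern of \cref{thm:non-zero-matching}, adapted to the 2-class partition matroid setting. First compute a minimum-weight common basis $B_0$ using weighted matroid intersection; if $\psi(B_0) \ne 0$ return it, otherwise form the exchange digraph $D_{M_1,M_2}(B_0)$ with labeling $\psi'$ and weight $w'$, which is conservative by optimality. The key simplification provided by having $M_2$ be a partition matroid with only two classes $E_1, E_2$ is that every $M_2$-arc in $D_{M_1,M_2}(B_0)$ lies inside one of these classes; hence for any directed cycle $C$, the number of vertices of $C$ in $B_0 \cap E_j$ equals the number in $(E\setminus B_0) \cap E_j$ for $j=1,2$, and consequently $B_0 \symdif V(C)$ is automatically a basis of $M_2$. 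Only $M_1$-feasibility of the symmetric difference needs to be ensured, which is the new content compared to \cref{thm:non-zero-matching}.

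The main claim is that there is a minimum $w'$-weight non-zero directed cycle $C$, chosen additionally to minimize $|V(C)|$ among such cycles, with $B_0 \symdif V(C) \in \cB(M_1)$. The proof proceeds as in \cref{lem:goodcycle}: if $B_0 \symdif V(C)$ is not an $M_1$-basis, \cref{lem:unique} provides an alternative $M_1$-matching $P_1'$ inside $C$, and combining $P_1'$ with the $M_2$-matching $P_2$ of $C$ produces either strictly smaller cycles whose labels sum to $\psi'(C)\ne 0$ (Case~A, contradicting vertex-set minimality), or a single replacement cycle $C'$ on $V(C)$ (Case~B). Case~B is the subtle point: in \cref{thm:non-zero-common-basis-poly} it is eliminated using $\Z_2 \not\le \Gamma$, which is unavailable here. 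Instead we exploit that $P_2$ is a matching formed from within-class pairs over only two classes, which severely restricts the shape of any alternative matching $P_1'$ producing a single new cycle on $V(C)$; a refined application of \cref{lem:2cycles} together with $w'$-conservativity then yields a strictly smaller non-zero cycle of no greater $w'$-weight, contradicting the minimality of $C$. Thus $B_0 \symdif V(C)$ is a common basis, and by \cref{lem:weighted-non-zero-common-base} it is a minimum-weight non-zero common basis.

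Algorithmically, the procedure computes $B_0$ by weighted matroid intersection, constructs $D_{M_1,M_2}(B_0)$, uses \cref{thm:graphalg} to find a minimum $w'$-weight non-zero directed cycle, iteratively shrinks its vertex set by the Case~A reductions above, and outputs $B_0 \symdif V(C)$ (reporting infeasibility if no non-zero directed cycle exists, which by \cref{lem:non-zero-cycle} certifies that no non-zero common basis exists). The main obstacle is the 2-class-specific combinatorial analysis needed to eliminate Case~B in the absence of the $\Z_2 \not\le \Gamma$ assumption of \cref{thm:non-zero-common-basis-poly}, and this is exactly where the restricted shape of $P_2$ enforced by the 2-class partition of $M_2$ plays the essential role.
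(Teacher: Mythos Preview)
Your proposal has a genuine gap in Case~B. When the alternative $M_1$-matching $P_1'$ together with $P_2$ yields a single cycle $C'$ on $V(C)$, you invoke \cref{lem:2cycles} to obtain cycles $C_1,\dots,C_t$ with $\sum_i \psi'(C_i)=\psi'(C)+\psi'(C')=2\psi'(C)$. But since you no longer assume $\Z_2\not\le\Gamma$, it can happen that $2\psi'(C)=0$, and then nothing forces any $C_i$ to be non-zero. You assert that ``the restricted shape of $P_2$ enforced by the 2-class partition'' resolves this, but you never say how; as stated, the argument does not go through. Even when some $C_i$ is non-zero, you also need $w'(C_i)\le w'(C)$, and $\sum_i w'(C_i)=2w'(C)$ together with conservativity does not yield this either.

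The paper avoids Case~B entirely by a different route: it first proves that a shortest (with respect to $w'$) non-zero directed cycle in $D_{M_1,M_2}(B_0)$ has length at most $4$ when $M_2$ has two partition classes (\cref{lem:shortcycle}, via the decomposition \cref{lem:cyclepartition}), and then shows by a direct two-line case analysis (\cref{lem:4cycle}) that for a shortest non-zero cycle of length~$\le 4$ the symmetric difference $B_0\symdif V(C)$ is a common basis. The length bound is exactly where the ``two classes'' hypothesis is used, and once the cycle is short there is no need for \cref{lem:2cycles} or any $\Z_2$-freeness assumption.
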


In order to prove \cref{thm:non-zero-exact-base} we will use the following lemma to study the directed cycles of $D_{M_1, M_2}(B)$.

\begin{lemma} \label{lem:cyclepartition}
    Let $m, k \in \N$, $A$ and $A'$ disjoint $m$-element sets, $V\coloneqq A \cup A'$,  $c\colon V \to [k]$ a function, and $V_i \coloneqq \Set{v \in V}{c(v)=i}$ for $i \in [k]$. Consider a digraph $D$ on vertex set $V$ such that its arc set consists of $m$ vertex-disjoint arcs from $A$ to $A'$ and the arc set $\Set{a'a}{i,j\in[m], c(a)=c(a')}$.
    Then, there exist vertex-disjoint directed cycles $C_1, \dots, C_t$ in $D$ such that $V(C_1)\cup \dots \cup V(C_t) = V$ and $|V(C_i)| \le 2k$ for $i \in [t]$. 
\end{lemma}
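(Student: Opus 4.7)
The plan is to reduce the problem to an Eulerian decomposition of an auxiliary multidigraph on the color set $[k]$. Let $f\colon A \to A'$ denote the bijection encoded by the $m$ vertex-disjoint forward arcs, and build a directed multigraph $H$ on vertex set $[k]$ whose arc set contains, for each $a \in A$, one arc from $c(a)$ to $c(f(a))$. Then the out-degree of color $i$ in $H$ equals $|A \cap V_i|$ and the in-degree equals $|A' \cap V_i|$, because $f$ is a bijection onto $A'$. A spanning cycle cover of $V$ in $D$ is equivalent to choosing, for each $a'\in A'$, a preimage $a \in A$ of the same color (the backward arc of the cycle entering $a$), so the balance $|A \cap V_i| = |A' \cap V_i|$ for every $i$ is necessary for the conclusion; I take this as the implicit hypothesis of the lemma, and under it $H$ is Eulerian.

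Next, I would invoke the standard fact that an Eulerian digraph decomposes into edge-disjoint \emph{simple} directed cycles. One traces any walk: in- and out-degree balance guarantees a continuation, and the walk must close upon revisiting a vertex within at most $k$ steps since $H$ has only $k$ vertices, producing a simple directed cycle of length at most $k$. Removing its arcs leaves an Eulerian remainder, and iteration yields a partition of $A(H)$ into simple directed cycles $D_1,\dots,D_t$, each of length at most $k$.

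Finally, I would lift each simple cycle $D_j = (i_1 \to i_2 \to \cdots \to i_{r_j} \to i_1)$ back to $D$. For every $\ell \in [r_j]$, let $a_\ell^{(j)}\in A$ be the element whose corresponding $H$-arc is $i_\ell \to i_{\ell+1}$ (indices mod $r_j$); then $c(a_\ell^{(j)}) = i_\ell$ and $c(f(a_\ell^{(j)})) = i_{\ell+1} = c(a_{\ell+1}^{(j)})$, so the backward arc $f(a_\ell^{(j)}) \to a_{\ell+1}^{(j)}$ belongs to $D$. Alternating forward and backward arcs produces the directed cycle
\[C_j\colon a_1^{(j)} \to f(a_1^{(j)}) \to a_2^{(j)} \to f(a_2^{(j)}) \to \cdots \to a_{r_j}^{(j)} \to f(a_{r_j}^{(j)}) \to a_1^{(j)}\]
in $D$ of length $2r_j \le 2k$. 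Since the $D_j$ partition $A(H)$, the corresponding elements of $A$ partition $A$, and via $f$ their images partition $A'$; hence the $C_j$ are vertex-disjoint and cover $V$.

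The main subtlety is merely to verify the balance condition $|A \cap V_i| = |A' \cap V_i|$, which should be automatic in the intended application of this lemma to $D_{M_1,M_2}(B)$ restricted to a piece coming from a class of the two-part $M_2$-partition; without this balance the lemma cannot hold as stated, so I would either record it as an implicit hypothesis or derive it from the context in which the lemma is invoked.
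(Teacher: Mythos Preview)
Your proof is correct and takes a somewhat different route from the paper's. The paper argues by direct induction on $m$ inside $D$ itself: since every vertex has out-degree at least one (using, as you correctly flag, the implicit balance $|A\cap V_i|=|A'\cap V_i|$), there is a directed cycle, and a \emph{shortest} one must visit $A'$-vertices of pairwise distinct colors (else one could shortcut via a backward arc), so it has at most $2k$ vertices; removing it preserves the balance in each color class, and induction finishes. Your approach instead passes to the quotient multidigraph $H$ on the color set $[k]$, observes that balance makes $H$ Eulerian, decomposes $A(H)$ into simple directed cycles (each of length $\le k$ since $H$ has $k$ vertices), and lifts each back to a $2r$-cycle in $D$. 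The two arguments are close in spirit---the paper's ``distinct colors on a shortest cycle'' is exactly your ``simple cycle in $H$''---but your quotient-and-lift packaging makes the Eulerian structure explicit and cleanly separates the color bookkeeping from the cycle extraction. The paper's version has the advantage of being entirely self-contained with no auxiliary object. Both proofs silently assume $|A\cap V_i|=|A'\cap V_i|$; you are right that without it the conclusion can fail, and in the paper's sole application (\cref{lem:shortcycle}) this balance is verified before the lemma is invoked.
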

\begin{proof}
    We use induction on $m$. Observe that each vertex of $D$ has outdegree at least one. Indeed, the outdegree of each vertex $a \in A$ is one due to the arcs from $A$ to $A'$, and the outdegee of each $a' \in A'$ is at least one since $|A\cap V_{c(a')}| = |A' \cap V_{c(a')}| \ge 1$.
    This implies that $D$ contains a directed cycle, let $C$ be one of minimum length. It is not difficult to see that $C$ being of minimum length implies that $c(a'_1) \ne c(a'_2)$ for all distinct $a'_1,a'_2 \in A$. Then, $|V(C)| = 2|V(C) \cap A'| \le 2k$. Since $|V(C)\cap A \cap V_i| = |V(C) \cap A' \cap V_i|$ for each $i \in [k]$, it follows by induction that $D-V(C)$ contains a collection of vertex-disjoint directed cycles $C_1, \dots, C_t$ each of which is of length at most $2k$ such that $V(C_1)\cup \dots \cup V(C_t)=V\setminus V(C)$.  The statement then follows by adding $C$ to the collection $C_1, \dots, C_t$.
\end{proof}

As a consequence, we derive the following property of $D_{M_1, M_2}(B)$ when $M_2$ is a partition matroid and $B$ is a minimum-weight common basis. In the following, the length of a directed cycle is its number of arcs while ``shortest'' refers to the weight function.

\begin{lemma} \label{lem:shortcycle}
    Let $k \ge 1$, $M_1$ be a matroid and $M_2$ a partition matroid on ground set $E$ such that $M_2$ is defined by a partition having $k$ classes. Let $\psi\colon E \to \Gamma$ be a group labeling, $w\colon E \to \R$ a weight function and $B$ a minimum-weight common basis of $M_1$ and $M_2$. Then, the shortest non-zero directed cycle in $D_{M_1, M_2}(B)$ has length at most $2k$, provided that it has a non-zero directed cycle.
\end{lemma}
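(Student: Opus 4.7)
The plan is to decompose a shortest non-zero directed cycle $C$ in $D_{M_1,M_2}(B)$ into shorter directed cycles via \cref{lem:cyclepartition}, and to argue that at least one piece of the decomposition is itself a shortest non-zero cycle of length at most $2k$.

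First I would spell out the structure of $D_{M_1,M_2}(B)$ when $M_2$ is a partition matroid with class function $c\colon E \to [k]$: all $M_1$-arcs go from $B$ to $E\setminus B$, while an arc $yx$ with $y \in E\setminus B$ and $x \in B$ is an $M_2$-arc if and only if $c(y) = c(x)$. Pick a shortest non-zero directed cycle $C$ and set $A \coloneqq V(C) \cap B$, $A' \coloneqq V(C) \setminus B$. Because the arcs along $C$ alternate between $M_1$- and $M_2$-arcs, $|A| = |A'| =: m$, and the $M_2$-arcs of $C$ form a perfect color-preserving matching between $A'$ and $A$, which yields the crucial balance $|A \cap c^{-1}(i)| = |A' \cap c^{-1}(i)|$ for every $i \in [k]$.

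Next I would build the auxiliary digraph $D'$ on vertex set $V(C)$ whose arcs are the $m$ vertex-disjoint $M_1$-arcs of $C$ (directed from $A$ to $A'$) together with every arc $a'a$ with $a \in A$, $a' \in A'$, and $c(a) = c(a')$. This is exactly the configuration required by \cref{lem:cyclepartition}, which therefore supplies vertex-disjoint directed cycles $C_1, \dots, C_t$ partitioning $V(C)$, each with at most $2k$ vertices. The key accounting step is that any partition of $V(C)$ into cycles inside $D'$ must use every $M_1$-arc of $C$ exactly once (each $a \in A$ has a unique outgoing arc in $D'$) and a perfect matching of $M_2$-arcs from $A'$ to $A$ (one outgoing arc per vertex of $A'$). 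Since $\psi'$ depends only on the head of an $M_1$-arc and the tail of an $M_2$-arc, and analogously for $w'$, both sums telescope:
\[
\sum_{i=1}^t \psi'(C_i) = \psi(A') - \psi(A) = \psi'(C), \qquad \sum_{i=1}^t w'(C_i) = w(A') - w(A) = w'(C).
\]

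To finish, I would use $\psi'(C) \ne 0$ to conclude that some $C_i$ satisfies $\psi'(C_i) \ne 0$, and then exploit that $w'$ is conservative (since $B$ is a minimum-weight common basis, by the classical result of Krogdahl and Brezovec--Cornu\'ejols--Glover cited above) to get $w'(C_j) \ge 0$ for every $j$; combined with the weight identity this forces $w'(C_i) \le w'(C)$, and the minimality of $C$ turns this into equality. Hence $C_i$ is itself a shortest non-zero directed cycle with $|V(C_i)| \le 2k$. The main obstacle I anticipate is designing the auxiliary digraph $D'$ so that \cref{lem:cyclepartition} applies verbatim (in particular verifying the color-balance between $A$ and $A'$ that its proof tacitly uses); once the construction is in place, the label- and weight-preservation identities are automatic from the definitions of $\psi'$ and $w'$, and the conservativity of $w'$ closes out the argument.
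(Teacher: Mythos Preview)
Your proposal is correct and follows essentially the same route as the paper: pick a shortest non-zero cycle $C$, apply \cref{lem:cyclepartition} to decompose $V(C)$ into vertex-disjoint cycles of length at most $2k$, use the identities $\sum_i \psi'(C_i)=\psi'(C)$ and $\sum_i w'(C_i)=w'(C)$ together with the conservativity of $w'$ to find a non-zero $C_i$ with $w'(C_i)\le w'(C)$. Your write-up is in fact a bit more careful than the paper's in spelling out the auxiliary digraph and the two telescoping identities (one small slip: for an $M_2$-arc $yx$ the label $\psi'(yx)=-\psi(x)$ depends on its \emph{head}, not its tail, though your formula $\psi(A')-\psi(A)$ is unaffected).
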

\begin{proof}
    Let $E_1,\dots, E_k$ denote the partition defining $M_2$.
    Let $C$ be a shortest non-zero directed cycle in $D_{M_1, M_2}(B)$ with respect to $\psi'$ and $w'$. 
    Using the definition of $D_{M_1, M_2}(B)$, it follows that $|(V(C) \cap B)\cap E_i| = |(V(C)\setminus B) \cap E_i|$ for each $i \in [k]$.  Applying \cref{lem:cyclepartition} to $A \coloneqq V(C) \cap B$, $A'\coloneqq V(C) \setminus B$, the digraph induced by $A \cup A'$ in $D_{M_1, M_2}(B)$, and the function $c\colon V(C) \to [k]$ satisfying $E_i \cap V(C) = \Set{v \in V(C)}{c(v) = i}$ for each $i \in [k]$, we get that $D_{M_1, M_2}(B)$ contains vertex-disjoint directed cycles $C_1, \dots, C_t$ such that $V(C_1)\cup \dots \cup V(C_t) = V(C)$ and $|V(C_i)| \le 2k$ for each $i \in [t]$. As $B$ is a minimum-weight basis, $w'$ is conservative, this $w(C_i) \le w(C)$ for each $i \in [t]$.
    Since $C$ is non-zero and $\psi(C)=\psi(C_1)+\dots + \psi(C_t)$, it follows that $\psi(C_i) \ne 0$ for some $i \in [t]$. As $C$ is a shortest non-zero directed cycle, we get $i=t=1$, and the statement follows by $|V(C_1)| \le 2k$.
\end{proof}

We will use \cref{lem:shortcycle} for $k=2$ combined with the following observation.

\begin{lemma} \label{lem:4cycle}
    Let $M_1$ and $M_2$ be matroids on a common ground set $E$, $\psi\colon E \to \Gamma$ a group labeling, $w\colon E\to \R$ a weight function, and $B$ a minimum-weight common basis.
    Suppose that $C$ is a shortest non-zero directed cycle in $D_{M_1, M_2}(B)$ and $|V(C)|\le 4$. Then, $B\symdif V(C)$ is a common basis. 
\end{lemma}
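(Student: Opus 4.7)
The plan is to do a case analysis on $|V(C)|$, which---since $D_{M_1,M_2}(B)$ is bipartite between $B$ and $E\setminus B$, so every directed cycle alternates and has even length---is either $2$ or $4$. The case $|V(C)|=2$ is immediate: writing $V(C)=\{x,y\}$ with $x\in B$ and $y\in E\setminus B$, the forward arc $xy$ of $C$ gives $B-x+y\in\cB(M_1)$ and the backward arc $yx$ gives $B-x+y\in\cB(M_2)$, so $B\symdif V(C)=B-x+y$ is a common basis.

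For $|V(C)|=4$, label the vertices as $V(C)\cap B=\{x_1,x_2\}$ and $V(C)\setminus B=\{y_1,y_2\}$, and up to relabeling take $C=x_1,y_1,x_2,y_2,x_1$. The arcs of $C$ yield $B-x_1+y_1,\,B-x_2+y_2\in\cB(M_1)$ and $B-x_2+y_1,\,B-x_1+y_2\in\cB(M_2)$, and the goal is $B'\coloneqq B-x_1-x_2+y_1+y_2\in\cB(M_1)\cap\cB(M_2)$. I would apply \cref{lem:unique} to $M_1$ with $X=\{x_1,x_2\}$ and $X'=\{y_1,y_2\}$: the bijection $x_i\mapsto y_i$ already witnesses the exchange condition, so if it is the unique such bijection we immediately conclude $B'\in\cB(M_1)$, and a symmetric application to $M_2$ (with the bijection $x_i\mapsto y_{3-i}$) gives $B'\in\cB(M_2)$.

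The core step is to rule out non-uniqueness. Suppose the $M_1$-bijection is not unique; then also $B-x_1+y_2,\,B-x_2+y_1\in\cB(M_1)$, so $D_{M_1,M_2}(B)$ contains the additional $M_1$-arcs $x_1y_2$ and $x_2y_1$. Pairing these with the $M_2$-backward arcs $y_2x_1$ and $y_1x_2$ of $C$ produces two $2$-cycles $C_1=x_1,y_2,x_1$ and $C_2=x_2,y_1,x_2$. Their labels satisfy $\psi'(C_1)+\psi'(C_2)=\psi'(C)\neq 0$, so one of them, say $C_1$, is non-zero. Conservativeness of $w'$ (which follows from $B$ being a minimum-weight common basis) gives $w'(C_1),w'(C_2)\ge 0$, hence $w'(C_1)\le w'(C)$. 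This yields a non-zero directed cycle of weight at most that of $C$ but with only two vertices, contradicting the choice of $C$ as a shortest non-zero cycle (we may take $C$ of minimum length among shortest non-zero cycles, as in the proof of \cref{lem:shortcycle}). Hence the $M_1$-bijection is unique; the $M_2$-argument is symmetric, using the potential extra backward arcs $y_1x_1,\,y_2x_2$ together with the forward arcs $x_1y_1,\,x_2y_2$ of $C$ to form the $2$-cycles $x_1,y_1,x_1$ and $x_2,y_2,x_2$.

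The main obstacle I anticipate is the non-uniqueness subcase: one must correctly identify the two $2$-cycles generated by the extra exchange arcs, verify the key additive identity $\psi'(C_1)+\psi'(C_2)=\psi'(C)$, and then combine conservativeness of $w'$ with the shortness of $C$ to force the desired contradiction. Once this is set up, the $M_1$ and $M_2$ halves of the argument are completely analogous and \cref{lem:unique} closes each of them.
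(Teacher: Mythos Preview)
Your argument is essentially the paper's proof: same case split on $|V(C)|\in\{2,4\}$, same appeal to \cref{lem:unique}, and in the non-uniqueness case the same pair of $2$-cycles built from the extra $M_1$-arcs together with the original $M_2$-arcs of $C$, yielding $\psi'(C_1)+\psi'(C_2)=\psi'(C)$ and $w'(C_1)+w'(C_2)=w'(C)$, hence a non-zero $2$-cycle no heavier than $C$. Your explicit tie-breaking remark (choosing $C$ of minimum length among shortest non-zero cycles) is a helpful clarification of a point the paper leaves implicit when it says ``This contradicts $C$ being a shortest non-zero directed cycle.''
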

\begin{proof}
    Using the definition of $D_{M_1, M_2}(B)$ we have that $|V(C)| \in \set{2,4}$ and $B\symdif V(C)$ is a common basis for $|V(C)| = 2$. It remains to consider the case $|V(C)|=4$. Let $b_1b'_1$, $b'_1b_2$, $b_2b'_2$, $b'_2b_1$ denote the arcs of $C$ such that $b_1,b_2 \in B$ and $b'_1, b'_2 \notin B$. 
    Suppose that $B \symdif V(C)$ is not a common basis. By symmetry, we may assume that it is not a basis of $M_1$. Then, \cref{lem:unique} implies that $B-b_1+b'_2$ and $B-b_2+b'_1$ are bases of $M_1$, that is, $b_1b'_2$ and $b_2b'_1$ are arcs of $D_{M_1, M_2}(B)$. Let $C_1$ and $C_2$ denote the directed cycles with arc sets $\set{b_1b'_2, b'_2b_1}$ and $\set{b_2b'_1, b'_1b_2}$, respectively. Then, $C_1$ and $C_2$ are directed cycles in $D_{M_1, M_2}(B)$ such that $w(C_i) \le w(C)$ for $i=1,2$ and $\psi(C_1)+\psi(C_2) = \psi(C) \ne 0$, thus $\psi(C_i)\ne 0$ for some $i \in [2]$. This contradicts $C$ being a shortest non-zero directed cycle.
\end{proof}

We are ready to prove \cref{thm:non-zero-exact-base}.

\begin{proof}[Proof of \cref{thm:non-zero-exact-base}]
    As in the proof of \cref{thm:2approx}, we may assume to have a minimum weight common basis $B_0$ with $\psi(B_0)=0$ and a shortest non-zero directed cycle $C$ in $D_{M_1, M_2}(B_0)$. Then, $B_0\symdif V(C)$ is a common basis by \cref{lem:shortcycle,lem:4cycle}, thus it is a minimum weight non-zero common basis by \cref{lem:weighted-non-zero-common-base}.
\end{proof}

\subsubsection{Relation to Reconfiguration Problems}\label{sec:reconfig}

We discuss the relation of \textsc{Non-Zero Common Basis} to a reconfiguration problem on matroids. 
For common bases $B$ and $B'$ of  matroids $M_1$ and $M_2$, we define a \emph{reconfiguration sequence} from $B$ to $B'$ as a sequence of common bases $B_0,B_1,\dots, B_\ell$ such that $B_0=B$, $B_\ell = B'$ and $|B_{i-1}\setminus B_i| = |B_i \setminus B_{i-1}| = 1$ for each $i \in [\ell]$.
In general, a reconfiguration sequence might not exist, in fact, it is hard to decide if a reconfiguration sequence exists between given bases $B$ and $B'$~\cite{kobayashi2023rcb}. Nevertheless, it is known for some matroid pairs that a reconfiguration sequence exists and can be found in polynomial time between any pair of common bases.  Such matroid pairs include a regular matroid and its dual~\cite{berczi2023reconfiguration}, the family of arborescences (or $r$-arborescences for a fixed root $r$) in a digraph~\cite{ito2023reconfiguring}, or more generally, arc sets that can be partitioned into $k$ arborescences (or $r$-arborescences for a fixed root $r$) for a fixed $k \ge 1$~\cite{kobayashi2023rcb}. Note that these arc sets can be represented as common bases of two matroids, see~\cite[Corollary 53.1c]{schrijver1990spanning}. 
We show that \textsc{Non-Zero Common Basis} is solvable for such matroid pairs using an idea similar to the solution of the exact arborescence problem for $(0,1)$-valued weight functions by Barahona and Pulleyblank~\cite{barahona1987exact}.

\begin{theorem} 
    Let $M_1$ and $M_2$ be matroids such that a reconfiguration sequence exists and can be found in polynomial time between any pair of common bases. Then, \textscup{Non-Zero Common Basis} is polynomially solvable for $M_1$ and $M_2$.
\end{theorem}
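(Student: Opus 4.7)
The plan is to exploit the reconfiguration hypothesis together with polynomially many calls to weighted matroid intersection. The key observation is that along any reconfiguration step $B\to B'=B-e+f$, the label changes by exactly $\psi(f)-\psi(e)$. Hence, for each group element $g$ in the image $\psi(E)$, writing $E_g\coloneqq \psi^{-1}(g)$, the quantity $|B\cap E_g|$ changes by at most one per step, and whenever it strictly changes, the swapped pair $(e,f)$ satisfies $|\set{e,f}\cap E_g|=1$, so $\psi(B')-\psi(B)\ne 0$ and at least one of $B,B'$ is a non-zero common basis.

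My algorithm is then as follows. First compute an arbitrary common basis $B_0$ by matroid intersection and return it if $\psi(B_0)\ne 0$. Otherwise, enumerate the at most $|E|$ distinct values $g$ appearing in $\psi(E)$, and for each one use weighted matroid intersection with weight $\ones_{E_g}$ to find common bases $B_g^+$ and $B_g^-$ that maximize and minimize $|B\cap E_g|$, respectively. If $\psi(B_g^+)\ne 0$ or $\psi(B_g^-)\ne 0$, return the corresponding basis. If $|B_g^+\cap E_g|>|B_g^-\cap E_g|$, invoke the assumed polynomial-time reconfiguration subroutine to obtain a sequence $B_g^+=C_0,C_1,\ldots,C_\ell=B_g^-$ and output the first $C_i$ whose label is non-zero. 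The key observation guarantees that such an $i$ exists, since the integer-valued quantity $|C_i\cap E_g|$ must strictly decrease at least once along the sequence, and at any such step the label changes, forcing one of the two adjacent bases to be non-zero.

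For correctness of the negative output, note that if the algorithm completes all loops without returning then, for every $g\in \psi(E)$, the extremality of $B_g^\pm$ forces $|B\cap E_g|$ to equal a common value $\mu_g$ on every common basis $B$. Consequently, every common basis $B$ satisfies $\psi(B)=\sum_{g\in \psi(E)} \mu_g\cdot g=\psi(B_0)=0$, justifying the ``no non-zero common basis'' report. The main subtlety is identifying the right reduction from the group-label condition to the scalar quantities $|B\cap E_g|$, which decouples the group-structured target from the matroidal constraints; once this reduction is in place, the algorithmic ingredients (weighted matroid intersection and the assumed reconfiguration subroutine) assemble routinely into a polynomial-time procedure with $O(|E|)$ outer iterations.
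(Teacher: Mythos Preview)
Your proof is correct and follows essentially the same approach as the paper: reduce the group-label question to the integer profile vector $(|B\cap E_g|)_{g\in\psi(E)}$, use weighted matroid intersection to detect whether this profile varies over common bases, and if so use a reconfiguration sequence between two bases with different profiles to locate a step where the label changes. The only organisational difference is that the paper packages all coordinates into a single vector $s\colon E\to\Z^m$ and computes one lexicographic minimum and maximum common basis (followed by one reconfiguration), whereas you loop over the at most $|E|$ coordinates separately; both are polynomial and rest on the identical key observation that a single exchange changing some $|B\cap E_g|$ forces $\psi(e)\ne\psi(f)$ and hence a label change.
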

\begin{proof}
We may assume that $M_1$ and $M_2$ have a common basis.
Let $g_1, \dots, g_m$ be an arbitrary ordering of the set $\Set{\psi(e)}{e \in E}$. Let us define $s\colon E \to \Z^m$ such that $s(e)=(s_1(e),\dots, s_m(e))$ for each $e \in E$ where for each $i \in [m]$, $s_i(e)=1$ if $\psi(e)=g_i$ and $s_i(e)=0$ otherwise. 
Compute a minimum and maximum $s$-weight common basis $B_-$ and $B_+$, respectively, with respect to the lexicographic ordering on $\Z^m$. If $s(B_-)=s(B_+)$, then the $s$-weight of each common basis is the same, implying that the label of each common basis is the same. Thus, in this case we can output $B_-$ if $\psi(B_-) \ne 0$, otherwise we correctly report that no non-zero common basis exists. It remains to consider the case $s(B_-) \ne s(B_+)$. By the assumption of the theorem, we can find a sequence of common bases $B_0,\dots, B_\ell$ such that $B_0 = B_-$, $B_\ell = B_+$ and $|B_i \setminus B_{i-1}| = 1$ for each $i \in [\ell]$. Since $s(B_0)\ne s(B_\ell)$, there exists $i \in [\ell]$ such that $s(B_{i-1}) \ne s(B_i)$. As $|B_i \setminus B_{i-1}|=1$, this implies that $\psi(B_i) - \psi(B_{i-1}) = g_j-g_{j'}$ for some distinct $j,j' \in [m]$, in particular, $\psi(B_i) \ne \psi(B_{i-1})$. Therefore, at least one of $B_{i-1}$ and $B_i$ provides a solution to \textsc{Non-Zero Common Basis}.
\end{proof}

\section{Algebraic Algorithm for \texorpdfstring{$F$}{F}-avoiding Basis and Common Basis} \label{sec:algebraic}

\newcommand{\F}{\mathbb{F}}

We present a randomized polynomial-time algorithm for \textsc{$F$-avoiding Common Basis} for representable matroids given as matrices over a field $\F$ and a finite group $\Gamma$ given as an operation table.
Our algorithm is a generalization of the exact-weight matroid intersection algorithm for representable matroids by Camerini, Galtiati, and Maffioli~\cite{camerini1992exact}.
A similar algebraic algorithm has also been considered by Webb~\cite[Section~3.7]{webb2004paffian}.

Before describing the algorithm, we introduce needed algebraic notions and results.
We assume that the arithmetic operations and the zero test over $\F$ can be performed in constant time.
In this section, we use the multiplicative notation for the operation of $\Gamma$, and let $e$ denote the group unit (zero) of $\Gamma$ instead of $0$.
The \emph{group ring} $\F[\Gamma]$ of $\Gamma$ over $\F$ is the set of formal $\F$-coefficient linear combinations of the elements in $\Gamma$, i.e., $\F[\Gamma] \coloneqq \Set{\sum_{g \in \Gamma} a_g g}{ a_g \in \F \; (g \in \Gamma)}$.
The addition and multiplication of $f = \sum_{g \in \Gamma} a_g g \in \F[\Gamma]$ and $h = \sum_{g \in \Gamma} b_g g \in \F[\Gamma]$ are naturally defined as $f+h = \sum_{g \in \Gamma} (a_g + b_g) g$ and $fh = \sum_{g,g' \in \Gamma} a_g b_{g'} gg'$.
With these operations, $\F[\Gamma]$ forms a commutative ring, containing $\F$ as a subring under the natural identification $\F \ni a \mapsto ae \in \F[\Gamma]$.
Note that the operations of $\F[\Gamma]$ and the zero test can be performed in polynomially many operations of $\F$ and $\Gamma$.

Our algorithm will require the field $\F$ to be sufficiently large.
If $\F$ is a small finite field $\GF(q)$, where $q$ is a prime power, we use the following result from algebraic computation to extend $\F$ to $\GF(q^d)$ by finding an irreducible polynomial $p \in \F[x]$ of degree $d$ as $\GF(q^d) \simeq \F[x]/(p)$.

\begin{theorem}[{see~\cite[Corollarly~4.6, Theorem~14.42]{vondergathen2013modern}}]\label{thm:field-extension}
    Let $\F = \GF(q)$ for some  prime power $q$ and $d \in \N$.
    There is a randomized algorithm to find an irreducible polynomial of degree $d$ over $\F$ in time polynomial in $\log q$ and $d$ in expectation.
    After this preprocessing, we can represent every element in $\GF(q^d)$ as a $d$-tuple of elements in $\F$ and emulate the arithmetic operations and the zero test over $\GF(q^d)$ in time polynomial in $d$.
\end{theorem}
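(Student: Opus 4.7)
The plan is to split the proof into two essentially independent pieces: (i) the randomized search for an irreducible polynomial $p \in \F[x]$ of degree $d$, and (ii) the emulation of $\GF(q^d)$ arithmetic once such a $p$ is in hand. Piece (ii) is routine: the quotient ring $\F[x]/(p)$ is a field of $q^d$ elements and hence isomorphic to $\GF(q^d)$ by the uniqueness of finite fields of a given order; representing each residue class by its unique degree-${<}d$ representative gives the $d$-tuple encoding, addition is componentwise, multiplication is naive polynomial multiplication followed by reduction modulo $p$ (both in $O(d^2)$ operations over $\F$), inversion is handled by the extended Euclidean algorithm, and the zero test amounts to checking that all coefficients vanish. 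Since $\F$-arithmetic is assumed to cost constant time, this yields the claimed $\mathrm{poly}(d)$ time per operation.

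For piece (i), I would use the classical ``sample and test'' strategy: pick a monic $f \in \F[x]$ of degree $d$ uniformly at random and decide whether it is irreducible. Möbius inversion gives the exact count $\frac{1}{d}\sum_{e \mid d}\mu(d/e) q^e$ of monic irreducibles of degree $d$ over $\F$, which is of order $q^d/d$; one easily verifies a bound of the form $\Omega(q^d/d)$ in all nontrivial cases, so the success probability per trial is $\Omega(1/d)$ and the expected number of trials is $O(d)$. To test irreducibility of a candidate $f$ I would apply Rabin's criterion: $f$ is irreducible if and only if $x^{q^d} \equiv x \pmod{f}$ and $\gcd\bigl(f,\, x^{q^{d/r}} - x\bigr) = 1$ for every prime divisor $r$ of $d$. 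Correctness rests on the classical identity $x^{q^k} - x = \prod_{e \mid k}\prod_{g} g$, where $g$ ranges over monic irreducibles of degree $e$ in $\F[x]$, which forces any proper factor of an $f$ failing the test to appear as a nontrivial common divisor with some $x^{q^{d/r}} - x$.

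The central computational subroutine is forming $x^{q^k} \bmod f$ for the relevant exponents $k$. I would first compute $x^q \bmod f$ by $O(\log q)$ modular squarings, each costing $O(d^2)$ operations in $\F$, and then iterate the Frobenius-on-residues map $y \mapsto y^q \bmod f$ at most $d$ times to reach $x^{q^d} \bmod f$ and the intermediate values $x^{q^{d/r}} \bmod f$. Combining the $O(d)$ expected trials with this polynomial per-trial cost yields the total expected running time polynomial in $\log q$ and $d$, as required. The only real subtlety, and the step I would take most care with, is the complexity bookkeeping: a naive ``expand then reduce'' would manipulate intermediate polynomials of exponential degree, so reduction modulo $f$ must be applied after every squaring and every Frobenius step; this is what converts the superficially exponential $q^d$ into an honest $\mathrm{poly}(\log q, d)$ bound.
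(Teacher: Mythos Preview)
The paper does not prove this theorem; it is quoted as a known result with a citation to von zur Gathen and Gerhard's textbook and is used as a black box in the proof of \cref{thm:representable-zero-common-basis}. Your proposal is a correct and self-contained sketch of the standard argument one finds in that reference: random sampling of monic degree-$d$ polynomials, the $\Omega(1/d)$ density of irreducibles from the M\"obius formula, Rabin's irreducibility test implemented via repeated modular squaring to obtain $x^{q^k}\bmod f$, and the quotient-ring model $\F[x]/(p)\cong \GF(q^d)$ with naive polynomial arithmetic and the extended Euclidean algorithm for inversion. The only bookkeeping point worth making explicit is that each Frobenius step $y\mapsto y^q\bmod f$ itself costs $O(\log q)$ modular squarings, so the total is $O(d\log q)$ squarings rather than $O(d+\log q)$; this is still comfortably polynomial in $d$ and $\log q$, so your conclusion stands.
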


For finite sets $R$ and $C$, we mean by an \emph{$R \times C$ matrix} a matrix of size $|R| \times |C|$ whose rows and columns are identified with $R$ and $C$, respectively.
We simply write $[r] \times C$ as $r \times C$ for $r \in \Zp$.
Given a ground set $E$ and a labeling $\psi\colon E \to \Gamma$, we define an $E \times E$ diagonal matrix $D_{\psi}$ as follows: for every $j \in E$, we set the $(j,j)$ diagonal entry of $D_{\psi}$ as $x_j \psi(j)$, where $x_j$ is an indeterminate (variable) whose actual value comes from $\F$.
Then, $D_{\psi}$ is regarded as a matrix over the group ring $\F(\set{x_e}_{e \in E})[\Gamma]$, where $\F(\set{x_e}_{e \in E})$ denotes the rational function field over $\F$ in $|E|$ indeterminates $\set{x_e}_{e \in E}$.

The following is a modification of a claim of Tomizawa and Iri~\cite{tomizawa1974axb}, who first used the Cauchy--Binet formula in the context of linear matroid intersection.

\begin{lemma}\label{lem:cauchy-binet}
    Let $\F$ be a field, $M_1$ and $M_2$ $\F$-representable matroids with the common ground set $E$ and the same rank $r$, $A_k$ an $r \times E$ matrix representing $M_k$ for $k = 1,2$, and $\psi\colon E \to \Gamma$ a labeling.
    Let $\Xi = A_1 D_{\psi} A_2^\top$.
    Then, the coefficient of $g \in \Gamma$ in $\det(\Xi)$ is a non-zero polynomial in $\set{x_j}_{j \in E}$ if and only if a common basis with label $g$ exists.
\end{lemma}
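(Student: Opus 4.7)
The plan is to compute $\det(\Xi)$ explicitly using the Cauchy--Binet formula and then read off the coefficient of each group element $g$.

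First I would apply Cauchy--Binet to $\Xi = A_1 D_{\psi} A_2^\top$, viewed as a product of matrices over the commutative ring $\F(\set{x_e}_{e \in E})[\Gamma]$. Since $A_1$ is $r \times E$, $D_\psi$ is $E \times E$, and $A_2^\top$ is $E \times r$, the formula gives
\[
\det(\Xi) \;=\; \sum_{S \in \binom{E}{r}} \det\bigl((A_1)_{[r]\times S}\bigr)\,\det\bigl((D_\psi)_{S\times S}\bigr)\,\det\bigl((A_2)_{[r]\times S}\bigr).
\]
Next I would evaluate the middle factor: since $D_\psi$ is diagonal with entries $x_j\,\psi(j)$ on position $(j,j)$, the principal minor on $S$ satisfies
\[
\det\bigl((D_\psi)_{S\times S}\bigr) \;=\; \prod_{j \in S} x_j \,\psi(j) \;=\; \Bigl(\prod_{j \in S} x_j\Bigr)\,\psi(S),
\]
where $\psi(S) = \prod_{j \in S}\psi(j)$ is the (multiplicative) label of $S$. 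Substituting back and grouping the terms by the value $g \in \Gamma$ of $\psi(S)$ yields
\[
\det(\Xi) \;=\; \sum_{g \in \Gamma} \Biggl(\;\sum_{\substack{S \in \binom{E}{r}\\ \psi(S)=g}} \det\bigl((A_1)_{[r]\times S}\bigr)\,\det\bigl((A_2)_{[r]\times S}\bigr) \prod_{j \in S} x_j\Biggr)\, g,
\]
so the coefficient of $g$ in $\det(\Xi) \in \F(\set{x_e})[\Gamma]$ is exactly the inner sum.

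Finally I would argue that this inner polynomial in $\set{x_j}_{j\in E}$ is non-zero if and only if some common basis of $M_1,M_2$ has label $g$. For each $S \in \binom{E}{r}$, we have $\det((A_k)_{[r]\times S}) \neq 0$ if and only if $S$ is a basis of $M_k$, so the only surviving terms come from common bases $S$ with $\psi(S)=g$, each contributing a non-zero scalar multiple of the squarefree monomial $\prod_{j \in S} x_j$. Because these squarefree monomials are pairwise distinct as $S$ ranges over $\binom{E}{r}$, no cancellation between different $S$ is possible; hence the coefficient of $g$ is the zero polynomial precisely when there is no common basis with label $g$. The only subtlety is keeping track of the two layers of formalism (the group ring over a rational function field), but the computation is otherwise routine and no real obstacle is expected.
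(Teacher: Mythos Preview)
Your proposal is correct and follows essentially the same argument as the paper: both apply Cauchy--Binet to obtain $\det(\Xi)=\sum_{B\in\binom{E}{r}}\det(A_1[B])\det(A_2[B])\prod_{j\in B}x_j\cdot\psi(B)$, observe that a summand survives precisely when $B$ is a common basis, and use the distinctness of the squarefree monomials $\prod_{j\in B}x_j$ to preclude cancellation. The only cosmetic difference is that you spell out the three-factor Cauchy--Binet with the diagonal block $\det((D_\psi)_{S\times S})$ explicitly, whereas the paper jumps directly to the final expansion.
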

\begin{proof}
    By the Cauchy--Binet formula, we can expand $\det (\Xi)$ as
    \begin{align}\label{eq:det-xi}
        \det(\Xi)
        = \sum_{B \in \binom{E}{r}} \det(A_1[B]) \det(A_2[B]) \prod_{j \in B} x_j \cdot \psi(B),
    \end{align}
    where $A_k[B]$ denotes the submatrix of $A_k$ obtained by extracting the columns in $B$ for $k=1,2$.
    Observe that $\det(A_1[B]) \det(A_2[B])$ is non-zero if and only if $B$ is a common basis of $M_1$ and $M_2$, and the terms coming from different common bases do not cancel out thanks to the factor $\prod_{j \in B} x_j$, proving the claim.
\end{proof}

\Cref{lem:cauchy-binet} together with the Schwartz--Zippel lemma~\cite{lovasz1979rand,schwartz1980pit,zippel1979pit}, division-free determinant algorithm~\cite{kaltofen2005determinant}, search-to-decision reduction, and the field extension for small fields give rise to a randomized algebraic algorithm for \textsc{$F$-avoiding Common Basis}. 

\begin{restatable}{theorem}{representablezerocommonbasis} 
\label{thm:representable-zero-common-basis}
    Let $\F$ be a field and $M_1$ and $M_2$ $\F$-representable matroids with the common ground set $E$.
    There is a randomized algorithm that, given matrices $A_1$ and $A_2$ over $\F$ representing $M_1$ and $M_2$, respectively, the operation table of a finite abelian group $\Gamma$, a group labeling $\psi\colon E \to \Gamma$, and a forbidden label set $F \subseteq \Gamma$, solves \textscup{$F$-avoiding Common Basis} in expected polynomial time.
\end{restatable}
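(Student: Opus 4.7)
The plan is to generalize the exact-weight matroid intersection algorithm of Camerini, Galtiati, and Maffioli~\cite{camerini1992exact} by replacing their scalar weight bookkeeping with bookkeeping in the group ring $\F[\Gamma]$, then combining the Schwartz--Zippel lemma with a standard search-to-decision reduction. The key observation, following from \cref{lem:cauchy-binet} and the expansion \cref{eq:det-xi} in its proof, is that if we regard $\det(\Xi)$ as an element of $\F[\set{x_e}_{e\in E}][\Gamma]$, then the coefficient of each $g \in \Gamma$ is a polynomial $P_g \in \F[\set{x_e}_{e\in E}]$ of degree at most $r$ which is nonzero if and only if some common basis has label $g$. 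Hence an $F$-avoiding common basis exists if and only if $P_g \ne 0$ for some $g \in \Gamma \setminus F$.

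For the decision version, I would first enlarge $\F$ (if it is finite and small) to an extension $\F'$ of size polynomial in $r$ and $|\Gamma|$ using \cref{thm:field-extension}, sample values $x_e \in \F'$ uniformly and independently, and then compute $\det(\Xi)$ in $\F'[\Gamma]$ via a division-free determinant algorithm~\cite{kaltofen2005determinant} applied over the commutative ring $\F'[\Gamma]$. Each arithmetic operation in $\F'[\Gamma]$ costs $O(|\Gamma|)$ operations of $\F'$ and uses only the provided operation table, so the whole computation runs in polynomial time. The algorithm returns YES iff the coefficient of some $g \in \Gamma \setminus F$ in the computed $\det(\Xi)$ is nonzero; by Schwartz--Zippel applied to each nonzero $P_g$ of degree at most $r$, together with a union bound over $g \in \Gamma \setminus F$, the decision is correct with arbitrarily high probability once $|\F'|$ is chosen polynomial in $r|\Gamma|$.

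To actually produce an $F$-avoiding common basis, I would invoke a standard search-to-decision reduction: iterate through the elements of $E$, and for the current element $e$, ask the decision oracle whether $(M_1\setminus e, M_2\setminus e)$ still admits an $F$-avoiding common basis with forbidden set $F$. If so, delete $e$ from both matroids; otherwise, $e$ must appear in every $F$-avoiding common basis of the current instance, so contract $e$ in both matroids and replace $F$ by $F-\psi(e) \coloneqq \set{g - \psi(e) : g \in F}$ to compensate for the label already committed to by contracting $e$. Representability over $\F$ is preserved under deletion and contraction (the contracted representation is obtained by a single pivoting step), so each of the $O(|E|)$ oracle calls runs in polynomial time.

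The main obstacle is that the determinant must be computed over the commutative-but-not-field ring $\F'[\Gamma]$, which is why we rely on a division-free algorithm: such algorithms are ring-agnostic provided commutativity, and $\F'[\Gamma]$ is commutative because $\Gamma$ is abelian. A secondary subtlety is calibrating the randomness so that each of the at most $|\Gamma \setminus F|$ relevant polynomial tests succeeds simultaneously, which is handled by the union-bound choice of $|\F'|$ and by \cref{thm:field-extension} for extending small base fields; the expected polynomial running time claimed in the statement comes from the expected cost of this field extension step.
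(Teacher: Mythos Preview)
Your approach matches the paper's: Cauchy--Binet over the group ring $\F'[\Gamma]$, random substitution plus Schwartz--Zippel, a division-free determinant over that commutative ring, field extension for small $\F$ via \cref{thm:field-extension}, and search-to-decision. Two minor divergences worth noting: the paper's search-to-decision uses only deletions (the surviving ground set is then itself the desired basis, so no contraction or label-shifting bookkeeping is needed), and the union bound over $g\in\Gamma\setminus F$ is unnecessary---a single fixed nonzero $P_g$ not vanishing already certifies a YES, so $|\F'|$ need only grow with $nr$, not with $nr|\Gamma|$.

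One genuine omission: you attribute the ``expected'' running time solely to the randomized field-extension step, but your search phase as described is Monte Carlo (each decision call can err). The paper obtains a Las Vegas guarantee by verifying at the end whether the output really is an $F$-avoiding common basis and repeating the entire search on failure; this repeat-until-success loop, with success probability at least $\tfrac{1}{2}$ per attempt, is the primary source of the ``expected polynomial time'' in the statement.
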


\begin{proof}
    We may assume that $M_1$ and $M_2$ have the same rank $r$ and the matrices $A_1$ and $A_2$ are of size $r \times E$.
    If $\F$ is a finite field $\GF(q)$ and $q \le 4nr$ with $n = |E|$, we first extend $\F$ to $\GF(q^d)$ so that it satisfies $|\GF(q^d)| = q^d \ge 4nr$ by applying \cref{thm:field-extension} with $d = \lceil \log_q 2nr \rceil$.
    In the following, we assume that $|\F| \ge 2nr$ holds.
    
    Let $\Xi = A_1 D_{\psi} A_2^\top$ be the matrix in \cref{lem:cauchy-binet} and $H \subseteq \F$ an arbitrary finite set with $|H| \ge 2nr$.
    We first build an algorithm for the decision version as follows: for every $j \in E$, draw $a_j \in H$ from the uniform distribution over $H$, substitute $a_j$ for $x_j$ in $\Xi$, compute $\det(\Xi) \in \F[\Gamma]$, and report that an $F$-avoiding common basis exists if the coefficient of $g$ in $\det(\Xi)$ is non-zero for some $g \in \Gamma \setminus F$; otherwise, the algorithm concludes that it does not.
    The complexity of this algorithm is polynomial in $n$ and $|\Gamma|$ because the determinant over $\F[\Gamma]$ can be calculated with polynomially many operations of $\F[\Gamma]$ by using a division-free determinant algorithm, e.g.,~\cite{kaltofen2005determinant}.
    From \cref{lem:cauchy-binet}, this algorithm reports correctly if there is no $F$-avoiding common basis, and otherwise the probability of success can be lower-bounded by $1 - \frac{r}{|H|} \ge 1 - \frac{1}{2n}$ via the Schwartz--Zippel lemma since every non-vanishing term in $\det(\Xi)$ is of total degree $r$ as a polynomial in $\set{x_j}_{j \in E}$.

    To construct an actual $F$-avoiding common basis, we use the so-called search-to-decision reduction based on the self-reducibility as follows.
    Fix an arbitrary ordering $j_1, \dotsc, j_n$ of $E$.
    From $k = 1$ to $n$, we use the above decision algorithm to test whether removing $j_k$ from the ground set retains the existence of an $F$-avoiding common basis.
    If so, we actually remove $j_k$ from $E$; otherwise, we do nothing.
    The removal of an element is represented by deleting corresponding columns from $A_1$ and $A_2$.
    The final ground set $E$ is an $F$-avoiding common basis if all invocations of the decision algorithm have reported correctly.
    By the union bound, the probability of success is at least $1 - n \cdot \frac{1}{2n} = \frac12$.
    We thus check whether the final $E$ is an $F$-avoiding common basis or not and repeat the algorithm until we get an $F$-avoiding common basis.
    Since the probability of failing the first $k$ attempts is at most $\frac{1}{2^k}$ for any $k \in \N$, the expected running time is upper-bounded by $\sum_{k=1}^\infty \frac{k}{2^k}T = 2T$, where $T$ is the running time of each attempt.
    Thus, the expected running time is polynomially bounded.
\end{proof}

A \emph{Pfaffian pair} is a pair of $r \times n$ matrices $A_1, A_2$ such that $\det(A_1[B]) \det(A_2[B])$ is a non-zero constant for any common basis $B$~\cite{webb2004paffian}.
This property implies that, if $\F = \Q$ and matroids are given as a Pfaffian pair, then no cancel-out occurs in the summands of $\det(\Xi)$ in the equation~\eqref{eq:det-xi} even if we substitute $1$ for all $x_i$.
Therefore, we can derandomize the algorithm given in \cref{thm:representable-zero-common-basis}. 
Examples of common bases of matroid pairs representable by Pfaffian pairs include spanning trees, regular matroid bases, arborescences, perfect matchings in Pfaffian-orientable bipartite graphs, and node-disjoint $S$--$T$ paths in planar graphs~\cite{matoya2022pfaffian}.

\begin{theorem} \label{thm:pfaffian}
    \textscup{$F$-avoiding Common Basis} is polynomially solvable for $\Q$-representable matroids if matroids are given as a Pfaffian pair and a group is given as the operation table.
\end{theorem}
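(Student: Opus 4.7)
The plan is to derandomize the algorithm of \cref{thm:representable-zero-common-basis} by exploiting the fact that, under the Pfaffian assumption, no cancellations can occur in the Cauchy--Binet expansion, so we may drop the indeterminates entirely. Concretely, I would substitute $x_j = 1$ for every $j \in E$ in the matrix $\Xi = A_1 D_\psi A_2^\top$, obtaining an $r \times r$ matrix over the (commutative) group ring $\Q[\Gamma]$.

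With this substitution, the Cauchy--Binet expansion used in the proof of \cref{lem:cauchy-binet} becomes
\[
    \det(\Xi) = \sum_{B \in \binom{E}{r}} \det(A_1[B]) \det(A_2[B]) \, \psi(B) = c \sum_{B \text{ common basis}} \psi(B),
\]
where the first equality uses that non-common-basis terms vanish, and the second uses the defining property of a Pfaffian pair, namely that $\det(A_1[B]) \det(A_2[B])$ equals a fixed non-zero constant $c$ for every common basis $B$. Viewed in $\Q[\Gamma]$, the coefficient of $g \in \Gamma$ in $\det(\Xi)$ is therefore $c \cdot N_g$, where $N_g$ counts the common bases with label $g$. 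Since $\Q$ has characteristic zero and $c \ne 0$, this coefficient is non-zero if and only if a common basis with label $g$ exists.

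For the decision problem, I would then compute $\det(\Xi) \in \Q[\Gamma]$ using a division-free determinant algorithm such as the one of~\cite{kaltofen2005determinant}; this requires only polynomially many ring operations in $\Q[\Gamma]$, each executable in time polynomial in $|\Gamma|$ and the bit-lengths involved. The answer is ``yes'' if and only if the coefficient of some $g \in \Gamma \setminus F$ is non-zero. To actually produce an $F$-avoiding common basis, I would run the same self-reducibility as in the proof of \cref{thm:representable-zero-common-basis}: iterate through $E$ and, for each $j$, delete the $j$-th column of both $A_1$ and $A_2$ if the resulting smaller instance still has an $F$-avoiding common basis. The crucial observation is that deletion preserves the Pfaffian property, because every common basis of the deleted pair is a common basis of the original pair, so the product of determinants still equals the same constant $c$; hence the decision procedure remains valid at every step, and no randomization or union bound is required.

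The main technical point to verify is that the bit-complexity of the rational coefficients remains polynomially bounded throughout the division-free determinant computation, which follows from standard Hadamard-type bounds on intermediate quantities in such algorithms. Apart from this routine bookkeeping, the Pfaffian hypothesis and characteristic-zero arithmetic handle all the cancellation issues that forced randomization in \cref{thm:representable-zero-common-basis}.
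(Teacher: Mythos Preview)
Your proposal is correct and follows essentially the same approach as the paper: substitute $x_j = 1$ for all $j$, use the Pfaffian property together with $\operatorname{char}\Q = 0$ to guarantee that the coefficient of $g$ in $\det(\Xi)$ is non-zero exactly when a common basis with label $g$ exists, and then rerun the decision/self-reducibility scheme of \cref{thm:representable-zero-common-basis} deterministically. Your write-up is in fact more explicit than the paper's (which only sketches the derandomization in a paragraph), and your observation that deletion preserves the Pfaffian property is a useful detail the paper leaves implicit.
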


We can also generalize \cref{thm:representable-zero-common-basis} to a randomized pseudo-polynomial-time algorithm for the weighted problem as follows.

\begin{restatable}{theorem}{representableweightedzerocommonbasis}\label{thm:representable-weighted-zero-common-basis}
    Let $\F$ be a field and $M_1$ and $M_2$ $\F$-representable matroids on a common ground set $E$.
    There is a randomized algorithm that, given matrices $A_1$ and $A_2$ over $\F$ representing $M_1$ and $M_2$, respectively, the operation table of a finite abelian group $\Gamma$, a group labeling $\psi\colon E \to \Gamma$, a forbidden label set $F \subseteq \Gamma$, and a weight function $w\colon E \to \Z$, solves \textscup{Weighted $F$-avoiding Common Basis} in pseudo-polynomial time in expectation.
    If $\F = \Q$ and $(A_1, A_2)$ is a Pfaffian pair, the algorithm can be derandomized.
\end{restatable}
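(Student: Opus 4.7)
The plan is to introduce a fresh indeterminate $y$ that records weights, in the same spirit as how $\psi$ was encoded via the group ring in \cref{thm:representable-zero-common-basis}. Without loss of generality we may shift $w$ to be nonnegative (since adding a constant to $w$ shifts the weight of every basis by the same amount, preserving the relative ordering). Define the $E \times E$ diagonal matrix $D_{\psi,w}$ whose $(j,j)$ entry is $x_j \, \psi(j) \, y^{w(j)}$, set $\Xi \coloneqq A_1 D_{\psi,w} A_2^\top$, and view $\Xi$ as a matrix over $\F(\set{x_e}_{e \in E})[\Gamma][y]$. By the Cauchy--Binet formula, exactly as in \cref{lem:cauchy-binet},
\[
    \det(\Xi) = \sum_{B \in \binom{E}{r}} \det(A_1[B]) \det(A_2[B]) \prod_{j \in B} x_j \cdot \psi(B) \cdot y^{w(B)},
\]
and the factor $\prod_{j \in B} x_j$ again prevents any cancellation between distinct bases. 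Hence the coefficient of $y^t g$ in $\det(\Xi)$ is a nonzero polynomial in $\set{x_j}_{j \in E}$ if and only if there exists a common basis $B$ of $M_1$ and $M_2$ with $\psi(B) = g$ and $w(B) = t$.

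The algorithm for the decision version then mimics \cref{thm:representable-zero-common-basis}: extend $\F$ via \cref{thm:field-extension} if needed to guarantee a subset $H \subseteq \F$ of size at least, say, $2nr$; draw each $a_j$ uniformly from $H$ and substitute for $x_j$; compute $\det(\Xi) \in \F[\Gamma][y]$ using a division-free determinant algorithm~\cite{kaltofen2005determinant}. For every $g \in \Gamma \setminus F$ read off the smallest $t$ whose coefficient $[y^t g]\det(\Xi)$ is nonzero, and return the minimum of these $t$ values. Because each entry of $\Xi$ is a $y$-polynomial of degree at most $\max_j w(j)$, all intermediate values have $y$-degree at most $W \coloneqq r \cdot \max_j w(j)$, so the computation uses $\mathrm{poly}(n, r, |\Gamma|, W)$ operations in $\F$, which is pseudo-polynomial in the input. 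The Schwartz--Zippel bound applies to each fixed coefficient polynomial (which has total $x$-degree $r$), so the per-coefficient error probability is at most $r/|H|$; a union bound over the $O(W|\Gamma|)$ coefficients together with standard amplification keeps the success probability constant. To produce an actual optimal basis, apply the search-to-decision reduction exactly as in the proof of \cref{thm:representable-zero-common-basis}: fix an ordering of $E$ and, for each element in turn, test with the decision algorithm whether deleting the corresponding columns of $A_1, A_2$ preserves the optimum weight; keep the element iff it does not.

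For the derandomization when $\F = \Q$ and $(A_1, A_2)$ is a Pfaffian pair, $\det(A_1[B])\det(A_2[B])$ is a nonzero constant of fixed sign across all common bases $B$. Setting all $x_j = 1$ therefore yields
\[
    \det(\Xi)\big|_{x_j = 1} = \sum_{B \text{ common basis}} \det(A_1[B])\det(A_2[B]) \cdot \psi(B) \cdot y^{w(B)},
\]
where the coefficients of the basis-indexed sum have a common sign, so summands contributing to the same $y^t g$ coefficient cannot cancel. The coefficient of $y^t g$ is thus nonzero precisely when an $F$-avoiding common basis of weight $t$ with label $g$ exists, and both the decision oracle and the search-to-decision reduction become deterministic.

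The main technical point to verify carefully is the pseudo-polynomial runtime for the determinant over $\F[\Gamma][y]$: each ring element is represented by $|\Gamma|$ coefficients, each a $y$-polynomial of degree $O(W)$, so ring multiplication costs $O(|\Gamma|^2 W^2)$ operations in $\F$, and a division-free determinant algorithm uses $\mathrm{poly}(r)$ such multiplications. I expect no surprises here, but this is the step that is the most routine-but-load-bearing: everything else --- correctness via Cauchy--Binet, Schwartz--Zippel, search-to-decision, and the Pfaffian no-cancellation argument --- is a direct lift from the unweighted proof.
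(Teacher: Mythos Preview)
Your proposal is correct and follows essentially the same approach as the paper: introduce an extra indeterminate to track weights in the diagonal matrix, expand via Cauchy--Binet, and then proceed by random substitution, division-free determinant, and search-to-decision exactly as in \cref{thm:representable-zero-common-basis}; the Pfaffian derandomization argument is likewise identical. Your write-up is in fact more detailed than the paper's (which is only a few lines), and two minor remarks: the union bound over all $O(W|\Gamma|)$ coefficients is unnecessary, since false positives are impossible and you only need the coefficient at the true optimum weight to survive the substitution; and your preliminary shift of $w$ to nonnegative values is a clean way to avoid Laurent polynomials that the paper leaves implicit.
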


\begin{proof}
    We modify each $(j,j)$ diagonal entry of $D_{\psi}$ as $x_j\psi(j)\theta^{w(j)}$, where $\theta$ is an additional indeterminate.
    Then, by the same argument as \cref{lem:cauchy-binet}, the degree of the coefficient of $e$ in $\det (\Xi)$ with respect to $\theta$ is the maximum weight of an $F$-avoiding common basis; see also~\cite[Proposition~2.5]{matoya2022pfaffian}.
    Therefore, via random substitution, division-free determinant computation, and search-to-decision reduction, we obtain a randomized pseudo-polynomial-time algorithm, as given in the proof of \cref{thm:representable-zero-common-basis}.
\end{proof}

\section{\texorpdfstring{$F$}{F}-avoiding Basis with Fixed \texorpdfstring{$|F|$}{|F|}} \label{sec:fixedF}

As we will see in \cref{thm:zero-base-table}, \textsc{Zero Basis} is hard for groups given by operation tables. This implies the hardness of \textsc{$F$-avoiding Basis} if the set $F$ of forbidden labels is part of the input. 
In this section, we study the problem when $F$ has fixed size.
Note that in contrast to the setting of \cref{sec:algebraic}, we assume that $\Gamma$ is given as an operation oracle (and it is not necessarily finite).

Related to the notion of $k$-closeness recently introduced by Liu and Xu~\cite{liu2023congruency}, we propose the following conjecture.

\begin{conjecture} \label{conj:k}
    Let $M$ be a matroid, $\psi\colon E(M) \to \Gamma$ a labeling, and $F \subseteq \Gamma$ a finite collection of forbidden labels. Then, for any basis $B$, there exists an $F$-avoiding basis $B^*$ with $|B \setminus B^*| \le |F|$, provided that at least one $F$-avoiding basis exists.
\end{conjecture}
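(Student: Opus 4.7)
The natural approach is a minimal-counterexample argument. Let $B^*$ be an $F$-avoiding basis minimizing $k \coloneqq |B\setminus B^*|$ and, for contradiction, suppose $k \ge |F|+1$. Applying \cref{lem:bijection} to $B^*$ and $B$ yields a bijection $\phi\colon B^*\setminus B \to B\setminus B^*$ such that $B_f \coloneqq B^* - f + \phi(f)$ is a basis for every $f \in B^*\setminus B$. Each $B_f$ satisfies $|B \setminus B_f| = k-1$, so the minimality of $k$ forces $\psi(B_f) \in F$ for all $f \in B^*\setminus B$.

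Writing $d_f \coloneqq \psi(\phi(f)) - \psi(f)$, this reads $\psi(B^*) + d_f \in F$, while $\psi(B^*) \notin F$ forces $d_f \ne 0$. If the $k$ values $\{d_f\}_{f \in B^*\setminus B}$ were pairwise distinct, the shifts $\psi(B^*) + d_f$ would yield $k$ distinct elements of $F$, contradicting $k > |F|$. Thus the entire task reduces to ruling out the case in which several $d_f$ coincide.

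To attack coincidences I would look at multi-element exchanges: for a subset $S \subseteq B^*\setminus B$ such that $(B^*\setminus S) \cup \phi(S)$ happens to be a basis, its label equals $\psi(B^*) + \sum_{f\in S} d_f$, and by the minimality of $k$ this must lie in $F$ whenever $|S| < k$. If sufficiently many such subsets are exchangeable, then either the number of distinct partial sums already exceeds $|F|$, or a collision among the partial sums can be converted into an $F$-avoiding basis at distance strictly less than $k$ from $B$ via an inclusion–exclusion-type cancellation on the corresponding $d_f$. Complementary additive-combinatorial inputs—Davenport-constant bounds, or Schrijver--Seymour-style theorems as used in~\cite{liu2023congruency}—could then control how densely the shifted forbidden set $F-\psi(B^*)$ can absorb these partial sums inside the subgroup generated by the $d_f$.

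The principal obstacle is that arbitrary matroids do not admit the required simultaneous multi-element exchanges: for a generic $S$ of size at least $2$, the set $(B^*\setminus S)\cup \phi(S)$ need not be a basis, and the proof cannot guarantee the needed supply of exchangeable subsets. This is exactly why the special cases settled in the paper—ordered groups (\cref{thm:ordered}), $|F|\le 2$ (\cref{thm:two-forbidden}), and $\GF(q)$-representable matroids via a relaxation of strong base orderability (\cref{thm:reprweak})—all impose either a structural hypothesis that supplies enough exchanges or an algebraic hypothesis that lets the additive structure of $\Gamma$ carry the argument alone. I expect that a proof in full generality will require a genuinely new exchange lemma tailored to group-labelled matroids, combined with a sumset-type result describing how $F$ can intersect the label differences realizable along iterated exchanges from $B$.
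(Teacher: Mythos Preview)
The statement is a \emph{conjecture}, not a theorem: the paper does not prove it in general, only in the special cases you list (ordered groups, $|F|\le 2$, strongly base orderable matroids, and a relaxed version for $\GF(q)$-representable matroids). There is therefore no ``paper's own proof'' to compare against, and any attempt at a full proof is necessarily going beyond the paper.

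Your diagnosis of where the naive argument breaks is accurate. The single-exchange step via \cref{lem:bijection} is fine and does force $\psi(B^*)+d_f\in F$ with $d_f\ne 0$ for each $f\in B^*\setminus B$; the real obstruction is exactly the one you name, that the $d_f$ may coincide, and that general matroids give no control over which multi-element swaps $(B^*\setminus S)\cup\phi(S)$ remain bases. This is precisely what the paper's notion of $(\alpha,k)$-weak base orderability is designed to supply: \cref{thm:weakly} shows that once enough simultaneous exchanges are guaranteed, the partial-sum argument you sketch (via \cref{prop:grouptrivi}) goes through. So your proposal is not a proof but a correct identification of the bottleneck, and it matches the paper's own framing of why the conjecture is open.

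One minor correction to your outlook: the paper explicitly disproves the Davenport-constant strengthening of Liu and Xu (\cref{thmcounter}), so invoking Davenport-type bounds in the way~\cite{liu2023congruency} did cannot be the missing ingredient in general.
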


Note that \cref{thm:components} implies that \cref{conj:k} holds for $|F|=1$.

If \cref{conj:k} is true, then it is tight in the sense that for each $k\ge 1$ there exists a group-labeled matroid and a set $F$ of $k$ forbidden labels such that there exists an $F$-avoiding basis but $|B \setminus B^*| \ge k$ holds for some basis $B$ and any $F$-avoiding basis $B^*$. Indeed, consider the rank-$k$ uniform matroid on ground set $E=\set{e_1, \dots, e_{2k}}$, the labeling $\psi\colon E \to \Z_{k+1}$ given by $\psi(e_i) = 0$ for $i \in [k]$ and $\psi(e_i) = 1$ for $i \in [2k] \setminus [k]$, and forbidden label set $F = \Z_{k+1} \setminus \set{0}$. Then, the only $F$-avoiding basis is $B^*=\set{e_1, \dots, e_k}$ and we have $|B \setminus B^*|$ for the basis $B=\set{e_{k+1}, \dots, e_{2k}}$. 

We can relax \cref{conj:k} as follows.

\begin{conjecture} \label{conj:fk}
    There exists a function $f\colon \N \to \N$ with the following property: If $M$ is a matroid, $\psi\colon E(M) \to \Gamma$ is a group labeling, and $F \subseteq \Gamma$ is a finite collection of forbidden labels, then, for any basis $B$, there exists an $F$-avoiding basis $B^*$ with $|B \setminus B^*| \le f(|F|)$, provided that at least one $F$-avoiding basis exists.
\end{conjecture}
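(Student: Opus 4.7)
My plan is to attempt a proof by induction on $|F|$, hoping to produce a function $f$ that works uniformly across all matroids $M$, groups $\Gamma$, and labelings $\psi$. The base cases $|F|\le 2$ are already handled in the paper (\cref{thm:components} and \cref{thm:two-forbidden}), giving $f(1)=1$ and $f(2)=2$; the inductive step should exhibit $f(k)$ as a function of $f(k-1)$ only, independently of $M$, $\Gamma$, and $\psi$.

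For the inductive step, let $M$, $\psi\colon E(M)\to\Gamma$, $F=\{g_1,\ldots,g_k\}$, and a basis $B$ be given, with some $F$-avoiding basis $B^*$ existing. The natural first move is to apply the inductive hypothesis to the smaller forbidden set $F'=F\setminus\{g_k\}$: since $B^*$ is also $F'$-avoiding, this produces an $F'$-avoiding basis $B_1$ with $|B\setminus B_1|\le f(k-1)$. If $\psi(B_1)\ne g_k$ then $B_1$ is already $F$-avoiding and we are done. Otherwise $\psi(B_1)=g_k$, and the remaining task is to transform $B_1$ into an $F$-avoiding basis using a bounded number $h(k)$ of additional exchanges, yielding $f(k)=f(k-1)+h(k)$.

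To bound the second step, I would analyze a closest $F$-avoiding basis $B_2$ to $B_1$, setting $d=|B_1\setminus B_2|$. Brualdi's bijection (\cref{lem:bijection}) provides $\phi\colon B_1\setminus B_2\to B_2\setminus B_1$ such that each $B_1-e+\phi(e)$ is a basis. Minimality of $d$ forces the label of each such single-swap basis to lie in $F$, so every difference $\psi(\phi(e))-\psi(e)$ belongs to the $|F|$-element set $F-g_k$. When $d$ is large, pigeonhole produces many elements whose swap induces the same label change; iterating symmetric exchanges on these elements should yield further bases whose labels are still constrained to lie in $F$. Combining this with an additive-combinatorial argument on the subgroup of $\Gamma$ generated by $F-g_k$, in the spirit of the Davenport-constant bounds used by Liu and Xu~\cite{liu2023congruency}, should eventually yield a contradiction and an effective bound on $d$.

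The main obstacle is that multi-element exchanges do not necessarily produce bases in arbitrary matroids, so additive reasoning on subset sums cannot be applied directly. Strong base orderability, which holds for graphic and $\GF(q)$-representable matroids, removes this obstacle for those classes, as exploited in \cref{thm:reprweak} and \cref{thm:graphicwbo}. In full generality one must work with chains of single exchanges $B_1=D_0,D_1,\ldots,D_d=B_2$, all of whose intermediate labels may lie in $F$; the difficulty is to exploit the multiplicity of such chains and either extract an $F$-avoiding intermediate basis from one of them or derive a structural contradiction. I expect a successful resolution to require a new matroid-structural lemma controlling which multi-element exchanges preserve the basis property, perhaps combined with an averaging argument over paths in the base-exchange graph.
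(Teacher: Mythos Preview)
The statement you are trying to prove is \cref{conj:fk}, which the paper explicitly presents as an \emph{open conjecture}. There is no proof in the paper to compare your proposal against; the paper only establishes the conjecture in special cases (ordered groups in \cref{thm:ordered}, $|F|\le 2$ in \cref{thm:two-forbidden}, strongly base orderable matroids in \cref{cor:sbo}, and a relaxed version for $\GF(q)$-representable matroids in \cref{thm:reprweak}).

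Your proposal is not a proof either, and you correctly identify the gap yourself. The inductive scheme does not actually reduce the problem: once you obtain an $F'$-avoiding basis $B_1$ with $\psi(B_1)=g_k$, the ``remaining task'' of moving from $B_1$ to an $F$-avoiding basis within $h(k)$ exchanges is precisely an instance of the original conjecture with the same forbidden set $F$ of size $k$. Nothing in the setup guarantees that this subproblem is easier than the one you started with. Your attempt to bound $d=|B_1\setminus B_2|$ via Brualdi's bijection and pigeonhole runs into exactly the obstacle you name: single exchanges give labels in $F$, but combining several exchanges need not yield a basis, so no subset-sum argument on $F-g_k$ is available in a general matroid. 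This is the same barrier that forces the paper to introduce $(\alpha,k)$-weak base orderability rather than proving the conjecture outright.

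One factual correction: graphic and $\GF(q)$-representable matroids are \emph{not} strongly base orderable in general (the cycle matroid $M(K_4)$ is the standard counterexample, and it is both graphic and binary). The paper handles these classes precisely by introducing the weaker notion of $(\alpha,k)$-weak base orderability, not by invoking strong base orderability.
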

Conjectures~\ref{conj:k} and \ref{conj:fk} have algorithmic implications due  to the following simple observation.

\begin{lemma}\label{lem:cons}
    Let $\alpha$ be a fixed positive integer. Further, let $M$ be a matroid, $\psi\colon E(M) \to \Gamma$ a group labeling, and $F \subseteq \Gamma$ a finite collection of forbidden labels, such that, for any basis $B$, there exists an $F$-avoiding basis $B^*$ with $|B \setminus B^*| \le \alpha$, provided that at least one $F$-avoiding basis exists. Then, an $F$-avoiding basis of $M$ can be found in polynomial time, if one exists.
\end{lemma}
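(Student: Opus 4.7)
The plan is a direct enumeration. First I would compute an arbitrary basis $B$ of $M$ (say by the greedy algorithm using the independence oracle). Then for every pair $(X,Y)$ with $X\subseteq B$, $Y\subseteq E(M)\setminus B$, and $|X|=|Y|\le\alpha$, I would test via a single independence-oracle call whether $B'\coloneqq (B\setminus X)\cup Y$ is a basis of $M$; if it is, I would compute $\psi(B')$ with at most $|B|$ group additions (or, more efficiently, by maintaining $\psi(B)$ and updating by $\psi(Y)-\psi(X)$) and check with $|F|$ zero-tests whether $\psi(B')\in F$. The algorithm returns the first $F$-avoiding candidate found, or reports infeasibility if none of the enumerated candidates is $F$-avoiding.

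The complexity analysis is immediate: the number of candidate pairs is at most
\[
\sum_{k=0}^{\alpha}\binom{|E(M)|}{k}^{2}=O\!\left(|E(M)|^{2\alpha}\right),
\]
which is polynomial because $\alpha$ is fixed. Each candidate is processed with one independence-oracle call, polynomially many group operations, and $|F|$ zero-tests, so the overall running time is polynomial.

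For correctness, any set returned by the algorithm is a basis with label outside $F$ by construction, hence $F$-avoiding. Conversely, suppose at least one $F$-avoiding basis of $M$ exists. Applying the hypothesis of the lemma to the starting basis $B$ yields an $F$-avoiding basis $B^{*}$ with $|B\setminus B^{*}|\le\alpha$. Taking $X\coloneqq B\setminus B^{*}$ and $Y\coloneqq B^{*}\setminus B$, we have $|X|=|Y|\le\alpha$ and $(B\setminus X)\cup Y=B^{*}$, so the pair $(X,Y)$ is among those enumerated; the algorithm therefore returns some (possibly different) $F$-avoiding basis.

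There is no real obstacle here: the lemma amounts to the observation that the radius-$\alpha$ neighborhood of a basis in the basis-exchange graph has polynomial size when $\alpha$ is fixed, and can therefore be searched exhaustively. The only point worth flagging is that the argument implicitly uses the convention from the introduction that a single group operation and a single zero-test each count as one elementary step, so that evaluating $\psi$ on a candidate basis and testing membership in $F$ run in polynomial time.
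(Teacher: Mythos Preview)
Your proof is correct and follows essentially the same approach as the paper: compute an arbitrary basis, enumerate all exchanges of size at most $\alpha$, and return the first $F$-avoiding basis found. Your write-up is in fact somewhat more detailed than the paper's, spelling out the oracle calls and the correctness argument explicitly.
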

\begin{proof}
    We first compute an arbitrary basis $B$ of $M$. Then, for every $X \subseteq B$ and every $Y \subseteq E(M)\setminus B$ with $|X|=|Y|\leq \alpha$, we test whether $(B\setminus X)\cup Y$ is an $F$-avoiding basis of $M$. As there are at most $\alpha n^{\alpha}$ choices for each of $X$ and $Y$, the desired running time follows. If we find an $F$-avoiding basis during this procedure, we return it. Otherwise, no $F$-avoiding basis exists by assumption.
\end{proof}
The following is an immediate consequence of \cref{lem:cons}.
\begin{corollary} \label{cor:kalg} 
    If \cref{conj:fk} holds, then \textscup{$F$-avoiding Basis} is solvable in  polynomial time if $|F|$ is fixed. 
\end{corollary}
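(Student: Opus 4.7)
The plan is to derive this directly from \cref{lem:cons}. Assuming \cref{conj:fk} holds, fix the function $f\colon \N \to \N$ it provides. Since $|F|$ is fixed by hypothesis, the value $\alpha \coloneqq f(|F|)$ is a constant that depends only on the fixed parameter, not on the input instance. Then the hypothesis of \cref{conj:fk} says exactly what \cref{lem:cons} requires of its constant $\alpha$: for any basis $B$, if an $F$-avoiding basis exists at all, one can be found within Hamming distance $\alpha$ of $B$.

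I would then invoke \cref{lem:cons} with this $\alpha$. Its proof already gives the algorithm explicitly: compute any basis $B$ (polynomial time via the greedy algorithm), and then enumerate all pairs $(X,Y)$ with $X \subseteq B$, $Y \subseteq E(M)\setminus B$, and $|X|=|Y| \le \alpha$, testing each candidate set $(B \setminus X) \cup Y$ for being an $F$-avoiding basis using the independence oracle and the group oracle. The number of candidates is at most $\alpha \binom{n}{\alpha}^2 = O(n^{2\alpha})$, which is polynomial in $n$ because $\alpha$ is a constant. If any candidate works we return it, and otherwise we report that no $F$-avoiding basis exists, which is correct by the distance guarantee.

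There is essentially no obstacle here: the statement is a formal corollary. The only thing worth flagging is that the running time's exponent grows with $f(|F|)$, so the argument only yields XP (not FPT) in the parameter $|F|$, and of course the whole result is contingent on \cref{conj:fk}. Everything else is bookkeeping.
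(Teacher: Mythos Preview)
Your proposal is correct and matches the paper's approach exactly: the paper states this corollary as an immediate consequence of \cref{lem:cons}, and you invoke \cref{lem:cons} with $\alpha = f(|F|)$ just as intended. Your additional remark that this yields only an XP (not FPT) dependence on $|F|$ is accurate and worth noting.
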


Liu and Xu~\cite{liu2023congruency} defined a finite group $\Gamma$ to be \emph{$k$-close} for an integer $k \ge 1$, if for any matroid $M$, group labeling $\psi \colon E(M) \to \Gamma$, element $g \in \Gamma$ and basis $B$, there exists a basis $B^*$ with $|B \setminus B^*| \le k$ and $\psi(B^*)=g$, provided that $M$ has at least one basis with label $g$. Observe that \cref{conj:k} would imply $(|\Gamma|-1)$-closeness, and \cref{conj:fk} would imply $f(|\Gamma|-1)$-closeness of each finite group $\Gamma$ for some function $f\colon \N\to \N$. This would imply an FPT algorithm for \textsc{Zero Basis} when parameterized with $|\Gamma|$ due to the following result which is a consequence of \cite[Theorem~1]{liu2023congruency}.

\begin{theorem}[see Liu--Xu~\cite{liu2023congruency}] \label{thm:liuxu} 
Assume that for each finite group $\Gamma$, there exists an integer $k$ such that $\Gamma$ is $k$-close. Then, \textscup{Zero Basis} is FPT for finite groups when parameterized by $|\Gamma|$.
\end{theorem}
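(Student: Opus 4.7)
The plan is to reduce the problem to a bounded number of matroid intersection instances, where the bound depends only on $|\Gamma|$ and the closeness parameter $k=k(|\Gamma|)$. This yields the FPT running time $f(|\Gamma|)\cdot n^{O(1)}$ rather than the XP bound obtained by naive enumeration of exchange sets.

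For each $g\in\Gamma$, let $E_g\coloneqq\psi^{-1}(g)$, and associate to each basis $B$ its \emph{type} $t(B)\coloneqq(|B\cap E_g|)_{g\in\Gamma}\in\Zp^{\Gamma}$. A zero basis exists if and only if there is a type $(c_g)_{g\in\Gamma}$ with $c_g\ge 0$, $\sum_g c_g=r$ (the rank of $M$), $\sum_g c_g\cdot g=0$ in $\Gamma$, and such that $M$ admits a basis of this type. For any fixed type $(c_g)_{g\in\Gamma}$, the existence of a basis of $M$ of that type is equivalent to the existence of a common basis of $M$ and the partition matroid on $E(M)$ with classes $(E_g)_{g\in\Gamma}$ and capacities $(c_g)_{g\in\Gamma}$, which can be checked in polynomial time by (weighted) matroid intersection. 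Thus the task reduces to enumerating sufficiently many candidate types efficiently.

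Here is where $k$-closeness enters. First compute any basis $B_0$ of $M$ and its type $(b_g)_{g\in\Gamma}$. If some zero basis $B^*$ exists, then by the hypothesized $k$-closeness of $\Gamma$, one may choose $B^*$ with $|B_0\setminus B^*|\le k$, hence $|B_0\symdif B^*|\le 2k$. Writing $d_g\coloneqq c_g-b_g$ with $(c_g)=t(B^*)$, we obtain
\[
\sum_{g\in\Gamma}|d_g|\;\le\;\sum_{g\in\Gamma}\bigl(|\{e\in B^*\setminus B_0:\psi(e)=g\}|+|\{e\in B_0\setminus B^*:\psi(e)=g\}|\bigr)\;=\;|B_0\symdif B^*|\;\le\;2k.
\]
Therefore, whenever a zero basis exists, the winning type lies in the set
\[
\mathcal{T}\;\coloneqq\;\Set{(c_g)_{g\in\Gamma}\in\Zp^{\Gamma}}{\textstyle\sum_g c_g=r,\ \sum_g c_g\cdot g=0,\ \sum_g|c_g-b_g|\le 2k},
\]
whose cardinality is at most $(4k+1)^{|\Gamma|}$, a function of $|\Gamma|$ alone.

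The algorithm is then: compute $B_0$; enumerate every $(c_g)_{g\in\Gamma}\in\mathcal{T}$; for each such type, run matroid intersection between $M$ and the corresponding partition matroid and return the result if a common basis exists; if no type yields a common basis, declare that no zero basis exists. Correctness is immediate from the characterization above, and the running time is $f(|\Gamma|)\cdot n^{O(1)}$ with $f(|\Gamma|)=(4k(|\Gamma|)+1)^{|\Gamma|}$, giving the claimed FPT bound. The main conceptual step is the observation that $k$-closeness restricts the search to types within $\ell_1$-distance $2k$ of $t(B_0)$, converting a potentially exponential search over types (which would give only XP) into an FPT-sized enumeration; the rest is a clean application of matroid intersection.
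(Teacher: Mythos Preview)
Your proof is correct and is essentially the argument underlying Liu and Xu's Theorem~1, which the paper cites without reproducing a proof. The reduction to matroid intersection with a partition matroid determined by the label-multiplicity vector, combined with the observation that $k$-closeness confines the relevant multiplicity vectors to an $\ell_1$-ball of radius $2k$ around $t(B_0)$, is exactly the intended mechanism.
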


Liu and Xu~\cite{liu2023congruency} observed that if all subgroups of $\Gamma$ satisfy a conjecture by Schrijver and Seymour~\cite{schrijver1990spanning}, then $\Gamma$ is $(|\Gamma|-1)$-close. By the results of DeVos, Goddyn, and Mohar~\cite{devos2009generalization}, this implies that any cyclic group $\Gamma$ is $(|\Gamma|-1)$-close whose order is a prime power or the product of two primes.
The proof of \cite[Theorem~4]{liu2023congruency} does not seem to generalize to our setting, thus, it is not clear whether the conjecture of Schrijver and Seymour implies \cref{conj:k}.
If $\Gamma$ has prime order, then Liu and Xu~\cite[Theorem~3]{liu2023congruency} gave a simpler proof of $(|\Gamma|-1)$-closeness. That proof also generalizes to show that \cref{conj:k} holds for such groups. 

Using results of Lemos~\cite{lemos2006weight}, we can also prove that \cref{conj:k} holds for ordered groups. The result is restated in the following theorem.

 \begin{restatable}{theorem}{ordered} \label{thm:ordered}
    Let $M$ be a matroid, $\psi\colon E(M) \to \Gamma$ a labeling to an ordered group $\Gamma$, $F \subseteq \Gamma$ a finite collection of forbidden labels, $B$ a basis of $M$, and suppose that $M$ has an $F$-avoiding basis. Then, there exists an $F$-avoiding basis $B^*$ of $M$ with $|B^* \setminus B|\leq |F|.$
\end{restatable}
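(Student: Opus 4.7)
The plan is to argue by contradiction. Let $B^*$ be an $F$-avoiding basis of $M$ minimizing $t:=|B\setminus B^*|$, and suppose toward a contradiction that $t>k$, where $k:=|F|$. Then $\psi(B)\in F$ (otherwise $B^*=B$ would give $t=0$), and, after replacing the total order on $\Gamma$ by its reverse if necessary, we may assume $\psi(B)<\psi(B^*)$.

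The key tool I would use from Lemos~\cite{lemos2006weight} is a weight-monotone basis exchange: treating $\psi$ as a weight function on $M$ valued in the ordered group $\Gamma$, there exists a sequence of bases $B=D_0,D_1,\dots,D_t=B^*$ with $|D_{i-1}\triangle D_i|=2$, $|B\setminus D_i|=i$, and $\psi(D_0)\le\psi(D_1)\le\cdots\le\psi(D_t)$. For every $i<t$ the basis $D_i$ satisfies $|B\setminus D_i|<t$, so by the minimality of $t$ it cannot be $F$-avoiding; hence $\psi(D_i)\in F$. Thus $\psi(D_0),\dots,\psi(D_{t-1})$ is a weakly increasing sequence of $t$ elements lying in the $k$-element set $F$, with $t>k$, so it must contain a plateau $\psi(D_{i-1})=\psi(D_i)$.

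The next step is to invoke the finer content of Lemos's results to remove such plateaus. The statement I would use is that whenever the monotone chain plateaus at an index $i$ with $\psi(D_{i-1})=\psi(D_i)\in F$, one can locally modify the chain, replacing $D_i$ by a basis $\tilde D_i=D_{i-1}-x+y$ with $\psi(y)>\psi(x)$ and $|B\setminus\tilde D_i|=i$, in such a way that the modified chain still extends monotonically to $B^*$. Iterating this modification through every plateau yields a chain whose $\psi$-values are strictly increasing. Among the first $k+1$ strictly increasing values $\psi(D'_0)<\psi(D'_1)<\cdots<\psi(D'_k)$, at most $k$ can lie in $F$, so some $D'_j$ with $j\le k$ has $\psi(D'_j)\notin F$ and $|B\setminus D'_j|\le k$, contradicting the minimality assumption $t>k$ and completing the proof.

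The main obstacle is the plateau-removal step. At a plateau, the single exchange $D_{i-1}\to D_i$ fixed by the chain preserves $\psi$, so one must exhibit an alternative pair $(x,y)$ with $\psi(y)>\psi(x)$ and $D_{i-1}-x+y$ still a basis that can be extended monotonically to $B^*$. The existence of some strictly increasing single exchange at $D_{i-1}$ is not hard: applying Brualdi's bijection (\cref{lem:bijection}) to $D_{i-1}$ and $B^*$ gives a bijection $\phi$ with $\sum_{x\in D_{i-1}\setminus B^*}(\psi(\phi(x))-\psi(x))=\psi(B^*)-\psi(D_{i-1})>0$, forcing at least one positive summand. The delicate part is ensuring that after this rerouting the chain can still reach $B^*$ in exactly the required number of steps while maintaining monotonicity, and this is precisely where the quantitative content of Lemos's weight results on matroid bases enters.
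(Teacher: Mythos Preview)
Your overall strategy---build a $\psi$-monotone chain of single exchanges and use that at most $|F|$ of its labels can lie in $F$---is the right intuition, but both technical steps you rely on are unproven and do not follow from the Lemos results you cite.

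First, the sequence $B=D_0,\dots,D_t=B^*$ with $|B\setminus D_i|=i$ and $\psi(D_0)\le\cdots\le\psi(D_t)$ is not a statement in~\cite{lemos2006weight}. What Lemos (a) gives is: if $\psi(B)<\psi(B^*)$, then from $B^*$ there is a single exchange towards $B$ that \emph{strictly} lowers the label. Iterating this produces $D_t=B^*,D_{t-1},\dots$ with $|D_i\setminus B|=i$ and $\psi(D_i)$ strictly decreasing, but nothing prevents the sequence from dropping below $\psi(B)$ at some $D_j$ with $j>0$; at that point the hypothesis of Lemos (a) fails and you cannot continue monotonically down to $B$. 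So even the weakly monotone chain targeting the specific $B^*$ is not established.

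Second, and more seriously, the plateau-removal step is exactly the crux you have not proved. You need that a local rerouting at a plateau preserves both $|B\setminus D_i|=i$ and the ability to extend monotonically to $B^*$, and you only gesture at ``the quantitative content of Lemos's weight results''. There is no such statement in Lemos, and it is not clear one can be extracted.

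The paper sidesteps both issues by \emph{not} aiming at the particular $B^*$. It applies Lemos (a) twice (to $\psi$ and to $-\psi$) to build a \emph{strictly} monotone chain through $B$ from a maximum-label basis to a minimum-label basis; here plateaus never arise. If some basis on this chain is $F$-avoiding, distinctness of the labels forces it to be within $|F|$ of $B$. If every basis on the chain has label in $F$, the chain has length at most $|F|$, and then Lemos (b)---which you never invoke---locates $F$-avoiding bases close to the two endpoints of the chain, hence close to $B$. The second ingredient, Lemos (b), is what replaces your plateau-removal step.
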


To prove the theorem, we will rely on the following results due to Lemos~\cite{lemos2006weight}.

\begin{theorem}[Lemos~\cite{lemos2006weight}] \label{thm:lemos}
    Let $M$ be a matroid and $\psi\colon E(M) \to \Gamma$ a labeling, where $\Gamma$ is an ordered group.
    \begin{enumerate}[label={(\alph*)}] 
        \item \label{it:decrease} If $B$ and $B'$ are bases such that $\psi(B') < \psi(B)$, then there exist elements $e \in B\setminus B'$ and $f \in B'\setminus B$ such that $B-e+f$ is a basis with $\psi(B-e+f) < \psi(B)$.
        \item \label{it:kano} Let $g_0,\ldots,g_m$ be the unique ordering of $\Set{\psi(B)}{ B \text{ is a basis of $M$}}$ such that $g_0 < \dots < g_m$, and let $B_0$ be a basis with $\psi(B_0) = g_0$. Then, for any $k \in \set{0,\dots, m}$, there exists a basis $B_k$ such that $\psi(B_k) = g_k$ and $|B_0 \setminus B_k| \le k$. 
    \end{enumerate}
\end{theorem}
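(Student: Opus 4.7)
The plan is to prove part (a) via a direct summation argument on Brualdi's basis exchange, and to deduce part (b) by induction on $k$ that iteratively applies the ``dual'' of (a).

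For part (a), I apply \cref{lem:bijection} to $B$ and $B'$ to obtain a bijection $\phi\colon B\setminus B'\to B'\setminus B$ such that $B-e+\phi(e)\in\cB(M)$ for every $e\in B\setminus B'$. Summing the labels over these $t\coloneqq|B\setminus B'|$ bases yields
\[
    \sum_{e\in B\setminus B'}\psi(B-e+\phi(e)) = t\,\psi(B) - \psi(B\setminus B') + \psi(B'\setminus B) = (t-1)\,\psi(B) + \psi(B'),
\]
which is strictly less than $t\,\psi(B)$ because $\psi(B')<\psi(B)$. In an ordered abelian group the claim ``every summand is $\ge\psi(B)$ implies the sum is $\ge t\,\psi(B)$'' would contradict this inequality, so by contraposition some summand $\psi(B-e+\phi(e))<\psi(B)$, giving the required $e$ and $f\coloneqq\phi(e)$.

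For part (b), re-running the same summation with the inequality reversed yields a dual of (a): if $\psi(B')>\psi(B)$, then there exist $e\in B\setminus B'$ and $f\in B'\setminus B$ with $B-e+f\in\cB(M)$ and $\psi(B-e+f)>\psi(B)$. I then proceed by induction on $k$, with the base case $k=0$ taking $B_k\coloneqq B_0$. For the inductive step, fix a basis $B^*$ with $\psi(B^*)=g_k$ that minimizes $t\coloneqq|B_0\setminus B^*|$, and iteratively apply the dual of (a) with fixed target $B^*$ starting at $C_0\coloneqq B_0$. Each step produces $C_{i+1}=C_i-e_i+f_i$ with $e_i\in C_i\setminus B^*$, $f_i\in B^*\setminus C_i$, strictly increasing labels, and by direct verification $|B^*\setminus C_{i+1}|=|B^*\setminus C_i|-1$ together with $|B_0\setminus C_{i+1}|=|B_0\setminus C_i|+1$.

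After $t$ iterations the identity $|B^*\setminus C_t|=0$ forces $C_t=B^*$, producing a strictly increasing label sequence $g_0=\psi(C_0)<\psi(C_1)<\dots<\psi(C_t)=g_k$. Strict monotonicity together with the endpoint $\psi(C_t)=g_k$ forces each intermediate $\psi(C_i)$ to lie strictly below $g_k$, so all $t+1$ labels in the sequence belong to the $(k+1)$-element set $\{g_0,\dots,g_k\}$; this yields $t\le k$, and taking $B_k\coloneqq B^*$ completes the induction. The main delicate point is to guarantee that the dual of (a) can be executed at every stage without picking an exchange that overshoots $g_k$; this would be the principal obstacle in the proof, and I would resolve it by combining the minimality of $t$ (which forbids any intermediate $C_i$ from having label exactly $g_k$) with a refined analysis of the Brualdi sum identity $\sum_e\psi(C_i-e+\phi_i(e))=(s-1)\psi(C_i)+g_k$ over the $s=t-i$ exchanges to exhibit, at each step, a valid exchange whose label lies in the admissible range $(\psi(C_i),g_k]$.
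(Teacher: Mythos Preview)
Your argument for part~(a) is correct and coincides with what the paper indicates: the summation over a Brualdi bijection yields $(t-1)\psi(B)+\psi(B')<t\,\psi(B)$, so some term must drop below $\psi(B)$. The paper does not reproduce a proof of part~(b), it simply cites Lemos, so there is no direct comparison to make there; however, your proposed argument for~(b) has a genuine gap.

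The problem is precisely the overshooting you flag at the end, and your suggested fix does not close it. The Brualdi sum identity over the $s=t-i$ exchanges,
\[
    \sum_{e}\psi\bigl(C_i-e+\phi_i(e)\bigr)=(s-1)\,\psi(C_i)+g_k,
\]
does \emph{not} force any summand into the interval $(\psi(C_i),g_k]$. Already for $s=2$ the identity is satisfied by the pair of values $\psi(C_i)-a$ and $g_k+a$ for any $a>0$, neither of which is admissible. Concretely, in $U_{2,4}$ on $\{a,b,c,d\}$ with $\psi(a)=\psi(b)=0$, $\psi(c)=-10$, $\psi(d)=15$, take $C_i=\{a,b\}$ (label $0$) and $B^*=\{c,d\}$ (label $5$): every single exchange towards $B^*$ lands on label $-10$ or $15$, so no exchange lies in $(0,5]$. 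Your appeal to the minimality of $t$ only rules out intermediate $C_i$ with label \emph{equal} to $g_k$; it says nothing about labels exceeding $g_k$. Consequently the chain $\psi(C_0)<\psi(C_1)<\cdots$ may jump above $g_k$ before reaching $B^*$, at which point your monotone-sequence counting argument for $t\le k$ collapses. (As an aside, the ``induction on $k$'' is never actually used: you argue each $k$ from scratch.)

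A cleaner way to obtain the strictly monotone length-$t$ chain of distinct labels inside $\{g_0,\dots,g_k\}$ is to run the exchanges in the opposite direction: start from $B^*$ and apply part~(a) itself (not its dual) with target $B_0$. Each step removes an element of $B^*\setminus B_0$ and inserts one of $B_0\setminus B^*$, strictly decreasing the label; since $g_0$ is the global minimum, no ``undershooting'' below $g_0$ is possible. One still has to argue that the process does not stall at $g_0$ before reaching $B_0$; handling this is where the minimality of $t$, or an inductive use of the already-established $B_j$ for $j<k$, must actually be deployed. This is the missing idea in your write-up.
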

Note that while Lemos only considered the case $\Gamma = \R$, the same proofs work for any ordered group $\Gamma$. We remark that statement~\ref{it:decrease} follows from \cref{lem:bijection}. Motivated by results of Kawamoto, Kajitani, and Shinoda~\cite{kawamoto1978second}, Kano~\cite{kano1987maximum} stated four conjectures on spanning trees whose weight is $k$-th largest among the weights of spanning trees. The first of these conjectures was verified by Mayr and Plaxton~\cite{mayr1989spanning}, and statement~\ref{it:kano} is its extension to matroids by Lemos~\cite{lemos2006weight}. We show that \cref{thm:lemos} implies \cref{conj:k} for ordered groups.

We first prove the following preliminary result as a conclusion from \cref{thm:lemos}\ref{it:decrease}.

\begin{proposition}\label{prop:seq}
    Let $M$ be a matroid, $\psi\colon E(M) \to \Gamma$ a labeling to an ordered group $\Gamma$, and let $B$ be a basis of $M$. Then, there exists a sequence $(B_0,\ldots,B_\ell)$ of bases of $M$ such that $B_0=B$, $B_\ell$ is a minimum-label basis of $M$, and $\psi(B_i)<\psi(B_{i-1})$ and $|B_i \setminus B_{i-1}|=1 $ hold for all $i \in [\ell]$.
\end{proposition}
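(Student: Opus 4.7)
The plan is a straightforward iterative construction using \cref{thm:lemos}\ref{it:decrease}, with termination guaranteed by the finiteness of the ground set. I would start with $B_0 \coloneqq B$ and build the sequence greedily: as long as the current basis $B_i$ is not a minimum-label basis, there exists some basis $B'$ with $\psi(B') < \psi(B_i)$, so \cref{thm:lemos}\ref{it:decrease} applied to $B_i$ and such a $B'$ yields elements $e \in B_i \setminus B'$ and $f \in B' \setminus B_i$ such that $B_{i+1} \coloneqq B_i - e + f$ is a basis with $\psi(B_{i+1}) < \psi(B_i)$. By construction $|B_{i+1} \setminus B_i| = 1$, so the construction rule is valid at every step where $B_i$ is not of minimum label.

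Next I would argue that the process terminates. Since $E(M)$ is finite, $M$ has only finitely many bases, hence the set $\Set{\psi(B)}{B \text{ is a basis of } M}$ is finite. As the labels $\psi(B_0) > \psi(B_1) > \cdots$ are strictly decreasing in the total order of $\Gamma$, no label is repeated, so the sequence must be finite; let $\ell$ be its length. By the stopping criterion, $B_\ell$ must then be a minimum-label basis of $M$.

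Putting these pieces together gives the desired sequence $(B_0, \dotsc, B_\ell)$ with all three required properties: $B_0 = B$, $B_\ell$ is a minimum-label basis, and for each $i \in [\ell]$ both $\psi(B_i) < \psi(B_{i-1})$ and $|B_i \setminus B_{i-1}| = 1$ hold. There is no real obstacle here; the only subtlety worth noting is that in a general ordered group an infinite strictly decreasing sequence in $\Gamma$ could exist, but this is ruled out because the sequence $\psi(B_0), \psi(B_1), \ldots$ takes values only in the finite set of basis labels of $M$. Thus the proposition follows almost immediately from \cref{thm:lemos}\ref{it:decrease}.
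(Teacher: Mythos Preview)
Your proof is correct and follows essentially the same approach as the paper: both iterate \cref{thm:lemos}\ref{it:decrease} to step down one exchange at a time, with termination guaranteed by the finiteness of the set of basis labels. The only cosmetic difference is that the paper packages the argument as a minimal-counterexample contradiction, whereas you give the direct iterative construction.
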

\begin{proof}
    Suppose for contradiction that the desired sequence does not exist and that $B$ is chosen so that $\psi(B)$ is minimum among all bases for which the desired sequence does not exist. Let $B^-$ be a minimum-label basis. By \cref{thm:lemos}\ref{it:decrease} applied to $B$ and $B^-$, there exists a basis $B'$ with $\psi(B')<\psi(B)$ and $|B \setminus B'|=1$. Hence, by the choice of $B$, we obtain that there exists a sequence $(B_0,\ldots,B_\ell)$ of bases such that $B_0=B'$, $B_\ell$ is a minimum-label basis, and $\psi(B_i)<\psi(B_{i-1})$ and $|B_i \setminus B_{i-1}|=1 $ hold for all $i \in [\ell]$. It follows that $(B,B_0,\ldots,B_\ell)$ is a sequence with the desired properties, a contradiction.
\end{proof}

We are now ready to prove that \cref{conj:k} holds when $\Gamma$ is an ordered group.

\begin{proof}[Proof of \cref{thm:ordered}]
Let $g_0,\ldots,g_m$ be the unique ordering of $\Set{\psi(B)}{B \text{ is a basis}}$ such that $g_0 < \dots < g_m$. 

\begin{claim} \label{cl:seq}
    There exists a sequence $(B_0,\ldots,B_\ell)$ of bases with the following properties: $g_m = \psi(B_\ell) < \psi(B_1) < \dots < \psi(B_\ell) = g_0$, $|B_i \setminus B_{i-1}|=1$ for all $i \in [\ell]$, and there exists some $i_0 \in \set{0,\ldots,\ell}$ with $B_{i_0}=B$.
\end{claim}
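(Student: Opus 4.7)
The plan is to build the sequence by going down from $B$ to a minimum-label basis, going up from $B$ to a maximum-label basis, and concatenating the two (reversing the first). Interpreting the claim as asserting a strictly increasing chain $g_0 = \psi(B_0) < \psi(B_1) < \dots < \psi(B_\ell) = g_m$ with single-element exchanges that contains $B$, both halves can be produced by a single tool: \Cref{prop:seq}.

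First, apply \Cref{prop:seq} directly to $B$. This yields a sequence of bases
\[
  B = C_0,\, C_1,\, \dots,\, C_p = B^{-}
\]
with $|C_i\setminus C_{i-1}|=1$ for all $i\in[p]$, strictly decreasing labels $\psi(C_0)>\psi(C_1)>\dots>\psi(C_p)$, and $B^{-}$ a minimum-label basis, so $\psi(B^{-})=g_0$.

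Next, I would observe that if $(\Gamma,\le)$ is an ordered group, then so is $(\Gamma,\le^{\mathrm{op}})$, where $\le^{\mathrm{op}}$ reverses $\le$; the translation-invariance axiom is preserved under reversing the order. Applying \Cref{prop:seq} to $M$ equipped with the labeling $\psi$ viewed as taking values in $(\Gamma,\le^{\mathrm{op}})$ and starting from $B$, one obtains a sequence
\[
  B = D_0,\, D_1,\, \dots,\, D_q = B^{+}
\]
with $|D_j\setminus D_{j-1}|=1$ for all $j\in[q]$, strictly increasing labels in the original order $\psi(D_0)<\psi(D_1)<\dots<\psi(D_q)$, and $B^{+}$ a minimum-label basis in $(\Gamma,\le^{\mathrm{op}})$, i.e., a maximum-label basis in $(\Gamma,\le)$, so $\psi(B^{+})=g_m$.

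Finally, concatenate the two chains by reversing the first:
\[
  B_0 := C_p,\; B_1 := C_{p-1},\; \dots,\; B_p := C_0 = B = D_0,\; B_{p+1} := D_1,\; \dots,\; B_{p+q} := D_q.
\]
Setting $\ell := p+q$ and $i_0 := p$, this sequence satisfies $\psi(B_0)=g_0$, $\psi(B_\ell)=g_m$, strictly increasing labels throughout (by combining the strict decrease of the $C_i$ reversed with the strict increase of the $D_j$), single-element symmetric differences between consecutive terms, and contains $B$ at position $i_0$, as required. The only non-routine observation is the closure of the ordered-group axioms under reversing the order, which lets us invoke \Cref{prop:seq} to produce the upward half of the chain; once this is in place, the concatenation is immediate.
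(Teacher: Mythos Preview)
Your proof is correct and follows essentially the same approach as the paper: apply \Cref{prop:seq} once to get a decreasing chain from $B$ to a minimum-label basis, once more (with the order reversed) to get an increasing chain from $B$ to a maximum-label basis, and concatenate. The only cosmetic difference is that the paper negates the labeling ($-\psi$) rather than reversing the order $\le^{\mathrm{op}}$; in an ordered abelian group these two devices are equivalent, and the paper's concatenated sequence is simply the reverse of yours.
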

\begin{claimproof}
    By \cref{prop:seq}, there exists a sequence $(B'_0,\ldots,B'_{\ell'})$ of bases of $M$ such that $B'_0=B$, $\psi(B'_{\ell'})=g_0$, and $\psi(B'_i)<\psi(B'_{i-1})$ and $|B'_i \setminus B'_{i-1}|=1 $ hold for all $i \in [\ell']$. We now apply \cref{prop:seq} to $B$ and $-\psi$. We obtain that there exists a sequence $(B''_0,\ldots,B''_{\ell''})$ of bases such that $B''_0=B$, $\psi(B''_{\ell''})=g_m$, and $\psi(B''_i)>\psi(B''_{i-1})$ and $|B''_i \setminus B''_{i-1}|=1 $ hold for all $i \in [\ell'']$. We obtain that $(B''_{\ell''},\ldots,B''_1,B,B'_1,\ldots,B'_{\ell'})$ is a sequence with the desired properties.
\end{claimproof}

Let $(B_0,\dots, B_\ell)$ be the sequence and $i_0$ the index provided by \cref{cl:seq}.
We finish the proof through a distinction of two cases.

\begin{case0}
    There exists some $i_1 \in \set{0,\ldots,\ell}$ with $\psi(B_{i_1})\notin F$.
\end{case0}
    We may suppose that $i_1$ is chosen with this property so that $|i_1-i_0|$ is minimum. Further, by, symmetry, we may suppose that $i_1>i_0$. 
    As $\set{\psi(B_{i_0}),\psi(B_{i_0+1}),\ldots,\psi(B_{i_1-1})}\subseteq F$ and $\psi(B_j)\neq \psi(B_{j'})$ for all distinct $j,j' \in \set{i_0,\ldots,i_1-1}$, we obtain $i_1-i_0 \le |F|$.
    It follows that $|B\setminus B_{i_1}|\leq \sum_{i=i_0+1}^{i_1}|B_i\setminus B_{i-1}|= i_1-i_0 \leq |F|$. As $\psi(B_{i_1})\notin F$, the statement follows.
\begin{case0}
    $\psi(B_{i})\in F$ holds for all $i \in \set{0,\ldots,\ell}$.
\end{case0}

As $\psi(B_i)\neq \psi(B_j)$ for all distinct $i,j \in \set{0,\ldots,\ell}$, we have $\ell \leq |F|$. Let $p$ be the smallest integer such that $g_p \notin F$ and $q$ be the smallest integer such that $g_{m-q} \notin F$. Observe that $p$ and $q$ are well-defined as $M$ has an $F$-avoiding basis. As $\set{g_0,\ldots,g_{p-1}}\cup \set{g_{m-q+1},\ldots,g_m}\subseteq F$ and $\set{g_0,\ldots,g_{p-1}}$ and $\set{g_{m-q+1},\ldots,g_m}$ are disjoint, we obtain $p+q \leq |F|$. Next, by \cref{thm:lemos}\ref{it:kano}, there exists a basis $B^-$ with $\psi(B^-)=g_p$ and $|B_\ell \setminus B^-|\leq p$. Further, applying \cref{thm:lemos}\ref{it:kano} to $B_\ell$ and $-\psi$, we obtain that there exists a basis $B^+$ of $M$ with $\psi(B^+)=g_{m-q}$ and $|B_0 \setminus B^+|\leq q$. If $|B \setminus B^-|>|F|$ and $|B \setminus B^+|>|F|$ hold, we obtain
\begin{align*}
    2|F|&<|B \setminus B^-|+|B \setminus B^+|\\
    &\leq |B \setminus B_\ell|+|B_\ell \setminus B^-|+|B \setminus B_0|+|B_0 \setminus B^+|\\
    &\leq (\ell-i_0)+p+i_0+q\\
    &=p+q+\ell\\
    &\leq 2|F|,
\end{align*}
a contradiction. We hence obtain that one of $|B \setminus B^-|\leq |F|$ and $|B \setminus B^+|\leq |F|$ holds. As $\psi(B^-)\notin F$ and $\psi(B^+)\notin F$ hold, the statement follows.
\end{proof}

\subsection{Strongly Base Orderable Matroids and Relaxations}
In this section, we introduce a relaxed notion of strong base-orderability, called $(\alpha,k)$-weak base orderability, where $\alpha$ and $k$ are positive integers. In \cref{subsubsec:alphak}, we define this notion and show its relation to strong base-orderability and group-restricted bases. In \cref{subsubsec:reprweak} and \cref{sec:graph}, we conclude results for matroids representable over fixed finite fields and graphic matroids, respectively.
\subsubsection{\texorpdfstring{$(\alpha, k)$}{(α,k)}-Weakly Base Orderable Matroids}\label{subsubsec:alphak}

A matroid is called \emph{strongly base orderable} if for any two bases $B_1, B_2$, there exists a bijection $\varphi\colon B_1 \setminus B_2 \to B_2 \setminus B_1$ such that $(B_1\setminus Z) \cup \varphi(Z)$ is a basis for each $Z \subseteq B_1 \setminus B_2$.
For some positive integer $k$, we say that the ordered basis pair $(B_1,B_2)$ has the {\it $k$-exchange property} if there exist pairwise disjoint nonempty subsets $X_1,\dots, X_k \subseteq B_1 \setminus B_2$ and $Y_1, \dots, Y_k \subseteq B_2 \setminus B_1$ such that $\left(B_1 \setminus \bigcup_{i\in Z} X_i\right) \cup \bigcup_{i \in Z} Y_i$ is a basis for each $Z \subseteq [k]$. For positive integers $\alpha$ and $k$, we define a matroid $M$ to be \emph{weakly $(\alpha,k)$-base orderable} if the ordered pair $(B_1,B_2)$ has the $k$-exchange property for any two bases $B_1,B_2$ of $M$ with $|B_1\setminus B_2|\geq \alpha$.   
We note that $(\alpha, k)$-weakly base orderability is a relaxation of \emph{$k$-base orderability} defined by Bonin and Savitsky~\cite{bonin2016infinite}, and our definition of the $k$-exchange property differs from their definition of $k$-exchange-ordering. 
Observe that strongly base orderable matroids are precisely the matroids that are $(k,k)$-weakly base orderable for each $k \ge 1$.

For a matroid $M$ and two disjoint bases $B_1,B_2$ of $M$ with $B_1\cup B_2=E(M)$, we say that $(B_1,B_2)$ is a \emph{basis partition} of $M$. For a basis $B$ of a matroid $M$, we say that a minor $M'$ of $M$ is a \textit{$B$-minor} if it is obtained by contracting some elements of $B$ and deleting some elements of $E(M)\setminus B$. We use the following simple observation later.

\begin{restatable}{lemma}{redtominor} \label{lem:redtominor}
    Let $B_1$ and $B_2$ be two bases of a matroid $M$. Further, let $M'$ be a $B_1$-minor of $M$ such that $(B_1',B_2')$ is a basis partition of $M'$ and has the $k$-exchange property for some $k \in \N$, where $B_i'\coloneqq B_i \cap E(M')$ for $i=1,2$. Then, $(B_1,B_2)$ has the $k$-exchange property in $M$.
\end{restatable}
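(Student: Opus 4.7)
The plan is to lift the exchange witnesses from $M'$ directly to $M$ by adding back the contracted elements. First I would unpack the $B_1$-minor as $M' = M/C \setminus D$ with $C \subseteq B_1$ and $D \subseteq E(M) \setminus B_1$, so that $B_1' = B_1 \setminus C$ and $B_2' \subseteq B_2$. Since $C \subseteq B_1$ is independent in $M$, the set $B_1' = B_1 \setminus C$ is a basis of $M/C$ by a standard cardinality argument; combined with the assumption that $B_1'$ is also a basis of $M'$, this forces $r(M') = r(M/C)$. Consequently, every basis of $M'$ has the correct size to be a basis of $M/C$ and, being independent in $M/C$, is one.

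Next, let $X_1', \dots, X_k' \subseteq B_1'$ and $Y_1', \dots, Y_k' \subseteq B_2'$ be the pairwise disjoint nonempty sets witnessing the $k$-exchange property for $(B_1', B_2')$ in $M'$. Using that $(B_1', B_2')$ is a basis partition of $M'$ and that the $X_i', Y_i'$ are contained in $E(M')$, one checks directly that $X_i' \subseteq B_1 \setminus B_2$ and $Y_i' \subseteq B_2 \setminus B_1$, so defining $X_i \coloneqq X_i'$ and $Y_i \coloneqq Y_i'$ yields valid candidate witnesses for $(B_1, B_2)$ in $M$.

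It then remains to verify the basis condition. For any $Z \subseteq [k]$, a direct computation using the disjoint decomposition $B_1 = C \cup B_1'$ gives
\[
    \Bigl(B_1 \setminus \bigcup_{i \in Z} X_i\Bigr) \cup \bigcup_{i \in Z} Y_i \;=\; C \cup B_Z',
\]
where $B_Z' \coloneqq \bigl(B_1' \setminus \bigcup_{i \in Z} X_i'\bigr) \cup \bigcup_{i \in Z} Y_i'$ is a basis of $M'$ by the $k$-exchange property in $M'$, and hence a basis of $M/C$ by the rank observation above. The standard fact that adjoining a contracted independent set $C$ to a basis of $M/C$ produces a basis of $M$ then shows that the exchanged set is a basis of $M$, completing the proof of the $k$-exchange property for $(B_1,B_2)$ in $M$.

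The whole argument is essentially a routine minor-to-minor translation, and I do not expect any real obstacle; the only subtle point to state carefully is the equality $r(M') = r(M/C)$, which ensures that the deletion of $D$ does not genuinely shrink the ambient matroid and makes the lifting of bases through contraction work without loss.
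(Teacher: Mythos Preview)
Your proposal is correct and follows essentially the same route as the paper: take the witnesses $X_i,Y_i$ from the $k$-exchange property in $M'$, observe they sit inside $B_1\setminus B_2$ and $B_2\setminus B_1$ via the basis-partition assumption, and lift each exchanged basis $B_Z'$ of $M'$ to $B_Z'\cup C$ in $M$ using that $C=B_1\setminus B_1'$ is exactly the contracted set. You are in fact more explicit than the paper about the point $r(M')=r(M/C)$, which the paper leaves implicit when it asserts that a basis of $M'$ yields a basis of $M$ after adjoining $B_1\setminus B_1'$.
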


\begin{proof}
    As $(B_1',B_2')$ has the $k$-exchange property in $M'$, there exist pairwise disjoint nonempty subsets $X_1,\dots, X_k \subseteq B'_1$ and $Y_1, \dots, Y_k \subseteq B'_2$ such that $\left(B_1' \setminus \bigcup_{i\in Z} X_i\right) \cup \bigcup_{i \in Z} Y_i$ is a basis for each $Z \subseteq [k]$. For $Z \subseteq [k]$, let $B_Z\coloneqq \left(B_1 \setminus \bigcup_{i\in Z} X_i\right) \cup \bigcup_{i \in Z} Y_i$ and $B'_Z\coloneqq \left(B_1' \setminus \bigcup_{i\in Z} X_i\right) \cup \bigcup_{i \in Z} Y_i$. As $B_1'\subseteq B_1\setminus B_2$ and $B_2'\subseteq B_2\setminus B_1$, it suffices to prove that $B_Z$ is a basis of $M$ for every $Z \subseteq [k]$.
    This follows from $B'_Z$ being a basis of $M'$ using that $B_Z = B'_Z \cup (B_1 \setminus B'_1)$ and $M'$ is obtained from $M$ by contracting $B_1 \setminus B'_1$ and deleting a subset of $E(M)\setminus B_1$.
\end{proof}

The following result is our main motivation to consider weak base orderability. It establishes a connection between weak base orderability and \cref{conj:fk}.

\begin{theorem}\label{thm:weakly}
    Let $M$ be a matroid, $\psi\colon E(M) \to \Gamma$ a group labeling, and $F \subseteq \Gamma$ a finite collection of forbidden labels.
    If $M$ is $(\alpha, |F|+1)$-weakly base orderable, then for each basis $B$, there exists an $F$-avoiding basis $B^*$ with $|B \setminus B^*| \le \alpha-1$, provided that at least one $F$-avoiding basis exists.
\end{theorem}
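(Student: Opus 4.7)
The plan is to argue by contradiction on the closest $F$-avoiding basis. Assume an $F$-avoiding basis exists and let $B'$ be an $F$-avoiding basis minimizing $|B\setminus B'|$; if the theorem fails then $|B\setminus B'|\ge\alpha$ and consequently $|B'\setminus B|\ge\alpha$. The idea is to apply $(\alpha,|F|+1)$-weak base orderability to the ordered pair $(B',B)$ to obtain pairwise disjoint nonempty $X_1,\dots,X_{|F|+1}\subseteq B'\setminus B$ and $Y_1,\dots,Y_{|F|+1}\subseteq B\setminus B'$ such that
\[
T_Z \coloneqq \Bigl(B'\setminus\bigcup_{i\in Z}X_i\Bigr)\cup\bigcup_{i\in Z}Y_i
\]
is a basis for every $Z\subseteq[|F|+1]$, and then use these $2^{|F|+1}$ bases to build a new $F$-avoiding basis strictly closer to $B$ than $B'$, contradicting the choice of $B'$.

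The key step is to look at the chain $Z=\emptyset,[1],[2],\dots,[|F|+1]$. For each $j\ge 1$ one has $|B\setminus T_{[j]}|=|B\setminus B'|-|\bigcup_{i\in[j]}Y_i|<|B\setminus B'|$; hence by the minimality of $B'$ every $T_{[j]}$ with $j\ge 1$ must have label in $F$. Setting $\delta_i\coloneqq\psi(Y_i)-\psi(X_i)$ and $s_j\coloneqq\sum_{i\in[j]}\delta_i$, this says $\psi(B')+s_j\in F$ for all $j\in\{1,\dots,|F|+1\}$. Pigeonhole on $|F|+1$ elements of $F$ then yields indices $1\le j_1<j_2\le|F|+1$ with $s_{j_1}=s_{j_2}$, i.e.\ $\sum_{i=j_1+1}^{j_2}\delta_i=0$.

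With $Z\coloneqq\{j_1+1,\dots,j_2\}$ we get $\psi(T_Z)=\psi(B')+\sum_{i\in Z}\delta_i=\psi(B')\notin F$, so $T_Z$ is $F$-avoiding, while $|B\setminus T_Z|=|B\setminus B'|-|\bigcup_{i\in Z}Y_i|<|B\setminus B'|$ because each $Y_i$ is nonempty. This contradicts the minimality of $|B\setminus B'|$, completing the proof.

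The only real obstacle is bookkeeping to ensure the weak base orderability applies (which needs $|B'\setminus B|\ge\alpha$, satisfied by the contradiction hypothesis) and that the partial-sum/pigeonhole argument produces a \emph{nonempty} index set $Z$, which is automatic because all $Y_i$ are nonempty and $j_1<j_2$; once these are in place the rest is mechanical.
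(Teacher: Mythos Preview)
Your proof is correct and follows essentially the same approach as the paper: take an $F$-avoiding basis $B'$ closest to $B$, apply $(\alpha,|F|+1)$-weak base orderability to $(B',B)$, and use a partial-sum pigeonhole argument on the $|F|+1$ exchange increments $\delta_i=\psi(Y_i)-\psi(X_i)$ to find a nonempty $Z$ with $\sum_{i\in Z}\delta_i$ avoiding $F-\psi(B')$. The only cosmetic difference is that the paper isolates the pigeonhole step as a standalone proposition (on finding a nonempty subset $S'$ of a labeled set with $\psi(S')\notin F$ when $0\notin F$ and $|F|\le|S|-1$), whereas you inline it directly using the minimality of $B'$ to force all prefix-exchange labels into $F$.
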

For the proof, we need the following result which is most likely routine.

\begin{restatable}{proposition}{grouptrivi}\label{prop:grouptrivi}
    Let $S$ be a finite set, $\psi\colon S \to \Gamma$ a group labeling, and $0 \notin F \subseteq \Gamma$ satisfying $|F| \le |S|-1$. Then, there exists some nonempty $S' \subseteq S$ with $\psi(S')\notin F$.
\end{restatable}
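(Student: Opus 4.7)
The plan is to prove this by a simple pigeonhole argument on partial sums. Enumerate $S = \{s_1, \dots, s_n\}$ in an arbitrary order, and consider the $n$ nonempty prefix sets $P_i \coloneqq \{s_1, \dots, s_i\}$ for $i \in [n]$, together with their labels $\psi(P_1), \dots, \psi(P_n) \in \Gamma$. The aim is to produce a nonempty $S' \subseteq S$ with $\psi(S') \notin F$ as either one of these prefixes or a ``difference'' of two prefixes.

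The argument splits into two cases. First, if $\psi(P_i) \notin F$ for some $i \in [n]$, then $S' \coloneqq P_i$ already does the job. Second, suppose $\psi(P_i) \in F$ for every $i \in [n]$. Since there are $n$ indices but $|F| \le n - 1$, by the pigeonhole principle there exist $i < j$ in $[n]$ with $\psi(P_i) = \psi(P_j)$. Then the nonempty set $S' \coloneqq P_j \setminus P_i = \{s_{i+1}, \dots, s_j\}$ satisfies $\psi(S') = \psi(P_j) - \psi(P_i) = 0$, and because $0 \notin F$ by hypothesis, we get $\psi(S') \notin F$.

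There is no real obstacle here: the hypothesis $|F| \le |S| - 1$ is exactly what is needed to force a pigeonhole collision among the $|S|$ prefix labels, and the hypothesis $0 \notin F$ is exactly what is needed to convert a collision into an admissible subset. Writing the proof is essentially just formalizing these two observations in a few lines.
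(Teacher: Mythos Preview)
Your proposal is correct and takes essentially the same approach as the paper: both enumerate $S$, look at the $|S|$ prefix sums, and use pigeonhole together with the hypothesis $0 \notin F$ to obtain the desired subset either as a prefix or as a difference of two prefixes. The only cosmetic difference is the order in which the two cases are handled.
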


\begin{proof}
    Let $k \coloneqq |S|-1$ and let $s_1,\ldots,s_{k+1}$ be an arbitrary ordering of $S$. If $\psi(\set{s_1,\ldots,s_i})=\psi(\set{s_1,\ldots,s_j})$ for some $1 \le i < j \le  k+1$, then $\psi(\set{s_{i+1},\ldots,s_j})=0 \notin F$. Otherwise, $\set{\psi(s_1),\psi(\set{s_1,s_2}),\ldots,\psi(S)}$ contains $k+1$ distinct values, hence at least one of them is not in $F$.
\end{proof}

\begin{proof}[Proof of \cref{thm:weakly}]
    Let $k \coloneqq |F|$ and let $B$ be a basis and $B'$ an $F$-avoiding basis minimizing $|B'\setminus B|$. 
    If $|B'\setminus B|\leq \alpha-1$, there is nothing to prove.
    We may hence suppose that $|B'\setminus B|\geq \alpha$. 
    Then, as $M$ is $(\alpha,k+1)$-weakly base orderable, there exist pairwise disjoint nonempty subsets $X_1, \dots, X_{k+1}\subseteq B'\setminus B$ and $Y_1, \dots, Y_{k+1} \subseteq B\setminus B'$ such that $\left(B'\setminus \bigcup_{i \in Z} X_i\right) \cup \bigcup_{i \in Z} Y_i$ is a basis for each $Z \subseteq [k+1]$.
    We define $\psi'\colon [k+1] \rightarrow \Gamma$ by $\psi'(i)=\psi(Y_i)-\psi(X_i)$ for all $i \in [k+1]$.
    Observe that $0 \notin F'\coloneqq \Set{f-\psi(B')}{f \in F}$, as $B'$ is an $F$-avoiding basis. 
    It hence follows from \cref{prop:grouptrivi} that there exists some nonempty $Z \subseteq [k+1]$ with $\psi'(Z) \notin F'$. Let $B''\coloneqq \left(B'\setminus \bigcup_{i \in Z} X_i\right)\cup\bigcup_{i \in Z}Y_i$. By the definition of $X_1, \dots, X_{k+1}$ and $Y_1,\dots, Y_{k+1}$, we obtain that $B''$ is a basis of $M$. Further, we have $\psi(B'')=\psi(B')+\psi'(Z)\notin F$. Finally, we have $|B''\setminus B|<|B'\setminus B|$ since $Z$ is nonempty. This contradicts the choice of $B'$.
\end{proof}

As strongly base orderable matroids are $(k, k)$-weakly base orderable for any $k \ge 1$, we also get the following.

\begin{corollary}\label{cor:sbo} Strongly base orderable matroids satisfy \cref{conj:k}.
\end{corollary}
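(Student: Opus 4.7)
The plan is to apply \cref{thm:weakly} directly, using the observation already recorded in the paper that strong base orderability is equivalent to being $(k,k)$-weakly base orderable for every $k \ge 1$. Specifically, given a strongly base orderable matroid $M$, a labeling $\psi\colon E(M)\to\Gamma$, and a finite forbidden set $F\subseteq\Gamma$, I would set $\alpha \coloneqq |F|+1$ and $k \coloneqq |F|+1$, and invoke strong base orderability to conclude that $M$ is $(|F|+1,\,|F|+1)$-weakly base orderable.

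Then \cref{thm:weakly}, applied with this value of $\alpha$, immediately yields that for any basis $B$ of $M$ there is an $F$-avoiding basis $B^*$ with $|B\setminus B^*| \le \alpha - 1 = |F|$, provided that at least one $F$-avoiding basis exists. This is precisely the conclusion of \cref{conj:k}.

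There is essentially no obstacle here: the content of the corollary is the matching of parameters between strong base orderability (which gives the $k$-exchange property with $k$ equal to the size of the symmetric difference, for arbitrarily large $k$) and the hypothesis of \cref{thm:weakly} (which needs only the single value $k=|F|+1$). The only small check is that $(|F|+1,|F|+1)$-weak base orderability indeed follows from strong base orderability: given bases $B_1,B_2$ with $|B_1\setminus B_2|\ge |F|+1$, a bijection $\varphi\colon B_1\setminus B_2\to B_2\setminus B_1$ witnessing strong base orderability yields the required sets by choosing any $|F|+1$ distinct elements $x_1,\dots,x_{|F|+1}\in B_1\setminus B_2$ and setting $X_i\coloneqq\{x_i\}$ and $Y_i\coloneqq\{\varphi(x_i)\}$ for $i\in[|F|+1]$; for every $Z\subseteq[|F|+1]$ the set $(B_1\setminus\bigcup_{i\in Z}X_i)\cup\bigcup_{i\in Z}Y_i$ equals $(B_1\setminus\{x_i:i\in Z\})\cup\{\varphi(x_i):i\in Z\}$, which is a basis by the defining property of $\varphi$.
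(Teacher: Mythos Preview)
Your proof is correct and matches the paper's approach exactly: the corollary is stated immediately after the observation that strongly base orderable matroids are $(k,k)$-weakly base orderable for every $k\ge 1$, and the intended argument is precisely to plug $\alpha=k=|F|+1$ into \cref{thm:weakly}. Your explicit verification of the $(|F|+1,|F|+1)$-weak base orderability from the strong-base-orderability bijection is a nice detail the paper leaves implicit.
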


\begin{remark}

    Note that the proofs of \cref{thm:weakly} and \cref{prop:grouptrivi} also show that \cref{conj:k} holds for a matroid $M$ if for any two bases $B, B'$ of $M$ there exist an ordering $b_1, \dots, b_r$ of $B$ and an ordering $b'_1,\dots, b'_r$ of $B'$ such that $\set{b_1,\dots, b_i, b'_{i+1},\dots, b'_j,b_{j+1}, \dots, b_r}$ is a basis of $M$ for any $0 \le i \le j \le r$. Baumgart~\cite{baumgart2009ranking} called such an ordering a \emph{subsequence-interchangeable base ordering} (SIBO for short), and conjectured that any basis pair of a graphic matroid has a SIBO. Baumgart showed that the conjecture holds for a pair of disjoint spanning trees of a wheel graph.
    We are not aware of any basis pair of a matroid not having a SIBO. However, the existence of an SIBO for any basis pair of a matroid would imply a long-standing conjecture of Gabow~\cite{gabow1976decomposing} on the serial symmetric exchange property of matroids.
\end{remark}

\subsubsection{Matroids Representable over Finite Fields} \label{subsubsec:reprweak}

In this section, we prove that the concept of weakly base orderability allows us to deal with a large class of matroids, namely all those which are representable over a fixed finite field. More precisely, we prove the following result.

\begin{theorem}\label{thm:reprweak}
    There is a function $f\colon \N\times \N \rightarrow \N$ such that for every prime power $q$, every $\GF(q)$-representable matroid is weakly $(f(q,k),k)$-orderable for any positive integer $k$.
\end{theorem}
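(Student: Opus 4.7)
The plan is to reduce the statement to a structural problem about nonsingular matrices over $\GF(q)$ via the matroid representation, and then extract the desired structure using Ramsey-type combinatorics.

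First, by \cref{lem:redtominor}, it suffices to prove the $k$-exchange property for a basis partition $(B_1, B_2)$ of a $\GF(q)$-representable matroid $M$ with $r = |B_1| = |B_2|$ sufficiently large, since contracting $B_1 \cap B_2$ and deleting $E(M) \setminus (B_1 \cup B_2)$ are $B_1$-minor operations preserving $\GF(q)$-representability. Represent $M$ by $A = [I_r \mid C]$, with $I_r$ indexed by $B_1$ and $C \in \GF(q)^{B_1 \times B_2}$ nonsingular because $B_2$ is a basis. For $X \subseteq B_1$ and $Y \subseteq B_2$ with $|X| = |Y|$, the set $(B_1 \setminus X) \cup Y$ is a basis of $M$ if and only if $C[X, Y]$ is nonsingular. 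Thus the task becomes: find pairwise disjoint nonempty $X_1, \dots, X_k \subseteq B_1$ and $Y_1, \dots, Y_k \subseteq B_2$ such that $C\bigl[\bigcup_{i \in Z} X_i, \bigcup_{i \in Z} Y_i\bigr]$ is nonsingular for every nonempty $Z \subseteq [k]$.

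A convenient sufficient condition is a block-diagonal pattern: $C[X_i, Y_i]$ nonsingular for each $i$ and $C[X_i, Y_j] = 0$ whenever $i \neq j$, which makes every $C[X_Z, Y_Z]$ block-diagonal with nonsingular blocks. To locate such a pattern, I would apply iterated bipartite Ramsey arguments to the coloring of $B_1 \times B_2$ by the values $C[x,y] \in \GF(q)$ (regarded as $q$ colors); for $r$ large enough in terms of $q$ and $k$, this yields a large rectangle $I \times J$ on which $C[I,J]$ is constant. The nonsingularity of the ambient $C$ bounds how large such a constant submatrix can be (since it has rank at most one), and these constraints, used iteratively, should enable peeling off the $k$ required blocks, where the block sizes may have to depend on $q$.

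The main obstacle is that a single Ramsey output is a monochromatic submatrix of rank at most one, which on its own cannot serve as a nonsingular diagonal block. This phenomenon is already illustrated by $C = J_r - I_r$ over $\GF(q)$ of characteristic $p$, for any $r \not\equiv 1 \pmod p$: singleton blocks $X_i, Y_i$ fail for $k \geq 2$, but taking $X_i, Y_i$ to be disjoint chunks of $p$ indices each (under some identification of $B_1$ with $B_2$) yields $C[X_Z, Y_Z] = J_{p|Z|} - I_{p|Z|}$, whose determinant equals $(p|Z|-1)(-1)^{p|Z|-1} \equiv (-1)^{p|Z|} \pmod p$ and is thus nonzero. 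The general argument must therefore allow block sizes that scale with the characteristic, and the Ramsey step has to be refined so that at every stage either a genuine block-diagonal pattern with small nonsingular diagonal blocks is extracted, or a highly uniform substructure emerges on which an explicit block construction analogous to the $J - I$ example succeeds; making this uniform across all $\GF(q)$-representable matroids is the principal technical challenge.
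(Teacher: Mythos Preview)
Your reduction to a basis partition and the translation into ``$C[X,Y]$ nonsingular'' is exactly how the paper begins. The gap is that you never supply the ``refined Ramsey step'' you yourself flag as the principal technical challenge. Plain bipartite Ramsey for constant rectangles only yields rank-$\le 1$ submatrices, and your proposed sufficient condition---block-diagonal with \emph{zero} off-diagonal blocks---is too strong to be attainable in general: your own $J_r-I_r$ example has all-ones off-diagonal blocks, not zero blocks, so the block-diagonal argument does not even cover that case. At the end you are essentially conjecturing that some uniform substructure exists, without saying what it is or proving it appears.

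The paper fills this gap with a specific unavoidable-submatrix theorem of Ding, Oporowski, Oxley, and Vertigan: any matrix over $\GF(q)$ with sufficiently many pairwise distinct columns contains a large permuted square submatrix that is \emph{$(\alpha,\beta,\gamma)$-diagonal}, meaning constant $\alpha$ above the diagonal, $\beta$ on it, and $\gamma$ below it, with $\alpha\neq\beta$. After a small tweak ensuring the triple is ``feasible'' (not both $\beta=0$ and $\gamma=0$), one passes to a $B_1$-minor represented by $[I\mid A]$ with $A$ an $(\alpha,\beta,\gamma)$-diagonal matrix of size $q(q-1)k$. Taking $X_i,Y_i$ to be the $i$th consecutive block of $q(q-1)$ indices, the principal submatrix $A[\bigcup_{i\in Z}X_i,\bigcup_{i\in Z}Y_i]$ is again $(\alpha,\beta,\gamma)$-diagonal of size $q(q-1)|Z|$, and an explicit determinant formula (due to Efimov) shows that any $(\alpha,\beta,\gamma)$-diagonal matrix whose size is a multiple of $q(q-1)$ is nonsingular for feasible triples. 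Your $J-I$ example is precisely the case $(\alpha,\beta,\gamma)=(1,0,1)$; the ``highly uniform substructure'' you were reaching for is this three-value diagonal pattern, not a two-value constant rectangle.
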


On a high level, the proof works in the following way. First, relying on results of \cite{ding1996unavoidable} on the existence of certain submatrices of large matrices over finite fields, we show that every $\GF(q)$-representable matroid has a certain substructure. We then show that this substructure has the desired property. From this, we can conclude the theorem.

In order to find this substructure, we deal with the matrices representing the matroids in consideration. We first need some notation for these matrices. For two matrices $A$ and $A'$, we say that $A$ contains $A'$ as a {\it permuted submatrix} if $A'$ can be obtained from $A$ by deleting and permuting rows and columns. For a square matrix, we refer by its \emph{size} to its number of rows. Let $q$ be a prime power. We say that a triple $(\alpha,\beta,\gamma)$ of elements of $\GF(q)$ is {\it feasible} if $\alpha\neq \beta$ and at least one of $\beta \neq 0$ and $\gamma \neq 0$ hold. For a triple $(\alpha,\beta,\gamma)$ and a positive integer $t$, the \emph{$(\alpha,\beta,\gamma)$-diagonal matrix} of size $t$ is the $t \times t$ matrix $A = (A_{ij})$ such that 
\[A_{ij}=\begin{cases}
    \alpha & \text{if $i< j$,}\\
    \beta & \text{if $i= j$,}\\
    \gamma & \text{if $i > j$}
\end{cases}\]
for $i,j\in [t]$.
We now collect some properties of $(\alpha,\beta,\gamma)$-diagonal matrices.
We first need the following result due to Ding, Oporowski, Oxley, and Vertigan~\cite{ding1996unavoidable} showing that $(\alpha,\beta,\gamma)$-diagonal matrices can always be found in sufficiently large matrices over a fixed finite field. 

\begin{theorem}[Ding--Oporowski--Oxley--Vertigan {\cite[Theorem~2.3]{ding1996unavoidable}}] \label{thm:submatrix}
    There is a computable function $f_0\colon \N \times \N \to \N$ with the following property: Let $q$ be a prime power, $t$ a positive integer and $A$ a matrix over $\GF(q)$ having at least $f_0(q,t)$ columns no two of which are identical. Then, $A$ contains a permuted square submatrix $A'$ of size $t$ which is $(\alpha,\beta,\gamma)$-diagonal for a triple $(\alpha,\beta,\gamma)$ with $\alpha \neq \beta$.
\end{theorem}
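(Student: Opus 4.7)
The plan is to prove this Ramsey-type theorem by induction on $t$, using iterated pigeonhole over $\GF(q)$. For the base case $t = 1$, any single matrix entry $\beta$ yields a trivially $(\alpha,\beta,\gamma)$-diagonal $1 \times 1$ submatrix for any $\alpha \ne \beta$, so $f_0(q,1) = 1$ suffices.

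For the inductive step, assuming $f_0(q, t-1)$ has been defined, I would choose $f_0(q,t)$ large enough for the following two-phase argument on a matrix $A$ with $N \ge f_0(q,t)$ pairwise distinct columns. In the first phase, I would apply an iterated pigeonhole / bipartite Ramsey argument to commit to a triple $(\alpha,\beta,\gamma) \in \GF(q)^3$ with $\alpha \ne \beta$ and to restrict to a large pool of columns and rows ``compatible'' with this triple in the following sense: for any ordered pair of columns from the pool there is a row in the pool witnessing values $\gamma$ in the first and $\beta$ in the second, while all other pool columns take value $\alpha$ on that row. Committing to the triple costs at most a factor of $q^3$, which is absorbed into the initial size $N$.

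In the second phase, I would iteratively construct the staircase. At stage $i$, maintain chosen columns $c_1,\dots,c_i$ and rows $r_1,\dots,r_i$ that form an $(\alpha,\beta,\gamma)$-diagonal $i \times i$ submatrix, together with a reservoir of remaining columns all having value $\alpha$ on each of $r_1,\dots,r_i$. To extend to stage $i+1$, pick any $c_{i+1}$ from the reservoir; by the compatibility built in during phase one, there exists a row $r_{i+1}$ with $c_{i+1}[r_{i+1}] = \beta$ and $c_j[r_{i+1}] = \gamma$ for all $j \le i$, and moreover the remaining reservoir columns take value $\alpha$ there. Prune the reservoir by keeping only columns whose entry on $r_{i+1}$ is $\alpha$, losing at most a factor of $q$.

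The main obstacle is arranging the first phase so that the row witnesses demanded by the iterative extension exist consistently; a naive pigeonhole delivers only per-pair witnesses, whereas we need, at each stage, a single row that simultaneously distinguishes the newly chosen column from every previously chosen one in the prescribed $(\gamma,\beta,\alpha)$ pattern. Meeting this consistency requirement is what forces $f_0(q,t)$ to be of iterated Ramsey type, essentially a tower of exponentials of height $t$ in a function of $q$. Since the theorem only asserts the existence of such a function, and since it is used only through \cref{thm:reprweak} in the sequel, an explicit optimal bound is unnecessary.
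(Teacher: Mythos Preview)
The paper does not prove this statement; it is quoted as \cite[Theorem~2.3]{ding1996unavoidable} and used as a black box in the proof of \cref{thm:reprweak}. There is therefore no proof in the paper to compare your sketch against.

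Evaluated on its own, your outline has a real gap between the two phases. In phase one you promise, for each ordered pair $(c,c')$ of pool columns, a witness row carrying $\gamma$ in $c$, $\beta$ in $c'$, and $\alpha$ on \emph{every other} pool column. In phase two, however, the extension step requires a single row $r_{i+1}$ with value $\gamma$ on \emph{all} of $c_1,\dots,c_i$ simultaneously, value $\beta$ on $c_{i+1}$, and value $\alpha$ on the remaining reservoir. A pairwise witness row from phase one has, by your own description, exactly one $\gamma$-entry and one $\beta$-entry among the pool columns; it therefore cannot serve as $r_{i+1}$ once $i\ge 2$, since $c_1,\dots,c_{i-1}$ are still pool columns and would carry $\alpha$ rather than $\gamma$ there. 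You flag this tension in your closing paragraph, but ``iterated pigeonhole / bipartite Ramsey'' is not an argument, and nothing in the sketch explains how to manufacture rows with a growing number of $\gamma$-entries against the already-chosen columns. As written, phase one does not produce what phase two consumes, so the proof does not go through.
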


We actually need a slight strengthening of \cref{thm:submatrix} which follows easily.

\begin{restatable}{proposition} {submatrixfeasible}\label{prop:submatrix2}
    There is a computable function $f_1\colon \N\times \N \to \N$ with the following property: Let $q$ be a prime power, $t$ a positive integer and $A$ a matrix over $\GF(q)$ having at least $f_1(q,t)$ columns no two of which are identical. Then, $A$ contains a permuted square submatrix $A'$ of size $t$ which is $(\alpha,\beta,\gamma)$-diagonal for a feasible triple $(\alpha,\beta,\gamma)$.
\end{restatable}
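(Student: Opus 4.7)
The plan is to deduce the proposition from Theorem~\ref{thm:submatrix} by allowing a slightly larger submatrix and massaging the outcome. I would set $f_1(q,t) \coloneqq f_0(q, t+1)$ and apply Theorem~\ref{thm:submatrix} to obtain a permuted $(\alpha, \beta, \gamma)$-diagonal submatrix $A'$ of $A$ of size $t+1$ with $\alpha \ne \beta$. A triple with $\alpha \ne \beta$ fails to be feasible exactly when $\beta = \gamma = 0$ (and hence $\alpha \ne 0$), so the only case that needs extra work is when $A'$ is $(\alpha, 0, 0)$-diagonal with $\alpha \ne 0$, i.e.\ strictly upper triangular with entries $\alpha$ above the diagonal.

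If $(\beta, \gamma) \ne (0,0)$, the triple is already feasible, and I would simply pass to the $t \times t$ block of $A'$ indexed by rows and columns $1, \ldots, t$ in the original order; since the $(\alpha,\beta,\gamma)$-diagonal pattern depends only on the relative order of the indices, this block is again $(\alpha, \beta, \gamma)$-diagonal and provides a feasible permuted $t \times t$ submatrix of $A$. Otherwise, when $A'$ is $(\alpha, 0, 0)$-diagonal with $\alpha \ne 0$, I would exhibit a feasible $t \times t$ permuted submatrix of $A'$ itself by taking rows $R_i \coloneqq t+1-i$ and columns $C_i \coloneqq t+2-i$ for $i \in [t]$ (equivalently, rows $1,\ldots,t$ and columns $2,\ldots,t+1$ of $A'$, each listed in reverse order). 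A direct check gives $R_i < C_j$ if and only if $j \le i$, so the resulting $t \times t$ permuted submatrix has entry $\alpha$ on and below its diagonal and $0$ above; this is $(0, \alpha, \alpha)$-diagonal and feasible since $\alpha \ne 0$.

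The only conceptual obstacle is realizing that the seemingly problematic pattern $(\alpha, 0, 0)$ with $\alpha \ne 0$ actually contains a feasible permuted sub-submatrix $(0, \alpha, \alpha)$ obtained by a simple order reversal. Once this is observed, only the modest parameter increase $t \mapsto t+1$ is needed, and the computability of $f_1$ follows immediately from that of $f_0$.
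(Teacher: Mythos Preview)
Your proposal is correct and follows essentially the same approach as the paper: set $f_1(q,t)=f_0(q,t+1)$, apply Theorem~\ref{thm:submatrix} to get a size-$(t+1)$ $(\alpha,\beta,\gamma)$-diagonal submatrix, keep a size-$t$ principal block when $(\beta,\gamma)\neq(0,0)$, and in the $(\alpha,0,0)$ case drop one row and one column and reverse the order to obtain a $(0,\alpha,\alpha)$-diagonal submatrix. The only cosmetic difference is which row/column you drop in each case; the argument is otherwise the paper's own.
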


\begin{proof}
    Let $f_0$ be the function provided by \cref{thm:submatrix}. We prove the statement for the function $f_1$ defined by $f_1(q,t)\coloneqq f_0(q,t+1)$ for each $t \in \N$ and prime power $q$. Let $A$ be a matrix over $\GF(q)$ having at least $f_1(q,t)$ columns no two of which are identical. By \cref{thm:submatrix}, we obtain that $A$ has a permuted square submatrix $A'$ of size $t+1$ which is $(\alpha,\beta,\gamma)$-diagonal for some triple $(\alpha,\beta,\gamma)$ with $\alpha \neq \beta$. If $\beta \neq 0$ or $\gamma \neq 0$ holds, observe that $(\alpha,\beta,\gamma)$ is feasible. As the submatrix of $A'$ obtained by deleting the first row and the first column is an $(\alpha,\beta,\gamma)$-diagonal matrix of size $t$, the statement follows. We may hence suppose that $\beta=\gamma=0$. We obtain by assumption that $\alpha \neq 0$. Let $A''$ be obtained from $A'$ by deleting the first column and the last row and then reversing the order of the rows and the columns. Then, $A''$ is a permuted submatrix of $A'$ and hence $A$. Further, $A''$ is an $(0,\alpha,\alpha)$-diagonal matrix of size $t$. As $\alpha \neq 0$, we obtain that $(0,\alpha,\alpha)$ is feasible. Hence the statement follows.
\end{proof} 

We are now ready to give the following result showing that every sufficiently large matroid that is representable over a fixed finite field has a certain substructure. The approach is to choose a matrix representing the matroid and find a particular submatrix in this matroid using \cref{prop:submatrix2}. After, we show that a minor represented by this matrix can be obtained by applying certain deletions and contractions. 

\begin{restatable}{lemma}{minorexist}\label{lem:minorexist}
    There is a computable function $f_1\colon \N\times \N\to \N$ with the following properties: Let $q$ be a prime power, $k$ a positive integer, $M$ a $\GF(q)$-representable matroid of rank at least $f_1(q,k)$ and $(B_1,B_2)$ a basis partition of $M$. Then, there exists a $B_1$-minor $M'$ of $M$ that can be represented by a matrix of the form $[I_k ~ A]$, where $I_k$ is the identity matrix of size $k$ and $A$ is an $(\alpha,\beta,\gamma)$-diagonal matrix for a feasible triple $(\alpha,\beta,\gamma)$, and the columns of $I_k$ correspond to the elements of $B_1'$ and those of $A$ correspond to the elements of $B_2'$, where $B_i'\coloneqq B_i \cap E(M')$ for $i =1,2$.
\end{restatable}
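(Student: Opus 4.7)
The plan is to put a $\GF(q)$-representation of $M$ into a standard form compatible with the basis partition and then apply \cref{prop:submatrix2} to extract a submatrix corresponding to the sought minor. Since $(B_1,B_2)$ is a basis partition, we have $|B_1|=|B_2|=r\coloneqq \mathrm{rank}(M)$ and $E(M)=B_1\cup B_2$. Starting from any $\GF(q)$-representation of $M$, by row operations and a column permutation one may bring it to the form $[I_r \; A']$, where the columns of $I_r$ are indexed by $B_1$ and the columns of $A'$ by $B_2$. Since $B_2$ is a basis, $A'$ is nonsingular; moreover, no two columns of $A'$ coincide, for otherwise the two corresponding elements of $B_2$ would be linearly dependent.

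I would then invoke \cref{prop:submatrix2} with $t=k$ on $A'$, taking $f_1(q,k)$ equal to the value provided by that proposition. Provided $r\ge f_1(q,k)$, this yields a permuted $k\times k$ submatrix of $A'$ that is $(\alpha,\beta,\gamma)$-diagonal for some feasible triple $(\alpha,\beta,\gamma)$. Let $S_1\subseteq B_1$ and $S_2\subseteq B_2$ be the index sets of the selected rows and columns, together with the orderings on $S_1$ and $S_2$ realizing the $(\alpha,\beta,\gamma)$-diagonal structure. Define $M'\coloneqq M/(B_1\setminus S_1)\setminus (B_2\setminus S_2)$, which is a $B_1$-minor of $M$ with ground set $S_1\cup S_2$. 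Deleting the elements of $B_2\setminus S_2$ removes the corresponding columns of $A'$, while contracting the elements of $B_1\setminus S_1$, whose columns in $[I_r \; A']$ are standard basis vectors, removes those columns together with the corresponding rows. Hence $M'$ is represented by $[I_{S_1} \; A'[S_1,S_2]]$; after simultaneously reordering the rows and the first block of columns by the ordering of $S_1$ (so that the identity block is preserved) and separately reordering the second block of columns by the ordering of $S_2$, the representation takes the desired form $[I_k \; A]$ with $A$ the required $(\alpha,\beta,\gamma)$-diagonal matrix.

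The only delicate point in this argument is coordinating the row and column permutations when extracting the permuted submatrix from \cref{prop:submatrix2}: the row reordering forces a matching reordering of the first block of columns in order to preserve the identity shape, while the second block of columns may be reordered independently. Since $S_1$ and $S_2$ merely label the ground set of $M'$, these reorderings are just relabelings and affect neither the matroid $M'$ nor the shape of $A$ as an $(\alpha,\beta,\gamma)$-diagonal matrix of size $k$.
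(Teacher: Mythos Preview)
Your proof is correct and follows essentially the same approach as the paper: represent $M$ in standard form $[I_r\;A']$ relative to the basis partition, observe that $A'$ is nonsingular and hence has pairwise distinct columns, apply \cref{prop:submatrix2} to extract a feasible $(\alpha,\beta,\gamma)$-diagonal permuted submatrix, and then contract the unused elements of $B_1$ and delete the unused elements of $B_2$ to obtain the desired $B_1$-minor. Your explicit treatment of the row/column reorderings is slightly more detailed than the paper's, which simply notes that the representation has the required form ``if an appropriate labeling of the columns is chosen.''
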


\begin{proof}
    Let $f_1$ be the function provided by \cref{prop:submatrix2}. As $(B_1,B_2)$ is a basis partition of $M$ and $M$ is $\GF(q)$-representable, we obtain that $M$ can be represented by a matrix of the form $[I ~ A_0]$ over $\GF(q)$, where $I$ is an identity matrix of size at least $f_1(q,k)$, $A_0$ is a nonsingular matrix of the same size, and the columns of $I$ correspond to $B_1$ and the columns of $A_0$ correspond to $B_2$. As $A_0$ is nonsingular, in particular, it does not contain two identical columns. It now follows from \cref{prop:submatrix2} that there exists a permuted submatrix $A$ of $A_0$ which is an $(\alpha,\beta,\gamma)$-diagonal matrix of size $k$ for a feasible triple $(\alpha,\beta,\gamma)$. Let $X \subseteq B_1$ be the set of elements of $B_1$ whose corresponding columns only contain 0's in all rows which contribute to $A$ and let $Y\subseteq B_2$ be the set of elements of $B_2$ whose corresponding columns do not contribute to $A$. Let $M'\coloneqq M/X\backslash Y$. First observe that $M'$ is a $B_1$-minor of $M$.  For $i\in 1,2$, let $B_i'\coloneqq B_i \cap E(M')$. Finally, observe that $M'$ is represented by $[I_k ~ A]$ where the columns of $I_k$ correspond to the elements of $B_1'$ and the columns of $A$ correspond to the elements of $B_2'$ if an appropriate labeling of the columns is chosen.
\end{proof}

We will prove \cref{thm:reprweak} by showing that matroids representable by a very specific class of matrices satisfy its conclusion. For this, we need a statement showing that certain matrices are nonsingular, which we derive from an explicit formula for the determinants of $(\alpha,\beta,\gamma)$-triangular matrices due to Efimov~\cite{efimov2021determinant}.

\begin{lemma}[Efimov~\cite{efimov2021determinant}]
    Let $q$ be a prime power, $\alpha, \beta, \gamma \in \GF(q)$, and let $A$ denote the $(\alpha, \beta, \gamma)$-diagonal matrix of size $t$.  Then,
    \[\det(A) = \begin{cases}\frac{\alpha{(\beta-\gamma)}^t-\gamma{(\beta-\alpha)}^t}{\alpha -\gamma} & \text{if $\alpha \ne \gamma$,} \\ {(\beta-\alpha)}^{t-1}((t-1)\cdot\alpha + \beta) & \text{if $\alpha = \gamma$,} \end{cases} \]
    where in the last expression $(t-1)\cdot\alpha$ is shorthand for $\alpha+\dotsb+\alpha$ ($t-1$ times).
\end{lemma}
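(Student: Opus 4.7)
The plan is to reduce $\det(A)$ to a one-term recurrence in the size $t$ by elementary row operations, and then verify each of the two claimed closed forms by induction on $t$, with the case split occurring only at the very last step.

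First, I will subtract row $i+1$ from row $i$ for $i=1,\dots,t-1$; this preserves the determinant. A direct inspection shows that for each $i<t$ the new row $i$ has $\beta-\gamma$ in position $i$, $\alpha-\beta$ in position $i+1$, and zeros elsewhere, while the last row is unchanged and carries $\gamma$ in positions $1,\dots,t-1$ and $\beta$ in position $t$. The first column of the resulting matrix then contains only two nonzero entries, $\beta-\gamma$ at the top and $\gamma$ at the bottom, so cofactor expansion along it writes
\[D_t \;=\; (\beta-\gamma)\,M_{1,1} \;+\; (-1)^{t+1}\,\gamma\,M_{t,1},\]
where I set $D_t\coloneqq \det(A)$. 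The minor $M_{1,1}$ is what the same row operations produce from a $(t-1)\times(t-1)$ $(\alpha,\beta,\gamma)$-diagonal matrix, hence equals $D_{t-1}$; the minor $M_{t,1}$ is lower triangular with every diagonal entry equal to $\alpha-\beta$, hence equals $(\alpha-\beta)^{t-1}$. Combining the sign $(-1)^{t+1}$ with $(\alpha-\beta)^{t-1}=(-1)^{t-1}(\beta-\alpha)^{t-1}$ yields the recurrence
\[D_t \;=\; (\beta-\gamma)\,D_{t-1} \;+\; \gamma\,(\beta-\alpha)^{t-1},\qquad D_1=\beta.\]

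It remains to verify the two claimed formulas by induction on $t$. In the case $\alpha\ne\gamma$, I substitute the induction hypothesis into the recurrence and use the identity $-(\beta-\gamma)+(\alpha-\gamma)=\alpha-\beta$ to collect the two $(\beta-\alpha)^{t-1}$ contributions, which gives exactly $\frac{\alpha(\beta-\gamma)^t-\gamma(\beta-\alpha)^t}{\alpha-\gamma}$. In the case $\alpha=\gamma$ the recurrence collapses to $D_t=(\beta-\alpha)D_{t-1}+\alpha(\beta-\alpha)^{t-1}$, and a one-line induction step produces $D_t=(\beta-\alpha)^{t-1}((t-1)\alpha+\beta)$; the base case $D_1=\beta$ is immediate in both formulas. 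The only step that demands genuine care is the sign bookkeeping in the cofactor expansion, where the factors $(-1)^{t+1}$ and $(-1)^{t-1}$ must be tracked together in order to obtain a $\gamma(\beta-\alpha)^{t-1}$ term with the correct sign; once this is handled, the rest of the argument is mechanical.
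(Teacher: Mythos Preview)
Your argument is correct in every detail: the row operations, the identification of the two minors, the sign bookkeeping, and the two inductions all check out. Note, however, that the paper does not actually prove this lemma; it simply quotes the formula from Efimov~\cite{efimov2021determinant}. Your proposal therefore supplies a self-contained derivation where the paper relies on an external reference, which is a genuine addition rather than a comparison point.
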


\begin{restatable}{proposition}{nonsing} \label{prop:nonsing}
    Let $q$ be a prime power, $(\alpha,\beta,\gamma)$ a feasible triple, $t$ a multiple of $q(q-1)$, and $A$ the $(\alpha,\beta,\gamma)$-diagonal matrix of size $t$. Then, $A$ is nonsingular.
\end{restatable}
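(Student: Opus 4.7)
The plan is to apply the explicit determinant formula stated in the preceding lemma of Efimov and verify that, under the feasibility assumption and the divisibility $q(q-1) \mid t$, the result is nonzero. The only work is a short case analysis driven by which of $\alpha,\gamma,\beta-\alpha,\beta-\gamma$ happen to vanish; the key leverage is that $t$ is divisible by both $q-1$ (used via Fermat: $x^{q-1}=1$ for $x\in\GF(q)^\ast$) and by the characteristic $p$ of $\GF(q)$ (since $p \mid q \mid t$, used to kill multiples of $t$).

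First I would dispose of the case $\alpha=\gamma$. The formula gives $\det(A)=(\beta-\alpha)^{t-1}\bigl((t-1)\alpha+\beta\bigr)$. The factor $\beta-\alpha$ is nonzero by feasibility ($\alpha\ne\beta$), and since $p\mid t$ we have $t\cdot\alpha=0$ in $\GF(q)$, so $(t-1)\alpha+\beta=\beta-\alpha\ne 0$. Hence $\det(A)\ne 0$.

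Next I would handle the case $\alpha\ne\gamma$ with the formula $\det(A)=\bigl(\alpha(\beta-\gamma)^t-\gamma(\beta-\alpha)^t\bigr)/(\alpha-\gamma)$, splitting according to which entries vanish. If $\alpha=0$ then feasibility forces $\beta\ne 0$ (else $\alpha=\beta$), and the formula collapses to $\beta^t\ne 0$; symmetrically if $\gamma=0$ the expression simplifies to $\beta^t$, and feasibility plus $\alpha\ne\beta=0$ again gives $\beta\ne 0$. If $\beta=\gamma$, feasibility together with $\alpha\ne\beta$ and $\alpha\ne\gamma$ forces $\gamma\ne 0$, and the numerator reduces to $-\gamma(\beta-\alpha)^t$, which is nonzero.

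The remaining and main subcase is $\alpha,\gamma,\beta-\alpha,\beta-\gamma$ all nonzero. This is where the divisibility $(q-1)\mid t$ does the work: both $\beta-\alpha$ and $\beta-\gamma$ lie in $\GF(q)^\ast$, so $(\beta-\alpha)^t=(\beta-\gamma)^t=1$. The equation $\det(A)=0$ would then read $\alpha=\gamma$, contradicting the assumption of this subcase. The only point that requires a little care is bookkeeping the feasibility condition ``$\alpha\ne\beta$ and $(\beta\ne 0$ or $\gamma\ne 0)$'' across the subcases so that no degenerate situation is accidentally allowed; this is the step I expect to take the most attention, but it is essentially a finite check.
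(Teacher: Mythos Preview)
Your proposal is correct and follows essentially the same approach as the paper: apply Efimov's determinant formula, use $p\mid t$ to simplify $(t-1)\alpha+\beta$ to $\beta-\alpha$ in the case $\alpha=\gamma$, and use $(q-1)\mid t$ with Fermat to reduce the nonzero $t$-th powers to $1$ in the case $\alpha\ne\gamma$. The paper's case split is slightly leaner---when $\alpha\ne\gamma$ it only distinguishes $\beta=\gamma$ versus $\beta\ne\gamma$, since in the latter case both $\beta-\alpha$ and $\beta-\gamma$ are nonzero and Fermat gives $\det(A)=(\alpha-\gamma)/(\alpha-\gamma)=1$ directly; your separate subcases $\alpha=0$ and $\gamma=0$ are correct but redundant.
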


\begin{proof}
    As $(\alpha, \beta, \gamma)$ is feasible, $\alpha \neq \beta$ holds.
    First suppose that $\alpha \neq \gamma$. If $\beta = \gamma$, we obtain $\beta=\gamma\ne 0$ by the feasibility of $(\alpha, \beta, \gamma)$, hence $\det(A)=\frac{-\gamma{(\beta-\alpha)}^t}{\alpha -\gamma} \neq 0$. Otherwise, if $\beta \ne \gamma$, then using that $t$ is divisible by $q-1$, we obtain 
    \begin{align*}
        \det(A)=\frac{\alpha{(\beta-\gamma)}^t-\gamma{(\beta-\alpha)}^t}{\alpha -\gamma}
        =\frac{\alpha -\gamma}{\alpha -\gamma}
        \neq 0.
    \end{align*}
    Now suppose that $\alpha=\gamma$. As $\alpha \neq \beta$ and $t$ is divisible by $q$, we obtain that 
    \[
       \det(A)={(\beta-\alpha)}^{t-1}((t-1)\cdot\alpha + \beta)={(\beta-\alpha)}^{t}\neq 0. \qedhere
    \]
\end{proof}

We are now ready to conclude the result for the specific class of matroids.

\begin{lemma}\label{lem:triisk}
    There is a computable function $f_2\colon \N \times \N \to \N$ with the following properties: Let $q$ be a prime power, $k$ a positive integer, and $M$ a matroid that can be represented by $[I ~ A]$ over $\GF(q)$, where $I$ is an identity matrix of size $f_2(q,k)$ and $A$ is an $(\alpha,\beta,\gamma)$-diagonal matrix of the same size for a feasible triple $(\alpha,\beta,\gamma)$. Next, let $B_1$ and $B_2$ be the subsets of $E(M)$ corresponding to $I$ and $A$, respectively. Then, $(B_1,B_2)$ is a basis partition of $M$ and has the $k$-exchange property.
\end{lemma}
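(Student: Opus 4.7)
The plan is to set $f_2(q,k) \coloneqq k\cdot q(q-1)$, write $N=f_2(q,k)$, and partition the column indices $[N]$ into $k$ consecutive blocks $T_1,\dots,T_k$ of equal size $q(q-1)$. Identify $B_1$ (resp.\ $B_2$) with the column indices of $I$ (resp.\ $A$), and take $X_i$ (resp.\ $Y_i$) to be the elements of $B_1$ (resp.\ $B_2$) whose index lies in $T_i$. These are pairwise disjoint and nonempty by construction. It remains to check that for every $Z\subseteq [k]$, the set $C_Z \coloneqq (B_1\setminus \bigcup_{i\in Z} X_i)\cup \bigcup_{i\in Z}Y_i$ is a basis of $M$; the cases $Z=\emptyset$ and $Z=[k]$ will simultaneously confirm that $(B_1,B_2)$ is itself a basis partition, the latter using that $A$ is nonsingular by \cref{prop:nonsing} since $N$ is a multiple of $q(q-1)$.

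Set $S\coloneqq \bigcup_{i\in Z} T_i$. The $N\times N$ submatrix of $[I\ A]$ whose columns are indexed by $C_Z$ takes, after reordering rows and columns so that indices in $[N]\setminus S$ come first, the block-triangular form
$$\begin{pmatrix} I_{N-|S|} & A_{[N]\setminus S,\,S}\\ 0 & A_{S,S}\end{pmatrix},$$
because the retained columns of $I$ are exactly the standard unit vectors $e_j$ with $j\in [N]\setminus S$, all of which vanish on the rows indexed by $S$. Its determinant equals $\pm\det(A_{S,S})$, so $C_Z$ is a basis of $M$ if and only if the principal submatrix $A_{S,S}$ is nonsingular.

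The key observation is now clean: since $A$ is $(\alpha,\beta,\gamma)$-diagonal and $S$ inherits the linear order from $[N]$, the matrix $A_{S,S}$ is again $(\alpha,\beta,\gamma)$-diagonal, of size $|S|=|Z|\cdot q(q-1)$. For $Z=\emptyset$ the matrix is empty and there is nothing to prove; for $Z\ne\emptyset$, \cref{prop:nonsing} applies directly to the feasible triple $(\alpha,\beta,\gamma)$ and the size $|Z|\cdot q(q-1)$, yielding that $A_{S,S}$ is nonsingular. Hence $C_Z$ is a basis for every $Z\subseteq [k]$, which establishes the $k$-exchange property for $(B_1,B_2)$.

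I do not foresee a real obstacle here: all the algebraic ingredients are already in place. The only step that requires a moment's thought is the block-triangular reduction together with the fact that the $(\alpha,\beta,\gamma)$-diagonal shape is preserved under passing to principal submatrices in the induced linear order; it is this stability that enables a uniform application of \cref{prop:nonsing} across all choices of $Z$, provided each block $T_i$ has size divisible by $q(q-1)$.
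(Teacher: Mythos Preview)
Your proof is correct and follows essentially the same approach as the paper: the same choice $f_2(q,k)=kq(q-1)$, the same partition into $k$ consecutive blocks of size $q(q-1)$, the same block-triangular reduction, and the same appeal to \cref{prop:nonsing}. Your formulation of the key step---that any principal submatrix of an $(\alpha,\beta,\gamma)$-diagonal matrix (in the induced order) is again $(\alpha,\beta,\gamma)$-diagonal---is a clean way to phrase what the paper does by direct inspection.
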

\begin{proof}
    Let $f_2$ be the function defined by $f_2(q,k)\coloneqq q(q-1)k$ for all positive integers $q$ and $k$. By \cref{prop:nonsing}, we have that $A$ is nonsingular and hence $(B_1,B_2)$ is a basis partition of $M$. For $i \in [k]$, let $X_i$ be the subset of $B_1$ and $Y_i$ be the subset of $B_2$ that corresponds to the columns of indices $q(q-1)(i-1)+1$ to $q(q-1)i$ of $I$ and $A$, respectively.  For $Z \subseteq [k]$, let $B_Z\coloneqq \left(B_1 \setminus \bigcup_{i\in Z} X_i\right) \cup \bigcup_{i \in Z} Y_i$. It suffices to prove that $B_Z$ is a basis of $M$ for every $Z \subseteq [k]$. To this end, consider some fixed $Z \subseteq [k]$. Observe that the matrix obtained from restricting $[I ~ A]$ to the columns corresponding to $B_Z$ can be transformed into a matrix of the form $\begin{bsmallmatrix} I' & A_1 \\ O & A'\end{bsmallmatrix}$ by exchanging rows and columns. Here, $I'$ is the identity matrix of size $q(q-1)(k-|Z|)$, $O$ is a zero matrix, $A'$ is an $(\alpha,\beta,\gamma)$-diagonal matrix of size $q(q-1)|Z|$, and $A_1$ is an arbitrary matrix. As the size of $A'$ is divisible by $q(q-1)$, we obtain by \cref{prop:nonsing} that $A'$ is nonsingular. It follows that $\begin{bsmallmatrix} I' & A_1 \\ O & A'\end{bsmallmatrix}$ is nonsingular and hence $M \mathbin{|} B_Z$ is the free matroid. As $|B_Z|=|B_1|$ by construction, we obtain that $B_Z$ is a basis of $M$. This finishes the proof.
\end{proof}

Finally, we combine \cref{lem:minorexist,lem:triisk,lem:redtominor} to conclude \cref{thm:reprweak}.

\begin{proof}[Proof of \cref{thm:reprweak}]
    We prove the statement for the function $f\colon \N \times \N \to \N$  defined by $f(q,k)\coloneqq f_1(f_2(q,k),k)$ for each $k \in \N$ and prime power $q$. Let $B_1$ and $B_2$ be bases of a $\GF(q)$-representable matroid $M$ with $|B_1 \setminus B_2| \ge f(q,k)$. We need to prove that $(B_1,B_2)$ has the $k$-exchange property.
    Let $M'\coloneqq M \mathbin{/} (B_1\cap B_2) \mathbin{\backslash} (E(M)\setminus (B_1 \cup B_2))$. Further, for $i=1,2$, let $B_i'\coloneqq B_i\cap E(M')$ and observe that $(B_1',B_2')$ is a basis partition of $M'$. It follows from \cref{lem:minorexist} that there exists a $B_1'$-minor $M''$ of $M'$ that can be represented by a matrix of the form $[I ~ A]$, where $I$ is the identity matrix of size $f_2(q,k)$, $A$ is an $(\alpha,\beta,\gamma)$-diagonal matrix of size $f_2(q,k)$ for a feasible triple $(\alpha,\beta,\gamma)$ and the columns of $I$ and $A$ correspond to the elements of $B''_1$ and $B''_2$, respectively, where $B_i''\coloneqq B_i' \cap E(M'')$ for $i = 1,2$. We now obtain from \cref{lem:triisk} that $(B_1'',B_2'')$ is a basis partition of $M''$ and has the $k$-exchange property in $M''$. As $M''$ is a $B_1$-minor of $M$, we now obtain from \cref{lem:redtominor} that $(B_1,B_2)$ has the $k$-exchange property in $M$. 
\end{proof}

Combining \cref{thm:reprweak,thm:weakly}, \cref{lem:cons}, and \cref{thm:liuxu}, we get the following. 

\begin{corollary} \label{cor:zero-basis-fpt}
    Let $q$ be a prime power, $M$ a $\GF(q)$-representable matroid, $\psi\colon E \to \Gamma$ a group labeling and $F \subseteq \Gamma$ a finite set of forbidden labels. When $|F|$ is fixed, then \textscup{$F$-avoiding Basis} is solvable in polynomial time. Moreover, if $|\Gamma|$ is finite, then \textscup{Zero Basis} is FPT when parameterized by $|\Gamma|$. 
\end{corollary}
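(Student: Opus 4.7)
The plan is to chain together the four cited results. Fix $k \coloneqq |F|+1$ and $\alpha \coloneqq f(q, k)$, where $f$ is the function provided by \cref{thm:reprweak}. That theorem asserts that $M$, being $\GF(q)$-representable, is weakly $(\alpha, k)$-base orderable, and feeding this into \cref{thm:weakly} converts it into the following closeness statement: if $M$ admits any $F$-avoiding basis at all, then starting from every basis $B$ of $M$ one can reach an $F$-avoiding basis $B^*$ with $|B \setminus B^*| \le \alpha - 1$.

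For the first assertion of the corollary, fix $q$ and $|F|$, so that $\alpha - 1$ is an absolute constant. Compute an arbitrary basis $B$ of $M$, and then apply \cref{lem:cons} with this constant bound; the lemma produces an $F$-avoiding basis of $M$ in polynomial time, or correctly reports that none exists, by enumerating the polynomially many candidate exchanges of size at most $\alpha - 1$.

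For the second assertion, specialise to forbidden sets of the form $F_g \coloneqq \Gamma \setminus \set{g}$ for $g \in \Gamma$, so that an $F_g$-avoiding basis is exactly a basis whose label equals $g$. Since $|F_g| = |\Gamma| - 1$ is independent of $g$, the discussion above shows that $\GF(q)$-representable matroids satisfy the $k$-closeness property of Liu and Xu with a single value $k \coloneqq f(q, |\Gamma|) - 1$ that depends only on $|\Gamma|$ and the fixed $q$. Plugging this into \cref{thm:liuxu} then yields the desired FPT algorithm parameterized by $|\Gamma|$. The main subtlety, and the step I expect to require the most care, is that \cref{thm:liuxu} is formulated as requiring $k$-closeness for \emph{all} matroids, whereas we have established it only for $\GF(q)$-representable ones; to apply it one must inspect the Liu--Xu construction and confirm that it uses closeness only on the input matroid and on auxiliary matroids obtained via operations such as deletion and contraction, which preserve $\GF(q)$-representability.
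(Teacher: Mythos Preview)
Your proposal is correct and follows the same chain of implications the paper uses: its entire proof is the single sentence ``Combining \cref{thm:reprweak,thm:weakly}, \cref{lem:cons}, and \cref{thm:liuxu}, we get the following.'' Your flagged subtlety about \cref{thm:liuxu} being stated for all matroids rather than just $\GF(q)$-representable ones is a fair point that the paper itself glosses over; your suggested resolution (that the Liu--Xu argument only applies closeness to the input matroid and minors thereof, which stay $\GF(q)$-representable) is the right way to close that gap.
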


\subsubsection{Binary and Regular Matroids} \label{sec:regular}

As a strengthening of the $k$-exchange property, we say that the basis pair $(B_1, B_2)$ of a matroid has the \emph{elementary $k$-exchange property} if there exist $k$-element subsets $X\subseteq B_1 \setminus B_2$ and $Y \subseteq B_2 \setminus B_1$ and a bijection $\varphi\colon X \to Y$ such that $(B_1 \setminus Z) \cup \varphi(Z)$ is a basis for each $Z \subseteq X$. Note that this is equivalent to requiring $|X_i|=|Y_i|=1$ for each $i \in [k]$ in the definition of the $k$-exchange property.  We define a matroid $M$ to be \emph{elementarily $(\alpha, k)$-weakly base orderable} if $(B_1, B_2)$ has the elementary $k$-exchange property for any pair of basis $B_1$ and $B_2$ with $|B_1 \setminus B_2| \ge \alpha$. 

It turns out that all regular matroids are elementarily $(f(k), k)$-weakly base orderable for some large function $f\colon \N \to \N$, while the same is not true for binary matroids. While these results do not have immediate algorithmic applications, we believe that they are interesting in their own respect.

The following result shows that the statement of \cref{thm:reprweak}  does not hold if we require elementary $(f(k), k)$-weak base orderability instead of $(f(k), k)$-weak  base orderability, not even for binary matroids.

\begin{restatable}{theorem}{spike} \label{prop:spike}
    Let $r \ge 4$ be an even integer and let $M$ denote the binary matroid represented by $[I_r ~ J_r-I_r]$ where $I_r$ is the identity matrix and $J_r$ is the square all-one matrix of size $r$. Then, $M$ is not elementarily $(r, 3)$-weakly base orderable.
\end{restatable}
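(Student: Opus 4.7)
The plan is to exhibit a specific basis pair $(B_1,B_2)$ of $M$ with $|B_1 \setminus B_2| = r$ for which no $3$-element subsets $X \subseteq B_1 \setminus B_2$, $Y \subseteq B_2 \setminus B_1$, and bijection $\varphi\colon X \to Y$ satisfy the elementary $3$-exchange property. I would label the columns of $I_r$ by $e_1, \dotsc, e_r$ and the columns of $J_r - I_r$ by $f_1, \dotsc, f_r$, so that $f_j = \mathbf{1} - e_j$ over $\GF(2)$, and take $B_1 \coloneqq \set{e_1, \dotsc, e_r}$ and $B_2 \coloneqq \set{f_1, \dotsc, f_r}$. A short computation shows $\sum_{i \in S} f_i = 0$ over $\GF(2)$ only for $S = \emptyset$ or ($S = [r]$ with $r$ odd); the hypothesis that $r$ is even then confirms $B_2$ is a basis, and since $B_1 \cap B_2 = \emptyset$ we have $|B_1 \setminus B_2| = r$.

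The second step is a standard reduction. Writing $X = \set{e_i : i \in I}$ and $Y = \set{f_j : j \in J}$, and identifying $\varphi$ with a bijection $\sigma\colon I \to J$, for any $Z \subseteq I$ the set $(B_1 \setminus Z) \cup \varphi(Z)$ is a basis if and only if the restrictions $\set{f_{\sigma(i)}|_Z : i \in Z}$ are linearly independent in $\GF(2)^Z$, where $f_j|_Z$ denotes the restriction of $f_j$ to the coordinates in $Z$. One immediately observes that $f_j|_Z = \mathbf{1}_Z$ whenever $j \notin Z$, while for $j \in Z$ the restriction has a zero in position $j$ and ones elsewhere.

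I would then split into cases according to $k \coloneqq |J \cap I| \in \set{0,1,2,3}$. When $k = 3$, so $J = I$, the three vectors $\set{f_j|_I : j \in I}$ sum to $3\mathbf{1}_I - \mathbf{1}_I = 0$ over $\GF(2)$, hence the exchange with $Z = I$ fails for every $\sigma$. When $k \le 1$, one picks two distinct elements $j, j' \in J \setminus I$ and sets $Z \coloneqq \sigma^{-1}(\set{j,j'})$; then $f_j|_Z = f_{j'}|_Z = \mathbf{1}_Z$, so this $2$-element $Z$ witnesses the failure.

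The main obstacle is the case $k = 2$, because this is the only one in which the full $3$-exchange ($Z = I$) actually succeeds: the three restricted vectors turn out to be linearly independent over $\GF(2)$. Writing $J \cap I = \set{b_1, b_2}$, $J \setminus I = \set{b_3}$, and $I = \set{b_1, b_2, i_3}$, the single-element exchanges rule out $\sigma(b_1) = b_1$ and $\sigma(b_2) = b_2$ (otherwise a $1$-dimensional restriction vanishes). Computing the $2 \times 2$ restrictions, the exchange on $Z = \set{b_1, i_3}$ forces $b_1 \in \sigma(Z)$, which combined with $\sigma(b_1) \neq b_1$ gives $\sigma(i_3) = b_1$; symmetrically, the exchange on $Z = \set{b_2, i_3}$ forces $\sigma(i_3) = b_2$, contradicting that $\sigma$ is a bijection. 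Coordinating the $1$- and $2$-exchange constraints to derive a contradiction in this case is the crux of the argument.
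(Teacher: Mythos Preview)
Your proof is correct and follows essentially the same approach as the paper: both set up the basis pair $(B_1,B_2)$ from the columns of $I_r$ and $J_r-I_r$, reduce the elementary $3$-exchange condition to linear independence of restricted column vectors over $\GF(2)$, rule out $J=I$ via the full exchange, and then force a contradiction on $\sigma$ using single-element and two-element exchanges. The only difference is organizational: you split explicitly by $k=|J\cap I|$, giving a quick shortcut for $k\le 1$, whereas the paper handles all of $k\le 2$ uniformly by fixing one index outside $I$ and deriving that the two remaining values of $\varphi$ must coincide; your $k=2$ case is essentially the paper's main argument with the roles of the distinguished elements reversed.
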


\begin{proof}
    Let $x_1,\dots, x_r$ and $y_1,\dots, y_r$ denote the elements of $E(M)$ corresponding to the first and last $r$ columns of $[I_r ~ J_r-I_r]$, respectively. Let $B_1 \coloneqq \set{x_1,\dots, x_r}$ and $B_2 \coloneqq \set{y_1,\dots,y_r}$.
    Since $r$ is even, both $B_1$ and $B_2$ are bases of $M$.  Assume that $M$ is elementarily $(r,3)$-weakly base orderable, then there exist 3-element subsets $X \subseteq B_1$ and $Y \subseteq B_2$ and a bijection $\varphi\colon X \to Y$ such that $(B_1 \setminus Z) \cup \varphi(Z)$ is a basis for each $Z \subseteq X$.  We may assume by symmetry that $X=\set{x_1,x_2,x_3}$. Since $(B_1\setminus X)\cup\set{y_1,y_2,y_3}$ is not a basis, $Y\ne \set{y_1,y_2,y_3}$, thus we may assume that $\varphi(x_1)=y_4$. For $i \in \set{2,3}$, $(B_1\setminus \set{x_1, x_i})\cup\set{y_4, \varphi(x_i)}$ being a basis implies that $\varphi(x_i) \in \set{y_1,y_i}$, while $B_1-x_i+\varphi(y_i)$ being a basis implies that $\varphi(x_i) \ne y_i$. This shows that $\varphi(y_i) = y_1$ for $i \in \{2,3\}$, which is a contradiction.
\end{proof}

By using that the binary matroid of $[I_4~J_4-I_4]$ is not regular, we can show the following statement strenghtening of \cref{thm:reprweak} holds for regular matroids. The proof is a modification of that of \cref{thm:reprweak}.

\begin{restatable}{theorem}{regular}\label{thm:regular}
There is a computable function $f'\colon \N \to \N$ such that every regular matroid is elementarily $(f'(k), k)$-weakly base orderable for any positive integer $k$.
\end{restatable}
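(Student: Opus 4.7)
The strategy follows the template of \cref{thm:reprweak} specialized to the field $\GF(2)$ and modified to keep the exchange singleton-sized throughout.  I would set $f'(k) \coloneqq f_1(2, \max(k,4))$ with $f_1$ from \cref{prop:submatrix2}.  Given a regular (hence binary) matroid $M$ and bases $B_1, B_2$ with $|B_1\setminus B_2| \ge f'(k)$, first pass to $M' \coloneqq M \mathbin{/} (B_1\cap B_2) \mathbin{\backslash} (E(M)\setminus(B_1\cup B_2))$ so that $(B_1\setminus B_2, B_2\setminus B_1)$ is a basis partition of $M'$, and then invoke \cref{lem:minorexist} over $\GF(2)$ to extract a $B_1'$-minor $M''$ representable as $[I_t ~ A]$ with $t = \max(k, 4)$ and $A$ some feasible $(\alpha,\beta,\gamma)$-diagonal matrix.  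Over $\GF(2)$ the feasible triples reduce to $(0,1,0)$, $(0,1,1)$, and $(1,0,1)$.

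The genuinely new step is that, since $M''$ inherits regularity from $M$, the triple $(1,0,1)$ must be excluded.  In that case $A = J_t - I_t$ and I would argue as follows: contracting any single identity column of $[I_t ~ J_t-I_t]$ yields the representation $[I_{t-1} ~ (J_{t-1}-I_{t-1} \mid \mathbf{1})]$ over $\GF(2)$, which for $t = 4$ is a $3 \times 7$ matrix whose columns are exactly the seven nonzero vectors of $\GF(2)^3$, i.e., a representation of $F_7$; for $t > 4$ further contracting all but three of the remaining identity columns and deleting the duplicated all-ones columns still exhibits $F_7$ as a minor.  Tutte's excluded-minor characterization of regular matroids then rules this case out, so $(\alpha,\beta,\gamma)\in\set{(0,1,0),(0,1,1)}$.

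In each remaining case I would exhibit the elementary $k$-exchange directly.  When $A = I_t$, the matroid $M''$ is a direct sum of $t$ copies of $U_{1,2}$, and any $k$ of its parallel pairs give singleton exchanges.  When $(\alpha,\beta,\gamma)=(0,1,1)$, the $j$-th column of $A$ equals $y_j = e_j + e_{j+1} + \dots + e_t$ over $\GF(2)$; choosing $X_i = \set{x_i}$ and $Y_i = \set{y_i}$ with $\varphi(x_i) = y_i$ for $i \in [k]$, I would verify that for every $Z \subseteq [k]$ the columns $\Set{e_j}{j\in [t]\setminus Z}\cup\Set{y_j}{j \in Z}$ assemble, in natural index order, into a lower-triangular matrix with ones on the diagonal, hence form a basis.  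Since all produced $X_i, Y_i$ are singletons, \cref{lem:redtominor} lifts the elementary $k$-exchange from $M''$ back to $(B_1, B_2)$ in $M$.  The main obstacle is the $F_7$-minor argument, which is precisely what distinguishes regular matroids from the wider binary class where \cref{prop:spike} obstructs elementary exchange.
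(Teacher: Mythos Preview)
Your proposal is correct and follows essentially the same approach as the paper: reduce to a basis partition, apply \cref{lem:minorexist} over $\GF(2)$, rule out the $(1,0,1)$ case via non-regularity, and check the remaining two feasible triples directly, lifting back via the minor lemma. The differences are cosmetic: the paper excludes $(1,0,1)$ by citing that $M([I_4~J_4-I_4])$ is non-regular rather than exhibiting an explicit $F_7$ minor, handles the two surviving triples uniformly via the block-triangular argument of \cref{lem:triisk} rather than treating $(0,1,0)$ and $(0,1,1)$ separately, and invokes \cref{lem:redtominor2} (the elementary version of the lifting lemma) instead of observing, as you do, that \cref{lem:redtominor} already preserves singleton exchanges.
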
 

\begin{proof}
    Let $f'(k) \coloneqq f_1(2, \max\set{k, 4})$ for $k \in  \N$ where $f_1$ is the function provided by \cref{lem:minorexist}. Let $B_1$ and $B_2$ be bases of a regular matroid $M$ with $|B_1 \setminus B_2| \ge f'(k)$, we need to show that $(B_1, B_2)$ has the elementary $k$-exchange property. We may assume that $k \ge 4$. 
    Let $M'\coloneqq M/(B_1 \cap B_2) \backslash (E(M)\setminus (B_1\cup B_2))$, and $B'_i \coloneqq B_i \cap E(M')$ for $i=1,2$. Observe that $(B'_1, B'_2)$ is a basis partition of $M'$.
    It follows from \cref{lem:minorexist} that there exists a $B'_1$-minor $M''$ of $M'$ that can be represented over $GF(2)$ by a matrix of the form $[I_k~A]$ where $I_k$ is the identity matrix of size $k$, $A$ is an $(\alpha, \beta, \gamma)$-diagonal matrix of size $k$ for a feasible triple $(\alpha. \beta, \gamma)$ and the columns of $I_k$ and $A$ correspond to the elements of $B''_1$ and $B''_2$, respectively where $B''_i \coloneqq B'_i \cap E(M'')$ for $i=1,2$. By the feasibility of $(\alpha, \beta, \gamma)$, we have $(\alpha, \beta, \gamma) \in \set{(0,1,0), (0,1,1), (1,0,1)}$. 
    If $(\alpha, \beta, \gamma) = (1,0,1)$, then $A=J_k-I_k$ where $J_k$ is the all-ones matrix of size $k$. Then, as $k \ge 4$, $M''$ has a minor that can represented by $[I_4 ~ J_4-I_4]$ over $GF(2)$. This is a contradiction, as this matroid is not regular (see e.g.~\cite[page 645]{oxley2011matroid}), and the class of regular matroid is minor-closed. We concluded that $(\alpha, \beta, \gamma) \in \set{(0,1,0), (0,1,1)}$.
    Let $\varphi\colon B''_1 \to B''_2$ denote the bijection mapping the element of $B''_1$ corresponding to the $i$th column if $I_k$ to the element of $B''_2$ corresponding to the $i$th column of $A$ for $i \in [k]$. Fix a subset $Z \subseteq B''_1$. 
    As in the proof of \cref{lem:triisk}, the columns of $[I~A]$ corresponding to $(B''_1\setminus Z)\cup \varphi(Z)$ can be transformed into a matrix of the form $\begin{bsmallmatrix} I_{|Z|} & A_1 \\ O & A'\end{bsmallmatrix}$ where $I_{|Z|}$ is the identity matrix of size $|Z|$, $A'$ is the $(\alpha, \beta, \gamma)$-diagonal matrix of size $k-|Z|$, $O$ is an all-zeros matrix and $A_1$ is an arbitrary matrix. This matrix is nonsingular, as its determinant equals to $1$ by $(\alpha, \beta, \gamma) \in \set{(0,1,0), (0,1,1)}$. It follows that $(B''_1 \setminus Z) \cup \varphi(Z)$ is a basis for each $Z \subseteq B''_1$, thus $(B''_1, B''_2)$ has the elementary $k$-exchange property. Using \cref{lem:redtominor2}, we conclude that $M$ is elementarily $(f'(k), k)$-weakly base orderable.
\end{proof}

\subsubsection{Graphic Matroids}\label{sec:graph}

As graphic matroids are regular, they are elementarily $(f(k), k)$-weakly base orderable by \cref{thm:regular} for some large function $f\colon \N \to \N$. We give an independent proof from the proofs of \cref{thm:reprweak,thm:regular}, which shows that for graphic matroids there exists such a function $f$ satisfying $f(k)=O(k^3)$.

\begin{restatable}{theorem}{graphicwbo}\label{thm:graphicwbo}
    Graphic matroids are elementarily $(3k^3, k)$-weakly base orderable for any $k \ge 1$.
\end{restatable}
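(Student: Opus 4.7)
I would first reduce to the case $B_1\cap B_2=\emptyset$: the proof of \cref{lem:redtominor} preserves the singleton sizes of the $X_i$'s and $Y_i$'s, so an analogous statement holds for the elementary $k$-exchange property. Contracting $B_1\cap B_2$ and deleting $E(M)\setminus(B_1\cup B_2)$ thus reduces to the case where $B_1$ and $B_2$ are edge-disjoint spanning trees of a connected graph $H$ on $n\ge 3k^3+1$ vertices; for each non-tree edge $f\in B_2$ write $P_f\subseteq B_1$ for the fundamental path of $f$ in $B_1$.

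The plan is to find tree edges $e_1,\dots,e_k\in B_1$ and non-tree edges $f_1,\dots,f_k\in B_2$ satisfying $e_i\in P_{f_i}$ for every $i$ and $e_i\notin P_{f_j}$ whenever $j<i$. Setting $\varphi(e_i):=f_i$, the $k\times k$ matrix $A$ with $A_{ij}:=\mathbf{1}[e_i\in P_{f_j}]$ is then upper triangular with ones on the diagonal, so for every $Z\subseteq\{e_1,\dots,e_k\}$ the submatrix $A[Z\times\varphi(Z)]$ retains this form and admits a unique perfect matching. \cref{lem:unique} then forces $(B_1\setminus Z)\cup\varphi(Z)\in\cB(M)$, which is precisely the elementary $k$-exchange property. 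I would construct the pairs greedily: set $U_0:=\emptyset$ and, at each step $t=1,\dots,k$, pick $f_t\in B_2\setminus\{f_1,\dots,f_{t-1}\}$ with $P_{f_t}\not\subseteq U_{t-1}$, choose $e_t\in P_{f_t}\setminus U_{t-1}$, and set $U_t:=U_{t-1}\cup P_{f_t}$. Such an $f_t$ exists whenever $U_{t-1}\ne B_1$, because $B_2$ being spanning means every cut of $B_1$ is crossed by some edge of $B_2$; and the chosen $e_t$ is automatically distinct from $e_1,\dots,e_{t-1}$.

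The crux of the proof is to make the selection careful enough that the iteration actually runs for all $k$ steps, i.e., $|U_{k-1}|<|B_1|$. The target is to always be able to pick $f_t$ with $|P_{f_t}\setminus U_{t-1}|\le 3k^2$, which yields $|U_k|\le 3k^3\le |B_1|$. Contracting the components of $U_{t-1}$ in $H$ and discarding the resulting self-loops produces a multigraph $\bar H$ on $s\ge 3k^2+1$ vertices in which $B_1\setminus U_{t-1}$ forms a spanning tree $\bar B_1$ and the non-loop part of $B_2$ forms a connected spanning subgraph $\bar B_2$; moreover $|P_f\setminus U_{t-1}|$ is exactly the length of $f$'s fundamental cycle with respect to $\bar B_1$. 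The main remaining task is thus a purely graph-theoretic statement about multigraphs carrying two edge-disjoint connected spanning subgraphs, and I expect to handle it by a case distinction on $\bar B_1$: the case in which $\bar B_1$ admits a pendant subtree of size at most $3k^2$ (where the connectivity of $\bar B_2$ forces a crossing edge with a short fundamental cycle) versus the case in which $\bar B_1$ has enough branching that iterated pruning of small pendant subtrees again yields a non-tree edge with a short fundamental cycle. Carrying this case analysis through with constants matching the bound $3k^3$ is where I anticipate the technical heart of the proof.
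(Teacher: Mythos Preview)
Your reduction to edge-disjoint spanning trees and the upper-triangular exchange-matrix argument via \cref{lem:unique} are correct and clean. The gap is in the graph-theoretic lemma you set as your target: it is simply false that one can always find $f_t\in B_2$ with $|P_{f_t}\setminus U_{t-1}|\le 3k^2$. Take $B_1$ to be a Hamiltonian path $v_1,\dots,v_n$ with $n\ge 3k^3+1$, and let $B_2$ be any spanning tree in which every edge joins $\{v_1,\dots,v_{\lfloor n/2\rfloor}\}$ to $\{v_{\lfloor n/2\rfloor+1},\dots,v_n\}$; such trees exist and are automatically edge-disjoint from $B_1$. Then every fundamental path has length at least $\lfloor n/2\rfloor$, so already at step $t=1$ no edge of $B_2$ meets your bound $3k^2$. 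Your case split does not rescue this: $\bar B_1$ is a path with no branching, no pendant subpath of size $\le 3k^2$ contains an edge of $\bar B_2$, and a ``crossing edge'' out of a small pendant subtree need not have a short fundamental cycle at all, since its fundamental path continues arbitrarily far into the rest of the tree.

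The paper's proof proceeds quite differently. Rather than building the pairs greedily, it argues by contradiction: assuming $(T_1,T_2)$ fails the elementary $k$-exchange property, it establishes directly that (i) $T_1$ has fewer than $k$ leaves, (ii) no two edge-disjoint subpaths of $T_1$ are linked by $\ge 2k$ edges of $T_2$, and (iii) $G$ has no $k$ edge-disjoint cycles each meeting $T_2$ in a single edge. Each of (i)--(iii) is proved by exhibiting, from the forbidden configuration, an explicit $X,Y,\varphi$ witnessing the elementary $k$-exchange property. Decomposing $T_1$ into at most $k-2$ edge-disjoint paths (possible by (i)) and combining (ii), (iii) with a short lemma about Hamiltonian paths then forces $|E(T_2)|<3k^3$, the desired contradiction. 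In the counterexample above it is (ii) that fires: splitting the path at $v_{\lfloor n/2\rfloor}$ yields two subpaths linked by all $n-1\ge 2k$ edges of $T_2$, and the construction in the proof of (ii) produces the witness directly. So the structure you need to exploit in the path-like case is the abundance of linking edges between subpaths, not the existence of a short fundamental cycle.
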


For the proof, we will use the following statement which can be proved analogously to \cref{lem:redtominor}.
\begin{lemma} \label{lem:redtominor2}
    Let $B_1$ and $B_2$ be two bases of a matroid $M$. Further, let $M'$ be a $B_1$-minor of $M$ such that $(B'_1, B'_2)$ is a basis partition of $M'$ and has the elementary $k$-exchange property for some $k \ge 1$, where $B'_i \coloneqq B_i \cap E(M')$ for $i=1,2$. Then, $(B_1, B_2)$ has the elementary $k$-exchange property in $M$.
\end{lemma}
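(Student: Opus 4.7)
The plan is to mimic the proof of \cref{lem:redtominor} almost verbatim, using the single subsets $X$, $Y$ and the bijection $\varphi$ witnessing the elementary $k$-exchange property of $(B'_1, B'_2)$ in $M'$ directly as the witnesses for $(B_1, B_2)$ in $M$. By hypothesis, there exist $k$-element subsets $X \subseteq B'_1 \setminus B'_2$, $Y \subseteq B'_2 \setminus B'_1$ and a bijection $\varphi\colon X \to Y$ such that $(B'_1 \setminus Z) \cup \varphi(Z) \in \cB(M')$ for each $Z \subseteq X$.

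First I would check that $X$ and $Y$ are valid witnesses in $M$, i.e., that $X \subseteq B_1 \setminus B_2$ and $Y \subseteq B_2 \setminus B_1$. Since $B'_i = B_i \cap E(M')$ for $i=1,2$ and $(B'_1, B'_2)$ is a basis partition of $M'$, we have $X \subseteq B'_1 \subseteq B_1$ and $X \cap B_2 = X \cap (B_2 \cap E(M')) = X \cap B'_2 = \emptyset$; the argument for $Y$ is symmetric.

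Next, for each $Z \subseteq X$, set $B_Z \coloneqq (B_1 \setminus Z) \cup \varphi(Z)$ and $B'_Z \coloneqq (B'_1 \setminus Z) \cup \varphi(Z)$, so that $B_Z = B'_Z \cup (B_1 \setminus B'_1)$. By the elementary $k$-exchange property in $M'$, $B'_Z \in \cB(M')$. Since $M'$ is a $B_1$-minor of $M$, it is obtained by contracting $B_1 \setminus B'_1$ (a subset of the basis $B_1$, hence independent in $M$) and then deleting some subset of $E(M) \setminus B_1$. The standard minor correspondence for matroids then yields that $B'_Z$ is a basis of $M'$ if and only if $B'_Z \cup (B_1 \setminus B'_1) = B_Z$ is a basis of $M$, which gives $B_Z \in \cB(M)$.

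Putting these together, the triple $(X, Y, \varphi)$ witnesses the elementary $k$-exchange property of $(B_1, B_2)$ in $M$. There is no real obstacle here beyond carefully tracking that $X$ and $Y$ remain in $B_1 \setminus B_2$ and $B_2 \setminus B_1$ respectively after passing from $M'$ back to $M$; the main computation $B_Z = B'_Z \cup (B_1 \setminus B'_1)$ is identical to the one in \cref{lem:redtominor}, with the $k$-tuple of subsets $X_1, \dots, X_k$ and $Y_1, \dots, Y_k$ replaced by single subsets $X$ and $Y$ linked by the bijection $\varphi$.
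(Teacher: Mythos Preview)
Your proposal is correct and is exactly what the paper intends: the paper does not give a separate proof of \cref{lem:redtominor2} but simply states that it ``can be proved analogously to \cref{lem:redtominor}'', and your argument is precisely that analogue, carrying the witnesses $X$, $Y$, $\varphi$ from $M'$ up to $M$ via the identity $B_Z = B'_Z \cup (B_1 \setminus B'_1)$.
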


Next, we introduce some notation from graph theory. For a graph $G$, we use the notations $V(G)$ and $E(G)$ to denote its set of vertices and edges, respectively. Throughout this section, for a graph $G$, a spanning subgraph $H$ of $G$, $F \subseteq E(H)$ and $F'\subseteq E(G)\setminus E(H)$, we use $H-F$ for $(V(G),E(H)\setminus F)$ and $H+F'$ for $(V(G),E(H)\cup F')$. For disjoint subsets $U_1, U_2 \subseteq V(G)$ we use the notations $\delta_G(U_1, U_2) \coloneqq \Set{u_1u_2 \in E(G)}{u_1 \in U_1, u_2 \in U_2}$ and $\delta_G(U_1)\coloneqq \delta_G(U_1, V(G)\setminus U_1)$.
We use the notation $\delta_G(v) = \delta_G(\set{v})$ for a vertex $v \in V(G)$, and $d_G(v) = |\delta_G(v)|$ denotes the degree of $v$.
For a subset $U \subseteq V(G)$ we say that an edge $uv$ is \emph{induced} by $U$ if $u,v \in U$, and $G[U]$ denotes the graph on vertex set $U$ whose edges are the edges of $G$ induced by $U$.
An edge $uv$ \emph{links} the (not necessarily disjoint) subsets $U_1, U_2 \subseteq V(G)$ if $u \in U_1$ and $v \in U_2$.
The following statement follows from Euler's theorem on Eulerian walks, and can also be found in \cite{harary1972evolution}.

\begin{proposition} 
    Let $T$ be a tree with $\ell$ vertices of odd degree. Then, the minimum number of edge-disjoint paths covering the edges of $T$ equals $\lceil \ell /2 \rceil$.
\end{proposition}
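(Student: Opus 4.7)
The plan is to prove matching lower and upper bounds of $\lceil \ell/2\rceil$. First, I would note the trivial case: by the handshake lemma $\ell$ is always even, so $\lceil \ell/2\rceil = \ell/2$. If $\ell = 0$, then either $T$ is a single vertex (no edges, zero paths needed) or $T$ has an even-degree structure; in a tree with at least one edge every leaf has degree one, so $\ell \ge 2$ automatically once $E(T) \neq \emptyset$. Hence we may assume $\ell \ge 2$.

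For the lower bound, I would argue by endpoint counting. Let $P_1, \dots, P_m$ be any edge-disjoint collection of paths covering $E(T)$. Each $P_i$ has exactly two endpoints, so the total number of (vertex, path) endpoint incidences is $2m$. For each vertex $v \in V(T)$, every $P_i$ using $v$ as an internal vertex contributes $2$ to $d_T(v)$, while every $P_i$ using $v$ as an endpoint contributes $1$. Hence if $d_T(v)$ is odd, then $v$ is an endpoint of at least one $P_i$. Summing over the $\ell$ odd-degree vertices, $2m \ge \ell$, i.e.\ $m \ge \ell/2$.

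For the upper bound, I would use the standard Eulerian-augmentation trick. Pair up the $\ell$ odd-degree vertices into $\ell/2$ pairs arbitrarily, and add one extra edge joining each pair to obtain a connected multigraph $T'$ in which every vertex has even degree. By Euler's theorem, $T'$ admits an Eulerian circuit $C$. Deleting the $\ell/2$ added edges from $C$ splits it into $\ell/2$ edge-disjoint trails whose edge sets partition $E(T)$. Because $T$ is acyclic, each such trail cannot repeat a vertex (a repeated vertex would create a closed subwalk and hence a cycle in $T$), so each trail is a simple path. This produces a partition of $E(T)$ into exactly $\ell/2 = \lceil \ell/2 \rceil$ edge-disjoint paths, matching the lower bound.

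No step of the argument is truly an obstacle; the only point needing care is verifying that the trails arising from the Eulerian circuit are genuinely \emph{simple} paths, which follows immediately from the acyclicity of $T$.
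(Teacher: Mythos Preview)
Your proof is correct and follows exactly the approach the paper indicates: the paper does not spell out a proof but simply states that the proposition follows from Euler's theorem on Eulerian walks (citing \cite{harary1972evolution}), and your Eulerian-augmentation argument is precisely the standard way to carry this out. The lower bound via endpoint counting and the observation that trails in an acyclic graph are necessarily simple paths are both sound.
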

\begin{corollary} \label{cor:decomp}
    Let $T$ be a tree and let $k$ denote the number of its leaves. Then, the edges of $T$ can be covered by $k-1$ edge-disjoint paths.
\end{corollary}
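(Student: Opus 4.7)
The plan is to combine the preceding proposition with a degree-counting argument. By that proposition, the minimum number of edge-disjoint paths covering the edges of $T$ equals $\lceil \ell/2 \rceil$, where $\ell$ denotes the number of vertices of odd degree in $T$. So it suffices to prove the inequality $\lceil \ell/2 \rceil \le k-1$, equivalently (since $\ell$ is even by the handshake lemma) $\ell \le 2k-2$, and then refine the resulting path cover to one of size exactly $k-1$ if desired.

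First I would bound $\ell$ by counting. Every leaf of $T$ has degree $1$, hence contributes to $\ell$, so $\ell = k + m$ where $m$ is the number of non-leaf (internal) vertices of odd degree. Let $n_i$ be the number of vertices of $T$ of degree $i$ so that $n_1 = k$, $\sum_{i\ge 1} n_i = |V(T)|$, and by the handshake lemma $\sum_{i\ge 1} i\, n_i = 2(|V(T)|-1)$. Subtracting twice the second identity from the third yields $\sum_{i\ge 1}(i-2)n_i = -2$, which rearranges to
\[
\sum_{i\ge 2}(i-2)\, n_i \;=\; k - 2.
\]
Every non-leaf vertex of odd degree has degree $i \ge 3$ with $i-2 \ge 1$, so
\[
m \;=\; \sum_{\substack{i\ge 3\\ i\text{ odd}}} n_i \;\le\; \sum_{i\ge 2}(i-2)\, n_i \;=\; k-2.
\]
Therefore $\ell = k + m \le 2k - 2$, and $\lceil \ell/2\rceil \le k-1$.

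The preceding proposition now delivers a cover of $E(T)$ by at most $k-1$ edge-disjoint paths. To get exactly $k-1$ paths if that is what the statement requires, I would just split any path of length at least two into two sub-paths repeatedly; this is possible because a tree with $k$ leaves has at least $k$ vertices, hence at least $k-1$ edges. There is no serious obstacle here: the only thing to take care of is the bookkeeping in the degree count, which the identity $\sum_{i\ge 2}(i-2)n_i = k-2$ makes transparent.
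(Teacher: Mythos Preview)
Your proof is correct and follows the same approach as the paper: invoke the preceding proposition to reduce to showing $\ell \le 2k-2$, then establish this by a degree-counting argument (the paper writes the same count as $2n-2 \ge k + 3(\ell-k) + 2(n-\ell)$, which is your identity $\sum_{i\ge 2}(i-2)n_i = k-2$ in a different guise). The final paragraph about splitting paths to reach exactly $k-1$ is unnecessary, since ``covered by $k-1$ paths'' only requires at most $k-1$.
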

\begin{proof}
If $T$ has $n$ vertices from which $\ell$ have odd degrees, then $2n-2 = \sum_{v \in V} d_T(v) \ge k + (\ell-k)\cdot 3 + (n-\ell)\cdot 2$, thus $\ell/2 \le k-1$.
\end{proof}

We will also use the following observation.

\begin{lemma}\label{lem:cutcycles}
    Let $G$ be a graph containing a Hamiltonian path $P=v_1,\dots,v_n$ and let $Z \subseteq E(G)\setminus E(P)$ with $|Z|\geq k \ell$ for two integers $k,\ell\ge 0$. Then, either there exists some $q \in [n]$ such that $|Z \cap \delta_G(\set{v_1,\ldots,v_{q-1}},\set{v_{q},\ldots,v_{n}})|\geq k$ or there exists a collection of $\ell$ edge-disjoint cycles in $G$ each of which contains exactly one edge of $Z$.
\end{lemma}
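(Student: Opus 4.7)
The plan is to reformulate the problem as an interval scheduling question on the path $P$. For each edge $e = v_iv_j \in Z$ with $i < j$, associate the interval $I_e \coloneqq \set{i+1, \ldots, j}$. Observe that for any $q \in [n]$, an edge $e \in Z$ crosses the cut $\delta_G(\set{v_1,\ldots,v_{q-1}}, \set{v_q,\ldots,v_n})$ if and only if $q \in I_e$. Moreover, for each $e = v_iv_j \in Z$, the edge set $E(P[v_i,v_j]) + e$ forms a cycle $C_e$ in $G$, and two such cycles $C_{e_1}$ and $C_{e_2}$ are edge-disjoint precisely when $I_{e_1} \cap I_{e_2} = \emptyset$, because $P[v_i, v_j]$ uses exactly the path edges $v_{q-1}v_q$ for $q \in I_e$.

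The plan is then a direct dichotomy on the maximum point-multiplicity of the intervals. If some $q \in [n]$ is contained in at least $k$ of the intervals $I_e$, we are in the first case of the lemma. Otherwise every $q$ is contained in at most $k-1$ intervals, and we find $\ell$ pairwise disjoint intervals via the standard interval-scheduling greedy: repeatedly pick an edge $e \in Z$ whose interval $I_e$ has the smallest right endpoint $r$ among those remaining and discard every edge whose interval intersects $I_e$. Any remaining interval intersecting $I_e$ must contain $r$ (as its right endpoint is at least $r$ and it meets $I_e$), so at most $k-1$ edges are discarded per iteration. Starting from $|Z| \ge k\ell$ edges this yields at least $\lceil k\ell/(k-1) \rceil \ge \ell$ chosen edges (for $k \ge 2$), whose associated cycles $C_e$ are pairwise edge-disjoint and each contains exactly one edge of $Z$ by construction.

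Finally, I would handle the degenerate cases separately: for $k = 0$ the first alternative holds trivially with any $q$, and for $k = 1$ the failure of the first alternative forces $Z = \emptyset$, whence $\ell \le 0$ and the second alternative is vacuous. I do not expect any significant obstacle here; the main conceptual step is the translation to intervals, after which the argument is a textbook greedy combined with the observation that minimum-right-endpoint intervals can only be hit by intervals containing that right endpoint.
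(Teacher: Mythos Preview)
Your proof is correct, and the underlying greedy is in fact identical to the paper's induction: the paper picks the smallest $i$ such that $\{v_1,\ldots,v_i\}$ induces a $Z$-edge, which is exactly your edge of minimum right endpoint, and then recurses on $G - \{v_1,\ldots,v_{i-1}\}$, which discards precisely the edges whose intervals meet the chosen one. The difference is only in packaging. The paper interleaves the cut check with the cycle extraction via induction on $\ell$ (at each step either the current cut is large or one cycle is peeled off and at most $k-1$ edges are lost), whereas you separate the two alternatives cleanly upfront and then appeal to the standard interval-scheduling greedy with a global counting argument. Your interval reformulation makes the edge-disjointness criterion ($I_{e_1}\cap I_{e_2}=\emptyset$) and the cut-crossing criterion ($q\in I_e$) completely transparent, which is a nice conceptual gain; the paper's version is more self-contained but slightly more ad hoc. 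Both yield the same bound and the same cycles.
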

\begin{proof}
    We prove the statement by induction on $\ell$. Let $i$ be the smallest integer in $[n]$ such that $\set{v_1,\ldots,v_i}$ induces an edge of $Z$. Observe that there exists a cycle $C$ with $V(C)\subseteq \set{v_1,\ldots,v_i}$ that contains exactly one edge of $Z$. If $|\delta_G(\set{v_1,\ldots,v_{i-1}},\set{v_i,\ldots,v_n})\cap Z|\geq k$, there is nothing to prove. We may hence suppose that this is not the case, so $G-\set{v_1,\ldots,v_{i-1}}$ contains at least $(\ell-1)k$ edges of $Z$. We obtain by induction that either there exists some $q \in \set{i+1,\ldots,n}$ such that $|Z \cap \delta_G(\set{v_i,\ldots,v_{q-1}},\set{v_{q},\ldots,v_{n}})|\geq k$ or there exists a collection $\mathcal{C}$ of $\ell-1$ edge-disjoint cycles in $G-\set{v_1,\ldots,v_{i-1}}$ each of which contains exactly one edge of $Z$. In the first case, we obtain $|Z \cap \delta_G(\set{v_1,\ldots,v_{q-1}},\set{v_{q},\ldots,v_{n}})|\geq |Z \cap \delta_G(\set{v_i,\ldots,v_{q-1}},\set{v_{q},\ldots,v_{n}})|\geq k$ and in the second case, we obtain the desired collection of cycles by adding $C$ to $\mathcal{C}$.
\end{proof}

We are now ready to prove \cref{thm:graphicwbo}. Vaguely speaking, we use the fact that every large tree either contains a lot of leaves or contains long paths. Either of these structures will prove beneficial for us.

\begin{proof}[Proof of \cref{thm:graphicwbo}]
    Assume that $\alpha\ge k \ge 1$ are integers such that there exists a graphic matroid which is not elementarily $(\alpha, k)$-weakly base orderable. Let $G$ be a connected graph representing such a matroid with the minimal number of edges possible. There exist spanning trees $T_1$ and $T_2$ of $G$ not satisfying the elementary $k$-exchange property with $|E(T_1) \setminus E(T_2)| \ge \alpha$. By \cref{lem:redtominor2}, contracting $E(T_1) \cap E(T_2)$ and deleting $E(G) \setminus (E(T_1) \cup E(T_2))$ in $G$ gives a graph which is also not elementarily $(\alpha, k)$-weakly base orderable. Hence,  using that $G$ has the minimum number of edges, we obtain $E(T_1) \cap E(T_2)=\emptyset$ and $E(T_1) \cup E(T_2)=E(G)$. 
    This means that $|E(T_1)|\ge \alpha$ and for any choice of $k$-element subsets $X \subseteq E(T_1)$ and $Y \subseteq E(T_2) $ and any bijection $\varphi\colon X \rightarrow Y$, there exists some $Z \subseteq X$ such that $T_1-Z+\varphi(Z)$ is not a spanning tree of $G$.

    \begin{claim} \label{cl:fewleaves}
        $T_1$ has less than $k$ leaves.
    \end{claim}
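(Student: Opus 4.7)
My plan is to argue by contradiction: assume that $T_1$ has at least $k$ leaves $v_1,\dots,v_k$, and exhibit a bijection witnessing the elementary $k$-exchange property for $(T_1,T_2)$, contradicting the choice of $G$. For each $i\in[k]$, let $e_i$ denote the unique edge of $T_1$ incident to $v_i$ and set $X\coloneqq \set{e_1,\dots,e_k}$. Then $T_1\setminus X$ decomposes into a tree $C_0$ spanning $U_0\coloneqq V(G)\setminus\set{v_1,\dots,v_k}$ together with the $k$ isolated leaves. Contracting $U_0$ in $G$ to a single vertex $u_0$ and discarding the resulting loops produces a multigraph $H$ on $\set{u_0,v_1,\dots,v_k}$ in which each $e_i$ becomes an edge $u_0v_i$.

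The key construction uses $T_2$ to pick $Y$. Since $T_2$ is a spanning tree of $G$ and the contraction of a connected subset of a spanning tree still yields a connected spanning subgraph, the image of $T_2$ in $H$ is connected and spanning, so I can pick a spanning tree $S$ of $H$ whose $k$ edges all lift to edges of $T_2$ in $G$. I root $S$ at $u_0$ and define $f_i$ to be the edge of $S$ joining $v_i$ to its parent $p_i$. The $f_i$'s are pairwise distinct (different children have distinct parent-edges in a rooted tree) and disjoint from $X$ because $T_1$ and $T_2$ are edge-disjoint. Setting $Y\coloneqq\set{f_1,\dots,f_k}$ and $\varphi(e_i)\coloneqq f_i$, what remains is the exchange check.

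For every $I\subseteq[k]$ I need $T_1-\Set{e_i}{i\in I}+\Set{f_i}{i\in I}$ to be a spanning tree of $G$. Since $C_0$ is a spanning tree of $G[U_0]$, it suffices to verify that the edge set $\Set{e_i}{i\notin I}\cup\Set{f_i}{i\in I}$ is a spanning tree of $H$. I assign each $v_i$ a ``parent edge'': $e_i$ (whose other endpoint is $u_0$) if $i\notin I$, and $f_i$ (whose other endpoint is $p_i$) if $i\in I$. Starting at any $v_i$ and following parent edges, the depth in $S$ either strictly decreases (at an $f_j$-step) or drops to $0$ immediately (at an $e_j$-step), so no vertex is revisited and every traversal ends at $u_0$; with $k$ edges on $k+1$ vertices this forces a spanning tree of $H$.

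I expect the main obstacle to be bookkeeping rather than novelty: I must check that the parallel edges created by the contraction do not cause $S$ to use an edge outside $E(T_2)$, and that a spanning tree of $H$ together with $C_0$ lifts unambiguously to a spanning tree of $G$. Both facts follow from the contraction preserving edge identities and from $C_0$ spanning $U_0$, but they deserve explicit mention because different leaves of $T_1$ need not share a neighbour in $T_2$, which is precisely why a naive greedy choice of $f_i$ would fail and the rooted-tree viewpoint in $H$ is needed.
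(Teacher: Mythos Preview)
Your proof is correct and arrives at a bijection with the same essential feature as the paper's (each $f_i$ is a $T_2$-edge incident to $v_i$), but by a genuinely different route. The paper selects $Y$ via Brualdi's bijection (\cref{lem:bijection}): it fixes any $\phi\colon E(T_1)\to E(T_2)$ with $T_1-e+\phi(e)$ a spanning tree for each $e$, sets $y_v\coloneqq\phi(x_v)$, observes that $y_v$ must be incident to the leaf $v$ (since removing $x_v$ isolates $v$), and then argues by contradiction that if some $T_1-Z+\varphi(Z)$ contained a cycle, the $y_v$'s lying on that cycle would already span a cycle inside $T_2$. You avoid Brualdi altogether: by contracting $U_0$ and extracting a spanning tree $S$ of $H$ from the image of $T_2$, you engineer $Y$ so that the rooted structure of $S$ certifies every exchange directly via the depth-decreasing walk to $u_0$. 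Your argument is self-contained and purely graph-theoretic, replacing the paper's indirect cycle-chasing with an explicit verification; the paper's version is shorter once the matroid exchange lemma is in hand.
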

    \begin{claimproof}
        Suppose otherwise, and let $L$ be a set of $k$ leaves of $T_1$.
        For every $v \in L$, let $x_v$ be the unique edge in $\delta_{T_1}(v)$, and let $X\coloneqq \Set{x_v}{v \in L}$.
        By \cref{lem:bijection}, there exists a bijection $\phi\colon E(T_1) \to E(T_2)$ such that $T_1-e+\phi(e)$ is a spanning tree for each $e \in E(T_1)$.
        Let $y_v \coloneqq \phi(x_v)$ for $v \in L$, define $Y \coloneqq \Set{y_v}{v\in L}$ and let $\varphi$ denote the restriction of $\phi$ to $X$. 
        Observe that $|X|=|Y|=|L|=k$.
        By the choice of $G$, $T_1$ and $T_2$, there exists some $Z \subseteq X$ such that $T_1-Z+\varphi(Z)$ contains a cycle $C$.
        Note that $Z$ is nonempty since $T_1$ is a spanning tree.
        Let $v_1,\dots, v_\ell \in L$ such that $E(C)\cap \varphi(Z) = \set{y_{v_1}, \dots, y_{v_\ell}}$.
        Since $T_1-x_{v_i}+y_{v_i}$ is a spanning tree for $i \in [\ell]$, the edge $y_{v_i}$ is incident to $v_i$, that is, $y_{v_i} = \set{u_i, v_i}$ for some $u_i \in V(G)$.
        As $v_i$ is isolated in $T_1-Z$,    
        both edges incident to $v_i$ in $C$ are in $E(C) \cap \varphi(Z) = \set{y_{v_1}, \dots, y_{v_\ell}}$, thus $v_i \in \set{u_1, \dots, u_\ell}$. 
        We showed that $\set{v_1,\dots, v_\ell} \subseteq \set{u_1,\dots, u_\ell}$, hence  $\set{v_1,\dots, v_\ell} = \set{u_1,\dots, u_\ell}$ and each of $v_1,\dots, v_\ell$ has degree two in $\set{y_{v_1},\dots,y_{v_\ell}}$. This contradicts that $T_2$ is a spanning tree.
    \end{claimproof}

    \begin{claim} \label{cl:twopaths}
        $T_1$ does not contain edge-disjoint paths $P_1$ and $P_2$ such that $V(P_1)$ and $V(P_2)$ are linked by at least $2k$ edges.
    \end{claim}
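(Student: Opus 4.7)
\begin{claimproof}
The plan is to derive a contradiction by constructing an elementary $k$-exchange for the pair $(E(T_1),E(T_2))$. Assume $T_1$ contains edge-disjoint paths $P_1=u_0\cdots u_p$ and $P_2=v_0\cdots v_q$ with at least $2k$ linking edges. Since $T_1$ is a tree and $P_1,P_2$ are edge-disjoint, any two shared vertices of $P_1$ and $P_2$ would close a cycle in $T_1$, so $|V(P_1)\cap V(P_2)|\le 1$; by shortening one path at a shared vertex (and treating the degenerate meeting case separately), we reduce to $V(P_1)\cap V(P_2)=\emptyset$. The tree property then forces $T_1[V(P_i)]=P_i$ for $i=1,2$ (any extra edge in $V(P_i)$ would close a cycle), so at most one linking edge lies in $T_1$, leaving at least $2k-1\ge k$ linking edges in $T_2$.

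Next, since $P_2\subseteq T_1\setminus E(P_1)$ is connected, all of $V(P_2)$ lies in a single component $T^{(\beta^*)}$ of $T_1\setminus E(P_1)$ for some index $\beta^*\in\{0,\dots,p\}$; symmetrically $V(P_1)\subseteq S^{(\gamma^*)}$ in $T_1\setminus E(P_2)$. A unique-path analysis in $T_1$ shows that for any linking edge $y=u_av_b\in T_2$, the fundamental cycle $C(y,T_1)$ consists of $y$ together with the $P_1$-edges at positions $I_1(y)=[\min(a,\beta^*),\max(a,\beta^*)-1]$, a fixed path $Q$ in $T_1\setminus(E(P_1)\cup E(P_2))$ from $u_{\beta^*}$ to $v_{\gamma^*}$ (common to every linking edge), and the $P_2$-edges at positions $I_2(y)=[\min(b,\gamma^*),\max(b,\gamma^*)-1]$. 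In particular, only the $P_1$- and $P_2$-portions of $C(y,T_1)$ vary with $y$, so the candidate $x$-edges for the elementary $k$-exchange must come from $E(P_1)\cup E(P_2)$.

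For the main construction, suppose at least $k$ linking edges in $T_2$ have pairwise distinct $a$-values, all different from $\beta^*$. Order them by placing left-type edges ($a_j<\beta^*$) first in decreasing order of $a_j$, followed by right-type edges ($a_j>\beta^*$) in increasing order, and assign $x_j=u_{a_j}u_{a_j+1}$ for left-type and $x_j=u_{a_j-1}u_{a_j}$ for right-type. A direct interval calculation shows that the $k\times k$ matrix $M$, with entry $M_{ij}=1$ if $x_i\in C(y_j,T_1)$ and $0$ otherwise, is block diagonal, with each block being upper triangular with $1$'s on the main diagonal and with vanishing off-diagonal blocks (since left-type positions lie strictly below $\beta^*$ while right-type intervals begin at $\beta^*$). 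Consequently, every principal submatrix of $M$ is block-diagonal upper triangular with $1$'s on the diagonal, hence invertible over $\GF(2)$, so the natural bijection $\varphi(x_j)=y_j$ witnesses the elementary $k$-exchange. A symmetric argument using $P_2$-edges handles the case of $k$ linking edges with pairwise distinct $b$-values different from $\gamma^*$.

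The main obstacle is the remaining case, in which the $\ge k$ linking edges in $T_2$ have fewer than $k$ distinct $a$-values and fewer than $k$ distinct $b$-values simultaneously. The linking edges then pack into a grid of at most $(k-1)\times(k-1)$ pairs $(a,b)$, and a K\"onig--Egerv\'ary together with pigeonhole analysis of the associated bipartite graph yields either a vertex of degree $\ge k$ (which provides $\ge k$ distinct partners on the opposite side, reducing to the main construction) or a sufficiently structured subset of linking edges from which a hybrid $k$-exchange combining $P_1$- and $P_2$-edges can be assembled, with cross-blocks of the resulting matrix vanishing thanks to the precise placement of the $I_1$- and $I_2$-intervals around $\beta^*$ and $\gamma^*$; this technical case is the crux of the argument.
\end{claimproof}
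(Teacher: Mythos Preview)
Your main construction is essentially the paper's: pick linking edges with distinct endpoints on one side of the anchor, pair each $y_j$ with the adjacent $P_1$-edge pointing toward the anchor, and verify the exchange property. (The paper checks this by a direct connectivity induction rather than via the $\GF(2)$ matrix, but the content is the same.)

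The gap is your ``remaining case''. You never actually prove the hybrid $k$-exchange; you only assert that a K\"onig--Egerv\'ary/pigeonhole analysis produces either a high-degree vertex or ``a sufficiently structured subset'' from which it can be assembled. Even the high-degree branch is incomplete: a vertex $u_a$ of degree $\ge k$ gives $\ge k$ distinct $b$-values, but one of them may be $\gamma^*$, so you do not automatically reduce to the symmetric main construction. And for the hybrid construction you give no indication of how the cross-blocks of the exchange matrix are controlled when you mix $P_1$- and $P_2$-edges.

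The paper avoids this case entirely with one line you are missing: the linking edges lie in $T_2$, which is a tree, so they form a forest on $W_1\cup W_2$; hence $|W_1|+|W_2|\ge |E'|+1\ge 2k+1$, and by symmetry $|W_1|\ge k+1$. This yields $k$ distinct $P_1$-endpoints avoiding the anchor $u_{\beta^*}$, so your main construction always applies and the ``remaining case'' is vacuous. Adding this forest count (and being slightly more careful about why one may take the full $2k$ linking edges in $T_2$) closes the argument cleanly.
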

    \begin{claimproof}
        Suppose otherwise, so there exist edge-disjoint paths $P_1$ and $P_2$ and a set $E'$ of at least $2k$ edges in $E(T_2)$ linking $V(P_1)$ and $V(P_2)$.
        For $i \in [2]$, let $W_i$ be the set of vertices in $V(P_i)$ incident to an edge in $E'$.
        As $T_2[W_1\cup W_2]$ is a forest, we obtain $|W_1|+|W_2|\geq |E'|+1 \ge 2k+1$.
        By symmetry, we may suppose that $|W_1|\geq \lceil (|W_1|+|W_2|)/2 \rceil \geq k+1$.
        Let $P_1=v_1,\ldots, v_q$, and let $q_0 \in [q]$ be the unique index such that $v_{q_0}$ is in the same component of $T_1-E(P_1)$ as $V(P_2)$.
        As $|W_1| \ge k+1$, there exist indices $1 \le i_1 < \dots < i_k \le q$ such that $\set{v_{i_1},\dots, v_{i_k}} \subseteq W_1-v_{q_0}$.
        Note that $\set{v_{i_1},\dots, v_{i_k}} \subseteq V(P_1) \setminus V(P_2)$, as if $P_1$ and $P_2$ have a common vertex then $V(P_1) \cap V(P_2) = \set{v_{q_0}}$. 
        For $j \in [k]$, let $y_j$ be an edge in $E'$ incident to $v_{i_j}$, let $x_j \coloneqq v_{i_j}v_{i_j+1}$ if $i_j < q_0$, and let $x_j \coloneqq v_{i_j}v_{i_j-1}$ if $i_j > q_0$.
        Let $X\coloneqq \Set{x_j}{j \in [k]}$, $Y\coloneqq \Set{y_j}{j \in [k]}$ and let $\varphi\colon X \to Y$ be defined by $\varphi(x_j) = y_j$ for $j \in [k]$.
        We prove that $T_1-Z+\varphi(Z)$ is a spanning tree for each $Z \subseteq X$ by showing that each vertex of $P_1$ is contained in the same component of $T_1-Z+\varphi(Z)$ as $V(P_2)$.
        By symmetry, it suffices to prove the statement for vertices $v_s$ with $s \in [q]$ and $s \ge q_0$.
        We prove it by induction on $s$, the statement holds for $s=q_0$ by the definition of $q_0$.
        Let $s \ge q_0+1$.
        If $v_s \not \in \set{v_{i_1},\dots, v_{i_k}}$, or $s=i_j$ for some $j \in [k]$ and $x_j=v_{i_j}v_{i_{j}-1} \not \in Z$, then $T_1-Z+\varphi(Z)$ contains the edge $v_sv_{s-1}$, thus the statement follows by induction.
        Otherwise, $s=i_j$ for some $j \in [k]$ with $x_j \in Z$, thus $T-Z+\varphi(Z)$ contains the edge $y_j$ which links $v_{i_j}$ to $V(P_2)$.
        This shows that $T_1-Z+\varphi(Z)$ is a spanning tree for each $Z \subseteq X$, which is a contradiction by $|X|=|Y|=k$.
    \end{claimproof}
    
    \begin{claim}\label{cl:disjcycles}
        $G$ does not contain a collection of $k$ edge-disjoint cycles each of which contains exactly one edge of $T_2$.
    \end{claim}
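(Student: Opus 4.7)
The plan is to argue by contradiction: if $G$ contains such a collection $C_1,\dots,C_k$ of edge-disjoint cycles, I will use them to directly construct subsets $X\subseteq E(T_1)$, $Y\subseteq E(T_2)$ of size $k$ and a bijection $\varphi\colon X\to Y$ that satisfies the elementary $k$-exchange property, contradicting the choice of $G$, $T_1$, and $T_2$.

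For each $i\in[k]$, let $y_i$ denote the unique edge of $C_i$ lying in $T_2$. Since $E(G) = E(T_1)\cup E(T_2)$ and the two trees are edge-disjoint, the remaining edges of $C_i$ all lie in $T_1$, so $C_i$ is in fact the fundamental cycle of $y_i$ with respect to $T_1$. Pick an arbitrary $T_1$-edge $x_i\in E(C_i)\setminus\{y_i\}$; the edge-disjointness of the $C_i$'s guarantees that $x_1,\dots,x_k$ are pairwise distinct, and of course so are $y_1,\dots,y_k$. Set $X\coloneqq\{x_1,\dots,x_k\}$, $Y\coloneqq\{y_1,\dots,y_k\}$, and define $\varphi(x_i)\coloneqq y_i$.

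The core step is then to verify that $T_1-Z+\varphi(Z)$ is a spanning tree of $G$ for every $Z\subseteq X$, which I would do via \cref{lem:unique} applied to the matroid $M(G)$. The map $\varphi|_Z$ is a bijection from $Z$ to $\varphi(Z)$ such that $T_1-x_i+y_i$ is a spanning tree for each $x_i\in Z$, because $x_i$ lies on the fundamental cycle $C_i$ of $y_i$. For the uniqueness hypothesis of \cref{lem:unique}, suppose some bijection $\phi\colon Z\to\varphi(Z)$ with $\phi(x_i)=y_j$ also yields a spanning tree $T_1-x_i+y_j$. This forces $x_i\in C_j$; but the cycles are edge-disjoint and $x_i\in C_i$, so $j=i$ and $\phi=\varphi|_Z$. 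Hence \cref{lem:unique} gives that $T_1-Z+\varphi(Z)$ is a basis of $M(G)$, i.e., a spanning tree, for every $Z\subseteq X$.

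This contradicts the fact that $(T_1,T_2)$ fails the elementary $k$-exchange property, so no such collection of cycles exists. I do not expect a real obstacle here: once one observes that each $C_i$ is a fundamental cycle of $y_i$ in $T_1$ and that edge-disjointness of the $C_i$'s forces uniqueness of the exchange bijection, the claim follows directly from \cref{lem:unique}.
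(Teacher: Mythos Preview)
Your proof is correct. Both you and the paper set up the same bijection $\varphi(x_i)=y_i$ by choosing one $T_1$-edge from each cycle and use the key observation that $C_i$ is the fundamental cycle of $y_i$ with respect to $T_1$ together with the edge-disjointness of the $C_i$. The only difference is in how you finish: you invoke \cref{lem:unique} (Krogdahl's unique-matching lemma) to conclude that $T_1-Z+\varphi(Z)$ is a spanning tree for every $Z$, whereas the paper proceeds by induction on $|Z|$, showing that the fundamental cycle of $y_C$ with respect to the intermediate tree $T_1-(Z-x_C)+\varphi(Z-x_C)$ is still $C$ (again by edge-disjointness). Both arguments hinge on exactly the same structural fact, so this is a minor stylistic variation rather than a genuinely different route; your appeal to \cref{lem:unique} is slightly slicker since it avoids the inductive bookkeeping.
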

    \begin{claimproof}
        Suppose otherwise and let $\mathcal{C}$ be a collection of such cycles. 
        For every $C \in \mathcal{C}$, let $y_C$ be the unique edge in $E(C)\cap E(T_2)$ and let $x_C$ be an arbitrary edge in $E(C)\cap E(T_1)$.
        Let $X \coloneqq \Set{x_C}{C \in \mathcal{C}}, Y\coloneqq \Set{y_C}{C \in \mathcal{C}}$, and let $\varphi\colon X \rightarrow Y$ be the mapping with $\varphi(x_C)=y_C$ for every $C \in \mathcal{C}$. Observe that $|X|=|Y|=k$. 
        We prove by induction on $|Z|$ that $T_1-Z+\varphi(Z)$ is a spanning tree for each $Z \subseteq X$.
        If $|Z| \ge 1$, then let $x_C \in Z$ be an arbitrary element.
        $T \coloneqq T_1-(Z-x_C)+\varphi(Z-x_C)$ is a spanning tree by the induction hypothesis.
        The fundamental cycle of $y_C$ with respect to $T$ is $C$, in particular it contains $x_C$, thus $T_1-Z+\varphi(Z) = T-x_C+y_C$ is a spanning tree. This contradicts the choice of $G$, $T_1$, and $T_2$. 
    \end{claimproof}

    By \cref{cl:fewleaves} and \cref{cor:decomp}, we obtain that there exists a collection $\mathcal{P}$ of at most $k-2$ edge-disjoint paths in $T_1$ such that $\bigcup_{P \in \mathcal{P}}E(P)=E(T_1)$.
    By \cref{cl:twopaths}, for any two distinct paths $P_1, P_2 \in \mathcal{P}$ at most $2k-1$ edges link $V(P_1)$ to $V(P_2)$.
    If $V(P)$ induces at least $2k^2$ edges of $T_2$ for some $P \in \mathcal{P}$, then \cref{lem:cutcycles} implies that either there exists a set of $2k$ edges of $E(T_2)$ linking two edge-disjoint subpaths of $P$ or there exists a collection of $k$ edge-disjoint cycles in $G$ each of which contains exactly one edge of $E(T_2)$.  We obtain contradictions to Claims~\ref{cl:twopaths} and \ref{cl:disjcycles}, respectively. Therefore, $V(P)$ induces at most $2k^2-1$ edges of $T_2$ for each $P \in \mathcal{P}$, hence
    \[|E(T_2)| \le \binom{k-2}{2} \cdot (2k-1) + (k-2)\cdot (2k^2-1) = \frac{6k^3-19k^2+15k-2}{2}< 3k^3.\]
    We got that $\alpha \le |E(T_2)| < 3k^3$ for any $\alpha \ge k$ such that some graphic matroid is not elementarily $(\alpha, k)$-weakly base orderable, that is, graphic matroids are elementarily $(3k^3, k)$-weakly base orderable.  
\end{proof}

\subsection{Two forbidden labels}
The objective of this section is to prove the following restatement of the case $|F|=2$ of \cref{conj:k}. 
Vaguely speaking, we first reduce the problem to matroids on six elements and then combine some earlier results with a particular treatment for the cycle matroid of $K_4$.

In order to prove the result, we need to prove that an $F$-avoiding basis can be found by exchanging at most two elements which we do by means of a contradiction. The following two results allow us to only consider the case when the matroid has rank three. This also follows from the proof of \cite[Theorem~2]{liu2023congruency}, but we give proofs to keep the paper self-contained. 

\begin{lemma}\label{lem:distance-F-reduction}
    Let $M$ be a matroid, $\psi\colon E(M)\rightarrow \Gamma$ a group labeling and $F\subseteq \Gamma$ a finite set of labels. 
    Suppose that there exists at least one $F$-avoiding basis of $M$.
    If there exists a basis $A$ such that $|A\setminus A'|\geq |F|+1$ for all $F$-avoiding bases $A'$ of $M$, then there also exists a basis $B$ and an $F$-avoiding basis $B^*$ such that $|B\setminus B^*|=|F|+1$ and $|B\setminus B'|\geq |F|+1$ for all $F$-avoiding bases $B'$.
\end{lemma}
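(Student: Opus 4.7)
The approach is to walk from $A$ to some $F$-avoiding basis via single-element exchanges and pick the last ``far'' basis on the walk. Concretely, I would fix an arbitrary $F$-avoiding basis $A^*$, which exists by assumption, and then use \cref{lem:bijection} iteratively to construct a sequence of bases $A = B_0, B_1, \ldots, B_k = A^*$ satisfying $|B_i \setminus B_{i-1}| = 1$ for each $i \in [k]$. This is standard: whenever the current basis $B_{i-1}$ differs from $A^*$, \cref{lem:bijection} produces $e \in B_{i-1}\setminus A^*$ and $f \in A^*\setminus B_{i-1}$ such that $B_i \coloneqq B_{i-1}-e+f$ is a basis, and this operation strictly shrinks the symmetric difference with $A^*$, so the process terminates at $A^*$.

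Next, call a basis $C$ of $M$ \emph{far} if $|C \setminus C'| \geq |F|+1$ for every $F$-avoiding basis $C'$. Let $i$ be the largest index in $\set{0,\dots,k}$ such that $B_i$ is far. Such an index exists because $B_0 = A$ is far by hypothesis, and we have $i < k$ because $B_k = A^*$ is itself $F$-avoiding and hence not far. By maximality, $B_{i+1}$ is not far, so there exists an $F$-avoiding basis $B^*$ with $|B_{i+1}\setminus B^*| \le |F|$. Using the triangle-type inequality $|X\setminus Z|\le |X\setminus Y|+|Y\setminus Z|$ together with $|B_i\setminus B_{i+1}|=1$, we deduce
\[ |B_i \setminus B^*| \le |B_i \setminus B_{i+1}| + |B_{i+1}\setminus B^*| \le 1 + |F|, \]
and the reverse inequality $|B_i \setminus B^*| \ge |F|+1$ is immediate from $B_i$ being far. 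Consequently $|B_i \setminus B^*| = |F|+1$, and setting $B \coloneqq B_i$ yields a pair $(B, B^*)$ with the required properties.

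I do not expect any real obstacle; the argument is essentially automatic once the base-exchange sequence is in hand. The only things to be careful about are the off-by-one in choosing the maximal index $i$ (ensuring both $B_i$ exists and $B_{i+1}$ exists) and the use of the asymmetric triangle inequality for set differences, both of which are routine.
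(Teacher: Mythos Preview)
Your proof is correct. The paper's own argument is a minor variant: rather than walking from $A$ toward a fixed $F$-avoiding basis and picking the last far basis on the walk, it chooses $B$ extremally as a far basis minimizing the distance to its nearest $F$-avoiding basis $B^*$, and then derives a contradiction (by exchanging one element of $B$ toward $B^*$) if that distance exceeds $|F|+1$; both arguments rest on the same observation that a single exchange changes the set difference by at most one.
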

\begin{proof}
    Let $B$ be a basis of $M$ for which the minimum of $|B\setminus B'|$ over all $F$-avoiding bases $B'$ is minimum among all bases that satisfy $|B\setminus B'|\geq |F|+1$ for all $F$-avoiding bases $B'$ and let $B^*$ be an $F$-avoiding basis for which this minimum is attained. Observe that $B$ is well-defined by assumption and if $|B\setminus B^*|=|F|+1$, there is nothing to prove. We may hence suppose that $|B\setminus B^*|\geq |F|+2$. By the basis exchange axiom, there exist $x \in B\setminus B^*$ and $y \in B^*\setminus B$ such that $B_1\coloneqq B-x+y$ is a basis. Clearly, by $|B\setminus B_1|=1$ and the choice of $B$, we have $\psi(B_1)\in F$. If there exists an $F$-avoiding basis $B_1^*$ with $|B_1\setminus B_1^*|\leq |F|$, we obtain $|B\setminus B_1^*|\leq |B\setminus B_1|+|B_1\setminus B_1^*|\leq |F|+1 $, a contradiction. We hence obtain $|B_1\setminus B'|\geq |F|+1$ for every $F$-avoiding basis $B'$. Further, we have $|B_1\setminus B^*|=|B\setminus B^*|-1$, a contradiction to the choice of $B$.
  \end{proof}

  \begin{proposition}\label{prop:shrink}
      Let $M$ be a matroid, $\psi\colon E(M)\rightarrow \Gamma$ a group labeling, $F \subseteq \Gamma$ a finite subset of labels and let $B$ be a basis such that $|B \setminus \bar{B}|\geq  |F|+1$ for every $F$-avoiding basis $\bar{B}$. Let $M'$ be obtained from $M$ by contracting a set $X \subseteq B$, let $B'\coloneqq B\setminus X$ and let $F'\coloneqq \Set{f -\psi(X)}{f \in F}$. Then, $|B' \setminus \bar{B'}|\geq  |F|+1$ for every basis $F'$-avoiding basis $\bar{B'}$ of $M'$.
  \end{proposition}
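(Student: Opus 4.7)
The plan is to prove the contrapositive-flavored statement by lifting an $F'$-avoiding basis of $M'$ back to an $F$-avoiding basis of $M$. Concretely, given an $F'$-avoiding basis $\bar{B'}$ of $M'$, I would set $\bar{B} \coloneqq \bar{B'} \cup X$. Since $\bar{B'}$ is a basis of $M' = M/X$ and $X \subseteq B$ is independent in $M$, standard matroid theory tells us that $\bar{B}$ is a basis of $M$.

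Next I would verify that $\bar{B}$ is $F$-avoiding in $M$. Since $\bar{B'}$ and $X$ are disjoint (because $X$ is contracted in $M'$, so $X \cap E(M') = \emptyset$), we have $\psi(\bar{B}) = \psi(\bar{B'}) + \psi(X)$. The $F'$-avoidance of $\bar{B'}$ means $\psi(\bar{B'}) \notin F' = \Set{f - \psi(X)}{f \in F}$, which is equivalent to $\psi(\bar{B'}) + \psi(X) \notin F$, i.e., $\psi(\bar{B}) \notin F$.

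Now I invoke the hypothesis on $B$: since $\bar{B}$ is an $F$-avoiding basis of $M$, we must have $|B \setminus \bar{B}| \geq |F| + 1$. Finally, I compute the symmetric difference explicitly. Writing $B = B' \cup X$ (disjoint union since $B' = B \setminus X$) and $\bar{B} = \bar{B'} \cup X$, and using that $X$ appears in both $B$ and $\bar{B}$, we get
\[
B \setminus \bar{B} = (B' \cup X) \setminus (\bar{B'} \cup X) = B' \setminus \bar{B'},
\]
so $|B' \setminus \bar{B'}| = |B \setminus \bar{B}| \geq |F| + 1$, which is exactly the desired conclusion.

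There is no real obstacle here — the only thing to be careful about is the bookkeeping of the labels after contraction (making sure $F'$ is defined precisely so that lifting preserves $F$-avoidance) and the disjointness $\bar{B'} \cap X = \emptyset$, which follows from the fact that contracted elements are removed from the ground set of $M'$.
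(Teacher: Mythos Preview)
Your proposal is correct and follows essentially the same approach as the paper: lift an $F'$-avoiding basis $\bar{B'}$ of $M'$ to $\bar{B}=\bar{B'}\cup X$, check that $\bar{B}$ is an $F$-avoiding basis of $M$, and observe $B\setminus\bar{B}=B'\setminus\bar{B'}$. The only cosmetic difference is that the paper phrases this as a proof by contradiction while you argue directly.
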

  \begin{proof}
      Suppose otherwise, so there exists a basis $B_1'$ of $M'$ which is $F'$-avoiding and satisfies $|B'\setminus B_1'|\leq |F|$. Let $B_1\coloneqq B_1'\cup X$, then $B_1$ is a basis of $M'=M/X$ by the definition of contraction. Observe that $\psi(B_1)=\psi(B_1')+\psi(X)\notin F$ as $\psi(B_1')\notin F'$. Further, by construction and as $B'$ and $B_1'$ are bases of $M'$, we have $|B_1|=|B|-|B'|+|B_1'|=|B|$. Hence $B_1$ is an $F$-avoiding basis of $M$ with $|B\setminus B_1|=|B'\setminus B_1'|\leq |F|$, a contradiction.
  \end{proof}

It will turn out that in the forthcoming proof, the cycle matroid of $K_4$ will play a crucial role. It needs to be treated separately due to the following result. Mayhew and Royle \cite{Mayhew2008nine} showed by the implementation experiment that the number of rank-3 matroids on six elements which is not strongly base orderable is one. Since $M(K_4)$ is not strongly base orderable, we can see that a rank-3 matroid on six elements is either isomorphic to $M(K_4)$ or strongly base orderable.  

\begin{proposition}[Mayhew--Royle \cite{Mayhew2008nine}]\label{prop:6bo}
    Let $M$ be a rank-3 matroid on six elements that is not isomorphic to $M(K_4)$. Then, $M$ is strongly base orderable.
\end{proposition}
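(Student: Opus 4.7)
The plan is to prove the proposition by classifying rank-3 matroids on six elements up to isomorphism and verifying strong base orderability in each case except $M(K_4)$. A first reduction shows it suffices to treat simple matroids: a loop belongs to no basis and can be deleted (reducing to rank 3 on at most five elements, where strong base orderability is easy to verify directly), and a non-trivial parallel class can be collapsed to its simplification because parallel copies are freely interchangeable in any exchange bijection. After these reductions we are left with simple rank-3 matroids on six elements.

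A simple rank-3 matroid on six elements is determined by its collection of \emph{lines}, i.e., maximal rank-2 flats of size at least three. If $\ell_k$ denotes the number of lines containing exactly $k$ points, then $\sum_{k\ge 2}\ell_k\binom{k}{2}=\binom{6}{2}=15$, and any two distinct lines meet in at most one point. These constraints bound the possible line-profiles to a small list, which one enumerates. The configuration with $\ell_3=4$ and $\ell_k=0$ for $k\neq 3$ corresponds uniquely (up to isomorphism) to $M(K_4)$, with the four lines realizing the four triangles of $K_4$ on its four vertices. For every other admissible profile, the plan is to produce, for each pair of bases $B_1,B_2$, an explicit bijection $\varphi\colon B_1\setminus B_2 \to B_2\setminus B_1$ such that $(B_1\setminus Z)\cup\varphi(Z)$ is a basis for every $Z\subseteq B_1\setminus B_2$. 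The case $|B_1\cap B_2|\ge 2$ is immediate since then $|B_1\setminus B_2|\le 1$, so only the cases $|B_1\cap B_2|\in\{0,1\}$ require work; in those cases one exploits the sparsity of lines (relative to $M(K_4)$) to choose a bijection that avoids every dependent triple appearing along the way.

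The main obstacle is the sheer amount of case work. Even after the reduction to simple matroids, the number of distinct line-profiles on six points is nontrivial, and within each profile several classes of basis pairs must be examined, which is why Mayhew and Royle carried out the verification by computer enumeration rather than by hand. A cleaner conceptual argument would show directly that any failure of the exchange property between two disjoint bases of a simple rank-3 matroid on six elements forces the four-triangle configuration of $M(K_4)$; I do not see a short route to such an argument without some residual case analysis on the incidence between the lines and the exchange pairs.
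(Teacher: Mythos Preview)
The paper does not supply an independent proof of this proposition. It is stated as a result of Mayhew and Royle, and the surrounding text makes the argument explicit: their computer enumeration shows that exactly one rank-$3$ matroid on six elements fails to be strongly base orderable, and since $M(K_4)$ is known not to be strongly base orderable, it must be that one. That is the entire justification the paper offers.

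Your proposal is consistent with this. The reductions you outline (deleting loops, passing to the simplification, reducing to $|B_1\cap B_2|\le 1$) are all sound, and your line-counting setup is the natural way to organise a by-hand enumeration of simple rank-$3$ matroids on six points. You also correctly identify that the profile $\ell_3=4$ singles out $M(K_4)$. However, you explicitly do not carry out the case analysis, and you acknowledge that this is precisely why Mayhew and Royle resorted to computer verification. So what you have written is an accurate description of how one \emph{would} prove the statement, not a proof itself.

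In short, your treatment and the paper's coincide: both defer to the Mayhew--Royle enumeration rather than produce a self-contained argument. If the goal were to give a complete standalone proof, the missing piece is exactly the exhaustive verification you describe; there is no shortcut in the paper that you are overlooking.
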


We now give our main result for the special case that the matroid in consideration is the cycle matroid of $K_4$.
\begin{lemma}\label{lem:k4}
    Let $M\coloneqq M(K_4)$, let $\psi\colon E(M)\to \Gamma$ be a group labeling and let $F \subseteq \Gamma$ be a 2-element set such that $M$ has at least one $F$-avoiding basis. Then, for any basis $B$, there exists an allowed basis $B^*$ with $|B\setminus B^*|\leq 2$.
\end{lemma}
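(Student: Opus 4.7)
My plan is to split the proof by the graph-theoretic type of $B$ as a spanning tree of $K_4$: every spanning tree of $K_4$ is either a \emph{star} at some vertex (4 such trees) or a \emph{path} visiting all 4 vertices (12 such trees). If $B$ is a star at $v$, then every spanning tree contains at least one edge incident to $v$ and all such edges lie in $B$, so $|B \cap B'| \ge 1$ and hence $|B \setminus B'| \le 2$ for every basis $B'$; the hypothesized $F$-avoiding basis then works. So I may assume $B$ is a path, in which case the complement $\tilde B \coloneqq E(M) \setminus B$ is also a spanning tree, and an inspection of $K_4$ shows that $\tilde B$ is the only basis disjoint from $B$; every other basis $B' \ne \tilde B$ satisfies $|B \setminus B'| \le 2$. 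Thus either some basis within distance $2$ of $B$ is $F$-avoiding and we are done, or the only possible $F$-avoiding basis is $\tilde B$. I assume the latter for contradiction: all 15 bases at distance at most $2$ from $B$ have label in $F = \set{f_1, f_2}$, and I will show $\psi(\tilde B) \in F$ as well, contradicting the hypothesis that some $F$-avoiding basis exists.

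Normalize $\psi(B) = f_1$ and set $g \coloneqq f_2 - f_1 \ne 0$, so $\psi(B') - \psi(B) \in \set{0, g}$ for each close basis $B'$. When $2g = 0$ (i.e., $\Z_2 \le \Gamma$), any sum of elements of $\set{0, g}$ remains in $\set{0, g}$; fixing any bijection $\phi\colon B \to \tilde B$ as in \cref{lem:bijection}, we decompose $\psi(\tilde B) - \psi(B) = \sum_{e \in B}\bigl(\psi(B - e + \phi(e)) - \psi(B)\bigr)$ as a sum of three $\set{0,g}$-valued distance-$1$ differences, giving $\psi(\tilde B) \in F$ immediately. Otherwise $2g \ne 0$, and I introduce the seven distance-$1$ differences $v_1, \dotsc, v_7 \in \set{0, g}$, one per valid single exchange from $B$. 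The seven distance-$2$ bases yield constraints $v_i + v_j \in \set{0, g}$, each equivalent to ``$v_i$ and $v_j$ are not both equal to $g$'' in view of $2g \notin \set{0, g}$. Moreover, $M(K_4)$ admits exactly three bijections $\phi\colon B \to \tilde B$ with $B - e + \phi(e)$ a basis for every $e \in B$; these yield three distinct expressions of $\psi(\tilde B) - \psi(B)$ as a sum of three of the $v_i$'s, and equating them produces two linear relations among the $v_i$'s.

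A short case analysis on how many of $v_i, v_j, v_k$ equal $g$ in one fixed such decomposition then completes the proof. If two or three of them equal $g$, then $\psi(\tilde B) - \psi(B) \in \set{2g, 3g}$ lies outside $\set{0, g}$; and each of these four configurations propagates via the two linear relations to force the remaining $v_i$'s, until one of the seven ``not both $g$'' distance-$2$ constraints is violated. This shows $\psi(\tilde B) - \psi(B) \in \set{0, g}$ and yields the required contradiction. The main obstacle is the bookkeeping of this final case analysis, but the three bijections from \cref{lem:bijection} create enough overlap among the $v_i$'s that the number of configurations to rule out is small.
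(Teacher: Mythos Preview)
Your proposal is correct and leads to a valid proof, but it takes a noticeably different route from the paper's. You first dispose of the star case (trivial), then in the path case set up seven single-exchange differences $v_{ij}=\psi(f_j)-\psi(e_i)\in\{0,g\}$, seven distance-$2$ constraints ``not both $g$'', and the three Brualdi bijections $B\to\tilde B$; a four-case analysis (plus the separate $2g=0$ shortcut) then forces $\psi(\tilde B)\in F$. The paper instead proves a single symmetric-exchange claim: whenever $B-X+Y$ and $B^*+X-Y$ are both bases (with $\emptyset\ne X\subsetneq B$), both must have label $f_2$, because their labels sum to $\psi(B)+\psi(B^*)=f_1+\psi(B^*)$ and each lies in $\{f_1,f_2\}$. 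Applying this to the five single exchanges that admit a symmetric partner immediately gives $\psi(b_i)-\psi(a_j)=f_2-f_1$ for all $i,j$, and then computing $\psi(B^*-b_1+a_1)$ two ways yields $f_1=f_2$. The paper's argument is uniform (no split on $2g$, no configuration enumeration) and shorter; yours is more elementary and self-contained, needing only \cref{lem:bijection}, but at the cost of the bookkeeping you mention. One small inaccuracy: your remark that ``$\{2g,3g\}$ lies outside $\{0,g\}$'' fails when $g$ has order $3$ (then $3g=0$), but this is harmless since your case analysis rules out the ``three $g$'s'' configuration anyway via the distance-$2$ constraint on $(v_{11},v_{33})$.
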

\begin{proof}
    Suppose otherwise, so there exist two bases $B$ and $B^*$ of $M$ such that $|B\setminus B'|\geq 3$ for every allowed basis $B'$ and $B^*$ is an allowed basis. By symmetry, we may suppose that $B=\set{a_1,a_2,a_3}$ and $B^*=\set{b_1,b_2,b_3}$ where for $i \in [3]$, the edges of $K_4$ corresponding to  $a_i$ and $b_i$ are marked in \cref{k4}.
    \begin{figure}[h!]
    \centering
        \includegraphics[width=.2\textwidth]{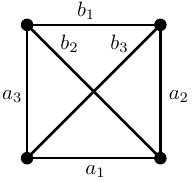}
        \caption{A drawing of $K_4$ with the edge labeling used in the proof of \cref{lem:k4}.}\label{k4}
\end{figure}Let $F=\set{f_1,f_2}$. By symmetry, we may suppose that $\psi(B)=f_1$.

    \begin{claim}\label{cl:two-forbidden-simul}
    For any $\emptyset \ne X \subsetneq B$ and $\emptyset \ne Y \subsetneq B^*$ such that $B - X + Y$ and $B^* + X - Y$ are bases, we have $\psi(B - X + Y) = \psi(B^* + X - Y) = f_2$.
  \end{claim}
   \begin{claimproof}
    Let $B_1=B - X + Y$ and $B_2=B^* + X - Y$. For $i \in [2]$, the label of $B_i$ is $f_1$ or $f_2$ since otherwise $B_i$ is an allowed basis and is closer to $B$ than $B^*$.
    In addition, if $B_i$ has label $f_1$ for some $i \in [2]$, we obtain
    \begin{align*}\label{eq:two-forbidden-eq}
        \psi(B_{3-i})=\psi(B) + \psi(B^*)-\psi(B_i) = f_1+\psi(B^*)-f_1=\psi(B^*),
    \end{align*}
    a contradiction to the previous observation.
  \end{claimproof}
  As $B-a_1+b_1$ and $B^*-b_1+a_1$ are bases of $M$, we obtain $\psi(B-a_1+b_1)=f_2$ by \cref{cl:two-forbidden-simul} and hence $\psi(b_1)-\psi(a_1)=\psi(B-a_1+b_1)-\psi(B)=f_2-f_1$ . Similar arguments show $\psi(b_2)-\psi(a_1)=\psi(b_3)-\psi(a_1)=\psi(b_1)-\psi(a_2)=\psi(b_1)-\psi(a_3)=f_2-f_1$. In particular, we obtain $\psi(a_1)=\psi(a_2)=\psi(a_3)$ and $\psi(b_1)=\psi(b_2)=\psi(b_3)$, so $\psi(b_i)-\psi(a_j)=f_2-f_1$ for all $i,j \in [3].$ As $B-a_1+b_1$ and $B^*-b_1+a_1$ are bases of $M$, we obtain $\psi(B^*-b_1+a_1)=f_2$  by \cref{cl:two-forbidden-simul}.

  This yields
   \begin{align*}
        f_2=\psi(B^*-b_1+a_1)=\psi(B)+(\psi(b_2)-\psi(a_2))+(\psi(b_3)-\psi(a_3))=f_1+2(f_2-f_1),
    \end{align*}
    so $f_1=f_2$, which is a contradiction because $f_1$ and $f_2$ are distinct labels. 
 \end{proof}
 
We are now ready to prove the main result of this section. 

\begin{restatable}{theorem}{forb}\label{thm:two-forbidden}
    Let $M$ be a matroid, $\psi\colon E(M)\rightarrow \Gamma$ a group labeling, and let $F$ be a 2-element subset of $\Gamma$. For any basis $B$, there exists an $F$-avoiding basis $B^*$ such that $|B \setminus B^*| \le 2$, provided that there exists at least one $F$-avoiding basis.
\end{restatable}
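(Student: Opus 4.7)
The plan is to reduce the general statement to the case of rank-$3$ matroids on $6$ elements, and then invoke the dichotomy between $M(K_4)$ and strongly base orderable matroids that was established in Proposition~\ref{prop:6bo}. Suppose for contradiction that some basis $B$ of $M$ satisfies $|B \setminus B'| \ge 3$ for every $F$-avoiding basis $B'$. By \cref{lem:distance-F-reduction} applied with $|F| = 2$, we may further assume that there is an $F$-avoiding basis $B^*$ with $|B \setminus B^*| = 3$ (replacing $B$ and $B^*$ if necessary).

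Let $X \coloneqq B \cap B^*$, and consider the minor
\[
M'' \coloneqq (M \mathbin{/} X) \mathbin{\backslash} \bigl(E(M) \setminus (B \cup B^*)\bigr).
\]
Since $B \setminus X$ is a basis of $M \mathbin{/} X$ that avoids the deleted set, $M''$ has rank $3$, and its ground set $(B \setminus X) \cup (B^* \setminus X)$ has exactly $6$ elements. Set $F' \coloneqq \Set{f - \psi(X)}{f \in F}$. \Cref{prop:shrink} shows that $B \setminus X$ is bad in $M \mathbin{/} X$ with respect to $F'$, that is, $|(B \setminus X) \setminus \bar{B}| \ge 3$ for every $F'$-avoiding basis $\bar{B}$ of $M \mathbin{/} X$. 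Because every basis of $M''$ is also a basis of $M \mathbin{/} X$ (deletion only removes bases) and labels are preserved, the same inequality holds for every $F'$-avoiding basis $\bar{B}$ of $M''$. Moreover, $B^* \setminus X$ is itself a basis of $M''$ with label $\psi(B^*) - \psi(X) \notin F - \psi(X) = F'$, so $M''$ admits at least one $F'$-avoiding basis.

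By \cref{prop:6bo}, $M''$ is either strongly base orderable or isomorphic to $M(K_4)$. In the first case, \cref{cor:sbo} (applied with $|F'| = 2$ and the basis $B \setminus X$) produces an $F'$-avoiding basis $\bar{B}$ of $M''$ with $|(B \setminus X) \setminus \bar{B}| \le 2$, contradicting the previous paragraph. In the second case, \cref{lem:k4} applied to $M'' \cong M(K_4)$ with the basis $B \setminus X$ yields the same contradiction. Hence no such bad $B$ exists, proving the theorem.

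The main conceptual obstacle has already been absorbed into \cref{lem:k4}, which handles the one rank-$3$ exceptional matroid on six elements that is not strongly base orderable; here the only task is to set up the reduction to six elements carefully and track how the forbidden set $F$ shifts to $F'$ under contraction. No further case analysis on $\Gamma$ or on the structure of $F$ is needed.
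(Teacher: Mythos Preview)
Your proof is correct and follows essentially the same approach as the paper: reduce via \cref{lem:distance-F-reduction} and \cref{prop:shrink} to a rank-$3$ matroid on six elements, then invoke the dichotomy of \cref{prop:6bo} together with \cref{cor:sbo} and \cref{lem:k4}. If anything, you are slightly more explicit than the paper in justifying why the ``bad basis'' property survives the deletion step (bases of $M''$ are bases of $M/X$), which the paper leaves implicit.
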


\begin{proof}

Suppose otherwise. Then, by \cref{lem:distance-F-reduction}, there exists a basis $B$ with $|B \setminus B'| \ge 3$ for all $F$-avoiding bases $B'$ and an $F$-avoiding basis $B_1$ with $|B \setminus B_1| = 3$. Let $M'\coloneqq M/(B \cap B_1)\setminus (E(M)\setminus(B \cup B_1))$, $B'\coloneqq B\setminus B_1$, $B_1'\coloneqq B\setminus B_1$, and $F'\coloneqq\Set{f -\psi(X)}{f \in F}$.  We obtain by \cref{prop:shrink} that $|B' \setminus \bar{B'}|\geq  3$ for every $F'$-avoiding basis $\bar{B'}$ of $M'$.  Further, we have $\psi(B_1')=\psi(B_1)-\psi(X)\notin F'$. Observe that $M'$ is a matroid on 6 elements. By \cref{prop:6bo}, we obtain that $M'$ is either strongly base orderable or isomorphic to $M(K_4)$. If $M'$ is strongly base orderable, we obtain a contradiction to \cref{cor:sbo}. If $M'$ is isomorphic to $M(K_4)$, we obtain a contradiction to \cref{lem:k4}.
\end{proof}

\section{Hardness and Negative Results} \label{sec:hard}

In this section, we give the algorithmic hardness results and counterexamples contained in this work. \cref{hard1,hard2}  contain algorithmic intractability results and \cref{sec:counter} contains a counterexample to a conjecture of Liu and Xu~\cite{liu2023congruency}.

\subsection{Hardness of Non-zero Common Basis with
\texorpdfstring{$\Z_2 \le \Gamma$}{Z2 <= Γ}}\label{hard1}
We here show that \textsc{Non-Zero Common Basis} is intractable for any group $\Gamma$ such that $\Z_2 \le \Gamma$. This implies that the condition on $\Gamma$ in \cref{thm:non-zero-common-basis-poly} is crucial indeed. Then, we explain the relation of this result to common base packing and dual lattices.

\begin{restatable}{theorem}{nonzerocommonbasishard}
\label{thm:non-zero-common-basis-hard}
    \textscup{Non-Zero Common Basis} requires an exponential number of independence queries for any fixed group $\Gamma$ such that $\Z_2 \le \Gamma$.    
\end{restatable}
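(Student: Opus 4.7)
The plan is to reduce the general case to $\Gamma = \Z_2$ and then run an information-theoretic adversary argument. For the reduction, fix an element $g \in \Gamma$ of order two; given any $\Z_2$-instance $(M_1, M_2, \psi)$ with $\psi \colon E \to \Z_2$, define $\tilde\psi\colon E \to \Gamma$ by $\tilde\psi(e) = \psi(e) \cdot g$. Since $g$ has order two, $\tilde\psi(B) = \psi(B)\cdot g$ is non-zero in $\Gamma$ iff $\psi(B) \ne 0$ in $\Z_2$, and the independence oracles of $M_1, M_2$ are untouched by this change of labeling, so an oracle lower bound for $\Z_2$ transfers to any $\Gamma \ge \Z_2$ at no extra cost in queries.

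For the $\Z_2$ case, I would use sparse paving matroids. Recall that a rank-$r$ sparse paving matroid on $E$ is specified by a family $\mathcal{F} \subseteq \binom{E}{r}$ of circuit-hyperplanes whose members pairwise have symmetric difference at least four; the independence oracle on an $r$-set $S$ returns ``dependent'' precisely when $S \in \mathcal{F}$, and modifying $\mathcal{F}$ by one $r$-set alters the oracle answers on exactly that one set. I will take $E = A \sqcup B$ with $|A| = |B| = n$ for large $n$, labels $\psi|_A \equiv 0$ and $\psi|_B \equiv 1$, and rank-$n$ matroids, so that a basis $S$ has non-zero label iff $|S \cap B|$ is odd.

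The heart of the plan is to construct a \emph{hard pair} $(M_1^0, M_2^0)$ of rank-$n$ sparse paving matroids on $E$ together with an exponentially large family $\mathcal{H} \subseteq \mathcal{F}(M_1^0) \cap \mathcal{F}(M_2^0)$ of odd-parity $n$-subsets, such that every odd-parity $n$-subset belongs to $\mathcal{F}(M_1^0) \cup \mathcal{F}(M_2^0)$. Then $(M_1^0, M_2^0)$ is a NO instance, since every common basis lies outside both forbidden families and hence has even $|S \cap B|$. For each witness $S^* \in \mathcal{H}$, one obtains a YES instance $(M_1^{S^*}, M_2^{S^*})$ by removing $S^*$ from both forbidden families; then $S^*$ becomes a non-zero common basis. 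Crucially, the NO instance and any such YES instance answer all independence queries identically except on the single $n$-set $S^*$. An adversary that answers all queries according to the NO instance therefore keeps every YES instance $(M_1^{S^*}, M_2^{S^*})$ alive for which $S^*$ has not been queried on either side; if the algorithm makes fewer than $|\mathcal{H}|/2$ independence queries in total, both the NO instance and some YES instance remain consistent with the transcript, and the algorithm cannot answer correctly on both.

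The main obstacle is producing such a hard pair with $|\mathcal{H}|$ of size $2^{\Omega(n)}$. The sparse paving constraint rules out the naive choice $\mathcal{F}(M_1^0) = \mathcal{F}(M_2^0) = \Set{S \in \binom{E}{n}}{|S \cap B| \text{ odd}}$, because two odd-parity $n$-subsets can differ by a single swap within $A$ or within $B$. The task reduces to covering the $2^{\Omega(n)}$ odd-parity $n$-subsets by the union of two constant-weight binary codes of length $2n$, weight $n$, and minimum Hamming distance four, whose pairwise intersection is still exponentially large. A probabilistic argument based on a random two-coloring of the odd-parity $n$-subsets, combined with a deletion step enforcing the distance-four condition, should yield such a pair; explicit constructions coming from the Johnson scheme provide an alternative worth exploring.
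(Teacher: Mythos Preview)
Your reduction to $\Gamma=\Z_2$ and the adversary scaffold are exactly right, but the two-sparse-paving construction you propose is impossible to realize, not merely hard. For the NO instance you need every odd-parity $n$-subset of $E=A\sqcup B$ to lie in $\mathcal{F}(M_1^0)\cup\mathcal{F}(M_2^0)$, while each $\mathcal{F}(M_i^0)$ must have pairwise Hamming distance at least four. Now fix any $b\in B$ and look at the $n$ odd-parity $n$-subsets with $S\cap B=\{b\}$ and $S\cap A$ an $(n-1)$-subset of $A$; any two of these are at Hamming distance two, so they form an $n$-clique in the distance-two graph. For $n\ge 3$ you therefore cannot cover all of them with two distance-four codes. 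Worse, any $S\in\mathcal{H}=\mathcal{F}(M_1^0)\cap\mathcal{F}(M_2^0)$ would force every odd-parity distance-two neighbor of $S$ out of both families, leaving it uncovered; since every odd-parity $n$-subset with $0<|S\cap B|<n$ has such neighbors, $|\mathcal{H}|\le 1$. No probabilistic or deletion argument can fix this.

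The paper sidesteps the obstacle by making one of the two matroids a \emph{unitary partition matroid}: pair up the ground set as $\{a_1,b_1\},\dots,\{a_r,b_r\}$ and let $N$ be the matroid whose bases are the transversals. Now the common bases are restricted to transversals, and two odd-parity transversals automatically intersect in at most $r-2$ elements (differing in a single pair flips the parity). Hence the family $\mathcal{H}$ of odd-parity transversals itself satisfies the sparse-paving distance condition, so a single sparse paving matroid $M$ with circuit-hyperplanes $\mathcal{H}$ suffices: all common bases of $(M,N)$ are zero, while for each $H\in\mathcal{H}$ removing $H$ from the forbidden family gives a YES instance $(M_H,N)$ distinguishable from $(M,N)$ only by querying $H$. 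Since $|\mathcal{H}|=2^{r-1}$, the adversary bound follows.
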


We will use the following statement which can be derived from e.g.\ \cite[Theorem~5.3.5]{frank2011connections}.
The matroids arising in the form as in \cref{lem:sp} coincide with \emph{sparse paving matroids} of rank at least two. 

\begin{lemma} \label{lem:sp}
Let $r \ge 2$ be an integer, $E$ a ground set of size at least $r$, and $\cH \subseteq \binom{E}{r}$ such that $|H \cap H'| \le r-2$ for each distinct $H_1, H_2 \in \cH$. Then, $\binom{E}{r}\setminus \cH$ forms the family of bases of a matroid of rank $r$.
\end{lemma}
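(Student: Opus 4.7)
The plan is to verify directly from the basis exchange axiom in the paper's definition of a matroid. Set $\cB\coloneqq\binom{E}{r}\setminus\cH$. I need to check that (i) $\cB$ is nonempty, and (ii) for any $B_1,B_2\in\cB$ and $x\in B_1\setminus B_2$, there exists $y\in B_2\setminus B_1$ with $B_1-x+y\in\cB$. Nonemptiness is an implicit hypothesis of the statement (it is automatic under the stated assumptions except when $|E|=r$ and $\cH$ equals the unique $r$-subset of $E$, in which case the rank-$r$ assertion is already vacuous, so we focus on (ii)).

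The central observation is the following sparsity lemma implied by the hypothesis on $\cH$. Fix $B_1\in\cB$ and $x\in B_1$. If two distinct elements $y,y'\in E\setminus(B_1-x)$ both satisfied $B_1-x+y\in\cH$ and $B_1-x+y'\in\cH$, then those two members of $\cH$ would share the $(r-1)$-element set $B_1-x$ in their intersection, violating the assumption $|H\cap H'|\le r-2$ for distinct $H,H'\in\cH$. Hence, for every $B_1\in\cB$ and every $x\in B_1$, there is at most one $y\in E\setminus(B_1-x)$ such that $B_1-x+y\in\cH$.

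With this in hand, the exchange axiom follows by a short case analysis on $|B_2\setminus B_1|$. If $|B_2\setminus B_1|\ge 2$, then by the sparsity observation applied to $B_1$ and $x$, all but at most one of the candidate swap elements $y\in B_2\setminus B_1$ produce a set outside $\cH$, giving the desired $y$. If $|B_2\setminus B_1|=1$, then the unique element $y\in B_2\setminus B_1$ satisfies $B_1-x+y=B_2\in\cB$, which is exactly what is needed. The case $|B_2\setminus B_1|=0$ cannot occur since $x\in B_1\setminus B_2$ is nonempty. Finally, every member of $\cB$ has size $r$, so the matroid has rank $r$.

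The only step that requires care is the sparsity observation; once that is in place, the rest is a routine verification, so I do not anticipate a serious obstacle.
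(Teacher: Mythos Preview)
Your proof is correct. The sparsity observation is exactly the right idea: once you know that for each fixed $(r-1)$-subset $B_1-x$ at most one $r$-subset in $\cH$ can contain it, the basis exchange axiom follows immediately by the case split you give. The nonemptiness caveat you flag is genuine but harmless in all the paper's applications (where $|E|\ge 2r$); in fact, whenever $|E|\ge r+1$ the family $\cB$ is automatically nonempty, since any two $r$-subsets sharing an $(r-1)$-set cannot both lie in $\cH$.

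As for comparison with the paper: the paper does not prove \cref{lem:sp} at all but simply cites \cite[Theorem~5.3.5]{frank2011connections}. Your argument is therefore a self-contained alternative; it has the advantage of being elementary and not relying on an external reference, at the cost of a few lines of direct verification.
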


Using this construction and an information-theoretic argument, we prove \cref{thm:non-zero-common-basis-hard}. We note that the construction of the sparse paving matroid used in our proof is closely related to the notion of $\Pi$-matroids introduced in \cite{doronarad2023tight}.

\begin{proof}[Proof of \cref{thm:non-zero-common-basis-hard}]
    As $\Gamma$ contains $\Z_2$ as a subgroup, we may suppose  that $\Gamma = \Z_2$. Let $r \ge 2$, $A=\set{a_1,\dots, a_r}$ and $B=\set{b_1,\dots, b_r}$ be disjoint sets of size $r$, and $E\coloneqq A \cup B$. Consider \[\cH \coloneqq \Set{H \subseteq E}{\text{$|H \cap \set{a_1,b_1}| = \dots = |H\cap \set{a_r, b_r}| = 1$, $|H \cap A|$ is odd}}.\]
    Observe that $|H_1 \cap H_2| \le r-2$ holds for any two distinct $H_1, H_2 \in \cH$.
    Indeed, if $|H_1 \cap A| = |H_2 \cap A|$, then $H_1\setminus H_2$ contains at least one element of $A$ and at least one element of $B$, otherwise $|H_1 \cap A|$ and $|H_2 \cap A|$ differ by at least two since both of them are odd. 
    Then, by \cref{lem:sp}, $\binom{E}{r}\setminus \cH$ forms the basis family of a sparse paving matroid $M$. Let $N$ denote the unitary partition matroid defined by partition classes $\set{a_1, b_1}, \dots, \set{a_r, b_r}$, and $\psi\colon E \to \Z_2$ the labeling defined by $\psi(a_1)=\dots = \psi(a_r) = 1$ and $\psi(b_1) = \dots = \psi(b_r) = 0$. The common bases of $M$ and $N$ are exactly the sets containing one element from each of the pairs $\set{a_1, b_1}, \dots, \set{a_r, b_r}$ and an even number of elements from $A$, hence all common bases of $M$ and $N$ have label zero.
    For any $H \in \cH$ let $M_H$ denote the sparse paving matroid with family of bases $\binom{E}{r}\setminus (\cH \setminus \set{H})$. Observe that $M_H$ is well-defined for all $H \in \cH$  by \cref{lem:sp}. Since $H$ is a common basis of $M_H$ and $N$ having label 1, any algorithm for \textsc{Non-Zero Common Basis} must distinguish between the instance $M, N, \psi$ and the instances $M_H, N, \psi$ for $H \in \cH$. Since the only way to distinguish $M$ and an $M_H$ is to query the independence of $H$, any algorithm solving \textsc{Non-Zero Common Basis} must query the independence of each $H \in \cH$, and thus uses at least $|\cH| = 2^{r-1}$ independence queries.
\end{proof}

\begin{remark}
     Bérczi and Schwarcz~\cite{berczi2021complexity} showed that any algorithm which decides if the common ground set of two matroids can be partitioned into two common bases requires an exponential number of oracle queries. 
     Observe that if $r$ is odd in the proof of \cref{thm:non-zero-common-basis-hard}, then $E$ cannot be partitioned into two common bases of $M$ and $N$, while $(H, E\setminus H)$ is a partition into two common bases of $M_H$ and $N$ for any $H \in \cH$. 
     This gives a new and simpler proof of the result of \cite{berczi2021complexity}.
\end{remark}

The previous connection of a partitioning problem and a certain non-zero problem is no coincidence: relaxations of partitioning problems are related to certain non-zero problems for the group $\R/ \Z \simeq [0,1)$ via \emph{dual lattices}. The \emph{dual} of a lattice $L\subseteq \R^n$ is the set $L^* \coloneqq \Set{y \in \R^n}{y^T x \in \Z ~ (x \in L)}$. For a set family $\mathcal{F} \subseteq 2^E$, the dual of $\lat(\mathcal{F})$ is $\lat^*(\mathcal{F}) \coloneqq \Set{y \in \R^E}{y(F) \in \Z ~ (F\in \mathcal{F})}$ by letting $y(F)\coloneqq \sum_{e \in F} y_e$ for $F \subseteq E$. Therefore, deciding membership in the dual lattice $\lat^*(\mathcal{F})$ is equivalent to deciding the existence of a non-zero member of $\mathcal{F}$ for a labeling to the group $\R/\Z \cong [0, 1)$. By Hermite's theorem~\cite[Theorem~4.1.23]{frank2011connections}, $\lat(\mathcal{F})$ is related to its dual by $\lat(\mathcal{F}) = \Set{x \in \R^E}{\nexists y \in \lat^*(\mathcal{F}) \text{ with } y^Tx \not \in \Z}$.

The problem of partitioning $E$ into members of $\mathcal{F}$ is equivalent to finding values $x_F \in \{0,1\}$ for each $F \in \mathcal{F}$ such that $\sum_{e \in F} x_F = 1$ for each $e \in E$. By relaxing the condition $x_F \in \{0,1\}$ to $x_F \in \Z$ for each $F \in \mathcal{F}$, we get a relaxation which is equivalent to deciding the membership of the all-one vector $\ones$ in $\lat(\mathcal{F})$. By the above relation of $\lat(\mathcal{F})$ to its dual, $\ones \notin \lat(\mathcal{F})$ is equivalent to the existence of $y \in \lat^*(\mathcal{F})$ with $y^T \ones \notin \Z$, that is, the existence of $y \in \R^E$ such that $y(F) \in \Z$ for all $F \in \mathcal{F}$ and $y(E) \notin \Z$. 
This can be formulated as the existence of a labeling $\psi\colon E \to \R/\Z$ such that $\psi(E) \ne 0$ and all members of $\mathcal{F}$ have label zero.

\begin{remark}
The proof of \cref{thm:non-zero-common-basis-hard} also shows the hardness of the previous relaxation in case $\mathcal{F}$ is the family of common bases of two matroids, that is, any algorithm deciding $\ones \in \lat(\mathcal{F})$ requires an exponential number of oracle queries. Indeed, if $r$ is odd, then $\ones \in \lat(\cB(M_H) \cap \cB(N))$ for any $H \in \mathcal{H}$ as $(H, E\setminus H)$ is a partition into common bases of $M_H$ and $N$, while $\ones \not \in \lat(\cB(M) \cap \cB(N))$ as the labeling $\psi$ given in the proof satisfies $\psi(E) \ne 0$ and $\psi(B)=0$ for all $B \in \cB(M)\cap \cB(N)$.
\end{remark}

\subsection{Hardness of Zero Basis and Zero Common Basis}\label{hard2}
In this section, we show three hardness results for \textscup{Zero Basis} and \textscup{Zero Common Basis}. First, we show in \cref{thm:zerobasishard} that \textscup{Zero Basis} is \NPhard even if $M$ is a uniform matroid and $\Gamma=\Z$. Next, in \cref{thm:zero-basis-infinite}, we give a result showing that any algorithm solving \textscup{Zero Basis} for a matroid given by an independence oracle and a fixed infinite group $\Gamma$ requires an exponential number of independence queries. Finally, in \cref{thm:zero-common-base-fixed}, we show a similar result for \textscup{Zero Common Basis} even if the group is arbitrary nontrivial.

We now start by proving that \textscup{Zero Basis} is hard even for uniform matroids by using the hardness of the well-known \textsc{Subset Sum} problem. 
Recall that \textsc{Subset Sum} is the problem of determining whether there exists a subset $F$ of a given set of integers $S \subseteq \Z$ whose elements sum up to a given target value $t \in \Z$.
The following well-known result can be found in \cite[(8.23)]{kleinberg2006}.

\begin{proposition} \label{prop:subsetsum}
    \textscup{Subset Sum} is \NPhard even for instances $(S,t)$ with $s>0$ for all $s \in S$.
\end{proposition}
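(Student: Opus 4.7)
The plan is to reduce from a known \NPhard problem whose instances naturally consist only of positive integers. The cleanest choice is the \textscup{Partition} problem: given positive integers $a_1, \dots, a_n$, decide whether there exists $I \subseteq [n]$ with $\sum_{i \in I} a_i = \sum_{i \notin I} a_i$. This is a classical \NPhard problem, and it is essentially already a positive \textscup{Subset Sum} instance in disguise.

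Concretely, given a \textscup{Partition} input $a_1, \dots, a_n$, I would set $S \coloneqq \{a_1, \dots, a_n\}$ and $t \coloneqq \frac{1}{2} \sum_{i=1}^n a_i$. If $\sum_{i=1}^n a_i$ is odd, the \textscup{Partition} instance is trivially infeasible and I output any trivial no-instance of positive \textscup{Subset Sum} (for example $S = \{1\}$, $t = 2$). Otherwise, all elements of $S$ are strictly positive and, by construction, a subset of $S$ sums to $t$ if and only if the \textscup{Partition} instance admits a balanced bipartition. This is a polynomial-time many-one reduction, establishing the claim.

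I do not anticipate any real obstacle, since this is a textbook-level reduction. Two minor technicalities to watch for are: (i) if \textscup{Subset Sum} is formally taken over sets (rather than multisets), equal $a_i$'s need to be made distinct, which can be handled by replacing each $a_i$ by $2n \cdot a_i + i$ and shifting $t$ accordingly so that the unique low-order part $\sum_{i \in I} i$ pins down $I$; and (ii) one could entirely bypass invoking \textscup{Partition} and instead cite the direct reduction from \textscup{3-SAT} presented in~\cite{kleinberg2006}, which assigns to each literal and each clause a base-$4$ number, all of which are positive by design. Either route yields the proposition.
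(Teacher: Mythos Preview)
The paper does not prove this proposition at all; it simply cites it as a well-known fact from \cite[(8.23)]{kleinberg2006}. Your option~(ii) is therefore exactly what the paper does, and your primary route via \textscup{Partition} is a perfectly standard alternative that the paper does not pursue.

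One small caution on your set-versus-multiset fix: replacing $a_i$ by $2n\cdot a_i + i$ does make the elements distinct, but then the value $\sum_{i\in I} i$ depends on the unknown solution $I$, so there is no single target $t'$ you can ``shift to'' in advance. As written, this step does not yield a many-one reduction. The clean way out is the one you already note: the direct \textscup{3-SAT}~$\to$~\textscup{Subset Sum} reduction in Kleinberg--Tardos produces distinct positive integers by construction, so no patching is needed.
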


\cref{prop:subsetsum} implies hardness of \textsc{Zero Basis} as follows.

\begin{restatable}{theorem}{zerobasishard} \label{thm:zerobasishard}
    \textscup{Zero Basis} is \NPhard for a uniform matroid and $\Gamma = \Z$.
\end{restatable}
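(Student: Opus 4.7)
The plan is to reduce from \textscup{Subset Sum} restricted to positive integers, which is \NPhard by \cref{prop:subsetsum}. Given an instance $(S, t)$ with $S = \set{s_1, \dotsc, s_n}$ and each $s_i > 0$, I would first observe that we may additionally assume $t > 0$: the case $t = 0$ is witnessed by the empty subset, and $t < 0$ is infeasible by positivity, so both are decidable in constant time. I would then construct an instance of \textscup{Zero Basis} on the ground set $E \coloneqq \set{a_1, \dotsc, a_n, z_1, \dotsc, z_n, c}$ of size $2n+1$, let $M$ be the uniform matroid on $E$ of rank $n+1$, and define $\psi\colon E \to \Z$ by $\psi(a_i) \coloneqq s_i$ and $\psi(z_i) \coloneqq 0$ for $i \in [n]$, and $\psi(c) \coloneqq -t$.

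The main step is to verify that $(S, t)$ is a yes-instance of \textscup{Subset Sum} if and only if $M$ admits a zero basis. For the forward direction, given $F \subseteq [n]$ with $\sum_{i \in F} s_i = t$, the bound $t > 0$ forces $1 \le |F| \le n$, so the set $B \coloneqq \Set{a_i}{i \in F} \cup \Set{z_j}{1 \le j \le n-|F|} \cup \set{c}$ has exactly $n+1$ elements, is therefore a basis of $M$, and satisfies $\psi(B) = t + 0 - t = 0$. For the converse, suppose $B$ is a zero basis and set $F \coloneqq \Set{i \in [n]}{a_i \in B}$. If $c \in B$, then $0 = \psi(B) = \sum_{i \in F} s_i - t$, so $F$ is a subset-sum solution. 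If $c \notin B$, then $\psi(B) = \sum_{i \in F} s_i$, which by the strict positivity of the $s_i$ forces $F = \emptyset$, leaving $B \subseteq \set{z_1, \dotsc, z_n}$ and contradicting $|B| = n + 1 > n$.

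Since the reduction runs in polynomial time, \NPhardness of \textscup{Zero Basis} for uniform matroids and $\Gamma = \Z$ follows. There is no substantive obstacle in this plan; the only subtle point is to rule out a zero basis that avoids $c$, which is precisely what the combination of the zero-labeled padding $z_1, \dotsc, z_n$ with the strict positivity assumption supplied by \cref{prop:subsetsum} achieves.
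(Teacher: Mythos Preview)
Your proposal is correct and takes essentially the same approach as the paper: the same $2n{+}1$-element ground set with $n$ positively labeled elements, $n$ zero-labeled padding elements, and one element labeled $-t$, with the uniform matroid of rank $n{+}1$. Your extra preprocessing step to assume $t>0$ is harmless but not needed (even $|F|=0$ gives a valid basis $\{z_1,\dots,z_n,c\}$), and your backward case split on whether $c\in B$ is logically equivalent to the paper's argument.
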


\begin{proof}
    Let $(S,t)$ be an instance of \textsc{Subset Sum} with $s>0$ for all $s \in S$. Let $k \coloneqq |S|$ and $E=\Set{x_s}{s \in S} \cup \set{y_1,\dots, y_k} \cup \set{z}$ a set of size $2k+1$. Let $M$ be the uniform matroid on $E$ having rank $k+1$. Define $\psi\colon E\to \Z$ by $\psi(x_s)=s$ for all $s \in S$, $\psi(y_i)=0$ for all $i \in [k]$, and $\psi(z)=-t$.

    We now show that $(S,t)$ is a positive instance of \textsc{Subset Sum} if and only if $M$ has a zero basis. First suppose that $(S,t)$ is a positive instance of \textsc{Subset Sum}, so there exists a set $F \subseteq S$ with $\sum_{f \in F}f=t$. Then, $B\coloneqq \Set{x_f}{f \in F}\cup \set{y_1,\ldots,y_{k-|F|}, z}$ is a zero basis of $M$.
    Now suppose that $M$ contains a zero basis $B$. As $|F|>|Y|$, we obtain that $B$ contains at least one element of $X \cup \set{z}$. Hence, if $z$ is not contained $B$, we obtain $\psi(B)=\psi(B \cap X)>0$, a contradiction. We hence have $z \in B$. Let $F\coloneqq \Set{s \in S}{x_s \in B}$.
    We obtain $\sum_{s \in F} s =\sum_{x \in X \cap B}\psi(x)=\psi(B)-\psi(z)=-\psi(z)=t$. Hence $(S,t)$ is a positive instance of \textsc{Subset Sum}.
\end{proof}

It can be derived from \cite[Theorem~1.4]{doronarad2023tight} that \textsc{Zero Basis} is also hard for finite groups given as operation tables.

\begin{theorem}[see Doron-Arad--Kulik--Shachnai {\cite[Theorem~1.4]{doronarad2023tight}}]\label{thm:zero-base-table}
    \textscup{Zero Basis} requires an exponential number of independence queries for a finite group $\Gamma$ given as an operation table.
\end{theorem}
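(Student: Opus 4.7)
My plan is to derive this theorem from Theorem~1.4 of~\cite{doronarad2023tight}, which establishes an exponential independence-oracle query lower bound for the exact-weight matroid basis problem with polynomially bounded integer weights, via a query-preserving reduction that respects the polynomiality of the group's operation table.

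The reduction would go as follows. From Theorem~1.4 of~\cite{doronarad2023tight}, extract a family of instances $(M, w, t)$ with $w \colon E(M) \to \Zp$, $t \in \Zp$, and $\max_{e \in E(M)} w(e) \le W$ for some $W = \mathrm{poly}(|E(M)|)$, on which any algorithm for the exact-weight basis problem requires exponentially many independence queries. Let $r \coloneqq \mathrm{rank}(M)$, set $m \coloneqq rW + t + 1$, and take $\Gamma \coloneqq \Z_m$; since $m = \mathrm{poly}(|E(M)|)$, the operation table of $\Gamma$ has polynomial size. Build a \textsc{Zero Basis} instance by appending a single coloop $e_0$ to $M$ and defining $\psi \colon E(M) \cup \{e_0\} \to \Z_m$ by $\psi(e) \coloneqq w(e) \bmod m$ for $e \in E(M)$ and $\psi(e_0) \coloneqq (-t) \bmod m$.

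The crucial observation is that no modular wraparound occurs: for every basis $B$ of $M$, $0 \le w(B) \le rW < m$ and $0 \le t < m$, so $\psi(B \cup \{e_0\}) = (w(B) - t) \bmod m$ vanishes in $\Z_m$ if and only if $w(B) = t$ in $\Z$. Hence the augmented matroid has a zero basis exactly when $M$ has a basis of weight $t$. Since every independence query to the augmented matroid reduces to at most one query to $M$ (the coloop $e_0$ is known a priori and its independence status needs no query), any algorithm solving \textsc{Zero Basis} in $q$ queries yields an algorithm for the exact-weight problem in $q$ queries, and the exponential lower bound from~\cite{doronarad2023tight} carries over.

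The principal obstacle I anticipate is verifying that the weights and target in the construction of~\cite{doronarad2023tight} are indeed of polynomial magnitude; if they are only pseudopolynomial, the cyclic group $\Z_m$ might be too large to specify in polynomial space. In that event, one could either inspect their construction to reduce the weight range, or prove the result directly in the spirit of our Theorem~\ref{thm:non-zero-common-basis-hard}, using a sparse paving matroid together with a labeling into a cyclic group $\Z_m$ of polynomial order that encodes the combinatorial structure so that the putative zero bases are hidden among an exponentially large family of candidates indistinguishable by polynomially many oracle queries.
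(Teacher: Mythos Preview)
The paper does not supply a proof of this theorem; it merely asserts that the statement ``can be derived from~\cite[Theorem~1.4]{doronarad2023tight}'' and records it as a citation. Your proposal is precisely the natural way to fill in that derivation: reduce exact-weight basis with polynomially bounded integer weights to \textsc{Zero Basis} over a cyclic group $\Z_m$ of polynomial order by the coloop-shift trick (which the paper itself uses elsewhere to move the target label), observing that no wraparound occurs because $w(B) < m$.

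Your caveat about the magnitude of the weights in~\cite{doronarad2023tight} is the right thing to flag, and it is indeed the only point that needs checking to make the derivation airtight. The construction there is based on sparse paving matroids (the paper remarks that its own sparse-paving construction in the proof of \cref{thm:non-zero-common-basis-hard} is closely related to the $\Pi$-matroids of~\cite{doronarad2023tight}), and the weights used are small integers, so the group order $m$ is polynomial as required. Your fallback plan --- a direct sparse-paving argument in the style of \cref{thm:non-zero-common-basis-hard} or \cref{thm:zero-common-base-fixed} --- would also work and is in the same spirit as the cited source.
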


We next show that \textsc{Zero Basis} is intractable for any fixed infinite group.
To this end, we use the following structural theorem on abelian groups with finite exponents.
Here, the \emph{exponent} of a group means the least common multiple of the orders of the elements.

\begin{theorem}[{first Prüfer theorem; see~\cite[Theorem~5.2]{fuchs2015abelian}}]\label{thm:first-prufer}
    Any abelian group with a finite exponent is isomorphic to a direct sum of cyclic groups.
\end{theorem}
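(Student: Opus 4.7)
The plan is to reduce to the case of abelian $p$-groups via primary decomposition and then decompose each such group into a direct sum of cyclic summands by a Zorn's lemma argument.

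First, let $\Gamma$ have finite exponent $n = p_1^{e_1}\cdots p_k^{e_k}$, and for each $i$ set $\Gamma_{p_i} \coloneqq \Set{g \in \Gamma}{p_i^{e_i} g = 0}$. Writing $m_i \coloneqq n/p_i^{e_i}$, the integers $m_1,\dotsc,m_k$ have overall $\gcd$ equal to $1$ (since $p_i \nmid m_i$), so Bezout's identity furnishes coefficients $a_1,\dotsc,a_k$ with $\sum_i a_i m_i = 1$. The maps $\pi_i(g)\coloneqq a_i m_i g$ are then commuting endomorphisms whose images lie in the $\Gamma_{p_i}$, with $\sum_i \pi_i = \mathrm{id}_\Gamma$ and $\pi_i|_{\Gamma_{p_j}} = 0$ for $i\ne j$ (since $p_j^{e_j}$ divides $m_i$). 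This exhibits $\Gamma = \Gamma_{p_1}\oplus\cdots\oplus\Gamma_{p_k}$, so it suffices to treat the case where $\Gamma$ is an abelian $p$-group of exponent $p^e$.

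Second, I would establish the following splitting lemma for such $\Gamma$: if $x\in\Gamma$ has the maximal order $p^e$, then $\langle x\rangle$ is a direct summand. By Zorn's lemma, pick $K\le\Gamma$ maximal subject to $K\cap\langle x\rangle = 0$. If $K+\langle x\rangle\ne\Gamma$, choose $y\in\Gamma\setminus(K+\langle x\rangle)$ whose order modulo $K+\langle x\rangle$ is smallest possible. Then $py\in K+\langle x\rangle$, say $py = k + cx$ for some $k\in K$ and $c\in\Z$; using that $x$ has maximal order $p^e$, one can divide $c$ by $p$ and adjust $y$ by a multiple of $x$ to arrange $py\in K$. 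Then $\langle K,y\rangle$ strictly contains $K$ yet still intersects $\langle x\rangle$ trivially, contradicting the maximality of $K$.

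Finally, to extract a full decomposition I would apply Zorn's lemma a second time, now to the poset of independent families $\{x_\alpha\}_{\alpha\in A}\subseteq\Gamma$ (meaning $\sum_\alpha\langle x_\alpha\rangle$ is internally direct) whose span $H\coloneqq\bigoplus_\alpha\langle x_\alpha\rangle$ is a direct summand of $\Gamma$. The splitting lemma applied to an element of maximal order in $\Gamma/H$ shows that a maximal such family must satisfy $H=\Gamma$. The main obstacle will be precisely this infinite step: extracting cyclic summands one at a time only exhausts $\Gamma$ in the countable case, so one needs the Zorn setup above (equivalently, the classical device of a \emph{pure} independent subset, which in a bounded-exponent group is automatically a direct summand) to patch the decomposition together in full generality.
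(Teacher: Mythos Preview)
The paper does not supply its own proof of this theorem: it is quoted as the first Pr\"ufer theorem with a reference to Fuchs' textbook and used as a black box in the proof of \cref{lem:infinite-group-set}. So there is nothing to compare your argument against within the paper itself.

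That said, your outline is the standard route and is essentially correct. The primary decomposition in your first step is fine. In the splitting lemma, the point you gloss over is why $p\mid c$: from $p^e y=0$ you get $p^{e-1}k+p^{e-1}cx=0$ with the two summands in $K$ and $\langle x\rangle$ respectively, so each vanishes and $p^e\mid p^{e-1}c$; after replacing $y$ by $y-(c/p)x$ you indeed have $py\in K$, and then any relation $mx=k'+ny$ forces $p\mid n$ (since $y$ has order $p$ modulo $K+\langle x\rangle$), hence $ny\in K$ and $mx=0$. The genuinely delicate point is exactly the one you flag in the last paragraph: to apply Zorn's lemma to independent families whose span is a direct summand, you must check that the union over a chain is again a direct summand, and this is where boundedness is used via the fact that a pure subgroup of a bounded abelian group splits off. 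You acknowledge this but do not carry it out; if you were writing this up in full you would need to prove (or cite) that statement, since without it the chain condition for Zorn fails.
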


\begin{lemma}\label{lem:infinite-group-set}
    Let $\Gamma$ be an infinite abelian group and $n \in \N$.
    Then, there exists a subset $G \subseteq \Gamma$ of cardinality $n$ such that $\sum_{g \in H} s_g g \ne 0$ for any nonempty $H \subseteq G$ and $s_g \in \set{-1, +1}$ for $g \in H$.
\end{lemma}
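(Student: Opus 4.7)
The plan is to split based on whether $\Gamma$ has a finite exponent. Recall that the exponent is infinite if and only if for every $M \in \N$ there is an element of $\Gamma$ of order exceeding $M$ (possibly an element of infinite order), since otherwise all element orders would lie in $\set{1,\dotsc,M}$ and the exponent would divide $\mathrm{lcm}(1,\dotsc,M)$. The two cases are handled by quite different constructions.

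Suppose first that $\Gamma$ has infinite exponent. Pick $g \in \Gamma$ whose order is either infinite or strictly larger than $2^n - 1$, and set $G \coloneqq \set{g, 2g, 4g, \dotsc, 2^{n-1} g}$. For any nonempty $H \subseteq G$ and any choice of signs $s_h \in \set{-1,+1}$, the signed sum equals $kg$, where $k = \sum_{i \in I} s_i 2^i$ for the subset $I \subseteq \set{0,\dotsc,n-1}$ corresponding to $H$ (and appropriate signs $s_i \in \set{-1,+1}$). If $j = \max I$, then $|k - s_j 2^j| \le 2^j - 1$ by the geometric series bound, so $1 \le |k| \le 2^n - 1$; by the choice of $g$ we obtain $kg \ne 0$. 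The same bound applied to the differences $(2^i - 2^j)g$ shows that the $n$ listed elements are pairwise distinct, so $|G| = n$.

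Now suppose $\Gamma$ has finite exponent. By the first Prüfer theorem (\cref{thm:first-prufer}), $\Gamma \cong \bigoplus_{j \in J} \Z_{m_j}$ as a direct sum of nontrivial cyclic groups. Since each summand is finite (as $m_j$ divides the exponent) while $\Gamma$ is infinite, the index set $J$ must itself be infinite, and in particular $|J| \ge n$. Choose distinct indices $j_1, \dotsc, j_n \in J$, and for each $i \in [n]$ let $g_i \in \Gamma$ be the element that is a fixed nonzero element of $\Z_{m_{j_i}}$ in coordinate $j_i$ and zero in every other coordinate; set $G \coloneqq \set{g_1,\dotsc,g_n}$. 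For any nonempty $H \subseteq [n]$, any signs $s_i \in \set{-1,+1}$, and any fixed $i \in H$, the $j_i$-th coordinate of $\sum_{i' \in H} s_{i'} g_{i'}$ equals $\pm g_i \ne 0$ (note that $\pm g_i$ are both nonzero even when $g_i$ has order two), so the sum itself is nonzero in $\Gamma$.

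The main conceptual step is recognizing the dichotomy; once it is in place, each case is essentially immediate. The infinite-exponent case is a standard ``signed binary expansion'' argument exploiting that integer combinations of distinct powers of two with $\pm 1$ coefficients are nonzero, while the finite-exponent case reduces, via Prüfer, to the trivial observation that picking one nonzero entry in each of $n$ distinct coordinates of a direct sum gives a dissociated set.
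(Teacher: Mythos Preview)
Your proof is correct and follows essentially the same approach as the paper: both split on whether $\Gamma$ has finite exponent, use powers $2^i g$ of a high-order element in the infinite-exponent case, and invoke the first Pr\"ufer theorem to pick coordinate generators in the finite-exponent case. Your version is slightly more careful (you verify $|G|=n$ explicitly and note that $\pm g_i \ne 0$ even when $m_{j_i}=2$), but the argument is otherwise identical.
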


\begin{proof}
    First, suppose that $\Gamma$ has an infinite exponent.
    Then, we can take an element $g \in \Gamma$ with the order at least $2^n$.
    Thus, $G = \set{2^0g, 2^1g, \dotsc, 2^{n-1}g}$ satisfies the desired property as any signed sum of a nonempty subset of $G$ is in the form of $ag$ for some non-zero integer $a$ with $|a| \le 2^n - 1$.
    If $\Gamma$ has a finite exponent, it is decomposed as $\Gamma \simeq \bigoplus_{\lambda \in \Lambda} \Z_{m(\lambda)}$ by \cref{thm:first-prufer}, where $m$ is a map from an index set $\Lambda$ to an integer greater than one.
    Specifically, $|\Lambda| = \infty$ follows from $|\Gamma| = \infty$.
    Taking $n$ distinct indices $\lambda_1, \dotsc, \lambda_n \in \Lambda$, define $G = \set{g_1, \dotsc, g_n} \subseteq \Gamma$ as follows: each $g_i$ is an element of $\Gamma$, identified with an (infinite) tuple of elements of cyclic groups indexed by $\Lambda$, such that the $\lambda_i$th component is $1 \in \Z_{m(\lambda_i)}$ and the others are zero.
    Then, $G$ generates a subgroup of $\Gamma$ isomorphic to $\bigoplus_{i=1}^n \Z_{m(\lambda_i)}$ and any signed sum of a nonempty subset of $G$ cannot be zero.
\end{proof}

Our hardness proof for \textsc{Zero Basis} with an infinite group is based on a reduction from \emph{matroid parity}, which is the following problem: given a matroid whose ground set is partitioned into pairs, we find a basis consisting of pairs.
Matroid parity is known to be intractable for general matroids.

\begin{proposition}[{\cite{jensen1982complexity,lovasz1980matroid}}]\label{prop:parity-hard}
    Solving matroid parity requires an exponential number of independence oracle queries.
\end{proposition}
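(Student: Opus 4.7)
Since the proposition is cited from \cite{jensen1982complexity, lovasz1980matroid}, my approach would mirror the classical adversary argument, adapted to the style of Theorem~\ref{thm:non-zero-common-basis-hard}. Fix an even integer $n$ and consider the ground set $E = \{a_1, b_1, \ldots, a_n, b_n\}$ of size $2n$ partitioned into pairs $P_i = \{a_i, b_i\}$, and rank-$n$ matroids. A \emph{parity basis} is a basis that is a union of $n/2$ complete pairs.

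The plan has three steps. First, identify a combinatorial family $\mathcal{F}$ of $n$-subsets of $E$ that pairwise intersect in at most $n-2$ elements. Concretely, $\mathcal{F}$ can be taken to consist of parity bases corresponding to a constant-weight binary code (length $n$, weight $n/2$, minimum Hamming distance $4$), which by classical coding bounds has size exponential in $n$. Second, for each subfamily $\mathcal{F}' \subseteq \mathcal{F}$, invoke Lemma~\ref{lem:sp} to obtain a sparse paving matroid $M_{\mathcal{F}'}$ whose non-bases are exactly $\mathcal{F}'$. Third, run an adversary argument: when the algorithm queries whether a set $S \subseteq E$ is independent, the adversary answers ``yes'' unless already committed otherwise by a previous answer, thereby keeping as many candidate matroids in $\{M_{\mathcal{F}'} : \mathcal{F}' \subseteq \mathcal{F}\}$ consistent with the history as possible. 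Since an independence query only restricts the family if its underlying subset lies in $\mathcal{F}$, each query eliminates at most a polynomial fraction of candidates; subexponentially many queries therefore leave exponentially many candidates in play, and one shows that these candidates produce different matroid-parity answers.

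The main obstacle is Step~2: arranging that the matroid-parity answer actually varies across $\{M_{\mathcal{F}'}\}$. Since two parity bases can share up to $n-1$ elements (when they differ in just one pair), one cannot directly make \emph{all} parity bases non-bases via a single sparse paving construction; a finer argument is needed to exhibit two subfamilies, one whose sparse paving matroid admits a surviving parity basis and another whose matroid kills every parity basis. Getting this dichotomy to coexist with an exponentially large family of indistinguishable candidates is the technical heart of the proof and is where the constructions of \cite{jensen1982complexity, lovasz1980matroid} become involved --- likely using layered paving, matroid sums, or transversal structures rather than a single sparse paving layer. Once such a construction is in hand, the adversary/decision-tree conclusion proceeds exactly as in the proof of Theorem~\ref{thm:non-zero-common-basis-hard}. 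In the present paper, however, only the statement is needed, since the subsequent use is as a black box in reducing to the infinite-group setting via Lemma~\ref{lem:infinite-group-set}.
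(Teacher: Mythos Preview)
The paper does not prove this proposition; it is quoted from \cite{jensen1982complexity,lovasz1980matroid} and used only as a black box in the proof of \cref{thm:zero-basis-infinite}. So there is nothing to compare against, and your closing remark that ``only the statement is needed'' is exactly how the paper treats it.

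That said, your sketch contains a concrete slip that manufactures a nonexistent obstacle. You write that ``two parity bases can share up to $n-1$ elements (when they differ in just one pair).'' This is false: a parity basis is a union of $n/2$ complete pairs, so if two of them differ at all they differ in at least two full pairs' worth of choices, i.e., in at least two elements on each side, giving $|H_1\cap H_2|\le n-2$. Hence the family $\cH$ of \emph{all} parity sets already satisfies the hypothesis of \cref{lem:sp}, and the argument goes through exactly as in the proof of \cref{thm:non-zero-common-basis-hard}: let $M$ be the sparse paving matroid with basis family $\binom{E}{n}\setminus\cH$ (no parity basis), and for each $H\in\cH$ let $M_H$ have basis family $\binom{E}{n}\setminus(\cH\setminus\{H\})$ (unique parity basis $H$). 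Distinguishing $M$ from any $M_H$ forces querying $H$, and $|\cH|=\binom{n}{n/2}$ is exponential.

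In particular, there is no need for constant-weight codes, layered paving, or ``finer arguments''; the detour you propose in Step~1 and the difficulty you flag in Step~2 both disappear once the intersection bound is computed correctly. Your Step~3 adversary framing is also more elaborate than necessary: a direct information-theoretic count (one query can rule out at most one $H\in\cH$) suffices, just as in the paper's other hardness proofs.
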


\begin{theorem}\label{thm:zero-basis-infinite}
    \textscup{Zero Basis} requires exponentially many independence oracle queries for any fixed infinite group $\Gamma$. 
\end{theorem}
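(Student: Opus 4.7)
The plan is to reduce matroid parity (\cref{prop:parity-hard}) to \textscup{Zero Basis} through a reduction that does not increase the number of independence oracle queries. Given a matroid parity instance consisting of a matroid $M$ on ground set $E$ partitioned into pairs $\set{a_i,b_i}_{i\in[n]}$, I would invoke \cref{lem:infinite-group-set} with this $n$ to obtain a set $G=\set{g_1,\dotsc,g_n}\subseteq\Gamma$ none of whose non-trivial signed sums vanishes, and then define the labeling $\psi\colon E\to\Gamma$ by $\psi(a_i)=g_i$ and $\psi(b_i)=-g_i$ for every $i\in[n]$. Since the output matroid is $M$ itself, every independence oracle query made by an algorithm for \textscup{Zero Basis} on $(M,\psi)$ is answered by one query to the parity instance.

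Next, I would verify that zero bases of $(M,\psi)$ correspond exactly to solutions of the matroid parity instance. For a basis $B$ of $M$ set $\epsilon_i(B)=+1$ if $a_i\in B$ and $b_i\notin B$, $\epsilon_i(B)=-1$ if $b_i\in B$ and $a_i\notin B$, and $\epsilon_i(B)=0$ otherwise; let $S(B)\coloneqq\Set{i\in[n]}{\epsilon_i(B)\ne 0}$. Then
\[
    \psi(B)=\sum_{i\in S(B)} \epsilon_i(B)\, g_i,
\]
so by the defining property of $G$ we have $\psi(B)=0$ if and only if $S(B)=\emptyset$, which is equivalent to $\abs{B\cap\set{a_i,b_i}}\in\set{0,2}$ for every $i\in[n]$. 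This is precisely the condition that $B$ is a solution to the given matroid parity instance. Hence $(M,\psi)$ is a YES-instance of \textscup{Zero Basis} exactly when the parity instance is solvable, and any algorithm solving \textscup{Zero Basis} with $q$ queries yields one solving matroid parity with $q$ queries, so the exponential lower bound of \cref{prop:parity-hard} transfers to \textscup{Zero Basis}.

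The only non-routine ingredient is the production of the set $G$, since all the structural information about $\Gamma$ (which might be torsion-free, mixed, or of bounded exponent) must be packaged into a single statement allowing the reduction to be uniform in the parity instance. This is precisely the content of \cref{lem:infinite-group-set}, which I would treat as a black box; note that the reduction need not be algorithmically uniform in $\Gamma$, because the theorem only asserts a query-complexity lower bound and $\Gamma$ is fixed, so the existence of $G$ is sufficient.
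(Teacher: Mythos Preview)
Your proposal is correct and follows essentially the same approach as the paper: reduce matroid parity to \textsc{Zero Basis} by invoking \cref{lem:infinite-group-set} to obtain $G=\set{g_1,\dotsc,g_n}$ and setting $\psi(a_i)=g_i$, $\psi(b_i)=-g_i$. Your write-up is in fact more detailed than the paper's, making explicit the computation $\psi(B)=\sum_{i\in S(B)}\epsilon_i(B)\,g_i$ and the observation that algorithmic uniformity in $\Gamma$ is not required for a query lower bound.
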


\begin{proof}
    We show the claim by reducing matroid parity to \textsc{Zero Basis} with $\Gamma$.
    Let $M$ be a matroid of $2n$ elements and $\set{a_1, b_1}, \dotsc, \set{a_n, b_n}$ a partition of $E(M)$ into pairs.
    Let $G = \set{g_1, \dotsc, g_n} \subseteq \Gamma$ be a set of group elements provided by \cref{lem:infinite-group-set}.
    We construct a labeling $\psi\colon E(M) \to \Gamma$ as $\psi(a_i) = g_i$ and $\psi(b_i) = -g_i$ for $i \in [n]$.
    Then, a basis $B$ of $M$ is zero if and only if $B$ consists of pairs, proving the claim from \cref{prop:parity-hard}.
\end{proof}

Recall that \textsc{(Weighted) Zero Basis} is solvable if $\Gamma$ is a fixed, finite group~\cite{liu2023congruency}. In contrast, \cref{thm:non-zero-common-basis-hard} implies that \textsc{Zero Common Basis} is hard for any fixed group $\Gamma$ such that $\Z_2 \le \Gamma$. By modifying that proof, the hardness of \textsc{Zero Common Basis} follows even when the assumption $\Z_2 \le \Gamma$ is dropped. 

\begin{restatable}{theorem}{zerocommonbasefixed}\label{thm:zero-common-base-fixed}
    \textscup{Zero Common Basis} requires an exponential number of independence queries for any nontrivial fixed group $\Gamma$.
\end{restatable}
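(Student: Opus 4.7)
My plan is to adapt the sparse paving construction from the proof of \cref{thm:non-zero-common-basis-hard} so that the roles of \emph{zero} and \emph{non-zero} labels are swapped, and to generalize the use of an order-two element to an arbitrary nonzero element of $\Gamma$. The aim is to produce a matroid pair $(M,N)$ every common basis of which has nonzero label, together with a hidden collection $\mathcal{H}$ of ``near-bases'', each of which, if reinstated into $M$'s basis family, creates a single zero common basis; the standard information-theoretic argument then forces at least $|\mathcal{H}|$ independence queries.

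Since $\Gamma$ is nontrivial, I can select $g \in \Gamma \setminus \set{0}$ of prime order $p$ (if $\Gamma$ has torsion, e.g., using Cauchy's theorem when $\Gamma$ is finite) or of infinite order (if $\Gamma$ is torsion-free). As in \cref{thm:non-zero-common-basis-hard}, set $A = \set{a_1,\dots,a_r}$, $B = \set{b_1,\dots,b_r}$, $E = A \cup B$, and let $N$ be the unitary partition matroid with classes $\set{a_i,b_i}$. I would then split into two cases for defining the labeling and $\mathcal{H}$. In the finite-order case, set $\psi(a_i) \coloneqq g$, $\psi(b_i) \coloneqq 0$, and let $\mathcal{H}$ consist of all one-per-pair sets $H$ with $p$ dividing $|H \cap A|$; in the infinite-order case, take $r$ even, set $\psi(a_i) \coloneqq g$, $\psi(b_i) \coloneqq -g$, and let $\mathcal{H}$ consist of all one-per-pair sets $H$ with $|H \cap A| = r/2$. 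In both cases the label $\psi(H)$ of a one-per-pair set depends only on $|H \cap A|$, and the defining condition on $\mathcal{H}$ is precisely that $\psi(H) = 0$.

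Next, analogously to the original argument, I would verify that any two distinct $H_1, H_2 \in \mathcal{H}$ satisfy $|H_1 \cap H_2| \le r-2$: if $|H_1 \cap A| = |H_2 \cap A|$ the sets must differ in at least two pairs, whereas if those sizes differ, the divisibility (or exactness) condition forces them to differ by at least two. \cref{lem:sp} then yields a sparse paving matroid $M$ with basis family $\binom{E}{r}\setminus \mathcal{H}$ and, for each $H \in \mathcal{H}$, a sparse paving matroid $M_H$ with basis family $\binom{E}{r}\setminus (\mathcal{H}\setminus\set{H})$. By construction, every common basis of $(M,N)$ lies in the set of one-per-pair sets outside $\mathcal{H}$ and hence has nonzero label, while $H$ itself is a zero common basis of $(M_H,N)$. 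Since distinguishing $M$ from each $M_H$ requires querying the independence of $H$, any algorithm must make at least $|\mathcal{H}|$ independence queries, and this quantity equals $\sum_{p\mid k}\binom{r}{k} \ge 2^r/p$ in the finite-order case and $\binom{r}{r/2}$ in the infinite-order case, both exponential in $|E|/2$.

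The main obstacle is balancing two competing constraints on $\mathcal{H}$: it must be sparse paving (any two members differ in at least two elements) yet large enough to yield an exponential lower bound, while simultaneously coinciding exactly with the one-per-pair sets of label $0$. The split into prime-order and infinite-order choices of $g$ is precisely what ensures that the zero set of the induced label function $k \mapsto kg$ (respectively $k \mapsto (2k-r)g$) on $\set{0,1,\dots,r}$ is both well-separated (giving the sparse paving property) and dense enough to contribute an exponential fraction of the $\binom{2r}{r}$ potential bases.
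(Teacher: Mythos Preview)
Your proposal is correct, but it takes a somewhat different route from the paper's proof. The paper avoids any case split on the structure of $\Gamma$: it picks an arbitrary nonzero $g\in\Gamma$ and uses a single labeling in which the first $\lfloor r/2\rfloor$ elements of $A$ get label $g$, the remaining elements of $A$ get label $-g$, and all of $B$ gets label $0$. With this labeling, the sparse paving verification collapses to the one-line observation that $\psi(a_i)\ne\psi(b_i)$ for every $i$ (so two zero one-per-pair sets cannot differ in a single pair), and the exponential lower bound on $|\mathcal H|$ follows from a direct Vandermonde count $\sum_k \binom{\lfloor r/2\rfloor}{k}\binom{\lceil r/2\rceil}{k}=\binom{r}{\lfloor r/2\rfloor}$, obtained by choosing equally many $a_i$'s from each half.

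Your approach instead first extracts a $g$ of prime or infinite order and then splits into two cases. Both cases are handled correctly, and the sparse paving argument (sizes of $H\cap A$ in $\mathcal H$ can only take values at distance $\ge 2$ from each other) is clean. One small caveat: the inequality $\sum_{p\mid k}\binom{r}{k}\ge 2^r/p$ is not literally true for all $r$ (e.g.\ $p=3$, $r=2$ gives $1<4/3$); the roots-of-unity filter only gives $2^r/p - O((2\cos(\pi/p))^r)$, which is still exponential for fixed $p$, so your conclusion stands. The paper's unified labeling buys simplicity and a cleaner count; your case split buys a more direct link to the $\Z_2$ construction of \cref{thm:non-zero-common-basis-hard} in the torsion case.
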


\begin{proof}  
    Let $r \ge 2$, $A=\set{a_1,\dots, a_r}$ and $B=\set{b_1,\dots, b_r}$ be disjoint sets of size $r$, and $E\coloneqq A \cup B$. 
    Let $g \in \Gamma$ be an arbitrary non-zero element.
    Consider the labeling $\psi\colon E\to \Gamma$ defined by $\psi(a_1)=\psi(a_2)= \cdots =\psi(a_{\lfloor r/2 \rfloor})=g$, $\psi(a_{\lfloor r/2 \rfloor+1})=\psi(a_{\lfloor r/2 \rfloor+2})= \cdots = \psi(a_r)=-g$ and $\psi(b_1)=\psi(b_2)=\cdots=\psi(b_r)=0$. Let 
    \[
    \cH \coloneqq \Set*{H \subseteq E}{|H \cap \set{a_1,b_1}| = \dots = |H\cap \set{a_r, b_r}| = 1, \psi(H)=0}. \]
    Observe that $|H_1 \cap H_2| \leq r-2$ for all $H_1, H_2 \in \cH, H_1 \neq H_2$, as otherwise $H_1 \symdif H_2=\set{a_i, b_i}$ for some $1 \leq i \leq r$ and $\psi(a_i) \neq \psi(b_i)$. Hence, by \cref{lem:sp}, $\binom{E}{r}\setminus \cH$ forms the family of bases of a sparse paving matroid $M$. Let $N$ denote the unitary partition matroid defined by partition classes $\set{a_1, b_1}, \dots, \set{a_r, b_r}$. The common bases of $M$ and $N$ are exactly the sets containing one element from each of the pairs $\set{a_1, b_1}, \dots, \set{a_r, b_r}$ and having a non-zero label. For any $H \in \cH$, by \cref{lem:sp}, $\binom{E}{r}\setminus (\cH\setminus \set{H})$ forms the family of bases of sparse paving matroid $M_H$. Since $H$ is a common basis of $M_H$ and $N$ having label zero, any algorithm for \textsc{Zero Common Basis} must distinguish between the instance $M, N, \psi$ and the instances $M_H, N, \psi$ for $H \in \cH$. Since the only way to distinguish $M$ and an $M_H$ is to query the independence of $H$, any algorithm solving \textsc{Zero Common Basis} must query the independence of each $H \in \cH$, and thus uses at least $|\cH|$ independence queries. Note that for any $K \subseteq \set{1, 2, \ldots, {\lfloor r/2 \rfloor}}$ and $L \subseteq \set{\lfloor r/2 \rfloor+1, \lfloor r/2 \rfloor+2, \ldots , r}$ with $|K|=|L|$ the set \[\Set{a_i}{i \in K \cup L} \cup \Set{b_i}{i \in [r] \setminus (K \cup L)}\] 
    is in $\cH$, hence $|\cH|$ is indeed exponential.
\end{proof}

\subsection{Counterexample to a Conjecture of Liu and Xu}\label{sec:counter}

Liu and Xu~\cite{liu2023congruency} proposed a conjecture which is even stronger than the implications from Conjectures~\ref{conj:k} and \ref{conj:fk}. In order to state their conjecture, we need the following definition. For a finite abelian group $\Gamma$ its \emph{Davenport constant} $D(\Gamma)$ is defined as the minimum value such that every sequence of elements from $\Gamma$ of length $D(\Gamma)$ contains a nonempty subsequence with sum 0. Liu and Xu proposed the following conjecture.

\begin{conjecture}[Liu--Xu~\cite{liu2023congruency}]\label{conj:xuliu}
    Let $\Gamma$  be a finite abelian group. Then, $\Gamma$ is $(D(\Gamma)-1)$-close.
\end{conjecture}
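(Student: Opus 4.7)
The plan is to refute \cref{conj:xuliu} by exhibiting an explicit triple $(M, \psi, B)$ for some finite abelian group $\Gamma$ with the properties that $M$ admits a zero basis, but every zero basis $B^*$ satisfies $|B \setminus B^*| \ge D(\Gamma)$. A first simplification is to restrict to uniform matroids $M = U_{r,n}$, in which every $r$-subset is a basis and the problem becomes purely additive-combinatorial. Writing the distance requirement as $|B \cap B^*| \le r - D(\Gamma)$, one must design the labeling so that no ``mostly-$B$'' $r$-subset sums to zero while some sufficiently disjoint $r$-subset does.

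Second, I would choose the group to lie outside the regime already settled by Liu and Xu, whose positive results go through the Schrijver--Seymour theorem and therefore cover cyclic groups of prime-power order or orders equal to the product of two primes. Two natural candidates are cyclic groups with richer factorisations, such as $\Z_{30}$, and non-cyclic groups such as $\Z_p \oplus \Z_p$, for which $D(\Z_p \oplus \Z_p) = 2p - 1$ is attained by explicitly known families of maximum zero-sum-free sequences; the richer extremal structure in the non-cyclic case gives more room to engineer a gap.

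The concrete construction I would attempt is to label $B$ by a maximum-length zero-sum-free sequence in $\Gamma$ (padded by zeros if $r > D(\Gamma) - 1$), and then attach a small ``replacement gadget'' on $E \setminus B$ designed so that (i) some zero $r$-subset lies almost entirely in $E \setminus B$, and (ii) for every $S \subseteq B$ with $|S| \ge r - D(\Gamma) + 1$, no $T \subseteq E \setminus S$ with $|T| = r - |S|$ satisfies $\psi(T) = -\psi(S)$. The hard part will be reconciling (i) and (ii): any set of replacement labels that enables (i) is a potential candidate for completing some $S$ in (ii). I expect the cleanest route to be a direct-sum argument, stacking several independent small gadgets so that the required distance accumulates additively while each gadget remains small enough to be verified by hand or by a brief exhaustive check; the smallest explicit counterexample should then emerge for $\Gamma = \Z_p \oplus \Z_p$ or $\Gamma = \Z_{30}$ with concrete small values of the parameters.
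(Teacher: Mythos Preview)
Your restriction to uniform matroids is fatal: for uniform matroids, $(D(\Gamma)-1)$-closeness holds for \emph{every} finite abelian group $\Gamma$, so no counterexample can be found there. Indeed, let $B$ be an $r$-subset and $B^*$ an $r$-subset with $\psi(B^*)=g$ minimising $k\coloneqq|B\setminus B^*|$. Fix any bijection $\phi\colon B\setminus B^*\to B^*\setminus B$ and set $g_i\coloneqq\psi(\phi(x_i))-\psi(x_i)$, so that $\sum_{i=1}^k g_i=g-\psi(B)$. If $k\ge D(\Gamma)$ then some nonempty $T\subseteq[k]$ has $\sum_{i\in T}g_i=0$; swapping only the pairs indexed by $[k]\setminus T$ produces an $r$-subset $B'$ with $\psi(B')=g$ and $|B\setminus B'|=k-|T|<k$, contradicting minimality. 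The same argument works verbatim for any strongly base orderable matroid, so your ``stack of independent small gadgets'' (a direct sum of uniform pieces, hence again strongly base orderable) cannot succeed either.

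The paper's counterexample instead uses a sparse paving matroid to \emph{delete} the unwanted zero bases. It takes $\Gamma=\Z_2^d$ with $d\ge 4$ (so $D(\Gamma)=d+1$ by Olson), labels a $2^d$-element ground set bijectively with $\Gamma$, sets $r=2^{d-1}-1$, and lets $\cH$ be the family of all zero $r$-subsets. Distinct labels force $|H\cap H'|\le r-2$ for distinct $H,H'\in\cH$, so \cref{lem:sp} allows one to declare every member of $\cH$ except a single chosen $B^*$ to be a non-basis. Now $B^*$ is the \emph{only} zero basis, and since $|E\setminus B^*|>r$ there is a basis $B\subseteq E\setminus B^*$, giving $|B\setminus B^*|=r=2^{d-1}-1>d=D(\Gamma)-1$. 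The matroid structure, not delicate additive combinatorics, does the work: by surgically removing all competing zero bases one sidesteps exactly the tension between your conditions (i) and (ii).
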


We provide a counterexample for \cref{conj:xuliu}. More precisely, we prove the following result.

\begin{restatable}{theorem}{thmcounter}\label{thmcounter} 
    Let $\Gamma=\Z_2^d$ for some $d \geq 4$. Then, $\Gamma$ is not $(D(\Gamma)-1)$-close.
\end{restatable}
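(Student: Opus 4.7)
The plan is to exhibit, for each $d \ge 4$, an explicit counterexample: a matroid $M$, a labeling $\psi : E(M) \to \Z_2^d$, a basis $B$ of $M$, and a target $g \in \Z_2^d$ such that $M$ admits some basis of label $g$ while every such basis has distance at least $d+1 = D(\Z_2^d)$ from $B$.

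My first attempt would be the uniform matroid $U_{d+1,2d+2}$ on a ground set $\{a_1,\dotsc,a_{d+1},a'_1,\dotsc,a'_{d+1}\}$ equipped with \emph{asymmetric} labels $\psi(a_i) = u_i$ and $\psi(a'_i) = v_i$. Taking $B = \{a_1,\dotsc,a_{d+1}\}$ and target $g = v_1+\dotsb+v_{d+1}$, the ``opposite'' basis $\{a'_1,\dotsc,a'_{d+1}\}$ realizes $g$ at distance exactly $d+1$, and the counterexample reduces to finding $u, v \in (\Z_2^d)^{d+1}$ satisfying the non-collision system
\begin{equation*}
\sum_{i \in I} u_i + \sum_{j \in J} v_j \ne 0 \qquad \text{for every nonempty } I, J \subseteq [d+1] \text{ with } |I| = |J|.
\end{equation*}
A symmetric choice $u_i = v_i$ cannot solve this system: the Davenport bound $D(\Z_2^d) = d+1$ forces a nonempty zero-sum subsequence in the combined multiset, and a careful bookkeeping then produces an index pair $(I,J)$ violating the system, so the asymmetry of $u$ and $v$ is essential. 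Should this uniform-matroid attempt fail to admit a solvable system for some $d \ge 4$, the fallback is a sparse paving matroid on $2d+2$ elements in the spirit of \cref{lem:sp} and the constructions of \cref{thm:non-zero-common-basis-hard,thm:zero-common-base-fixed}: forbid precisely those $(d+1)$-subsets that would otherwise realize $g$ at distance $\le d$, keeping the ``opposite'' basis as the sole basis of label $g$, and verify the sparse paving inequality $|H \cap H'| \le d-1$ for every pair of forbidden subsets.

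The proof then has three parts: (i) specify $M$, $\psi$, $B$, and $g$ explicitly and verify the matroid axioms (trivial for uniform, requires the intersection bound in the paving case); (ii) exhibit the witness basis of label $g$ at distance $d+1$; (iii) verify that every exchange $(X,Y)$ with $|X|=|Y| \le d$ and $(B\setminus X)\cup Y \in \mathcal{B}(M)$ produces a label different from $g$. The main obstacle is step (iii): for the uniform version it becomes the non-collision system above, with roughly $\binom{2d+2}{d+1}$ constraints inside a group of $2^d$ elements. The hypothesis $d \ge 4$ enters because for smaller $d$ the constraints are over-determined and every target admits a short representation (indeed, the conjecture is known to hold for $\Z_2$), whereas from $d = 4$ onward the group has enough room ($\lvert\Z_2^d\rvert \ge 16$) to admit a labeling separating $g$ from every short-exchange label.
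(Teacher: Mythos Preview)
Your proposal is a plan, not a proof, and you have led with the harder route. The non-collision system for the uniform matroid $U_{d+1,2d+2}$ is never solved, and you offer no argument that it \emph{can} be solved; the heuristic that ``$\lvert\Z_2^d\rvert\ge 16$ gives enough room'' is not a proof, since you are imposing on the order of $\binom{2d+2}{d+1}$ inequality constraints inside a group of only $2^d$ elements. Your own observation that the symmetric choice $u_i=v_i$ already fails should make you suspicious that the uniform attempt will go through without a delicate and unmotivated construction.

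Your fallback, by contrast, is exactly the right idea, and it is essentially what the paper does --- so commit to it and drop the uniform detour. Take \emph{distinct} labels on the ground set: then any two $r$-subsets $H,H'$ of the same label automatically satisfy $|H\cap H'|\le r-2$, because $|H\cap H'|=r-1$ would force two ground-set elements to share a label. Now forbid every $r$-subset of the target label except one chosen $B^*$, and exhibit a basis $B$ with $|B\setminus B^*|=r$. The one detail you still owe is that $B$ itself is not forbidden, i.e.\ $\psi(B)\ne g$; on your $(2d+2)$-element ground set with $r=d+1$ and $B^*=E\setminus B$, this amounts to choosing the $2d{+}2$ distinct labels so that their total sum is nonzero (then $\psi(B)+\psi(B^*)=\psi(E)\ne 0$ forces $\psi(B)\ne\psi(B^*)=g$), which is easy for $d\ge4$. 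The paper instead takes $E$ to be all of $\Z_2^d$ with rank $r=2^{d-1}-1$, obtaining the much larger gap $2^{d-1}-1$, but for the bare statement ``$\Gamma$ is not $d$-close'' your tighter $(2d{+}2)$-element version suffices once these details are filled in.
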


We first need the following result from Olson \cite{olson1969combinatorial} on the Davenport constant.
\begin{proposition}[Olson~\cite{olson1969combinatorial}] \label{prop:olson} 
    Let $d \geq 1$ be an integer. Then, $D(\Z_2^d)=d+1$.
\end{proposition}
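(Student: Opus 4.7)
The plan is to combine Olson's theorem (\cref{prop:olson}), which gives $D(\Z_2^d)=d+1$, with a sparse paving matroid construction in the spirit of \cref{lem:sp}. It suffices to exhibit a matroid $M$, a labeling $\psi\colon E(M)\to\Z_2^d$, a basis $B$ of $M$, and a target $g\in\Z_2^d$ such that some basis $B^*$ of $M$ has label $g$, yet every basis $B'$ of $M$ with $\psi(B')=g$ satisfies $|B\setminus B'|\geq d+1$, thereby refuting $(D(\Z_2^d)-1)$-closeness.

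The construction uses the ground set $E=\{a_1,\ldots,a_{d+1},b_1,\ldots,b_{d+1}\}$ of size $2(d+1)$. Since $d\geq 4$ forces $2^d \geq 16 > 2(d+1)$, I can pick $2(d+1)$ pairwise distinct labels $\alpha_1,\ldots,\alpha_{d+1},\beta_1,\ldots,\beta_{d+1}\in\Z_2^d$ with $\sum_i\alpha_i\neq\sum_i\beta_i$. (If an initial choice happens to yield equal sums, swap any chosen $v$ for any unchosen $w$; the sum changes by $v+w\neq 0$ since $v\neq w$.) Set $\psi(a_i)=\alpha_i$, $\psi(b_i)=\beta_i$, take $B=\{b_1,\ldots,b_{d+1}\}$ and $B^*=\{a_1,\ldots,a_{d+1}\}$, and define the target $g=\sum_i\alpha_i=\psi(B^*)$, noting that $\psi(B)\neq g$ by construction.

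Let $\cH$ be the family of all $H\in\binom{E}{d+1}$ with $\psi(H)=g$ and $H\cap B\neq\emptyset$. The key step is to verify the pairwise intersection bound $|H_1\cap H_2|\leq d-1$ for distinct $H_1,H_2\in\cH$, so that \cref{lem:sp} produces a rank-$(d+1)$ matroid $M$ whose bases are $\binom{E}{d+1}\setminus\cH$. Because both sets have label $g$, additivity over $\Z_2^d$ gives $\psi(H_1\symdif H_2)=\psi(H_1)+\psi(H_2)=0$, and since $|H_1|=|H_2|$ the symmetric difference has even cardinality. If $|H_1\symdif H_2|=2$, the two elements in the symmetric difference would share the same label, contradicting the distinctness of the $\alpha_i$ and $\beta_i$; hence $|H_1\symdif H_2|\geq 4$, i.e., $|H_1\cap H_2|\leq d-1$, as required.

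To conclude, note that $B\notin\cH$ since $\psi(B)\neq g$, and $B^*\notin\cH$ since $B\cap B^*=\emptyset$, so both $B$ and $B^*$ are bases of $M$. Any basis $B'$ of $M$ with $\psi(B')=g$ lies outside $\cH$ and therefore must satisfy $B\cap B'=\emptyset$; as $|B'|=d+1=|E\setminus B|$, this forces $B'=B^*$. Thus $B^*$ is the unique label-$g$ basis, at distance $|B\setminus B^*|=d+1$ from $B$, violating $d$-closeness. The main obstacle is the sparse-paving check on $\cH$, which is precisely where the distinctness of all $2(d+1)$ labels is exploited and where the hypothesis $d\geq 4$ enters: for $d=3$ one would be forced to use all eight elements of $\Z_2^3$ as labels, producing $\sum_i\alpha_i=\sum_i\beta_i$ and collapsing the example.
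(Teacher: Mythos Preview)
Your proposal does not prove the stated proposition at all: \cref{prop:olson} is Olson's result that $D(\Z_2^d)=d+1$, and you explicitly \emph{invoke} this as a known fact in your first sentence rather than establishing it. The paper likewise does not prove \cref{prop:olson}; it is quoted as a result from the literature with no argument supplied. So strictly speaking, there is nothing to compare here, and your submission is a proof of a different statement.

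What you have actually written is a correct proof of \cref{thmcounter}, and it is worth comparing that to the paper's own argument. The paper proves \cref{thmcounter} via \cref{prop:counterexample}, which builds a sparse paving matroid on the full group $\Z_2^d$ (so $|E|=2^d$) with rank $r=2^{d-1}-1$, obtaining a basis at distance $2^{d-1}-1$ from the unique zero basis; since $2^{d-1}-1>d$ for $d\geq 4$, this refutes $d$-closeness. Your construction is more economical: you work on a ground set of size only $2(d+1)$, use $2(d+1)$ distinct labels (possible precisely because $2^d>2(d+1)$ when $d\geq 4$), and engineer the forbidden family $\cH$ so that the unique label-$g$ basis sits at distance exactly $d+1$ from $B$. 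Both arguments hinge on the same sparse-paving mechanism from \cref{lem:sp}, and both verify the intersection bound $|H_1\cap H_2|\leq r-2$ via the observation that distinct equal-label sets must differ in at least two positions when all labels are distinct. The paper's route yields a stronger quantitative statement (the gap $2^{d-1}-1$ grows exponentially, not linearly, in $d$), while yours is tighter and more transparent for the specific goal of refuting $(D(\Gamma)-1)$-closeness.
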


We now prove a result on zero bases with respect to a labeling to $\Z_2^d$ in certain matroids.
\begin{lemma} \label{prop:counterexample}
    For any positive integer $d$, there exists a matroid $M$, a labeling $\psi\colon E(M) \to \Z_2^d$ and a basis $B$ such that $M$ has a zero basis and $|B\setminus B^*| \ge 2^{d-1}-1$ holds for each zero basis $B^*$. 
\end{lemma}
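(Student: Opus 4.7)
The plan is to construct, for each $d \geq 1$, an explicit matroid $M$, labeling $\psi\colon E(M)\to\Z_2^d$, and basis $B$ satisfying the three conditions of the lemma. For $d\in\set{1,2}$ the bound $2^{d-1}-1\le 1$ is trivial: one takes $M=U_{1,2}$ with labels $\psi(a)=0$ and $\psi(b)\ne 0$, so that $\set{a}$ is a zero basis at distance one from $B=\set{b}$.

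For $d\ge 3$, set $r\coloneqq 2^{d-1}$, fix an index-$2$ subgroup $H\le \Z_2^d$, and pick $b_1\in \Z_2^d\setminus H$. Take $E=\Z_2^d\setminus\set{0}$ with the labeling $\psi(v)=v$, and the distinguished $r$-subset $B=\set{b_1}\cup(H\setminus\set{0})$. To eliminate the zero bases that are close to $B$, define
\[\cH \coloneqq \Set{A\in\binom{E}{r}}{\psi(A)=0\ \text{and}\ |A\cap B|\ge 2},\]
and let $M$ be the matroid on $E$ with bases $\binom{E}{r}\setminus\cH$, obtained by applying \cref{lem:sp}.

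I would carry out the verification in the following order. First, verify the hypothesis of \cref{lem:sp}: for distinct $A_1,A_2\in\cH$, $|A_1\cap A_2|\le r-2$. If $|A_1\cap A_2|=r-1$, then $A_2=A_1-x+y$ for distinct $x,y\in E$, and subtracting labels gives $0=\psi(A_2)-\psi(A_1)=y-x=x+y$ in $\Z_2^d$, contradicting $x\ne y$. Second, compute $\psi(B)=b_1+\sum_{h\in H\setminus\set{0}}h=b_1\ne 0$, using that $\sum_{h\in H}h=0$ for $|H|=2^{d-1}\ge 4$; thus $B\notin\cH$ and $B$ is a basis. Third, exhibit the zero basis $A\coloneqq (E\setminus B)\cup\set{b_1}$: it has size $r$ (since $|E\setminus B|=r-1$), label $\psi(A)=\psi(E)-\psi(B)+b_1=0-b_1+b_1=0$ (using $\psi(E)=0$ for $d\ge 2$), and $|A\cap B|=1$, so $A\notin\cH$. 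Fourth, conclude: by the very definition of $\cH$, every zero basis $B^*$ of $M$ has $|B\cap B^*|\le 1$, hence $|B\setminus B^*|\ge r-1=2^{d-1}-1$.

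The only step requiring real care is the first one (the sparse paving hypothesis), which I expect to be the main obstacle; however, it reduces to the basic observation that a single element swap between two $r$-subsets shifts their $\Z_2^d$-sum by a nonzero element, so the verification is short. Everything else is direct $\Z_2^d$-arithmetic once the right construction is in place.
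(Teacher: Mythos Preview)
Your proof is correct. Both your argument and the paper's use sparse paving matroids built from the zero $r$-subsets of a ground set with injective $\Z_2^d$-labels, relying on the observation that two zero subsets cannot differ by a single swap. The difference is the order of construction: the paper takes $r=2^{d-1}-1$, removes \emph{all} zero $r$-sets except one designated $B^*$, and then locates a basis $B$ disjoint from $B^*$; you instead take $r=2^{d-1}$, fix $B$ first, and remove only those zero $r$-sets intersecting $B$ in at least two elements, then exhibit an explicit surviving zero basis $A=(E\setminus B)\cup\{b_1\}$. Your route has the minor advantage that the existence of the zero basis is verified by direct construction rather than by a counting/disjointness argument, and you handle the degenerate cases $d\le 2$ explicitly; the paper's route yields a matroid with a \emph{unique} zero basis, which is conceptually a bit cleaner.
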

\begin{proof}
    Consider a ground set $E$ of size $2^d$, and label each element of $E$ with a distinct label from $\Z_2^d$. 
    Let $r = 2^{d-1}-1$ and define $\mathcal{H} = \Set{H \subseteq E}{|H|=r, \psi(H)=0}$. Note that $\cH$ is nonempty, e.g., it contains $\Set{(g_1,\dots, g_{d-1}, 0)}{g_1,\dots, g_{d-1} \in \Z_2}$. 
    Since $\psi$ takes distinct values on the elements of $E$, $|H \cap H'| \le r-2$ holds for any $H, H' \in \cH$ with $H \ne H'$. 
    Let $B^*$ be any member of $\cH$.
    By \cref{lem:sp}, there exists a sparse paving matroid $M$ on ground set $E$ with family of bases $\binom{E}{r}\setminus (\cH \setminus \set{B^*})$.
    Then, $B^*$ is the only zero basis of $M$ by the definition of $\cH$. 
    Since $|E\setminus B^*| = 2^{d-1}+1 > r$ and  $|H \cap H'| \le r-2$ holds for $H, H' \in \cH$ with $H \ne H'$, there exists a subset $B \subseteq E \setminus B^*$ such that $|B| = r$ and $B \not \in \cH$.
    Then, $B$ is a basis of $M$ such that $|B \setminus B^*| = r = 2^{d-1}-1$ holds for the only zero basis $B^*$ of $M$.
\end{proof}

\cref{prop:counterexample} implies that the group $\Z_2^d$ is not $(2^{d-1}-2)$-close for any $d \geq 1$. For any $d \geq 4$, we have $2^{d-1}-2>d+1$ and hence \cref{thmcounter} follows by \cref{prop:olson}.

\section{Conclusion}\label{sec:conclusion}

In this work, we have treated several problem settings on finding bases of group-labeled matroids whose labels satisfy certain conditions. Many questions remain open. 

In \cref{subsec:non-zero-common-basis}, we deal with  \textsc{Weighted Non-Zero Common Basis} for groups $\Gamma$ with $\Z_2 \not\le \Gamma$ and give an approximation algorithm and exact algorithms for some special cases. However, the general complexity of \textsc{Weighted Non-Zero Common Basis} for $\Z_2 \not\le \Gamma$ remains open. In the setting where $\Z_2 \le \Gamma$ is allowed, it would be interesting to see whether \cref{thm:non-zero-exact-base} can be extended to the case that one matroid is a partition matroid with an arbitrary fixed number of partition classes.

In \cref{sec:algebraic}, randomized algebraic algorithms turn out to be a powerful tool for finding bases and common bases of certain labels. It would be interesting to see whether more of the problems that can be solved by these randomized algorithms can also be solved deterministically. For example, one could consider \textsc{Non-Zero Common Basis} for arbitrary groups when one of the matroids is graphic and the other one is a partition matroid.

In \cref{sec:fixedF}, we studied Conjectures~\ref{conj:k} and \ref{conj:fk} related to \textsc{$F$-avoiding Basis}. Recall that \cref{conj:fk} implies that \textsc{$F$-avoiding Basis} can be solved in polynomial time if $|F|$ is fixed. As a quest for an even stronger algorithmic result, we can ask whether \textsc{$F$-avoiding Basis} is FPT with respect to $|F|$, i.e., whether it can be solved in time $f(|F|)n^{O(1)}$ for some computable function $f\colon \mathbb{Z}_{\geq 0}\rightarrow \mathbb{Z}_{\geq 0}$. The problem \textsc{Weighted $F$-avoiding Basis} is also of interest. While Conjectures \ref{conj:k} and \ref{conj:fk} remain wide open, the following stronger conjecture can be formulated analogously to the notion of strong $k$-closeness introduced by Liu and Xu~\cite{liu2023congruency}. 
\begin{conjecture} \label{conj:k_weighted} Let $M$ be a matroid on a ground set $E$, $\psi\colon E \to \Gamma$  a group labeling, $F \subseteq \Gamma$ a finite subset, and $w\colon E \to \R$ a weight function.  Suppose that $M$ has an $F$-avoiding basis. Then, for any minimum weight basis $B$, there exists a minimum weight $F$-avoiding basis $B^*$ such that $|B \setminus B^*| \le |F|$.
\end{conjecture}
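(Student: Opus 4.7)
The natural strategy is to mirror, in each special case where the unweighted \cref{conj:k} has been resolved, the corresponding proof while tracking the weight function throughout. The base case $|F|=1$ is already handled by \cref{lem:weighted-non-zero-base} after a translation that shifts the unique element of $F$ to the identity of $\Gamma$, so the task reduces to extending the known partial results of \cref{sec:fixedF} to the weighted setting.

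For $|F|=2$, I would adapt the proof of \cref{thm:two-forbidden}. The reduction described in \cref{lem:distance-F-reduction} and \cref{prop:shrink} to rank-3 matroids on six elements via contracting $B\cap B'$ and deleting $E\setminus (B\cup B')$ preserves weights (since deletion/contraction carry the restricted weight function) and hence carries over. For strongly base orderable minors, the proof via \cref{cor:sbo} and \cref{thm:weakly} should adapt once one observes that, when both endpoint bases are minimum weight, the bijection guaranteed by strong base orderability can be chosen weight-preserving; this should follow from \cref{lem:bijection} combined with standard minimum-weight matroid arguments, noting that under the matroid union characterization of optimal bases every pair in such a bijection must have equal weight. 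The $M(K_4)$ case treated in \cref{lem:k4} is a concrete finite case analysis and can be redone by carrying weights through the pairwise symmetric exchanges appearing there.

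For ordered groups, a weighted strengthening of \cref{thm:ordered} appears accessible by refining \cref{prop:seq}: one would choose the decreasing-label sequence $B_0,\ldots,B_\ell$ so that each $B_i$ is of minimum weight among bases of label $\psi(B_i)$. Combined with a weighted variant of \cref{thm:lemos}\ref{it:kano}, which should follow by tracking weights in Lemos' original argument, one could then repeat the case distinction in the proof of \cref{thm:ordered} to output a minimum weight $F$-avoiding basis within distance $|F|$ of the prescribed minimum weight basis $B$.

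The main obstacle is twofold. First, \cref{conj:k} itself is open, so the weighted conjecture is a strict strengthening of an open problem and cannot be resolved in full generality without first settling the unweighted case. Second, and more specifically, the weakly base orderable approach of \cref{thm:weakly} invokes \cref{prop:grouptrivi} to select a subset of candidate exchanges producing an allowed label, but this selection offers no control over the total weight change. A genuine weighted extension of \cref{thm:reprweak} would seem to require a new combinatorial tool: a weighted variant of \cref{prop:grouptrivi} that, given a collection of exchanges annotated with both label shifts and weight changes, produces a nonempty subset achieving an allowed label with the smallest attainable weight increase. Devising such a tool (even in additively simple groups like $\Z_p$) is where I would expect the technical core of a full proof to lie.
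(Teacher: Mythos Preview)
The statement is a \emph{conjecture}; the paper offers no proof. The surrounding text only records two partial facts: the case $|F|=1$ follows from \cref{lem:weighted-non-zero-base}, and the strongly base orderable case is asserted without argument. Your proposal correctly singles out exactly these two entry points and, more importantly, correctly recognises that the general statement is a strict strengthening of the open \cref{conj:k} and hence cannot be settled in full. On the large-scale assessment you and the paper agree.

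There is, however, a concrete gap in your sketch for the strongly base orderable case. You write that ``when both endpoint bases are minimum weight, the bijection guaranteed by strong base orderability can be chosen weight-preserving''. But in \cref{conj:k_weighted} only $B$ is a minimum-weight basis; the other endpoint $B^*$ is a minimum-weight \emph{$F$-avoiding} basis and may have strictly larger weight, so the hypothesis you invoke does not hold. The correct argument uses a different feature of strong base orderability: the bijection $\varphi\colon B^*\setminus B \to B\setminus B^*$ is symmetric in the sense that $(B-\varphi(e))+e$ is also a basis for every $e$ (take $Z=(B^*\setminus B)-e$ in the defining property). Optimality of $B$ alone then forces $w(\varphi(e))\le w(e)$ for every $e$, so \emph{every} subset $Z$ yields a basis $(B^*\setminus Z)\cup\varphi(Z)$ of weight at most $w(B^*)$; now \cref{prop:grouptrivi} applies verbatim and the induction of \cref{thm:weakly} goes through. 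Note that this fix relies on the symmetry of the strongly base orderable bijection, which is precisely what is lost in the $(\alpha,k)$-weakly base orderable setting. So your concluding diagnosis --- that a genuine weighted analogue of \cref{prop:grouptrivi} is the missing tool for extending \cref{thm:reprweak} --- is accurate, and the obstacle is real rather than merely technical.
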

Note that \cref{conj:k_weighted} would imply the polynomial solvability of \textsc{Weighted $F$-avoiding Basis} if $|F|$ is fixed. The conjecture holds for $|F|=1$ by \cref{lem:weighted-non-zero-base}, and it can also be shown that it holds for strongly base orderable matroids. 

\paragraph{Acknowledgments}
The authors are grateful to the organizers of the 14th Emléktábla Workshop where the collaboration of the authors started. The authors thank Naonori Kakimura, Kevin Long, and Tomohiko Yokoyama for several discussions during the workshop, and Kristóf Bérczi, András Frank, and András Sebő for pointing out the connections to lattices and dual lattices. 

András Imolay was supported by the Rényi Doctoral Fellowship of the Rényi Institute. Ryuhei Mizutani was supported by Grant-in-Aid for JSPS Fellows Grant Number JP23KJ0379 and JST SPRING Grant Number JPMJSP2108. Taihei Oki was supported by JSPS KAKENHI Grant Number JP22K17853 and JST ERATO Grant Number JPMJER1903.
Tamás Schwarcz was supported by the \'{U}NKP-23-3 New National Excellence Program of the Ministry for Culture and Innovation from the source of the National Research, Development and Innovation Fund. This research has been implemented with the support provided by the Lend\"ulet Programme of the Hungarian Academy of Sciences -- grant number LP2021-1/2021.

\bibliographystyle{plainurl}
\bibliography{biblio.bib}

\clearpage
\appendix
\section{Appendix} \label{sec:appendix}

This appendix contains the omitted proof of \cref{thm:graphalg}, that is, we give an algorithm for finding a shortest non-zero directed cycle in a digraph. 

First, we show how to compute shortest non-zero walks. Our algorithm is a modification of Dijkstra's shortest path algorithm. In order to simplify notation, we suppose in the following that the digraphs in consideration do not contain parallel arcs. The proofs can easily be adapted for graphs with parallel arcs.

We first introduce some notation. Given a digraph $D$, a {\it walk} in a digraph $D$ is a sequence $P=v_1,\ldots,v_t$ of vertices of $D$ such that $v_iv_{i+1}\in A(D)$ for all $i \in [t-1]$. We also say that $P$ is a {\it $v_1v_t$-walk.} If $v_1=v_t$, we say that $P$ is a {\it closed} walk. We further define $A(P)=\{v_1v_2,\ldots,v_{t-1}v_t\}$. For some $i \in [t]$, we say that $v_1,\ldots,v_i$ is a {\it prefix} of $P$.

 \begin{theorem} \label{thm:walks}
    Let $D$ be a digraph, $s \in V(D)$ a vertex, $\psi\colon A(D) \to \Gamma$ a group labeling, and $w\colon A(D) \to \R$ a conservative weight function. There is a polynomial-time algorithm that, for each vertex $v$, computes a shortest non-zero $sv$-walk or correctly reports that no such walk exists. 
\end{theorem}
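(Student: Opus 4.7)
My plan is to generalize Dijkstra's algorithm to track two candidate walks per vertex. Since $w$ is conservative, Bellman--Ford computes shortest-path distances $d(v)$ from $s$ in polynomial time; applying the potential reduction $w(uv) \mapsto w(uv) + d(u) - d(v)$ yields an equivalent instance with $w \ge 0$, so I may assume $w$ is nonnegative. The algorithm then runs two Dijkstra passes. The first pass computes, for each $v$, the weight $d_0(v)$ of a shortest $sv$-walk together with the label $\ell_0(v)$ of one such walk (recovered from predecessor arcs). The second pass computes $d_1(v)$, the weight of a shortest $sv$-walk whose label is distinct from $\ell_0(v)$, together with such a label $\ell_1(v)$.

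When relaxing an arc $uv$ in the second pass, two candidates are generated: the $d_0(u)$-walk extension of weight $d_0(u) + w(uv)$ and label $\ell_0(u) + \psi(uv)$, and the $d_1(u)$-walk extension of weight $d_1(u) + w(uv)$ and label $\ell_1(u) + \psi(uv)$; each is admissible for $d_1(v)$ iff its label differs from $\ell_0(v)$, and the cheapest admissible candidate prevails. To extract a shortest non-zero $sv$-walk, I would output the $d_0$-walk if $\ell_0(v) \ne 0$, the $d_1$-walk if $\ell_0(v) = 0$ and $d_1(v) < \infty$ (its label is automatically non-zero since it differs from $\ell_0(v) = 0$), and ``no non-zero $sv$-walk exists'' otherwise; in the last case, every $sv$-walk has label $0$, so the report is correct.

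The main obstacle is verifying correctness of the second pass, i.e., that for any $sv$-walk $W$ with $\psi(W) \ne \ell_0(v)$, the final $d_1(v) \le w(W)$. I would prove this by induction on the lexicographic pair $(w(W), |A(W)|)$. Writing $W = W' + uv$ with prefix label $h$, the hypothesis $\psi(W) \ne \ell_0(v)$ becomes $h \ne \ell_0(v) - \psi(uv)$. If $h = \ell_0(u)$, the $d_0(u)$-extension has label $\ell_0(u) + \psi(uv) = \psi(W) \ne \ell_0(v)$ and weight $d_0(u) + w(uv) \le w(W')+w(uv)=w(W)$, so it is admissible and bounds $d_1(v)$. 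If $h \ne \ell_0(u)$, the inductive hypothesis applied to $W'$ yields $d_1(u) \le w(W')$; either the $d_0(u)$-extension is already admissible, or $\ell_0(u) + \psi(uv) = \ell_0(v)$, in which case the $d_1(u)$-extension is admissible because $\ell_1(u) \ne \ell_0(u)$ forces $\ell_1(u) + \psi(uv) \ne \ell_0(v)$, again giving $d_1(v) \le w(W)$. The running time is polynomial since each pass is a Dijkstra that performs $O(1)$ group operations per relaxation, and the whole algorithm uses only polynomially many independence-free oracle calls.
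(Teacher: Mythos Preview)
Your proposal is correct and essentially identical to the paper's approach: both reduce to nonnegative weights via potentials, compute a shortest-paths pass (your $d_0,\ell_0$; the paper's tree $T_v,\psi(T_v)$), and then run a second Dijkstra for shortest walks whose label differs from $\ell_0(v)$, using precisely the same two relaxation candidates per arc and the same output rule. The only difference is presentational---the paper proves the Dijkstra extraction invariant directly, whereas you establish the same bound by induction on $(w(W),|A(W)|)$; the phrase ``independence-free oracle calls'' at the end is a slip (no matroid oracle is involved here) but immaterial.
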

\begin{proof}
    We may assume that each vertex is reachable from $s$. Moreover, we may also assume that $w$ is nonnegative on each arc, as we can compute a potential $p$ in polynomial time by the conservativity of $w$ and increase the weight of each arc $uv$ by $p(u)-p(v)$, see \cite[Section 3.1.1]{frank2011connections} for details. This increases the weight of each $sv$-walk by $p(s)-p(v)$, thus does not change whether a non-zero $sv$-walk is the shortest.

    We compute a shortest paths tree $T$ rooted at $s$ with respect to the weight function $w$, that is, a spanning arborescence with root $s$ such that the unique $sv$-path $T_v$ in $T$ is a shortest $sv$-path in $D$. We call an $sv$-walk $P$ a \emph{distinct-label $sv$-walk} if $\psi(P) \ne \psi(T_v)$. Observe that it is sufficient to compute a shortest distinct-label $sv$-walk for each vertex $v$. Indeed, if $\psi(T_v)\ne0$, then $T_v$ is a shortest non-zero $sv$-walk, otherwise distinct-label $sv$-walks and non-zero $sv$-walks coincide by definition.

    In what follows we give an algorithm that computes a shortest distinct-label $sv$-walk or correctly reports that no such walk exists. The crucial insight is that any prefix of a shortest distinct-label walk is either a shortest walk or a shortest distinct-label walk. 
    We will maintain and gradually increase a set $S \subseteq V(D)$ such that for all $v \in S$, we know a shortest distinct-label $sv$-walk $Q_v$. We start with $S=\emptyset$ and then increase $S$ through several iterations. We now describe one of the iterations. We refer to $S$ as the set assigned before this iteration. For each arc $uv\in A(D)$, we set 
      \[\alpha(uv) \coloneqq \begin{cases}
        w(T_u)+w(uv) & \text{if $\psi(T_u)+\psi(uv) \ne \psi(T_v)$}, \\ 
        w(Q_u)+w(uv) & \text{if $u \in S$ and $\psi(T_u)+\psi(uv) = \psi(T_v)$}, \\ 
        \infty & \text{otherwise}. \end{cases}\]
    Let $\alpha(v) \coloneqq \min\Set{\alpha(uv)}{uv \in A(D)}$ for each $v \in V(D)\setminus S$.  If $\alpha(v) = \infty$ for each $v \in V(D)\setminus S$, we stop the algorithm and report that $D$ does not contain a distinct-label $sv$-walk for any $v \in V(D) \setminus S$. Otherwise, take $v^* \in \argmin \Set{\alpha(v)}{v \in V(D) \setminus S}$.
    Let $u^*v^*$ be an arc entering $v^*$ such that $\alpha(v^*) = \alpha(u^*v^*)$ and either $\psi(T_{u^*})+\psi(u^*v^*) \ne \psi(T_{v^*})$, or $u \in S$ and $\psi(Q_{u^*})+\psi(u^*v^*) = \psi(T_{v^*})$. Let $Q_{v^*}$ be obtained by concatenating  $T_{u^*}$ in the former case and $Q_{u^*}$ in the latter case with the arc $u^*v^*$. Finally, we let $S \coloneqq S\cup \set{v^*}$.

    We prove the correctness of the algorithm. Assume that at the start of an iteration $Q_v$ is a shortest distinct-label $sv$-walk for each $v \in S$. The next claim shows that we correctly terminate the algorithm if $\alpha(v) = \infty$ for each $v \in V(D) \setminus S$. 
    
    \begin{claim} \label{cl:end}
         If $\alpha(v)=\infty$ for each $v \in V(D)\setminus S$, then $D$ does not contain a distinct-label $sv$-walk for any $v \in V(D)\setminus S$.
    \end{claim}
    \begin{claimproof}
        Suppose otherwise and let $v \in V(D)\setminus S$ such that $D$ contains a distinct-label $sv$-walk $P_v$. Let $w$ be the first vertex on $P_v$ such that $w \in V(D) \setminus S$ and $\psi(P_w) \ne \psi(T_w)$ for some  $sw$-prefix $P_w$ of $P_v$. Let $uw$ be the last arc of $P_w$, and $P_u$ the subwalk of $P_w$ obtained by deleting $w$. Then, either $u \in S$, or $\psi(P_u) = \psi(T_u)$ and thus $\psi(T_u)+\psi(uw) = \psi(P_u)+\psi(uw) = \psi(P_w) \ne \psi(T_w)$. In both cases, $\alpha(uw) < \infty$, hence $\alpha(w) < \infty$, a contradiction.   
    \end{claimproof}

    The next claim shows that if there is a vertex $v\in V(D) \setminus S$ with $\alpha(v) < \infty$, then $Q_{v^*}$ is a shortest distinct-label $sv^*$-walk.
    
\begin{claim} \label{cl:add}
        $Q_{v^*}$ is a shortest distinct-label $sv^*$-walk.
    \end{claim}
    \begin{claimproof}
        It follows by  definition that $Q_{v^*}$ is a distinct-label $sv^*$-walk.
        Let $P_{v^*}$ be a shortest  distinct-label $sv^*$-walk and let $P_u$ be a shortest prefix of $P_{v^*}$ that ends in a vertex $u$ such that $u \notin S$ and  $P_u$ is a distinct-label $su$-walk. Let $u'$ be the predecessor of $u$ in $P_u$ and let $P_{u'}=P_u-u$. If $\psi(T_{u'})+\psi(u'u) \ne \psi(T_u)$, we obtain $\alpha(u)\le w(T_{u'})+w(u'u)\le w(P_{u'})+w(u'u)=w(P_u)$. If $\psi(T_{u'})+\psi(u'u) = \psi(T_u)$, we obtain by the choice of $u$ that $\psi(P_{u'})=\psi(P_{u})-\psi(uu')\neq \psi(T_{u})-\psi(uu')=\psi(T_{u'})$, so $P_{u'}$ is a distinct-label $su'$-walk. It follows by the choice of $u$ that we have $u' \in S$. This yields $\alpha(u)\le w(Q_{u'})+w(u'u)\le w(P_{u'})+w(u'u)=w(P_u)$. In either case, we obtain that $\alpha(u)\le w(P_{u})$.  By the nonnegativity of $w$, we obtain $w(Q_{v^*}) = \alpha(v^*) \le \alpha(u) \le w(P_{u}) \leq w(P_{v^*}) \le w(Q_{v^*})$. Hence equality holds throughout and $Q_{v^*}$ is a shortest distinct-label $sv^*$-walk in $D$.
    \end{claimproof}
    \cref{cl:end} and \cref{cl:add} imply the correctness of the algorithm.
\end{proof}

We will combine \cref{thm:walks} with the next observation which shows that instead of a shortest non-zero directed cycle, it is enough to find a shortest non-zero closed walk. The statement and proof are analogous to that of \cite[Lemma~34]{elmaalouly2023exact}.

\begin{lemma}[see \cite{elmaalouly2023exact}] \label{lem:walkcycle}
    Let $D$ be a digraph, $\psi\colon A(D) \to \Gamma$ a group labeling, and $w \colon A(D) \to \R$ a conservative weight function. There is a polynomial-time algorithm that, given as input a non-zero closed walk $P$, outputs a non-zero directed cycle with $w(C) \le w(P)$.
\end{lemma}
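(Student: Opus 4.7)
The plan is to iteratively shrink $P$ into a directed cycle while maintaining the invariants that the current walk is a non-zero closed walk and has weight at most $w(P)$. The key observation is that, since $w$ is conservative, any closed walk in $D$ has nonnegative weight, because it decomposes (by repeatedly splitting at a repeated vertex) into directed cycles, each of which has nonnegative weight by conservativity.

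I would proceed as follows. Given a non-zero closed walk $Q$, test whether $Q$ visits some vertex twice apart from its initial and final vertex. If not, then $Q$ is already a directed cycle and we return $Q$. Otherwise, let $v$ be a vertex appearing at two different internal positions of $Q$, and split $Q$ at these two occurrences of $v$ into two shorter closed walks $Q_1$ and $Q_2$, both based at $v$. Then $A(Q) = A(Q_1) \cup A(Q_2)$ (as a multiset of arcs), so $w(Q) = w(Q_1) + w(Q_2)$ and $\psi'(Q) = \psi'(Q_1) + \psi'(Q_2)$. Since $\psi'(Q) \ne 0$, at least one of $Q_1$ and $Q_2$ is a non-zero closed walk; let $Q'$ be such a choice. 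By conservativity, $w(Q_1), w(Q_2) \ge 0$, hence $w(Q') \le w(Q)$. Replace $Q$ by $Q'$ and repeat.

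The procedure terminates in polynomial time because each iteration strictly decreases the number of arcs of $Q$ (in multiset count), so it performs at most $|A(P)|$ iterations, and each iteration can be carried out in polynomial time by scanning $Q$ for a repeated vertex. Starting from $Q = P$, the invariants yield that the returned directed cycle $C$ satisfies $\psi'(C) \ne 0$ and $w(C) \le w(P)$, which is the desired conclusion. I do not foresee any serious obstacle here; the only subtlety is ensuring that both subwalks produced by the split are indeed closed walks based at $v$, which follows immediately from the choice of the two occurrences of $v$ as the split points.
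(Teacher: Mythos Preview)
Your proposal is correct and essentially the same as the paper's proof: both decompose the closed walk into directed cycles by splitting at repeated vertices, use conservativity to ensure each piece has nonnegative weight, and use additivity of labels to locate a non-zero piece. The only difference is cosmetic---the paper decomposes $P$ into cycles $C_1,\dots,C_t$ in one shot and then selects a non-zero $C_i$, whereas you carry out the decomposition iteratively while tracking the non-zero sub-walk.
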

\begin{proof}
    It is well-known that we can decompose $P$ in linear time into a collection of directed cycles $C_1,\ldots,C_t$ such that each arc of $D$ appears in $C_1,\ldots,C_t$ in total as many times as it appears in $P$.
    Since $w$ is conservative, $0 \le w(C_i)$ for each $i \in [t]$.  
    As $w(P) = w(C_1)+\dots+w(C_t)$, this implies that $w(C_i) \le w(P)$ for each $i \in [t]$. By $0 \ne \psi(P) = \psi(C_1)+\dots+\psi(C_t)$, there exists $i \in [t]$ such that $\psi(C_i) \ne 0$. We output $C_i$.
\end{proof}

\cref{thm:walks} and \cref{lem:walkcycle} together give a polynomial-time algorithm for finding shortest non-zero directed cycles given a conservative weight function. We now give the proof of \cref{thm:graphalg} which we restate here.
\nzdircycle*
\begin{proof}
   For each vertex $v$, we compute a shortest non-zero closed walk $P_v$ from $v$ with \cref{thm:walks}. 
    Let $V'$ denote the set of vertices $v$ for which a non-zero closed walk starting from $v$ exists.
    Using \cref{lem:walkcycle}, for each $v \in V'$ we obtain a non-zero directed cycle $C_v$ with $w(C_v) \le w(P_v)$. Then, there is no non-zero directed cycle if $V'$ is empty, otherwise  
    $C \coloneqq \argmin_{v \in V'} w(C_v)$ is a shortest non-zero closed walk by $w(C) = \min_{v \in V'} w(C_v) \le \min_{v \in V'} w(P_v)$, in particular, it is a shortest non-zero directed cycle.
\end{proof}

\end{document}